\documentclass[draft]{article}
\usepackage{amsfonts,amsmath,amssymb,color,enumerate,latexsym,times}
\definecolor{monblu}{cmyk}{0.80,0.00,0.00,0.70}
\definecolor{monmag}{cmyk}{0.00,1.00,0.00,0.00}
\newtheorem{claim}{Claim}
\newtheorem{proposition}{Proposition}
\newenvironment{proof}{{\bf Proof:}}{~$\dashv$\\}
\def\N{\mathbb{N}}
\def\Formula{\mathbf{A}}
\def\Formula{\mathbf{F}}
\def\Rule{\mathbf{R}}
\def\WK{\mathbf{WK}}
\def\LIK{\mathbf{LIK}}
\def\FIK{\mathbf{FIK}}
\def\Fo{\mathbf{Fo}}
\def\SFo{\mathbf{SFo}}
\def\allfra{\mathbf{all}}
\def\bcfra{\mathbf{b}}
\def\fcfra{\mathbf{f}}
\def\fbcfra{\mathbf{fb}}
\def\fdcfra{\mathbf{fd}}
\def\sym{\mathbf{sym}}
\def\deterministic{\mathbf{det}}
\def\Euclidean{\mathbf{Euc}}
\def\reflexive{\mathbf{ref}}
\def\transitive{\mathbf{tra}}
\def\serial{\mathbf{ser}}
\def\fbucfra{\mathbf{fbu}}
\def\bducfra{\mathbf{bdu}}
\def\fbducfra{\mathbf{fbdu}}
\def\dcfra{\mathbf{d}}
\def\ucfra{\mathbf{u}}
\def\Id{\mathtt{Id}}
\def\At{\mathbf{At}}
\def\Mona{\mathbf{Mo}}
\def\GFo{\mathbf{GF}}
\def\IS{\mathbf{IS}}
\def\K{\mathbf{K}}
\def\IK{\mathbf{IK}}
\def\S{\mathbf{S}}
\def\Log{\mathtt{Log}}
\def\IPL{\mathbf{IPL}}
\def\L{\mathbf{L}}
\begin{document}
\title{Four intuitionistic modal connectives}
\author{Philippe Balbiani\footnote{Email address: philippe.balbiani@irit.fr.}
\hspace{0.31cm}
\c{C}i\u{g}dem Gencer\footnote{Email addresses: cigdem.gencer@irit.fr.}}
\date{Institut de recherche en informatique de Toulouse
\\
CNRS~--~Toulouse INP~--~Universit\'e de Toulouse
\\
Toulouse, France}
\maketitle
\begin{abstract}
We introduce the syntax and the semantics of intuitionistic modal logics based on a connective $\lozenge$ {\em \`a la}\/ P\v{r}enosil, its dual connective $\square$, a connective $\blacklozenge$ {\em \`a la}\/ Wijesekera and its dual connective $\blacksquare$.
We analyze the modal definability of some elementary classes of frames.
We study the complete axiomatizability of the sets of valid formulas determined by these classes of frames.
We prove the decidability of the minimal intuitionistic modal logic determined by the class of all frames.
\end{abstract}
{\bf Keywords:}
Intuitionistic modal connectives.
Intuitionistic modal logics.
Correspondence.
Axiomatization.
Completeness.
Canonical model construction.
Decidability.
Monadic two-variable guarded fragment.
\section{Introduction}
Dealing with intuitionistic modal logics (IMLs), when authors interpret formulas in relational structures, the considered truth conditions are either those of Fischer Servi~\cite{FischerServi:1984}, or those of Wijesekera~\cite{Wijesekera:1990}.
The structures in question are of the form $(W,{\leq},{R},V)$ where $(W,{\leq})$ is a nonempty preorder, $R$ is a binary relation on $W$ and $V$ is a valuation on $W$.
There, the modal connective of necessity is interpreted as follows: $s$ in $W$ satisfies ${\square}A$ when for all $t$ in $W$, if $s{\leq}t$ then for all $u$ in $W$, if $t{R}u$ then $u$ satisfies $A$.
As for the modal connective of possibility, there are two schools.
The first school is the one of Fischer Servi, who says that $s$ in $W$ satisfies ${\lozenge_{\mathtt{FS}}}A$ if there exists $t$ in $W$ such that $s{R}t$ and $t$ satisfies $A$.
The second school is the one of Wijesekera, who says that $s$ in $W$ satisfies ${\lozenge_{\mathtt{W}}}A$ if for all $t$ in $W$, if $s{\leq}t$ then there exists $u$ in $W$ such that $t{R}u$ and $u$ satisfies $A$.
These approaches have given rise to the intuitionistic modal logics $\IK$ and $\WK$.\footnote{They have also given rise to multifarious variants such as the so-called intuitionistic modal logics considered in~\cite{Bierman:dePaiva:2000,Lin:Ma:2019,Marin:et:al:2021} and the so-called constructive modal logics considered in~\cite{Alechina:et:al:2001,Arisaka:et:al:2015,deGroot:et:al:2026}.
See~\cite{Olivetti:2022} and~\cite{IMLA:2017} for recent surveys about them.}
\\
\\
In a structure $(W,{\leq},{R},V)$ as above, $V$ is such that for all $s,t$ in $W$, if $s{\leq}t$ then $t$ satisfies at least all atoms that $s$ satisfies.
With the aim of verifying this Heredity Property at the level of all formulas as well, Fischer Servi restricted the discussion to the class of all forward and backward confluent models~---~those such that ${\geq}{\circ}{R}{\subseteq}{R}{\circ}{\geq}$ and ${R}{\circ}{\leq}{\subseteq}{\leq}{\circ}{R}$~---~whereas Wijesekera had no such obligation.
Although, in the class of all forward and backward confluent models, the definitions of satisfiability considered by Fischer Servi and Wijesekera coincide, the following question must be asked: what is the justification for only considering forward and backward confluent models?
To this question, many authors provide the same answer: in the class of all forward and backward confluent models, the set of all valid formulas is equal to the set of all formulas whose standard translation in a first-order language is intuitionistically valid.\footnote{See~\cite[Chapter~$5$]{Simpson:1994} for detailed explanations about this correspondence.}
\\
\\
Since the above-mentioned concept of intuitionistic validity of first-order formulas is defined with respect to the class of all intuitionistic models with increasing domains, accepting this answer based on the standard translation amounts to considering~---~as multifarious authors have done since Kripke~\cite{Kripke:1965}~---~that Intuitionistic First-Order Logic is naturally defined as the logic determined by all intuitionistic models with increasing domains.
In reality, the scientific literature proposes many variants to the logic determined by all intuitionistic models with increasing domains: the variants determined by metaframes, the variants determined by varying domains, etc~\cite[Chapters~$5$ and~$6$]{Gabbay:et:al:2009}.
Therefore, considering only the logic determined by all intuitionistic models with increasing domains amounts to giving more importance to one variant than to other variants.
And it is not an obvious choice at all, indeed.
%
%
\\
\\
That is why, in opposition to Fischer Servi and most of her followers, we decide not to limit the discussion about intuitionistic modal logics to a particular class of models and to interpret the modal connective of possibility by saying with P\v{r}enosil~\cite{Prenosil:2014} that in a model $(W,{\leq},{R},V)$, $s$ in $W$ satisfies ${\lozenge_{\mathtt{P}}}A$ if there exists $t$ in $W$ such that $s{\geq}t$ and there exists $u$ in $W$ such that $t{R}u$ and $u$ satisfies $A$.
The aforementioned Heredity Property is verified for the IML determined by P\v{r}enosil's truth condition without restricting, as done by Fischer Servi, the discussion to a specific class of models.
That being said, we can already hear some of the criticisms that the experts of IMLs are preparing.
\\
\\
One criticism relates again to the standard translation, seeing that in the class of all models, the IML determined by P\v{r}enosil's truth condition is not equal to the set of all formulas whose standard translation is intuitionistically valid over the class of all intuitionistic models with increasing domains.
Some formulas such as ${\lozenge_{\mathtt{P}}}(A{\rightharpoonup}B){\rightharpoonup}({\square}A{\rightharpoonup}
$\linebreak$
{\lozenge_{\mathtt{P}}}B)$ and $({\lozenge_{\mathtt{P}}}A{\rightharpoonup}{\square}B){\rightharpoonup}{\square}(A{\rightharpoonup}B)$ are no longer valid, whereas their standard translations interpreted over the class of all intuitionistic models with increasing domains are intuitionistically valid.
Well, this criticism can be refuted by stating that today, although the normal modal logic $\K$ does not contain the formulas ${\square}A{\rightharpoonup}{\square}{\square}A$ and ${\lozenge}A{\rightharpoonup}{\square}{\lozenge}A$ expressing the properties that philosophers usually associate with the modal concepts of necessity and possibility, everyone accepts the fact that $\K$ is the minimal normal modal logic~\cite[Chapter~$2$]{Hughes:Cresswell:1996}.\footnote{See~\cite{Ballarin:2023} for historical developments about the modern origins of normal modal logics.}
\\
\\
Another criticism concerns Gentzen-type proof systems and tableaux-based proof me\-thods.
It says that within the context of the semantics of IMLs where the truth condition of the modal connective $\square$ is based on the preorder $\leq$ and the truth condition of the modal connective $\lozenge_{\mathtt{P}}$ is based on the inverse preorder $\geq$, Gentzen-type proof systems and tableaux-based proof methods become less useful, seeing that it is then more difficult to formulate with them arguments showing the decidability of the considered logics.
That is a weird criticism because in the setting of Fischer Servi, it is already well-known that to solve the membership problem is difficult.
Witness, the fact that the decidability of $\IK$ has been proved using a very elaborated argument~\cite{Grefe:1996} and the decidability of $\IS4$~---~the extension of $\IK$ with $\S4$-like axioms such as ${\square}A{\rightharpoonup}A$, $A{\rightharpoonup}{\lozenge_{\mathtt{FS}}}A$, ${\square}A{\rightharpoonup}{\square}{\square}A$ and ${\lozenge_{\mathtt{FS}}}{\lozenge_{\mathtt{FS}}}A{\rightharpoonup}{\lozenge_{\mathtt{FS}}}A$~---~has only been proved recently~\cite{Girlando:et:al:2023}.\footnote{See also~\cite{Dalmonte:et:al:2021,Girlando:et:al:2024} for further methods deciding the membership problem in IMLs.}
\\
\\
Refusing to adopt the specific connective of possibility {\em \`a la}\/ Fischer Servi and having no good reason to boycott the connective of possibility {\em \`a la}\/ Wijesekera, we therefore base the language of our intuitionistic modal logics (IMLs) on the usual intuitionistic connectives $\rightharpoonup$, $\top$, $\bot$, $\vee$ and $\wedge$ together with the connective of possibility {\em \`a la}\/ P\v{r}enosil (denoted $\lozenge$ from now on), the above-mentioned connective of necessity (denoted $\square$ from now on) and the connective of possibility {\em \`a la}\/ Wijesekera (denoted $\blacklozenge$ from now on).
As we will see in Section~\ref{section:relational:semantics}, $\lozenge$ and $\square$ have truth conditions that are dual to each other.
It is therefore natural to add to the language a connective of necessity (denoted $\blacksquare$ from now on) having a truth condition dual to the truth condition of the connective $\blacklozenge$ of possibility.
In order to provide each connector with its dual, we also add to the language the connective $\leftharpoondown$ of dual implication sometimes called co-implication, exclusion, etc~\cite{Rauszer:1980}.\footnote{See~\cite{Gore:Shillito:2020,Wolter:1998a} for further developments concerning the connective of dual implication.}
\\
\\
%
%
In Sections~\ref{section:syntax} and~\ref{section:relational:semantics}, we introduce the syntax and the semantics of intuitionistic modal logics.
In Section~\ref{section:expressivity}, we consider several classes of models with respect to which the connectives $\lozenge$, $\square$, $\blacklozenge$ and $\blacksquare$ are not interdefinable.
In Sections~\ref{section:miscellaneous} and~\ref{section:correspondence:general:situation}, we study the ability of our language to characterize this or that class of models in the sense of correspondence theory.
In Section~\ref{section:hilbert:style:axiomatization}, we axiomatically present several IMLs.
In Sections~\ref{section:filters:ideals:tips:and:clips} and \ref{section:existence:properties}, we prepare the ground for the proofs of their completeness presented in Section~\ref{section:canonical:model:construction}.
In Section~\ref{section:decidability}, using a monadic two-variable guarded fragment of Classical First-Order Logic, we show that the membership problem in the minimal IML is decidable.
\\
\\
For all $n{\in}\N$, $(n)$ denotes $\{a{\in}\N$: $1{\leq}a{\leq}n\}$.
For all sets $\Sigma$, $\wp(\Sigma)$ denotes the {\em powerset of $\Sigma$.}
For all sets $W$, $\Id_{W}$ denotes the {\em identity relation on $W$,} that is to say the binary relation $R$ on $W$ such that for all $s,t{\in}W$, $s{R}t$ if and only if $s{=}t$.
For all sets $W$ and for all binary relations $R,S$ on $W$, ${R}{\circ}{S}$ denotes the {\em composition of $R$ and $S$,} that is to say the binary relation $T$ on $W$ such that for all $s,t{\in}W$, $s{T}t$ if and only if there exists $u{\in}W$ such that $s{R}u$ and $u{S}t$.
For all sets $W$, a {\em preorder on $W$}\/ is a reflexive and transitive binary relation on $W$.
For all sets $W$ and for all preorders $\leq$ on $W$, $\geq$ denotes the preorder on $W$ such that for all $s,t{\in}W$, $s{\geq}t$ if and only if $t{\leq}s$.
For all sets $W$ and for all preorders $\leq$ on $W$, a subset $U$ of $W$ is {\em $\leq$-closed}\/ if for all $s,t{\in}W$, if $s{\in}U$ and $s{\leq}t$ then $t{\in}U$.
For all sets $W$, a {\em partial order on $W$}\/ is an antisymmetric preorder on $W$.
``$\IPL$'' stands for ``Intuitionistic Propositional Logic''.
Missing proofs are left to the reader.
\section{Propositional syntax}\label{section:syntax}
\subsection{Formulas}
Let $\At$ be a countable set (with typical members called {\em atoms}\/ and denoted $p$, $q$, etc).
\\
\\
Let $(p_{i})_{i{\in}\N}$ be an enumeration without repetition of $\At$.
\\
\\
Let $\Fo$ be the set (with typical members called {\em formulas}\/ and denoted $A$, $B$, etc) of finite words over $\At$, the {\em Heyting-Brouwer connectives}\/ ${\rightharpoonup}$ and ${\leftharpoondown}$, the Boolean connectives ${\top}$ and ${\bot}$, the Boolean connectives ${\vee}$ and ${\wedge}$, the {\em modal connectives}\/ ${\lozenge}$, ${\square}$, ${\blacklozenge}$ and $\blacksquare$ and the parentheses $($ and $)$ defined as follows:
\begin{itemize}
%
%
\item $A\ {:=}\ p{\mid}(A{\triangledown_{\mathtt{hbc}}^{2}}A){\mid}{\triangledown_{\mathtt{bc}}^{0}}{\mid}(A{\triangledown_{\mathtt{bc}}^{2}}A){\mid}{\triangledown_{\mathtt{mc}}^{1}}A$,
\end{itemize}
$p$ ranging over $\At$ and $\triangledown_{\mathtt{hbc}}^{2}$, $\triangledown_{\mathtt{bc}}^{0}$, $\triangledown_{\mathtt{bc}}^{2}$ and $\triangledown_{\mathtt{mc}}^{1}$ respectively ranging over $\{{\rightharpoonup},{\leftharpoondown}\}$, $\{{\top},{\bot}\}$, $\{{\vee},{\wedge}\}$ and $\{{\lozenge},{\square},{\blacklozenge},{\blacksquare}\}$.
\\
\\
For all $A{\in}\Fo$, the {\em length of $A$}\/ (denoted ${\parallel}A{\parallel}$) is the number of symbols in $A$.
\\
\\
We follow the standard rules for omission of the parentheses.
\\
\\
For all $A,B{\in}\Fo$, we write respectively ${\rceil}A$, ${\lfloor}A$, $A{\curlyvee}B$ and $A{\curlywedge}B$ as abbreviations instead of $A{\rightharpoonup}{\bot}$, $A{\leftharpoondown}{\top}$, $(A{\rightharpoonup}B){\rightharpoonup}B$ and $(A{\leftharpoondown}B){\leftharpoondown}B$.
\\
\\
For all $n{\in}\N$ and for all $A_{1},\ldots,A_{n}{\in}\Fo$, we write respectively $A_{1}{\wedge}\ldots{\wedge}A_{n}$ and $A_{1}{\vee}
$\linebreak$
\ldots{\vee}A_{n}$ as abbreviations instead of $(A_{1}{\wedge}\ldots(A_{n}{\wedge}{\top})\ldots)$ and $(A_{1}{\vee}\ldots(A_{n}{\vee}{\bot})\ldots)$.
%
%
\\
\\
For all $A{\in}\Fo$, let $A{\mid}^{+}{=}A$, $A{\mid}^{-}{=}{\rceil}A$, $A{\mid}_{+}{=}{\lfloor}A$ and $A{\mid}_{-}{=}A$.
\\
\\
A {\em substitution}\/ is a function $\sigma\ :\ \Fo{\longrightarrow}\Fo$ preserving the above-menti\-oned connectives.
\\
\\
A formula $B$ is an {\em instance of a formula $A$}\/ if there exists a substitution $\sigma\ :\ \Fo{\longrightarrow}\Fo$ such that $B{=}\sigma(A)$.
\subsection{Sets of formulas}
For all sets $\Gamma,\Sigma$ of formulas, let
\begin{itemize}
\item $\Gamma{+}\Sigma{=}\{B{\in}\Fo:$ there exists $n{\in}\N$ and there exists $A_{1},\ldots,A_{n}{\in}\Sigma$ such that $A_{1}{\wedge}\ldots{\wedge}A_{n}{\rightharpoonup}B{\in}\Gamma\}$,
\item $\Gamma{-}\Sigma{=}\{B{\in}\Fo:$ there exists $n{\in}\N$ and there exists $A_{1},\ldots,A_{n}{\in}\Sigma$ such that $A_{1}{\vee}\ldots{\vee}A_{n}{\leftharpoondown}B{\in}\Gamma\}$.
\end{itemize}
For all sets $\Gamma$ of formulas and for all $A{\in}\Fo$, we write $\Gamma{+}A$ instead of $\Gamma{+}\{A\}$ and $\Gamma{-}A$ instead of $\Gamma{-}\{A\}$.
\\
\\
For all sets $\Gamma$ of formulas, let ${\square}\Gamma{=}\{A{\in}\Fo:\ {\square}A{\in}\Gamma\}$ and ${\lozenge}\Gamma{=}\{A{\in}\Fo:\ {\lozenge}A{\in}\Gamma\}$.
\subsection{Signed formulas}
A {\em signed formula}\/ is a couple $(\alpha,A)$ where $\alpha{\in}\{+,-\}$ and $A{\in}\Fo$.
\\
\\
Let $\SFo$ be the set of all signed formulas.
\\
\\
For all $(\alpha,A){\in}\SFo$, the {\em length of $(\alpha,A)$}\/ (denoted ${\parallel}(\alpha,A){\parallel}$) is the number of symbols in $(\alpha,A)$.
\\
\\
Let $\mathtt{tr}:\ \SFo{\longrightarrow}\SFo$ be the function associating to each signed formula the signed formula obtained from it after having exchanged in it $-$ with $+$, $\leftharpoondown$ with $\rightharpoonup$, $\bot$ with $\top$, $\wedge$ with $\vee$, $\square$ with $\lozenge$ and $\blacksquare$ with $\blacklozenge$.
\begin{proposition}\label{Proposition:for:all:formulas:2}
For all $(\alpha,A){\in}\SFo$, $\mathtt{tr}(\mathtt{tr}(\alpha,A)){=}(\alpha,A)$.
\end{proposition}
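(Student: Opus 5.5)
The plan is to split $\mathtt{tr}$ into its action on the sign and its action on the formula, and to prove that each part is an involution. Concretely, $\mathtt{tr}(\alpha,A){=}(\overline{\alpha},\tau(A))$, where $\overline{+}{=}-$ and $\overline{-}{=}+$, and $\tau:\ \Fo{\longrightarrow}\Fo$ is the symbol-by-symbol exchange of $\leftharpoondown$ with $\rightharpoonup$, $\bot$ with $\top$, $\wedge$ with $\vee$, $\square$ with $\lozenge$ and $\blacksquare$ with $\blacklozenge$. Atoms and parentheses are left unchanged.

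First I will check that $\tau$ maps $\Fo$ into $\Fo$. The exchange sends $\{{\rightharpoonup},{\leftharpoondown}\}$, $\{{\top},{\bot}\}$, $\{{\vee},{\wedge}\}$ and $\{{\lozenge},{\square},{\blacklozenge},{\blacksquare}\}$ each to itself, so every production of the grammar goes to a production of the same shape. Equivalently, $\tau$ satisfies the recursive clauses
\begin{itemize}
\item $\tau(p){=}p$,
\item $\tau(A{\triangledown}B){=}\tau(A)\,\pi(\triangledown)\,\tau(B)$,
\item $\tau(\triangledown^{0}){=}\pi(\triangledown^{0})$,
\item $\tau(\triangledown^{1}A){=}\pi(\triangledown^{1})\tau(A)$.
\end{itemize}
Here $\pi$ is the permutation of connectives given by the listed swaps.

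The key observation is that $\pi$ is a product of disjoint transpositions, so $\pi{\circ}\pi$ is the identity on connectives. Likewise $\overline{\overline{\alpha}}{=}\alpha$.

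I will then prove $\tau(\tau(A)){=}A$ by induction on ${\parallel}A{\parallel}$, or on the structure of $A$. The atom case is immediate. In each compound case, the recursive clause together with the induction hypothesis and $\pi(\pi(\triangledown)){=}\triangledown$ gives the result. For example, $\tau(\tau(A{\rightharpoonup}B)){=}\tau(\tau(A){\leftharpoondown}\tau(B)){=}\tau(\tau(A)){\rightharpoonup}\tau(\tau(B)){=}A{\rightharpoonup}B$.

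Combining the two parts gives $\mathtt{tr}(\mathtt{tr}(\alpha,A)){=}(\overline{\overline{\alpha}},\tau(\tau(A))){=}(\alpha,A)$. A shorter alternative is to view $\tau$ directly as the letter-wise extension of an involution on the alphabet, and note that applying a letter-wise map twice is the letter-wise application of its square.

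There is no real obstacle. The only point needing care is confirming that the exchange is well defined: the swapped pairs must be disjoint, and no symbol may be affected by two swaps. A glance at the list confirms this, and it is exactly what makes $\pi$ an involution.
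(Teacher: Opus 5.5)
Your proof is correct. The swapped pairs $\{+,-\}$, $\{{\rightharpoonup},{\leftharpoondown}\}$, $\{{\top},{\bot}\}$, $\{{\vee},{\wedge}\}$, $\{{\lozenge},{\square}\}$ and $\{{\blacklozenge},{\blacksquare}\}$ are disjoint, so the symbol-wise exchange is an involution, and either your structural induction or the direct letter-wise argument finishes it; the paper leaves this proof to the reader, and your routine verification (induction on $A$, as the paper does for its other syntactic facts) is exactly what is intended.
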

\begin{proposition}\label{Proposition:for:all:formulas:1}
For all $(\alpha,A){\in}\SFo$, ${\parallel}\mathtt{tr}(\alpha,A){\parallel}{=}{\parallel}(\alpha,A){\parallel}$.
\end{proposition}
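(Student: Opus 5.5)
The plan is to observe that $\mathtt{tr}$ acts symbol by symbol, so the claim reduces to the fact that $\mathtt{tr}$ replaces each symbol of $(\alpha,A)$ by exactly one symbol and adds or deletes nothing. I will make this precise by induction on the construction of $A$, following the grammar defining $\Fo$. First I will record the inductive description of $\mathtt{tr}$ implicit in its definition. Writing $\overline{\alpha}$ for the sign opposite to $\alpha$ and $A^{\ast}$ for the formula with $\mathtt{tr}(\alpha,A){=}(\overline{\alpha},A^{\ast})$, the map $A{\mapsto}A^{\ast}$ fixes atoms. It sends $(B{\rightharpoonup}C)$ to $(B^{\ast}{\leftharpoondown}C^{\ast})$ and conversely, exchanges ${\top}$ and ${\bot}$, exchanges $(B{\vee}C)$ and $(B^{\ast}{\wedge}C^{\ast})$, and sends ${\square}B,{\lozenge}B,{\blacksquare}B,{\blacklozenge}B$ to ${\lozenge}B^{\ast},{\square}B^{\ast},{\blacklozenge}B^{\ast},{\blacksquare}B^{\ast}$ respectively.

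Next I will prove by induction on $A$ that ${\parallel}A^{\ast}{\parallel}{=}{\parallel}A{\parallel}$.
\begin{itemize}
\item For an atom the two formulas coincide.
\item For ${\top}$ and ${\bot}$ both formulas have length $1$.
\item For a binary formula $(B\,{\triangledown}\,C)$ with ${\triangledown}$ one of ${\rightharpoonup},{\leftharpoondown},{\vee},{\wedge}$, the length is $3{+}{\parallel}B{\parallel}{+}{\parallel}C{\parallel}$ (two parentheses plus one connective). Its image $(B^{\ast}\,{\triangledown}'\,C^{\ast})$ has length $3{+}{\parallel}B^{\ast}{\parallel}{+}{\parallel}C^{\ast}{\parallel}$, which agrees by the induction hypothesis.
\item For a modal formula ${\triangledown}B$ the length is $1{+}{\parallel}B{\parallel}$, and the same argument applies.
\end{itemize}
Finally, the sign $\alpha$ is replaced by the single symbol $\overline{\alpha}$, and the couple structure is untouched. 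Hence ${\parallel}\mathtt{tr}(\alpha,A){\parallel}{=}{\parallel}(\alpha,A){\parallel}$, whatever fixed convention is used to count the symbols of a couple (the sign, the formula, possibly brackets and a comma).

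The only step I expect to need care is the first one: checking that the informal phrase ``exchanging $-$ with $+$, $\leftharpoondown$ with $\rightharpoonup$, etc.'' really is a symbol-for-symbol substitution. In particular, I must confirm that it does not unfold the abbreviations ${\rceil}$, ${\lfloor}$, ${\curlyvee}$, ${\curlywedge}$ or the iterated ${\wedge}/{\vee}$ into something of different length. Since these are only metalinguistic abbreviations for genuine formulas of $\Fo$, $\mathtt{tr}$ acts on the underlying words, so no issue arises. Once this is granted, the induction is routine. The same observation, that $\mathtt{tr}$ is an involutive letter-to-letter substitution, also yields Proposition~\ref{Proposition:for:all:formulas:2}.
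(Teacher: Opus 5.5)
Your proof is correct. The paper leaves this proposition to the reader, and your induction on $A$ (equivalently, the observation that $\mathtt{tr}$ replaces each symbol by exactly one symbol, so it preserves the length of words) is essentially the argument the paper intends.
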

%
%
%
%
%
%
%
%
A signed formula $(\beta,B)$ is an {\em instance of a signed formula $(\alpha,A)$}\/ if $\beta{=}\alpha$ and $B$ is an instance of $A$.
\begin{proposition}\label{Proposition:for:all:formulas:if:is:an:instance:of:then:is:an:instance:of}
For all $(\alpha,A),(\beta,B){\in}\SFo$, if $(\beta,B)$ is an instance of $(\alpha,A)$ then $\mathtt{tr}(\beta,B)$ is an instance of $\mathtt{tr}(\alpha,A)$.
\end{proposition}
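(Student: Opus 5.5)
The idea is to write $\mathtt{tr}$ as a sign flip together with a syntactic map on formulas, and then show that this map commutes with substitution in the appropriate sense. Let $\overline{\,\cdot\,}:\Fo{\longrightarrow}\Fo$ be the function that exchanges $\leftharpoondown$ with $\rightharpoonup$, $\bot$ with $\top$, $\wedge$ with $\vee$, $\square$ with $\lozenge$ and $\blacksquare$ with $\blacklozenge$, leaving atoms fixed. Then $\mathtt{tr}(\alpha,A){=}(\overline{\alpha},\overline{A})$, where $\overline{+}{=}-$ and $\overline{-}{=}+$. By induction on formulas, $\overline{\,\cdot\,}$ commutes with the connectives up to the exchange: for instance $\overline{B{\rightharpoonup}C}{=}\overline{B}{\leftharpoondown}\overline{C}$ and $\overline{{\square}B}{=}{\lozenge}\overline{B}$. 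Moreover $\overline{\overline{A}}{=}A$; this is essentially Proposition~\ref{Proposition:for:all:formulas:2}.

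Now suppose $(\beta,B)$ is an instance of $(\alpha,A)$. Then $\beta{=}\alpha$ and $B{=}\sigma(A)$ for some substitution $\sigma$. The signs of $\mathtt{tr}(\beta,B)$ and $\mathtt{tr}(\alpha,A)$ are both $\overline{\alpha}$, so they agree. It remains to find a substitution $\tau$ with $\overline{B}{=}\tau(\overline{A})$.

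Define $\tau$ on atoms by $\tau(p){=}\overline{\sigma(p)}$ and extend it homomorphically. A substitution preserves the connectives, so it is determined by its values on atoms. The key claim is that $\tau(\overline{C}){=}\overline{\sigma(C)}$ for all $C{\in}\Fo$, which we prove by induction on $C$.
\begin{itemize}
\item \emph{Atom $C{=}p$.} Here $\overline{p}{=}p$, so $\tau(\overline{p}){=}\tau(p){=}\overline{\sigma(p)}$.
\item \emph{$C{=}D{\rightharpoonup}E$.} We compute $\tau(\overline{C}){=}\tau(\overline{D}{\leftharpoondown}\overline{E}){=}\tau(\overline{D}){\leftharpoondown}\tau(\overline{E})$. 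By the induction hypothesis this equals $\overline{\sigma(D)}{\leftharpoondown}\overline{\sigma(E)}{=}\overline{\sigma(D){\rightharpoonup}\sigma(E)}{=}\overline{\sigma(C)}$.
\item \emph{Remaining connectives.} The binary connectives ${\leftharpoondown}$, ${\vee}$, ${\wedge}$, the constants ${\top}$, ${\bot}$ and the unary modal connectives are handled identically, each time using that $\overline{\,\cdot\,}$ swaps the connective with its partner.
\end{itemize}
Applying the claim with $C{=}A$ gives $\overline{B}{=}\overline{\sigma(A)}{=}\tau(\overline{A})$, which completes the proof.

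No step is a real obstacle. The only point needing care is that $\tau$ must be defined on atoms through the bar of $\sigma(p)$, not through $\sigma(p)$ itself, since atoms are fixed by $\overline{\,\cdot\,}$ while their images under $\sigma$ are not.
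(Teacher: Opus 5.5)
Your proof is correct: the substitution $\tau$ with $\tau(p)=\overline{\sigma(p)}$ coincides with $\overline{\,\cdot\,}\circ\sigma\circ\overline{\,\cdot\,}$, and your induction establishing $\tau(\overline{C})=\overline{\sigma(C)}$ gives exactly what is needed. It does not even use the involutivity of the bar that you mention. The paper leaves this proposition to the reader, and this structural-induction argument is the natural one it has in mind.
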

\subsection{Sets of signed formulas}
For all sets $G$ of signed formulas, let $\mathtt{tr}(G){=}\{\mathtt{tr}(\alpha,A):\ (\alpha,A){\in}G\}$.
\begin{proposition}\label{Proposition:for:all:sets:of:formulas:2}
For all sets $G$ of signed formulas, $\mathtt{tr}(\mathtt{tr}(G)){=}G$.
\end{proposition}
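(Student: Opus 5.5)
The plan is a direct double-inclusion argument, reducing everything to the pointwise involution property of Proposition~\ref{Proposition:for:all:formulas:2}. First we unfold the definition. By the definition of $\mathtt{tr}$ on sets of signed formulas, applied twice,
\[
\mathtt{tr}(\mathtt{tr}(G))=\{\mathtt{tr}(\beta,B):\ (\beta,B)\in\mathtt{tr}(G)\}=\{\mathtt{tr}(\mathtt{tr}(\alpha,A)):\ (\alpha,A)\in G\}.
\]
The second equality holds because $(\beta,B)$ ranges over $\mathtt{tr}(G)$ exactly when $(\beta,B)=\mathtt{tr}(\alpha,A)$ for some $(\alpha,A)\in G$.

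Next we apply Proposition~\ref{Proposition:for:all:formulas:2} to each element. It gives $\mathtt{tr}(\mathtt{tr}(\alpha,A))=(\alpha,A)$ for every $(\alpha,A)\in\SFo$. So the right-hand set is $\{(\alpha,A):\ (\alpha,A)\in G\}=G$.

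If one prefers explicit inclusions, the two directions go as follows. Any element of $\mathtt{tr}(\mathtt{tr}(G))$ has the form $\mathtt{tr}(\mathtt{tr}(\alpha,A))$ with $(\alpha,A)\in G$, and this equals $(\alpha,A)$, so it lies in $G$. Conversely, any $(\alpha,A)\in G$ equals $\mathtt{tr}(\mathtt{tr}(\alpha,A))$, and $\mathtt{tr}(\alpha,A)\in\mathtt{tr}(G)$, so $(\alpha,A)\in\mathtt{tr}(\mathtt{tr}(G))$.

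There is no real obstacle here. All the content sits in Proposition~\ref{Proposition:for:all:formulas:2}, which the paper states earlier and which we may assume. The only care needed is in unfolding the set-builder notation correctly.
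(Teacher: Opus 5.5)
Your proof is correct. It is the evident argument the paper leaves to the reader (the paper gives no explicit proof): unfold $\mathtt{tr}(\mathtt{tr}(G))$ as the image of $G$ under the composite $\mathtt{tr}\circ\mathtt{tr}$, then apply the pointwise involution of Proposition~\ref{Proposition:for:all:formulas:2}.
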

A set $G$ of signed formulas is {\em dual}\/ if $\mathtt{tr}(G){=}G$.
\\
\\
A set $G$ of signed formulas is {\em closed for uniform substitution}\/ if for all $(\alpha,A){\in}\SFo$ and for all substitutions $\sigma$, if $(\alpha,A){\in}G$ then $(\alpha,\sigma(A)){\in}G$.
\\
\\
For all sets $G$ of signed formulas, let $G^{+}{=}\{A{\in}\Fo:\ (+,A){\in}G\}$ and $G^{-}{=}\{A{\in}\Fo:\ (-,A){\in}G\}$.
\subsection{Inference rules}
An {\em inference rule}\/ is a couple of the form $\frac{(\alpha_{1},A_{1})\ \ldots\ (\alpha_{m},A_{m})}{(\beta,B)}$ where $m{\in}\N$, $(\alpha_{1},A_{1}),
$\linebreak$
\ldots,(\alpha_{m},A_{m})$ are signed formulas and $(\beta,B)$ is a signed formula.
\\
\\
A set $G$ of signed formulas is {\em closed under the inference rule $\frac{(\alpha_{1},A_{1})\ \ldots\ (\alpha_{m},A_{m})}{(\beta,B)}$}\/ if for all substitutions $\sigma$, if $(\alpha_{1},\sigma(A_{1})),\ldots,(\alpha_{m},\sigma(A_{m})){\in}G$ then $(\beta,\sigma(B)){\in}G$.
\section{Relational semantics}\label{section:relational:semantics}
\subsection{Frames}
A {\em frame}\/ is a triple $(W,{\leq},{R})$ where $(W,{\leq})$ is a nonempty preorder and ${R}$ is a binary relation on $W$.
%
%
%
%
%
%
\\
\\
A frame $(W,{\leq},{R})$ is {\em serial}\/ if for all $s{\in}W$, there exists $t{\in}W$ such that $s{R}t$.
%
%
\\
\\
A frame $(W,{\leq},{R})$ is {\em reflexive}\/ if for all $s{\in}W$, $s{R}s$.
%
%
\\
\\
A frame $(W,{\leq},{R})$ is {\em symmetric}\/ if for all $s,t{\in}W$, if $s{R}t$ then $t{R}s$.
%
%
\\
\\
A frame $(W,{\leq},{R})$ is {\em transitive}\/ if for all $s,t,u{\in}W$, if $s{R}t$ and $t{R}u$ then $s{R}u$.
%
%
\\
\\
A frame $(W,{\leq},{R})$ is {\em Euclidean}\/ if for all $s,t,u{\in}W$, if $s{R}t$ and $s{R}u$ then $t{R}u$.
%
%
\\
\\
A frame $(W,{\leq},{R})$ is {\em deterministic}\/ if for all $s,t,u{\in}W$, if $s{R}t$ and $s{R}u$ then $t{=}u$.
%
%
%
%
\\
\\
For all frames $(W,{\leq},{R})$, let $(W,{\leq},{R})^{\mathtt{tr}}$ be the frame $(W,{\geq},{R})$.
\\
\\
For all classes ${\mathcal C}$ of frames, let ${\mathcal C}^{\mathtt{tr}}$ be the class of all frames $(W,{\leq},{R})^{\mathtt{tr}}$, $(W,{\leq},{R})$ ranging over ${\mathcal C}$.
\\
\\
A class ${\mathcal C}$ of frames is {\em dual}\/ if ${\mathcal C}^{\mathtt{tr}}{=}{\mathcal C}$.
\\
\\
Let ${\mathcal C}_{\allfra}$ be the class of all frames.
\\
\\
Let ${\mathcal C}_{\serial}$ be the class of all serial frames, ${\mathcal C}_{\reflexive}$ be the class of all reflexive frames, ${\mathcal C}_{\sym}$ be the class of all symmetric frames, ${\mathcal C}_{\transitive}$ be the class of all transitive frames, ${\mathcal C}_{\Euclidean}$ be the class of all Euclidean frames and ${\mathcal C}_{\deterministic}$ be the class of all deterministic frames.
\subsection{Confluences}
A frame $(W,{\leq},{R})$ is {\em forward confluent}\/ if ${\geq}{\circ}{R}{\subseteq}{R}{\circ}{\geq}$.
\\
\\
A frame $(W,{\leq},{R})$ is {\em backward confluent}\/ if ${R}{\circ}{\leq}{\subseteq}{\leq}{\circ}{R}$.
\\
\\
A frame $(W,{\leq},{R})$ is {\em downward confluent}\/ if ${\leq}{\circ}{R}{\subseteq}{R}{\circ}{\leq}$.
\\
\\
A frame $(W,{\leq},{R})$ is {\em upward confluent}\/ if ${R}{\circ}{\geq}{\subseteq}{\geq}{\circ}{R}$.
%
%
%
%
%
%
%
%
%
%
%
%
%
%
\\
\\
Let ${\mathcal C}_{\fcfra}$ be the class of all forward confluent frames, ${\mathcal C}_{\bcfra}$ be the class of all backward confluent frames, ${\mathcal C}_{\dcfra}$ be the class of all downward confluent frames and ${\mathcal C}_{\ucfra}$ be the class of all upward confluent frames.
\\
\\
We write ${\mathcal C}_{\fbcfra}$ to denote the class of all forward and backward confluent frames, ${\mathcal C}_{\fdcfra}$ to denote the class of all forward and downward confluent frames, etc.
\\
\\
The conditions of forward confluence and backward confluence have been considered in~\cite{FischerServi:1984} where they have been called ``connecting properties'' and in~\cite[Chapter~$3$]{Simpson:1994} where they have been called ``$(\mathbf{F}1)$'' and ``$(\mathbf{F}2)$''.
They have also been considered in~\cite{Marin:et:al:2021,Plotkin:Stirling:1986}.
The conditions of downward confluence and upward confluence have been considered in~\cite{Bozic:Dosen:1984} where they have received no specific name.
Other conditions have been considered as well in the literature~\cite{Sotirov:1984}.
%
%
%
%
%
%
\begin{proposition}\label{Proposition:about:fc:bc:dc:uc:preorder:first}
For all frames $(W,{\leq},{R})$,
\begin{itemize}
\item $(W,{\leq},{R})$ is in ${\mathcal C}_{\fcfra}$ if and only if $(W,{\leq},{R})^{\mathtt{tr}}$ is in ${\mathcal C}_{\dcfra}$,
\item $(W,{\leq},{R})$ is in ${\mathcal C}_{\bcfra}$ if and only if $(W,{\leq},{R})^{\mathtt{tr}}$ is in ${\mathcal C}_{\ucfra}$.
\end{itemize}
\end{proposition}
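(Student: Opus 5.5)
The plan is to unfold the definitions and observe that passing from $(W,{\leq},{R})$ to $(W,{\leq},{R})^{\mathtt{tr}}{=}(W,{\geq},{R})$ simply swaps the roles of $\leq$ and $\geq$ while leaving $R$ untouched. Since the converse of $\geq$ is $\leq$ again, each confluence condition written for the preorder $\geq$ becomes, after this swap, another confluence condition written for $\leq$. So the whole argument reduces to rewriting the defining inclusions with $\leq$ and $\geq$ exchanged.

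For the first item, let $\preceq$ denote the preorder $\geq$ of the transformed frame, so that its converse $\succeq$ is $\leq$. By definition, $(W,{\geq},{R})$ is downward confluent iff ${\preceq}{\circ}{R}{\subseteq}{R}{\circ}{\preceq}$. Substituting $\preceq{=}\geq$ gives ${\geq}{\circ}{R}{\subseteq}{R}{\circ}{\geq}$. This is exactly forward confluence of $(W,{\leq},{R})$, and the first equivalence follows immediately.

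For the second item, $(W,{\geq},{R})$ is upward confluent iff ${R}{\circ}{\succeq}{\subseteq}{\succeq}{\circ}{R}$. Since $\succeq{=}\leq$, this becomes ${R}{\circ}{\leq}{\subseteq}{\leq}{\circ}{R}$, which is backward confluence of $(W,{\leq},{R})$.

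The only point needing care is to check that $(W,{\geq})$ is indeed a nonempty preorder, so that the transformed structure is a frame. This holds because the converse of a reflexive transitive relation is reflexive and transitive. One must also identify correctly which relation plays the role of ``$\geq$'' in the transformed frame. Beyond this bookkeeping there is no real obstacle: the statement is a direct definitional check.
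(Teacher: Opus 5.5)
Your proposal is correct. Taking $\geq$ as the preorder of $(W,{\leq},{R})^{\mathtt{tr}}$, whose converse is $\leq$, turns downward confluence into ${\geq}{\circ}{R}{\subseteq}{R}{\circ}{\geq}$ and upward confluence into ${R}{\circ}{\leq}{\subseteq}{\leq}{\circ}{R}$, which are exactly forward and backward confluence of $(W,{\leq},{R})$; the paper leaves this proof to the reader, and your direct definitional check is the intended argument.
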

%
%
%
%
%
%
%
%
%
%
%
%
%
%
%
%
%
%
%
%
%
%
%
%
%
%
%
%
%
%
%
%
%
%
%
%
%
%
%
%
%
%
%
%
%
%
%
%
%
%
%
%
%
%
\subsection{Valuations and models}
%
%
%
%
%
%
For all nonempty preorders $(W,{\leq})$, a {\em valuation on $(W,{\leq})$}\/ is a function $V\ :\ \At{\longrightarrow}
$\linebreak$
\wp(W)$ such that for all atoms $p$, $V(p)$ is $\leq$-closed.
%
%
%
%
%
%
%
%
%
%
%
%
\\
\\
A {\em model}\/ is a $4$-tuple $(W,{\leq},{R},V)$ where $(W,{\leq},{R})$ is a frame and $V$ is a valuation on the nonempty preorder $(W,{\leq})$.
\begin{proposition}\label{Proposition:about:complement:and:closed:subsets}
Let $(W,{\leq})$ be a nonempty preorder.
For all subsets $U$ of $W$, $U$ is $\leq$-closed if and only if $W{\setminus}U$ is $\geq$-closed.
\end{proposition}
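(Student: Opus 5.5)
We argue by contraposition, reading each closure condition as a statement about pairs $s,t$ and observing that the two conditions are contrapositives of each other. The definition gives: $U$ is $\leq$-closed iff for all $s,t{\in}W$, if $s{\in}U$ and $s{\leq}t$ then $t{\in}U$. Similarly, $W{\setminus}U$ is $\geq$-closed iff for all $s,t{\in}W$, if $s{\notin}U$ and $s{\geq}t$ then $t{\notin}U$, that is, if $s{\notin}U$ and $t{\leq}s$ then $t{\notin}U$.

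For the left-to-right direction, assume $U$ is $\leq$-closed. Take $s,t{\in}W$ with $s{\in}W{\setminus}U$ and $s{\geq}t$, so $t{\leq}s$. Suppose for contradiction that $t{\in}U$. Then $\leq$-closedness of $U$, applied to the pair $(t,s)$, gives $s{\in}U$, which contradicts $s{\notin}U$. Hence $t{\in}W{\setminus}U$.

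For the right-to-left direction, assume $W{\setminus}U$ is $\geq$-closed. Take $s{\in}U$ with $s{\leq}t$, and suppose $t{\notin}U$. Since $t{\geq}s$, $\geq$-closedness applied to the pair $(t,s)$ gives $s{\notin}U$, a contradiction. Hence $t{\in}U$.

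There is no real obstacle; the only care needed is to swap the roles of $s$ and $t$ correctly when passing between $\leq$ and $\geq$. Reflexivity and transitivity of the preorder play no role, and neither does nonemptiness of $W$; the statement holds for any binary relation and its converse.
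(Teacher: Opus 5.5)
Your proof is correct. It unfolds both closure conditions and swaps the roles of $s$ and $t$ when passing between $\leq$ and $\geq$, which is the routine argument the paper leaves to the reader; the paper gives no proof of this proposition. Your side remark is also right: neither the preorder axioms nor nonemptiness of $W$ are used, so the equivalence holds for any binary relation and its converse.
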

For all valuations $V$ on the nonempty preorder $(W,{\leq})$, let $V^{\mathtt{tr}}$ be the valuation on the nonempty preorder $(W,{\geq})$ such that for all atoms $p$, $V^{\mathtt{tr}}(p){=}W{\setminus}V(p)$.
\\
\\
%
%
For all models $(W,{\leq},{R},V)$, let $(W,{\leq},{R},V)^{\mathtt{tr}}$ be the model $(W,{\geq},{R},V^{\mathtt{tr}})$.
\subsection{Satisfiability of formulas}
With respect to a model $(W,{\leq},{R},V)$, for all $s{\in}W$ and for all $A{\in}\Fo$, the {\em satisfiability of $A$ at $s$ in $(W,{\leq},{R},V)$}\/ (in symbols $s{\models}A$) is defined as follows:
%
%
\begin{itemize}
\item $s{\models}p$ if and only if $s{\in}V(p)$,
%
%
%
%
\item $s{\models}A{\rightharpoonup}B$ if and only if for all $t{\in}W$, if $s{\leq}t$ then either $t{\not\models}A$, or $t{\models}B$,
%
%
%
%
\item $s{\models}A{\leftharpoondown}B$ if and only if there exists $t{\in}W$ such that $t{\leq}s$, $t{\not\models}A$ and $t{\models}B$,
%
%
%
%
\item $s{\models}{\top}$,
%
%
%
%
\item $s{\not\models}{\bot}$,
%
%
%
%
\item $s{\models}A{\vee}B$ if and only if either $s{\models}A$, or $s{\models}B$,
%
%
%
%
\item $s{\models}A{\wedge}B$ if and only if $s{\models}A$ and $s{\models}B$,
%
%
%
%
\item $s{\models}{\lozenge}A$ if and only if there exists $t{\in}W$ such that $t{\leq}s$ and there exists $u{\in}W$ such that $t{R}u$ and $u{\models}A$,
%
%
%
%
\item $s{\models}{\square}A$ if and only if for all $t{\in}W$, if $s{\leq}t$ then for all $u{\in}W$, if $t{R}u$ then $u{\models}A$,
%
%
%
%
\item $s{\models}{\blacklozenge}A$ if and only if for all $t{\in}W$, if $s{\leq}t$ then there exists $u{\in}W$ such that $t{R}u$ and $u{\models}A$,
%
%
%
%
\item $s{\models}{\blacksquare}A$ if and only if there exists $t{\in}W$ such that $t{\leq}s$ and for all $u{\in}W$, if $t{R}u$ then $u{\models}A$.
%
%
%
%
\end{itemize}
When the model $(W,{\leq},{R},V)$ is not clear from the context, for all $s{\in}W$ and for all $A{\in}\Fo$, we write $(W,{\leq},{R},V),s{\models}A$ instead of $s{\models}A$.
\\
\\
Our definition of the satisfiability of $\lozenge$-formulas is the one that P\v{r}enosil~\cite{Prenosil:2014} and Simpson~\cite[Page~$49$]{Simpson:1994} consider.
Our definition of the satisfiability of $\square$-formulas is the one that Fischer Servi~\cite{FischerServi:1984} and Plotkin and Stirling~\cite{Plotkin:Stirling:1986} consider.
Our definition of the satisfiability of $\blacklozenge$-formulas is the one that Wijesekera~\cite{Wijesekera:1990} considers.
Our definition of the satisfiability of $\blacksquare$-formulas has never been considered before.
\begin{proposition}[Heredity Property]\label{Proposition:let:be:a:model:for:all:formulas:and:for:all:if:and:then}
Let $(W,{\leq},{R},V)$ be a model.
For all $A{\in}\Fo$ and for all $s,t{\in}W$, if $s{\models}A$ and $s{\leq}t$ then $t{\models}A$.\footnote{Our Heredity Property is reminiscent of the Heredity Property considered as well in the literature.
See~\cite[Proposition~$2.1$]{Chagrov:Zakharyaschev:1997}.}
\end{proposition}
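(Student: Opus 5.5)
We prove the statement by induction on ${\parallel}A{\parallel}$, fixing $s,t{\in}W$ with $s{\models}A$ and $s{\leq}t$ and showing $t{\models}A$. The cases fall into two kinds, according to whether the truth condition of the main connective quantifies over $\leq$-successors (``for all $t'$ with $s{\leq}t'$'') or over $\leq$-predecessors (``there exists $t'$ with $t'{\leq}s$''). In both kinds the conclusion follows from transitivity of $\leq$ alone, and no frame condition relating $\leq$ and $R$ is needed. This is precisely why P\v{r}enosil's $\lozenge$ and the new $\blacksquare$ were chosen with predecessor quantification.

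\textbf{Base and propositional cases.} For an atom $p$, the claim holds because $V(p)$ is $\leq$-closed by the definition of a valuation. The cases $\top$ and $\bot$ are trivial. For $\vee$ and $\wedge$ we simply apply the induction hypothesis to the components. For $A{\rightharpoonup}B$, let $t{\leq}t'$; then $s{\leq}t'$ by transitivity, so the universal condition at $s$ already covers $t'$. For $A{\leftharpoondown}B$, the witness $t'{\leq}s$ satisfies $t'{\leq}t$ by transitivity, so the same witness works at $t$. Neither of these two cases uses the induction hypothesis.

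\textbf{Modal cases.} For ${\square}A$ and ${\blacklozenge}A$, the truth condition starts with ``for all $t'$ with $s{\leq}t'$''. Any $t'$ with $t{\leq}t'$ satisfies $s{\leq}t'$, so the inner requirement (all $R$-successors satisfy $A$, respectively some $R$-successor satisfies $A$) holds at $t'$ because it held for every $\leq$-successor of $s$. For ${\lozenge}A$ and ${\blacksquare}A$, the truth condition starts with ``there exists $t'$ with $t'{\leq}s$''. That same $t'$ satisfies $t'{\leq}t$, and the inner clause concerns only $t'$ and its $R$-successors, so it transfers unchanged.

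No step is a genuine obstacle. The only point requiring care is to check that, in each modal clause, the inner clause is evaluated at the $\leq$-related point rather than at $s$ itself. This holds for all four connectives, so the induction hypothesis is needed only for $\vee$ and $\wedge$.
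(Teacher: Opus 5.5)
Your proof is correct and matches the paper's approach, which is simply ``by induction on $A$'': the atomic case uses $\leq$-closedness of $V(p)$, the cases $\vee$ and $\wedge$ use the induction hypothesis, and the cases for ${\rightharpoonup}$, ${\leftharpoondown}$, $\lozenge$, $\square$, $\blacklozenge$ and $\blacksquare$ need only transitivity of $\leq$. You have filled in the details the paper leaves to the reader.
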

\begin{proof}
By induction on $A$.
\medskip
\end{proof}
Concerning $\lozenge$-formulas and $\square$-formulas, Bo\v{z}i\'{c} and Do\v{s}en~\cite{Bozic:Dosen:1984} write
\begin{itemize}
\item $s{\models_{\mathtt{BD}}}{\lozenge}A$ if and only if there exists $t{\in}W$ such that $s{R}t$ and $t{\models_{\mathtt{BD}}}A$,
%
%
%
%
\item $s{\models_{\mathtt{BD}}}{\square}A$ if and only if for all $t{\in}W$, if $s{R}t$ then $t{\models_{\mathtt{BD}}}A$.
%
%
%
%
\end{itemize}
This definition of the satisfiability of $\lozenge$-formulas and $\square$-formulas necessitates to restrict the discussion to ${\mathcal C}_{\fdcfra}$, otherwise the Heredity Property would not hold.
\\
\\
Concerning $\lozenge$-formulas and $\square$-formulas, Fischer Servi~\cite{FischerServi:1984} write
\begin{itemize}
\item $s{\models_{\mathtt{FS}}}{\lozenge}A$ if and only if there exists $t{\in}W$ such that $s{R}t$ and $t{\models_{\mathtt{FS}}}A$,
%
%
%
%
\item $s{\models_{\mathtt{FS}}}{\square}A$ if and only if for all $t{\in}W$, if $s{\leq}t$ then for all $u{\in}W$, if $t{R}u$ then $u{\models_{\mathtt{FS}}}A$.
%
%
%
%
\end{itemize}
This definition of the satisfiability of $\lozenge$-formulas and $\square$-formulas necessitates to restrict the discussion to ${\mathcal C}_{\fcfra}$, otherwise the Heredity Property would not hold.\footnote{Indeed, Fischer Servi restricted the discussion to ${\mathcal C}_{\fbcfra}$, seeing that~---~this being just my interpretation~---~she also wanted $\IS4$~---~the extension of $\IK$ mentioned in the introduction~---~to be complete with respect to a class of reflexive and transitive frames.}
\\
\\
The reader may easily verify that in ${\mathcal C}_{\fdcfra}$, the definition of the satisfiability of $\lozenge$-formulas and $\square$-formulas given by Bo\v{z}i\'{c} and Do\v{s}en, the definition of the satisfiability of $\lozenge$-formulas and $\square$-formulas given by Fischer Servi and our definition of the satisfiability of $\lozenge$-formulas and $\square$-formulas are equivalent.
\begin{proposition}\label{Proposition:let:be:a:model:for:all:formulas:and:for:all:if:and:only:if:for:all:if:then:if:and:only:if:there:exists:such:that:and}
Let $(W,{\leq},{R},V)$ be a model.
For all $A{\in}\Fo$ and for all $s{\in}W$,\footnote{It follows that when ${\leq}{=}{\Id_{W}}$, $\rceil$ becomes like classical negation and $\lfloor$ becomes like classical negation.}
\begin{itemize}
\item $s{\models}{\rceil}A$ if and only if for all $t{\in}W$, if $s{\leq}t$ then $t{\not\models}A$,
%
%
%
%
\item $s{\models}{\lfloor}A$ if and only if there exists $t{\in}W$ such that $t{\leq}s$ and $t{\not\models}A$.
%
%
%
%
\end{itemize}
\end{proposition}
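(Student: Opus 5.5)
We unfold the abbreviations ${\rceil}A{=}A{\rightharpoonup}{\bot}$ and ${\lfloor}A{=}A{\leftharpoondown}{\top}$ and apply the truth conditions for $\rightharpoonup$ and $\leftharpoondown$ together with those for $\bot$ and $\top$. No induction on $A$ is needed, and the Heredity Property is not used.

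For the first item: by the clause for $\rightharpoonup$, $s{\models}A{\rightharpoonup}{\bot}$ holds if and only if for all $t{\in}W$, if $s{\leq}t$ then either $t{\not\models}A$ or $t{\models}{\bot}$. Since $t{\not\models}{\bot}$ for every $t{\in}W$, the second disjunct is never available. The condition therefore reduces to: for all $t{\in}W$, if $s{\leq}t$ then $t{\not\models}A$, which is the claim.

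For the second item: by the clause for $\leftharpoondown$, $s{\models}A{\leftharpoondown}{\top}$ holds if and only if there exists $t{\in}W$ such that $t{\leq}s$, $t{\not\models}A$ and $t{\models}{\top}$. Since $t{\models}{\top}$ holds for every $t{\in}W$, this last conjunct can be dropped. What remains is exactly the stated condition.

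Both directions of each equivalence come out of the same chain of biconditionals, so nothing further is required. There is no real obstacle. The only point needing care is to read the abbreviations with the argument order as defined, namely $A$ on the left in both ${\rceil}A$ and ${\lfloor}A$, so that the $\leftharpoondown$ clause makes $t{\not\models}A$ and $t{\models}{\top}$ the relevant conjuncts.
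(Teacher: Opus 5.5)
Your proposal is correct. Unfolding ${\rceil}A{=}A{\rightharpoonup}{\bot}$ and ${\lfloor}A{=}A{\leftharpoondown}{\top}$, then dropping the never-satisfied disjunct $t{\models}{\bot}$ and the always-satisfied conjunct $t{\models}{\top}$, is exactly the routine verification the paper leaves to the reader (it gives no explicit proof of this proposition).
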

\begin{proposition}\label{Proposition:about:curlyvee:and:curlywedge:when:we:have:identity:instead:of:preorder}
Let $(W,{\leq},{R},V)$ be a model.
For all $A,B{\in}\Fo$ and for all $s{\in}W$,\footnote{It follows that when ${\leq}{=}{\Id_{W}}$, $\curlyvee$ becomes like classical disjunction and $\curlywedge$ becomes like classical conjunction.}
\begin{itemize}
\item $s{\models}A{\curlyvee}B$ if and only if for all $t{\in}W$, if $s{\leq}t$ then either there exists $u{\in}W$ such that $t{\leq}u$, $u{\models}A$ and $u{\not\models}B$, or $t{\models}B$,
%
%
%
%
\item $s{\models}A{\curlywedge}B$ if and only if there exists $t{\in}W$ such that $t{\leq}s$ and for all $u{\in}W$, if $u{\leq}t$ then either $u{\models}A$, or $u{\not\models}B$ and $t{\models}B$.
%
%
%
%
\end{itemize}
\end{proposition}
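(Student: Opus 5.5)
The plan is to unfold the two abbreviations and apply the truth conditions of $\rightharpoonup$ and $\leftharpoondown$ twice each. No induction and no appeal to the Heredity Property (Proposition~\ref{Proposition:let:be:a:model:for:all:formulas:and:for:all:if:and:then}) is needed: both items are pure rewritings of nested quantifiers. Recall that $A{\curlyvee}B$ abbreviates $(A{\rightharpoonup}B){\rightharpoonup}B$ and $A{\curlywedge}B$ abbreviates $(A{\leftharpoondown}B){\leftharpoondown}B$.

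For the first item, apply the clause for $\rightharpoonup$ to the outer implication. This gives $s{\models}A{\curlyvee}B$ iff for all $t{\in}W$ with $s{\leq}t$, either $t{\not\models}A{\rightharpoonup}B$ or $t{\models}B$. Next negate the clause for $\rightharpoonup$ at $t$. This gives $t{\not\models}A{\rightharpoonup}B$ iff there exists $u{\in}W$ with $t{\leq}u$, $u{\models}A$ and $u{\not\models}B$. Substituting the second equivalence into the first yields exactly the stated condition.

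For the second item, apply the clause for $\leftharpoondown$ to the outer co-implication. This gives $s{\models}A{\curlywedge}B$ iff there exists $t{\in}W$ with $t{\leq}s$, $t{\not\models}A{\leftharpoondown}B$ and $t{\models}B$. Then negate the clause for $\leftharpoondown$ at $t$. This gives $t{\not\models}A{\leftharpoondown}B$ iff for all $u{\in}W$ with $u{\leq}t$, either $u{\models}A$ or $u{\not\models}B$. Combining the two gives the statement.

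The only delicate point, which I expect to be the main obstacle, is how to read the second item. The condition ``$t{\models}B$'' must be understood as a conjunct sitting outside the universal quantifier over $u$. The intended reading is: there exists $t{\leq}s$ such that $t{\models}B$ and, for all $u{\leq}t$, either $u{\models}A$ or $u{\not\models}B$. The reading that places $t{\models}B$ inside the second disjunct is not equivalent. Under that reading, a point $t$ at which every $u{\leq}t$ satisfies $A$ and $t{\not\models}B$ would satisfy the condition, yet $t$ does not witness $s{\models}A{\curlywedge}B$. I would therefore state explicitly that the scoping above is the one being proved. With that reading fixed, the proof is complete.
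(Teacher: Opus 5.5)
Your proposal is correct and is exactly the direct unfolding of $(A{\rightharpoonup}B){\rightharpoonup}B$ and $(A{\leftharpoondown}B){\leftharpoondown}B$ through the clauses for $\rightharpoonup$ and $\leftharpoondown$; the paper gives no proof here and leaves it to the reader, and this unfolding is the evident argument. Your scoping of the second item, with $t{\models}B$ outside the quantifier over $u$, is the intended one. The footnote confirms this: when ${\leq}{=}{\Id_{W}}$, only this reading makes $\curlywedge$ classical conjunction. Moreover, a one-point model with $A{=}{\top}$ and $B{=}{\bot}$ satisfies the other reading but not $A{\curlywedge}B$, which settles the non-equivalence more firmly than your ``this $t$ is not a witness'' remark.
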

\subsection{Satisfiability of signed formulas}
With respect to a model $(W,{\leq},{R},V)$, for all $s{\in}W$ and for all $(\alpha,A){\in}\SFo$, the {\em satisfiability of $(\alpha,A)$ at $s$ in $(W,{\leq},{R},V)$}\/ (in symbols $s{\models}(\alpha,A)$) is defined as follows:
\begin{itemize}
\item $s{\models}(+,A)$ if and only if $s{\models}A$,
\item $s{\models}(-,A)$ if and only if $s{\not\models}A$.
\end{itemize}
When the model $(W,{\leq},{R},V)$ is not clear from the context, for all $s{\in}W$ and for all $(\alpha,A){\in}\SFo$, we write $(W,{\leq},{R},V),s{\models}(\alpha,A)$ instead of $s{\models}(\alpha,A)$.
\begin{proposition}\label{Proposition:about:duality:valuation:complement:valuation:transpose:formulas}
For all models $(W,{\leq},{R},V)$, for all $(\alpha,A){\in}\SFo$ and for all $s{\in}W$, $(W,{\leq},{R},V),s{\models}(\alpha,A)$ if and only if $(W,{\leq},{R},V)^{\mathtt{tr}},s{\models}\mathtt{tr}(\alpha,A)$.
\end{proposition}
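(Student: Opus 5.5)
Write $M=(W,{\leq},{R},V)$ and $M^{\mathtt{tr}}=(W,{\geq},{R},V^{\mathtt{tr}})$, and let ${\models}$ and ${\models^{\mathtt{tr}}}$ denote satisfiability in $M$ and in $M^{\mathtt{tr}}$ respectively. The plan is to reduce the statement to the following claim about unsigned formulas: for all $A{\in}\Fo$ and all $s{\in}W$, $s{\models}A$ if and only if $s{\not\models^{\mathtt{tr}}}A^{\ast}$. Here $A^{\ast}$ is the formula obtained from $A$ by the connective exchange underlying $\mathtt{tr}$, so that $\mathtt{tr}(\alpha,A){=}(\overline{\alpha},A^{\ast})$ with $\overline{+}{=}-$ and $\overline{-}{=}+$. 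The signed statement then follows at once. For $(+,A)$ we need $s{\models}A$ iff $s{\models^{\mathtt{tr}}}(-,A^{\ast})$, which is exactly the claim. For $(-,A)$ we need $s{\not\models}A$ iff $s{\models^{\mathtt{tr}}}A^{\ast}$, which is the contrapositive of the claim.

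The claim is proved by induction on $A$, and each case amounts to checking that the exchanged connective's clause, read in $(W,{\geq})$, is the classical negation of the original clause read in $(W,{\leq})$.
\begin{itemize}
\item Atoms: $s{\models^{\mathtt{tr}}}p$ iff $s{\in}W{\setminus}V(p)$. Proposition~\ref{Proposition:about:complement:and:closed:subsets} guarantees that $V^{\mathtt{tr}}$ is a valuation on $(W,{\geq})$.
\item $\top,\bot,\vee,\wedge$: these are De Morgan dualities, obtained directly from the induction hypothesis.
\item ${\rightharpoonup}$: $s{\models}A{\rightharpoonup}B$ fails iff there is $t$ with $s{\leq}t$, $t{\models}A$ and $t{\not\models}B$. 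By the induction hypothesis this is iff there is $t$ with $t{\geq}s$ in the sense of $M^{\mathtt{tr}}$ (that is, $t$ lies below $s$ in the order $\geq$), $t{\not\models^{\mathtt{tr}}}A^{\ast}$ and $t{\models^{\mathtt{tr}}}B^{\ast}$. This is precisely $s{\models^{\mathtt{tr}}}A^{\ast}{\leftharpoondown}B^{\ast}$. The case of ${\leftharpoondown}$ is symmetric.
\item $\square$ and $\lozenge$: $s{\not\models}{\square}A$ iff there are $t,u$ with $s{\leq}t$, $t{R}u$ and $u{\not\models}A$. This is iff, in $M^{\mathtt{tr}}$, there is $t$ below $s$ with respect to $\geq$ and $u$ with $t{R}u$ and $u{\models^{\mathtt{tr}}}A^{\ast}$. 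That is $s{\models^{\mathtt{tr}}}{\lozenge}A^{\ast}$. The case of $\lozenge$ is analogous.
\item $\blacklozenge$ and $\blacksquare$: $s{\not\models}{\blacklozenge}A$ iff there is $t$ with $s{\leq}t$ such that for all $u$ with $t{R}u$ we have $u{\not\models}A$. This is iff $s{\models^{\mathtt{tr}}}{\blacksquare}A^{\ast}$, and the converse case is the same computation.
\end{itemize}

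The crucial point in the modal and implication cases is that $R$ is unchanged by $\mathtt{tr}$, while the quantifier over $\leq$-successors becomes a quantifier over $\geq$-predecessors. The truth conditions of the paired connectives are designed precisely for this, so each case is a one-line unfolding.

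There is no real obstacle. The only care needed is bookkeeping: the induction must be on unsigned formulas with the explicit negation built in, rather than on signed formulas directly, and one must check that $M^{\mathtt{tr}}$ is indeed a model. The latter holds because $(W,{\geq})$ is a nonempty preorder and by Proposition~\ref{Proposition:about:complement:and:closed:subsets}.
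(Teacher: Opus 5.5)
Your proof is correct and is essentially the paper's argument. The paper's proof is just ``by induction on $A$''. Your reformulation as the unsigned claim that $s{\models}A$ in $(W,{\leq},{R},V)$ iff $s{\not\models}A^{\ast}$ in $(W,{\leq},{R},V)^{\mathtt{tr}}$, with the $(-,A)$ case as its contrapositive, is the natural way to carry out that induction, and each connective case you unfold checks out.
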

\begin{proof}
By induction on $A$.
\medskip
\end{proof}
%
%
%
%
A signed formula $(\alpha,A)$ is {\em true in a model $(W,{\leq},{R},V)$}\/ (in symbols $(W,{\leq},{R},V){\models}
$\linebreak$
(\alpha,A)$) if for all $s{\in}W$, $s{\models}(\alpha,A)$.
\begin{proposition}\label{proposition:about:truth:in:a:model:and:signs:of:signed:formulas}
For all models $(W,{\leq},{R},V)$ and for all $(\alpha,A){\in}\SFo$, the following conditions are equivalent: $\mathbf{1)}$~$(W,{\leq},{R},V){\models}(\alpha,A)$; $\mathbf{2)}$~$(W,{\leq},{R},V){\models}(+,A{\mid}^{\alpha})$;
\linebreak$
\mathbf{3)}$~$(W,{\leq},{R},V){\models}(-,A{\mid}_{\alpha})$.
\end{proposition}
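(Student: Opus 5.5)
The plan is a case split on the sign $\alpha$. In each case we compute $A{\mid}^{\alpha}$ and $A{\mid}_{\alpha}$ and reduce the equivalences to the clauses of Proposition~\ref{Proposition:let:be:a:model:for:all:formulas:and:for:all:if:and:only:if:for:all:if:then:if:and:only:if:there:exists:such:that:and} for ${\rceil}$ and ${\lfloor}$. Each of the two cases contains one trivial equivalence, where the relevant operator is the identity, and one nontrivial equivalence.

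\textbf{Case $\alpha{=}+$.} Here $A{\mid}^{+}{=}A$, so $\mathbf{1)}\Leftrightarrow\mathbf{2)}$ holds by definition. For $\mathbf{3)}$ we have $A{\mid}_{+}{=}{\lfloor}A$, so $\mathbf{3)}$ says that every $s{\in}W$ satisfies $s{\not\models}{\lfloor}A$. By Proposition~\ref{Proposition:let:be:a:model:for:all:formulas:and:for:all:if:and:only:if:for:all:if:then:if:and:only:if:there:exists:such:that:and}, this means: for all $s$, there is no $t{\leq}s$ with $t{\not\models}A$.
\begin{itemize}
\item If $A$ is true everywhere, then clearly no such $t$ exists.
\item Conversely, assume $\mathbf{3)}$ and take any $s$. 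Using reflexivity, $s{\leq}s$, so $s{\models}A$.
\end{itemize}

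\textbf{Case $\alpha{=}-$.} Here $A{\mid}_{-}{=}A$, so $\mathbf{3)}$ says that $s{\not\models}A$ for all $s$, which is literally $\mathbf{1)}$. For $\mathbf{2)}$ we have $A{\mid}^{-}{=}{\rceil}A$. By the same proposition, $\mathbf{2)}$ means: for all $s$ and all $t{\geq}s$, $t{\not\models}A$.
\begin{itemize}
\item If $A$ fails everywhere, this holds trivially.
\item Conversely, instantiate $t{=}s$ by reflexivity to get $s{\not\models}A$ for every $s$.
\end{itemize}

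Since all four equivalences are immediate, there is no real obstacle. The only points to check are that the global quantification over all worlds is what makes these equivalences hold, and that reflexivity of $\leq$ is what collapses the existential in the ${\lfloor}$ clause and the universal in the ${\rceil}$ clause to the world itself.

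One could alternatively obtain the case $\alpha{=}-$ from the case $\alpha{=}+$ via Proposition~\ref{Proposition:about:duality:valuation:complement:valuation:transpose:formulas}, but the direct argument is shorter.
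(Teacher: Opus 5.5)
Your proof is correct. It is exactly the routine argument the paper omits (it leaves this proposition to the reader): a case split on $\alpha$, unfolding ${\rceil}$ and ${\lfloor}$ via the earlier proposition on their truth conditions, and using reflexivity of $\leq$ to collapse the quantifier to the world itself.
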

\begin{proposition}\label{Proposition:about:duality:valuation:complement:valuation:transpose:formulas:validity:model}
For all models $(W,{\leq},{R},V)$ and for all $(\alpha,A){\in}\SFo$, $(W,{\leq},{R},V){\models}
$\linebreak$
(\alpha,A)$ if and only if $(W,{\leq},{R},V)^{\mathtt{tr}}{\models}\mathtt{tr}(\alpha,A)$.
\end{proposition}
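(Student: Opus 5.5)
This is an immediate consequence of Proposition~\ref{Proposition:about:duality:valuation:complement:valuation:transpose:formulas}, applied pointwise and then quantified over all points. The key observation is that a model and its transpose share the same carrier. Indeed, $(W,{\leq},{R},V)^{\mathtt{tr}}{=}(W,{\geq},{R},V^{\mathtt{tr}})$ has the same set $W$ of points. Moreover $V^{\mathtt{tr}}$ is a genuine valuation on $(W,{\geq})$ by Proposition~\ref{Proposition:about:complement:and:closed:subsets}, so the transpose is a model and truth in it is meaningful.

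For the left-to-right direction, I will assume $(W,{\leq},{R},V){\models}(\alpha,A)$, so $(W,{\leq},{R},V),s{\models}(\alpha,A)$ for every $s{\in}W$. Proposition~\ref{Proposition:about:duality:valuation:complement:valuation:transpose:formulas} then gives $(W,{\leq},{R},V)^{\mathtt{tr}},s{\models}\mathtt{tr}(\alpha,A)$ for every $s{\in}W$. Since $W$ is also the carrier of the transposed model, this is exactly $(W,{\leq},{R},V)^{\mathtt{tr}}{\models}\mathtt{tr}(\alpha,A)$.

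For the right-to-left direction, I will run the same argument with the biconditional of Proposition~\ref{Proposition:about:duality:valuation:complement:valuation:transpose:formulas} read backwards. This avoids any need to show that transposition is involutive on models, although it is: ${\geq}$ reversed is ${\leq}$, and double complementation returns $V$. Proposition~\ref{Proposition:for:all:formulas:2} would handle the signed-formula side if one preferred that route.

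I expect no real obstacle. The only point needing care is that truth in a model quantifies over the same set $W$ in both models, and this holds because transposition leaves $W$ and $R$ unchanged.
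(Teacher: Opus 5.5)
Your proof is correct and is the intended argument. The paper gives no proof here, leaving it to the reader, because it follows from Proposition~\ref{Proposition:about:duality:valuation:complement:valuation:transpose:formulas} by quantifying over the carrier $W$, which a model shares with its transpose; your side remarks (that $V^{\mathtt{tr}}$ is a valuation on $(W,{\geq})$ by Proposition~\ref{Proposition:about:complement:and:closed:subsets}, and that involutivity is unnecessary) are accurate but inessential.
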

%
%
%
%
%
%
%
%
A signed formula $(\alpha,A)$ is {\em valid in a frame $(W,{\leq},{R})$}\/ (in symbols $(W,{\leq},{R}){\models}(\alpha,A)$) if for all models $(W,{\leq},{R},V)$ based on $(W,{\leq},{R})$, $(W,{\leq},{R},V){\models}(\alpha,A)$.
In that case, $(W,{\leq},{R})$ {\em validates $(\alpha,A)$.}
\begin{proposition}\label{proposition:about:validity:and:signs:of:signed:formulas}
For all frames $(W,{\leq},{R})$ and for all $(\alpha,A){\in}\SFo$, the following conditions are equivalent: $\mathbf{1)}$~$(W,{\leq},{R}){\models}(\alpha,A)$; $\mathbf{2)}$~$(W,{\leq},{R}){\models}(+,A{\mid}^{\alpha})$; $\mathbf{3)}$~$(W,{\leq},{R})
$\linebreak$
{\models}(-,A{\mid}_{\alpha})$.
\end{proposition}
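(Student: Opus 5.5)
The plan is to reduce the statement to Proposition~\ref{proposition:about:truth:in:a:model:and:signs:of:signed:formulas}, applied model by model. By definition, $(W,{\leq},{R}){\models}(\alpha,A)$ holds exactly when $(W,{\leq},{R},V){\models}(\alpha,A)$ for every valuation $V$ on $(W,{\leq})$. Condition $\mathbf{2)}$ and condition $\mathbf{3)}$ are defined by the same kind of universal quantification over valuations, applied to $(+,A{\mid}^{\alpha})$ and $(-,A{\mid}_{\alpha})$ respectively. Since all three conditions range over the same set of models based on the fixed frame, it suffices to show that, for each fixed model, the three corresponding truth conditions are equivalent. That is precisely the content of Proposition~\ref{proposition:about:truth:in:a:model:and:signs:of:signed:formulas}.

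Concretely, I will argue in three steps. For $\mathbf{1)}\Rightarrow\mathbf{2)}$, take any model $(W,{\leq},{R},V)$ based on the frame. From $\mathbf{1)}$ we get $(W,{\leq},{R},V){\models}(\alpha,A)$, and the previous proposition then gives $(W,{\leq},{R},V){\models}(+,A{\mid}^{\alpha})$. The implications $\mathbf{2)}\Rightarrow\mathbf{1)}$, $\mathbf{1)}\Rightarrow\mathbf{3)}$ and $\mathbf{3)}\Rightarrow\mathbf{1)}$ are obtained identically, using the other equivalences in that proposition.

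If one also wants to check Proposition~\ref{proposition:about:truth:in:a:model:and:signs:of:signed:formulas} itself, the only real content is the case analysis on $\alpha$, using Proposition~\ref{Proposition:let:be:a:model:for:all:formulas:and:for:all:if:and:only:if:for:all:if:then:if:and:only:if:there:exists:such:that:and}. For $\alpha{=}{+}$, condition $\mathbf{2)}$ is trivial since $A{\mid}^{+}{=}A$. Condition $\mathbf{3)}$ says that ${\lfloor}A$ fails everywhere, i.e.\ no $s$ has some $t{\leq}s$ with $t{\not\models}A$; taking $t{=}s$ by reflexivity, this is equivalent to $A$ being true everywhere. For $\alpha{=}{-}$, condition $\mathbf{3)}$ is trivial since $A{\mid}_{-}{=}A$. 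Condition $\mathbf{2)}$ says that ${\rceil}A$ holds everywhere, i.e.\ no $t$ above any $s$ satisfies $A$; again by reflexivity this is equivalent to $A$ failing everywhere.

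There is no genuine obstacle. The only point needing care is the use of reflexivity of $\leq$ in the truth-in-a-model equivalence, which holds because frames are based on preorders.
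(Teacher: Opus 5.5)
Your proof is correct and follows the evident intended route. The paper leaves this proposition to the reader, and it follows valuation by valuation from the preceding model-level Proposition~\ref{proposition:about:truth:in:a:model:and:signs:of:signed:formulas}. That proposition in turn reduces, exactly as you describe, to a case analysis on $\alpha$ using the truth conditions of ${\rceil}$ and ${\lfloor}$ together with the reflexivity of ${\leq}$.
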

%
%
%
%
%
%
%
%
%
%
\begin{proposition}\label{Proposition:about:duality:complement:valuation:transpose:formulas:validity:frame}
For all frames $(W,{\leq},{R})$ and for all $(\alpha,A){\in}\SFo$, $(W,{\leq},{R}){\models}(\alpha,A)$ if and only if $(W,{\leq},{R})^{\mathtt{tr}}{\models}\mathtt{tr}(\alpha,A)$.
\end{proposition}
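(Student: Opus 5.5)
The plan is to reduce the statement to Proposition~\ref{Proposition:about:duality:valuation:complement:valuation:transpose:formulas:validity:model} by showing that the operation $V\mapsto V^{\mathtt{tr}}$ is a bijection between valuations on $(W,{\leq})$ and valuations on $(W,{\geq})$. Concretely, I would first check, using Proposition~\ref{Proposition:about:complement:and:closed:subsets}, that for every valuation $V$ on $(W,{\leq})$, $V^{\mathtt{tr}}$ is indeed a valuation on $(W,{\geq})$. Symmetrically, for every valuation $V'$ on $(W,{\geq})$, the function $V$ defined by $V(p){=}W{\setminus}V'(p)$ is a valuation on $(W,{\leq})$: apply Proposition~\ref{Proposition:about:complement:and:closed:subsets} to the preorder $(W,{\geq})$, whose inverse is $\leq$. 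Moreover $V^{\mathtt{tr}}{=}V'$, since double complementation is the identity. Hence every model based on $(W,{\leq},{R})^{\mathtt{tr}}{=}(W,{\geq},{R})$ is of the form $(W,{\leq},{R},V)^{\mathtt{tr}}$ for some model $(W,{\leq},{R},V)$ based on $(W,{\leq},{R})$.

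Then both directions follow directly. ($\Rightarrow$) Suppose $(W,{\leq},{R}){\models}(\alpha,A)$ and let $(W,{\geq},{R},V')$ be any model based on the transposed frame. Write it as $(W,{\leq},{R},V)^{\mathtt{tr}}$ as above. Since $(W,{\leq},{R},V){\models}(\alpha,A)$, Proposition~\ref{Proposition:about:duality:valuation:complement:valuation:transpose:formulas:validity:model} yields $(W,{\leq},{R},V)^{\mathtt{tr}}{\models}\mathtt{tr}(\alpha,A)$. ($\Leftarrow$) Given any model $(W,{\leq},{R},V)$, its transpose is based on $(W,{\geq},{R})$, so it validates $\mathtt{tr}(\alpha,A)$. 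Proposition~\ref{Proposition:about:duality:valuation:complement:valuation:transpose:formulas:validity:model} then gives $(W,{\leq},{R},V){\models}(\alpha,A)$.

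Alternatively, the converse direction can be obtained from the forward one applied to the frame $(W,{\geq},{R})$ and the signed formula $\mathtt{tr}(\alpha,A)$. This uses Proposition~\ref{Proposition:for:all:formulas:2} together with the observation that $((W,{\leq},{R})^{\mathtt{tr}})^{\mathtt{tr}}{=}(W,{\leq},{R})$, because the inverse of $\geq$ is $\leq$. The only point requiring care, and the expected main obstacle, is the surjectivity of $V\mapsto V^{\mathtt{tr}}$ onto valuations of the transposed preorder, since that is what the forward direction needs. It is handled entirely by Proposition~\ref{Proposition:about:complement:and:closed:subsets} applied with the roles of $\leq$ and $\geq$ exchanged.
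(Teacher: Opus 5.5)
Your proof is correct, and it follows the route the paper has in mind: the paper leaves this proposition to the reader, and the intended argument lifts the model-level duality (Proposition~\ref{Proposition:about:duality:valuation:complement:valuation:transpose:formulas:validity:model}) to frames, using Proposition~\ref{Proposition:about:complement:and:closed:subsets} to show that $V\mapsto V^{\mathtt{tr}}$ is a bijection between valuations on $(W,{\leq})$ and valuations on $(W,{\geq})$. You correctly identified surjectivity as the one point the forward direction needs, and your alternative derivation of the converse, via involutivity of $\mathtt{tr}$ on signed formulas and on frames, is also sound.
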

%
%
%
%
%
%
%
%
%
%
A signed formula $(\alpha,A)$ is {\em valid on a class ${\mathcal C}$ of frames}\/ (in symbols ${\mathcal C}{\models}(\alpha,A)$) if for all frames $(W,{\leq},{R})$ in ${\mathcal C}$, $(W,{\leq},{R}){\models}(\alpha,A)$.
\begin{proposition}\label{Proposition:about:duality:complement:valuation:transpose:formulas:validity:class:of:frames}
For all classes ${\mathcal C}$ of frames and for all $(\alpha,A){\in}\SFo$, the following conditions are equivalent: $\mathbf{1)}$~${\mathcal C}{\models}(\alpha,A)$; $\mathbf{2)}$~${\mathcal C}{\models}(+,A{\mid}^{\alpha})$; $\mathbf{3)}$~${\mathcal C}{\models}(-,A{\mid}_{\alpha})$.
\end{proposition}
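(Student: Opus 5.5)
The plan is to derive the statement directly from Proposition~\ref{proposition:about:validity:and:signs:of:signed:formulas}, working frame by frame. Validity on a class is validity on every frame in the class, so it is enough to show that for each frame $(W,{\leq},{R})$ in ${\mathcal C}$ the three conditions $(W,{\leq},{R}){\models}(\alpha,A)$, $(W,{\leq},{R}){\models}(+,A{\mid}^{\alpha})$ and $(W,{\leq},{R}){\models}(-,A{\mid}_{\alpha})$ are equivalent. That is exactly the earlier proposition. We then quantify over all frames of ${\mathcal C}$.

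In detail, for $\mathbf{1)}\Rightarrow\mathbf{2)}$ we assume ${\mathcal C}{\models}(\alpha,A)$ and fix $(W,{\leq},{R})$ in ${\mathcal C}$. Then $(W,{\leq},{R}){\models}(\alpha,A)$, so by Proposition~\ref{proposition:about:validity:and:signs:of:signed:formulas} we get $(W,{\leq},{R}){\models}(+,A{\mid}^{\alpha})$. Since the frame was arbitrary, ${\mathcal C}{\models}(+,A{\mid}^{\alpha})$. The implications $\mathbf{2)}\Rightarrow\mathbf{3)}$ and $\mathbf{3)}\Rightarrow\mathbf{1)}$ follow the same pattern, each time using the corresponding equivalence of the frame-level proposition.

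If one also wants to check the frame-level proposition itself, it reduces to Proposition~\ref{proposition:about:truth:in:a:model:and:signs:of:signed:formulas} by quantifying over valuations. That in turn reduces to a case split on $\alpha$, using Proposition~\ref{Proposition:let:be:a:model:for:all:formulas:and:for:all:if:and:only:if:for:all:if:then:if:and:only:if:there:exists:such:that:and}. For $\alpha{=}+$, the formula $A{\mid}^{+}$ is $A$ itself. The model makes $(-,{\lfloor}A)$ true iff no point $s$ has a $t{\leq}s$ with $t{\not\models}A$, and by reflexivity of $\leq$ this holds iff $A$ holds everywhere. For $\alpha{=}-$, the formula $A{\mid}_{-}$ is $A$ itself. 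The model makes $(+,{\rceil}A)$ true iff every point $s$ has all of its ${\leq}$-successors refuting $A$, again by reflexivity iff $A$ fails everywhere.

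There is no real obstacle. The only point needing care is this reflexivity step at the model level, which is already covered by the earlier propositions, so the class-level statement is purely a matter of quantifying over ${\mathcal C}$.
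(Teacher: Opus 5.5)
Your proposal is correct and matches the intended argument. The paper leaves this proof to the reader, and the statement is just the frame-level Proposition~\ref{proposition:about:validity:and:signs:of:signed:formulas} quantified over the frames of ${\mathcal C}$. That proposition in turn is the model-level one quantified over valuations, with reflexivity of $\leq$ handling the ${\lfloor}A$ and ${\rceil}A$ cases exactly as you describe.
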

\begin{proposition}\label{Proposition:about:duality:complement:valuation:transpose:formulas:validity:class:of:frames:bis}
For all classes ${\mathcal C}$ of frames and for all $(\alpha,A){\in}\SFo$, ${\mathcal C}{\models}(\alpha,A)$ if and only if ${\mathcal C}^{\mathtt{tr}}{\models}\mathtt{tr}(\alpha,A)$.
\end{proposition}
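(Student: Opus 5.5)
The plan is to obtain this statement by lifting Proposition~\ref{Proposition:about:duality:complement:valuation:transpose:formulas:validity:frame} from single frames to classes of frames. Two structural facts about $\mathtt{tr}$ are needed. On signed formulas, $\mathtt{tr}$ is an involution (Proposition~\ref{Proposition:for:all:formulas:2}). On frames, $((W,{\leq},{R})^{\mathtt{tr}})^{\mathtt{tr}}{=}(W,{\leq},{R})$, since the converse of $\geq$ is $\leq$ and $R$ is left unchanged. It follows that $({\mathcal C}^{\mathtt{tr}})^{\mathtt{tr}}{=}{\mathcal C}$ for every class ${\mathcal C}$ of frames.

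For the left-to-right direction, assume ${\mathcal C}{\models}(\alpha,A)$ and take an arbitrary frame in ${\mathcal C}^{\mathtt{tr}}$. By definition of ${\mathcal C}^{\mathtt{tr}}$, it has the form $(W,{\leq},{R})^{\mathtt{tr}}$ for some $(W,{\leq},{R})$ in ${\mathcal C}$. Since $(W,{\leq},{R}){\models}(\alpha,A)$, Proposition~\ref{Proposition:about:duality:complement:valuation:transpose:formulas:validity:frame} gives $(W,{\leq},{R})^{\mathtt{tr}}{\models}\mathtt{tr}(\alpha,A)$. The frame was arbitrary, so ${\mathcal C}^{\mathtt{tr}}{\models}\mathtt{tr}(\alpha,A)$.

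For the right-to-left direction, apply the direction just proved to the class ${\mathcal C}^{\mathtt{tr}}$ and the signed formula $\mathtt{tr}(\alpha,A)$. This yields $({\mathcal C}^{\mathtt{tr}})^{\mathtt{tr}}{\models}\mathtt{tr}(\mathtt{tr}(\alpha,A))$, which is ${\mathcal C}{\models}(\alpha,A)$ by the two involution facts above. Alternatively, one can argue directly: take $(W,{\leq},{R})$ in ${\mathcal C}$, note that its transpose lies in ${\mathcal C}^{\mathtt{tr}}$ and so validates $\mathtt{tr}(\alpha,A)$, and then use the converse direction of Proposition~\ref{Proposition:about:duality:complement:valuation:transpose:formulas:validity:frame}.

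There is no real obstacle. The only point that needs care is checking that $(W,{\geq},{R})$ is again a frame, so that the transpose of a frame stays within the scope of the definitions. This holds because $\geq$ is a preorder on the same nonempty set $W$. The substantive work, namely the induction on formulas together with the complemented valuation $V^{\mathtt{tr}}$, is already contained in Proposition~\ref{Proposition:about:duality:valuation:complement:valuation:transpose:formulas} and its frame-level consequence.
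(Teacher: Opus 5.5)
Your proposal is correct and is the intended argument. The paper gives no proof here (it leaves missing proofs to the reader), and the statement follows frame by frame from Proposition~\ref{Proposition:about:duality:complement:valuation:transpose:formulas:validity:frame}, using, as you do, that $\mathtt{tr}$ is an involution on signed formulas and that $((W,{\leq},{R})^{\mathtt{tr}})^{\mathtt{tr}}{=}(W,{\leq},{R})$, hence $({\mathcal C}^{\mathtt{tr}})^{\mathtt{tr}}{=}{\mathcal C}$.
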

%
%
%
%
%
%
%
%
\begin{proposition}\label{about:sat:formulas:forward:downward:frames}
For all atoms $p,q$,
\begin{itemize}
\item ${\mathcal C}_{\fcfra}{\models}(+,{\square}(p{\rightharpoonup}q){\rightharpoonup}({\lozenge}p{\rightharpoonup}{\blacklozenge}q))$,
\item ${\mathcal C}_{\fcfra}{\models}(-,({\lozenge}p{\rightharpoonup}{\blacklozenge}q){\leftharpoondown}{\square}(p{\rightharpoonup}q))$,
\item ${\mathcal C}_{\dcfra}{\models}(+,({\square}p{\leftharpoondown}{\blacksquare}q){\rightharpoonup}{\lozenge}(p{\leftharpoondown}q))$,
\item ${\mathcal C}_{\dcfra}{\models}(-,{\lozenge}(p{\leftharpoondown}q){\leftharpoondown}({\square}p{\leftharpoondown}{\blacksquare}q))$.
\end{itemize}
\end{proposition}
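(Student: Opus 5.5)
The plan is to prove the first item by a direct semantic argument, derive the second from the first, and then obtain the third and fourth items by duality from the first two.

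\textbf{First item.} Let $(W,{\leq},{R},V)$ be a model based on a forward confluent frame. Fix $s,t{\in}W$ with $s{\leq}t$, $t{\models}{\square}(p{\rightharpoonup}q)$ and $t{\models}{\lozenge}p$. We must show $t{\models}{\blacklozenge}q$. So let $t'{\geq}t$; we need some $u$ with $t'{R}u$ and $u{\models}q$.
\begin{itemize}
\item From $t{\models}{\lozenge}p$ we get $v{\leq}t$ and $w$ with $v{R}w$ and $w{\models}p$.
\item By transitivity $v{\leq}t'$, so $t'{\geq}v{R}w$, i.e.\ $(t',w){\in}{\geq}{\circ}{R}$.
\item Forward confluence, ${\geq}{\circ}{R}{\subseteq}{R}{\circ}{\geq}$, yields $u$ with $t'{R}u$ and $u{\geq}w$.
\item By the Heredity Property (Proposition~\ref{Proposition:let:be:a:model:for:all:formulas:and:for:all:if:and:then}), $u{\models}p$.
\item Since $t{\leq}t'$ and $t'{R}u$, the hypothesis $t{\models}{\square}(p{\rightharpoonup}q)$ gives $u{\models}p{\rightharpoonup}q$, hence $u{\models}q$ by reflexivity of $\leq$.
\end{itemize}
Thus $s{\models}{\square}(p{\rightharpoonup}q){\rightharpoonup}({\lozenge}p{\rightharpoonup}{\blacklozenge}q)$. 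The one genuinely non-routine step is aligning the two witnesses: pushing the ${\lozenge}$-witness $v{\leq}t$ up to an arbitrary $t'{\geq}t$ via transitivity, so that forward confluence applies exactly.

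\textbf{Second item.} We show that no $s$ satisfies $({\lozenge}p{\rightharpoonup}{\blacklozenge}q){\leftharpoondown}{\square}(p{\rightharpoonup}q)$. Such satisfaction would require some $t{\leq}s$ with $t{\not\models}{\lozenge}p{\rightharpoonup}{\blacklozenge}q$ and $t{\models}{\square}(p{\rightharpoonup}q)$. Applying the first item at $t$ (with $t{\leq}t$) gives $t{\models}{\lozenge}p{\rightharpoonup}{\blacklozenge}q$, a contradiction.

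\textbf{Third and fourth items.} Applying $\mathtt{tr}$ to the signed formula of the first item gives
\[
\mathtt{tr}(+,{\square}(p{\rightharpoonup}q){\rightharpoonup}({\lozenge}p{\rightharpoonup}{\blacklozenge}q)) = (-,{\lozenge}(p{\leftharpoondown}q){\leftharpoondown}({\square}p{\leftharpoondown}{\blacksquare}q)),
\]
which is the fourth item. Likewise, $\mathtt{tr}$ of the second item is $(+,({\square}p{\leftharpoondown}{\blacksquare}q){\rightharpoonup}{\lozenge}(p{\leftharpoondown}q))$, which is the third item.

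It remains to identify the class. By Proposition~\ref{Proposition:about:fc:bc:dc:uc:preorder:first}, a frame is in ${\mathcal C}_{\fcfra}$ if and only if its transpose is in ${\mathcal C}_{\dcfra}$. Since $(W,{\leq},{R})^{\mathtt{tr}\,\mathtt{tr}}{=}(W,{\leq},{R})$, this gives ${\mathcal C}_{\fcfra}^{\mathtt{tr}}{=}{\mathcal C}_{\dcfra}$. Proposition~\ref{Proposition:about:duality:complement:valuation:transpose:formulas:validity:class:of:frames:bis} then transfers validity on ${\mathcal C}_{\fcfra}$ of the first two items to validity on ${\mathcal C}_{\dcfra}$ of their $\mathtt{tr}$-images, which are the fourth and third items respectively.
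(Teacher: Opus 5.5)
Your proof is correct and matches what the paper intends. The paper gives no proof here (it leaves it to the reader), but its pattern elsewhere is the one you follow: check the first item semantically, get the second from it, and transfer both to ${\mathcal C}_{\dcfra}$ via $\mathtt{tr}$ and ${\mathcal C}_{\fcfra}^{\mathtt{tr}}{=}{\mathcal C}_{\dcfra}$. One step is left implicit: collapsing the two nested implications to a single point satisfying both ${\square}(p{\rightharpoonup}q)$ and ${\lozenge}p$, which follows from the Heredity Property applied to ${\square}(p{\rightharpoonup}q)$.
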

For all classes ${\mathcal C}$ of frames, let $\Log({\mathcal C}){=}\{(\alpha,A){\in}\SFo:\ {\mathcal C}{\models}(\alpha,A)\}$.
\begin{proposition}
For all classes ${\mathcal C}$ of frames, if ${\mathcal C}$ is dual then $\Log({\mathcal C})$ is dual.
\end{proposition}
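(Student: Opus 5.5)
We have to show $\mathtt{tr}(\Log({\mathcal C})){=}\Log({\mathcal C})$ whenever ${\mathcal C}^{\mathtt{tr}}{=}{\mathcal C}$. The plan is to prove the inclusion $\mathtt{tr}(\Log({\mathcal C})){\subseteq}\Log({\mathcal C})$ first, and then obtain the reverse inclusion from the involutive character of $\mathtt{tr}$ (Proposition~\ref{Proposition:for:all:sets:of:formulas:2}).

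For the first inclusion, take $(\alpha,A){\in}\Log({\mathcal C})$, so that ${\mathcal C}{\models}(\alpha,A)$. By Proposition~\ref{Proposition:about:duality:complement:valuation:transpose:formulas:validity:class:of:frames:bis}, ${\mathcal C}^{\mathtt{tr}}{\models}\mathtt{tr}(\alpha,A)$. Since ${\mathcal C}$ is dual, ${\mathcal C}^{\mathtt{tr}}{=}{\mathcal C}$, hence ${\mathcal C}{\models}\mathtt{tr}(\alpha,A)$, that is, $\mathtt{tr}(\alpha,A){\in}\Log({\mathcal C})$.

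For the reverse inclusion, take $(\alpha,A){\in}\Log({\mathcal C})$. By Proposition~\ref{Proposition:for:all:formulas:2}, $(\alpha,A){=}\mathtt{tr}(\mathtt{tr}(\alpha,A))$. Applying the first step to $(\alpha,A)$ gives $\mathtt{tr}(\alpha,A){\in}\Log({\mathcal C})$, and therefore $(\alpha,A)$ is the image under $\mathtt{tr}$ of a member of $\Log({\mathcal C})$, i.e. $(\alpha,A){\in}\mathtt{tr}(\Log({\mathcal C}))$. Equivalently, apply $\mathtt{tr}$ to both sides of $\mathtt{tr}(\Log({\mathcal C})){\subseteq}\Log({\mathcal C})$ and use Proposition~\ref{Proposition:for:all:sets:of:formulas:2}.

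There is no real obstacle here, since all the semantic work sits in Proposition~\ref{Proposition:about:duality:complement:valuation:transpose:formulas:validity:class:of:frames:bis}, which we take as given. That proposition in turn rests on Proposition~\ref{Proposition:about:duality:valuation:complement:valuation:transpose:formulas}, together with the fact that $(\cdot)^{\mathtt{tr}}$ is a bijection on models over a fixed frame. The only point needing care is the set equality, and it is settled by the involution argument of the second step.
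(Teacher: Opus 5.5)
Your proof is correct and is essentially the argument the paper has in mind; the paper gives no proof here and leaves it to the reader. You get $\mathtt{tr}(\Log({\mathcal C})){\subseteq}\Log({\mathcal C})$ from Proposition~\ref{Proposition:about:duality:complement:valuation:transpose:formulas:validity:class:of:frames:bis} together with ${\mathcal C}^{\mathtt{tr}}{=}{\mathcal C}$, and the reverse inclusion from the involutivity of $\mathtt{tr}$.
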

%
%
%
%
%
%
%
%
\begin{proposition}\footnote{Here, $\FIK$ is the set of all $\{{\leftharpoondown},{\blacklozenge},{\blacksquare}\}$-free formulas valid in ${\mathcal C}_{\fcfra}$~\cite{Balbiani:et:al:2024a}, $\IK$ is the set of all $\{{\leftharpoondown},{\blacklozenge},{\blacksquare}\}$-free formulas valid in ${\mathcal C}_{\fbcfra}$~\cite{FischerServi:1984}, $\LIK$ is the set of all $\{{\leftharpoondown},{\blacklozenge},{\blacksquare}\}$-free formulas valid in ${\mathcal C}_{\fdcfra}$~\cite{Balbiani:et:al:2024b} and $\WK$ is the set of all $\{{\leftharpoondown},{\lozenge},{\blacksquare}\}$-free formulas valid in ${\mathcal C}_{\allfra}$~\cite{Wijesekera:1990}.}
\begin{itemize}
\item $\FIK{=}\{A{\in}\Fo:$ $A$ is $\{{\leftharpoondown},{\blacklozenge},{\blacksquare}\}$-free$\}{\cap}\Log({\mathcal C}_{\fcfra})$,
\item $\IK{=}\{A{\in}\Fo:$ $A$ is $\{{\leftharpoondown},{\blacklozenge},{\blacksquare}\}$-free$\}{\cap}\Log({\mathcal C}_{\fbcfra})$,
\item $\LIK{=}\{A{\in}\Fo:$ $A$ is $\{{\leftharpoondown},{\blacklozenge},{\blacksquare}\}$-free$\}{\cap}\Log({\mathcal C}_{\fdcfra})$,
\item $\WK{=}\{A{\in}\Fo:$ $A$ is $\{{\leftharpoondown},{\lozenge},{\blacksquare}\}$-free$\}{\cap}\Log({\mathcal C}_{\allfra})$.
\end{itemize}
\end{proposition}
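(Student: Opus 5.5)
The plan is to compare truth conditions. Each of $\FIK$, $\IK$, $\LIK$ and $\WK$ is defined in the cited literature as the set of formulas of its fragment valid on a class of frames, under a possibly different satisfaction relation. The fragments involved use no connective whose truth condition differs between that literature and ours, except possibly $\lozenge$. So it suffices to show that, on the relevant class of frames, our satisfaction relation and the cited one agree at every world for every formula of the fragment. Validity then coincides formula by formula, giving each equality. Throughout we read the membership $A{\in}\Log({\mathcal C})$ as $(+,A){\in}\Log({\mathcal C})$.

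\textbf{The case of $\WK$.} For $\{{\leftharpoondown},{\lozenge},{\blacksquare}\}$-free formulas, Wijesekera's clauses for $\rightharpoonup$, $\top$, $\bot$, $\vee$, $\wedge$, $\square$ and $\blacklozenge$ are literally ours. His frames are arbitrary frames $(W,{\leq},{R})$ and his valuations are $\leq$-closed. A straightforward induction on $A$ therefore shows that the two relations coincide on all models. Hence $\WK$ equals the $\{{\leftharpoondown},{\lozenge},{\blacksquare}\}$-free part of $\Log({\mathcal C}_{\allfra})$.

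\textbf{The cases of $\FIK$, $\IK$ and $\LIK$.} Here the only discrepancy is $\lozenge$. Fischer Servi's clause (also used for $\FIK$ and $\IK$) says that $s$ has an $R$-successor satisfying $A$. Ours says that some $t{\leq}s$ has an $R$-successor satisfying $A$.

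We prove by simultaneous induction on $\{{\leftharpoondown},{\blacklozenge},{\blacksquare}\}$-free $A$ that $s{\models}A$ iff $s{\models_{\mathtt{FS}}}A$, for every model based on a forward confluent frame. Only the $\lozenge$ step needs work. The direction from FS to ours is trivial: take $t{=}s$. For the converse, assume $t{\leq}s$, $t{R}u$ and $u{\models}A$. Then $s{\geq}t{R}u$, so forward confluence ${\geq}{\circ}{R}{\subseteq}{R}{\circ}{\geq}$ yields $v$ with $s{R}v$ and $v{\geq}u$. By Proposition~\ref{Proposition:let:be:a:model:for:all:formulas:and:for:all:if:and:then} (Heredity) we get $v{\models}A$, and by the induction hypothesis $v{\models_{\mathtt{FS}}}A$.

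This argument applies to ${\mathcal C}_{\fcfra}$ and also to its subclasses ${\mathcal C}_{\fbcfra}$ and ${\mathcal C}_{\fdcfra}$. For $\LIK$ one additionally has to handle the Bo\v{z}i\'{c}--Do\v{s}en style $\square$ clause, if that is the cited semantics. On downward confluent frames, if $s{\leq}t{R}u$ then downward confluence gives $w$ with $s{R}w{\leq}u$. So if all $R$-successors of $s$ satisfy $A$, Heredity transfers this to $u$. Hence the two $\square$ clauses agree on ${\mathcal C}_{\fdcfra}$, which is the equivalence the paper already noted.

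\textbf{Main obstacle.} The hardest part is bookkeeping rather than mathematics. One must make sure that the exact semantic definitions in \cite{Balbiani:et:al:2024a,FischerServi:1984,Balbiani:et:al:2024b} match the clause being replaced, for instance whether the cited frames require partial orders or preorders. If some cited semantics uses partial orders, I would add a short step: a standard unravelling/quotient argument shows that validity over preorders and over partial orders coincides for these classes, since the confluence conditions are preserved by taking a bisimilar partially ordered copy.
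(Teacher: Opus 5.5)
Your proposal is correct and is essentially the argument the paper has in mind. The paper leaves the proof to the reader, recording only that the Bo\v{z}i\'{c}--Do\v{s}en, Fischer Servi and P\v{r}enosil clauses agree on ${\mathcal C}_{\fdcfra}$ and that Wijesekera's clauses are its own; your Heredity-based induction supplies this, together with the slightly sharper fact that forward confluence alone makes the two $\lozenge$-clauses coincide, which is what the $\FIK$ and $\IK$ items need.
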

%
%
%
%
%
%
%
%
%
%
\section{Eliminability}\label{section:expressivity}
%
%
A modal connective $\triangledown_{\mathtt{mc}}^{1}$ is {\em eliminable with respect to a class ${\mathcal C}$ of frames}\/ if for all atoms $p$, there exists a $\triangledown_{\mathtt{mc}}^{1}$-free formula $A$ such that for all frames $(W,{\leq},{R})$ in ${\mathcal C}$, for all models $(W,{\leq},{R},V)$ based on $(W,{\leq},{R})$ and for all $s{\in}W$, $s{\models}{\triangledown_{\mathtt{mc}}^{1}}p$ if and only if $s{\models}A$.
\\
\\
It follows from Proposition~\ref{interdefinable:1} that $\lozenge$, $\blacklozenge$, $\square$ and $\blacksquare$ are eliminable with respect to ${\mathcal C}_{\fdcfra}$.
\begin{proposition}\label{interdefinable:1}
\begin{enumerate}
\item $\lozenge$ is eliminable with respect to ${\mathcal C}_{\fcfra}$,
\item $\square$ is eliminable with respect to ${\mathcal C}_{\dcfra}$,
\item $\blacklozenge$ is eliminable with respect to ${\mathcal C}_{\fcfra}$,
\item $\blacksquare$ is eliminable with respect to ${\mathcal C}_{\dcfra}$.
\end{enumerate}
\end{proposition}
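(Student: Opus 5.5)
The plan is to show, for each item, that on the relevant class the connective coincides with its partner connective of the same ``polarity''. Concretely, I will prove that for all atoms $p$, $s{\models}{\lozenge}p$ iff $s{\models}{\blacklozenge}p$ on ${\mathcal C}_{\fcfra}$, and $s{\models}{\square}p$ iff $s{\models}{\blacksquare}p$ on ${\mathcal C}_{\dcfra}$. Items 1 and 3 then follow by taking $A{=}{\blacklozenge}p$ and $A{=}{\lozenge}p$ respectively. Items 2 and 4 follow by taking $A{=}{\blacksquare}p$ and $A{=}{\square}p$.

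For ${\mathcal C}_{\fcfra}$, one direction holds on all frames. If $s{\models}{\blacklozenge}p$, apply the truth condition to $t{=}s$, using reflexivity of $\leq$, to get $u$ with $s{R}u$ and $u{\models}p$. Then ${\lozenge}p$ holds at $s$ with witness $t{=}s$.

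For the converse, suppose $s{\models}{\lozenge}p$ and let $s{\leq}t'$. By the Heredity Property (Proposition~\ref{Proposition:let:be:a:model:for:all:formulas:and:for:all:if:and:then}), $t'{\models}{\lozenge}p$. So there are $t{\leq}t'$ and $u$ with $t{R}u$ and $u{\models}p$. Thus $t'$ is related to $u$ by ${\geq}{\circ}{R}$. Forward confluence gives $v$ with $t'{R}v$ and $u{\leq}v$. Heredity again gives $v{\models}p$, which is what ${\blacklozenge}p$ requires at $t'$.

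For ${\mathcal C}_{\dcfra}$, again one direction is trivial: ${\square}p$ implies ${\blacksquare}p$ by taking $t{=}s$. Conversely, suppose $t{\leq}s$ and every $R$-successor of $t$ satisfies $p$. Let $s{\leq}t'$ and $t'{R}u$. Then $t{\leq}t'$ by transitivity, so $t$ is related to $u$ by ${\leq}{\circ}{R}$. Downward confluence yields $v$ with $t{R}v$ and $v{\leq}u$. Then $v{\models}p$, hence $u{\models}p$ by heredity, so ${\square}p$ holds at $s$.

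Alternatively, items 2 and 4 can be obtained from items 1 and 3 by duality. One combines Proposition~\ref{Proposition:about:fc:bc:dc:uc:preorder:first} (the dual of a forward confluent frame is downward confluent) with Proposition~\ref{Proposition:about:duality:valuation:complement:valuation:transpose:formulas} applied to the signed formulas $({-},{\square}p)$ and $({-},{\blacksquare}p)$, whose $\mathtt{tr}$-images are $({+},{\lozenge}p)$ and $({+},{\blacklozenge}p)$. For this route one must also check that the map $V{\mapsto}V^{\mathtt{tr}}$ is onto the valuations of the dual frame, using Proposition~\ref{Proposition:about:complement:and:closed:subsets}.

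The only delicate point is the confluence step: one has to read the composition ${\geq}{\circ}{R}$ correctly, as ``$t'{\geq}t$ and $t{R}u$''. The witness then lies above $u$ rather than below it, and this orientation is exactly what lets heredity transfer $p$ from $u$ to $v$. The dc case relies on the same orientation, with $v{\leq}u$ transferring $p$ from $v$ to $u$.
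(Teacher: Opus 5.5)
Your proof is correct and is essentially the paper's. The paper likewise shows that ${\lozenge}p$ and ${\blacklozenge}p$ coincide on ${\mathcal C}_{\fcfra}$, phrased as ${\mathcal C}_{\fcfra}{\models}(+,{\lozenge}p{\rightharpoonup}{\blacklozenge}p)$ (your confluence direction) together with ${\mathcal C}_{\allfra}{\models}(-,{\lozenge}p{\leftharpoondown}{\blacklozenge}p)$ (your all-frames direction), and it gets items 2 and 4 by duality rather than by your direct argument. On that duality route, incidentally, surjectivity of $V{\mapsto}V^{\mathtt{tr}}$ is not needed, since you only apply the forward-confluent result to the particular model $(W,{\leq},{R},V)^{\mathtt{tr}}$.
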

\begin{proof}
$\mathbf{1)}$~It suffices to prove that for all atoms $p$, ${\mathcal C}_{\fcfra}{\models}(+,{\lozenge}p{\rightharpoonup}{\blacklozenge}p)$ and ${\mathcal C}_{\allfra}{\models}(-,{\lozenge}p
$\linebreak$
{\leftharpoondown}{\blacklozenge}p)$.
\\
\\
$\mathbf{2)}$~Dual to the proof of Item~$\mathbf{1)}$.
\\
\\
$\mathbf{3)}$~It suffices to prove that for all atoms $p$, ${\mathcal C}_{\allfra}{\models}(+,{\blacklozenge}p{\rightharpoonup}{\lozenge}p)$ and ${\mathcal C}_{\fcfra}{\models}(-,{\blacklozenge}p{\leftharpoondown}{\lozenge}p)$.
\\
\\
$\mathbf{4)}$~Dual to the proof of Item~$\mathbf{3)}$.
\medskip
\end{proof}
%
%
%
%
\begin{proposition}\label{interdefinable:3}
\begin{enumerate}
\item $\lozenge$ is not eliminable with respect to ${\mathcal C}_{\bducfra}$,
\item $\square$ is not eliminable with respect to ${\mathcal C}_{\fbucfra}$,
\item $\blacklozenge$ is not eliminable with respect to ${\mathcal C}_{\bducfra}$,
\item $\blacksquare$ is not eliminable with respect to ${\mathcal C}_{\fbucfra}$.
\end{enumerate}
\end{proposition}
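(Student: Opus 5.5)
By the duality results already established, it suffices to prove items $\mathbf{1)}$ and $\mathbf{3)}$. The map $(W,{\leq},{R}){\mapsto}(W,{\leq},{R})^{\mathtt{tr}}$ exchanges forward with downward confluence and backward with upward confluence (Proposition~\ref{Proposition:about:fc:bc:dc:uc:preorder:first}), so ${\mathcal C}_{\bducfra}^{\mathtt{tr}}{=}{\mathcal C}_{\fbucfra}$. Moreover $\mathtt{tr}$ exchanges $\lozenge$ with $\square$ and $\blacklozenge$ with $\blacksquare$. Suppose $\square$ were eliminable with respect to ${\mathcal C}_{\fbucfra}$ via a $\square$-free $A$. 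Applying Proposition~\ref{Proposition:about:duality:valuation:complement:valuation:transpose:formulas} to the transposed models (and to the atom $p$, whose valuation is complemented) would then give a $\lozenge$-free formula that is equivalent to $\lozenge p$ over ${\mathcal C}_{\bducfra}$. This yields item $\mathbf{2)}$ from $\mathbf{1)}$, and likewise item $\mathbf{4)}$ from $\mathbf{3)}$.

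For item $\mathbf{1)}$ I will introduce a notion of \emph{$\lozenge$-free bisimulation} between two models $(W,{\leq},{R},V)$ and $(W',{\leq'},{R'},V')$. This is a relation $Z$ with the following clauses: atomic harmony; forth and back conditions along $\leq$ (for $\rightharpoonup$) and along $\geq$ (for $\leftharpoondown$); for $\square$, forth and back conditions along the composite ${\leq}{\circ}{R}$; for $\blacklozenge$, a clause saying that for every $\leq$-successor $t$ of $s$ there is a $\leq'$-successor $t'$ of $s'$ such that every $R'$-successor of $t'$ is $Z$-related to some $R$-successor of $t$, and symmetrically; for $\blacksquare$, the dual clause with $\geq$ and universal $R$-successors. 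A routine induction on $\lozenge$-free formulas, modelled on the proof of Proposition~\ref{Proposition:let:be:a:model:for:all:formulas:and:for:all:if:and:then}, shows that $Z$-related points satisfy the same $\lozenge$-free formulas.

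The statement then follows by exhibiting two models on frames in ${\mathcal C}_{\bducfra}$ and a $Z$-related pair $s,s'$ such that $s{\models}{\lozenge}p$ and $s'{\not\models}{\lozenge}p$. The same scheme with a $\blacklozenge$-free bisimulation handles item $\mathbf{3)}$. In that bisimulation the $\lozenge$ clause, forth and back along ${\geq}{\circ}{R}$, replaces the $\blacklozenge$ clause.

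The main obstacle is finding the concrete models. When $\leq$ is the identity, $\lozenge p$ and $\blacklozenge p$ coincide, so the preorder must be nontrivial. At the same time, $\lozenge p$ must not be recoverable from $\blacksquare$ combined with $\leftharpoondown$, which also looks downward. My first attempt is a two-element chain $a{<}b$ with $R$-successors arranged so that $\lozenge p$ at $b$ is witnessed only through $a$'s successor. Here $b$ has a non-$p$ successor, which defeats $\blacklozenge p$, and $a$ has an additional non-$p$ successor, which blocks $\blacksquare p$. I will pair this with a second model in which no point below has a $p$-successor. I will check backward, downward and upward confluence by closing $R$ under the required compositions, and then verify the bisimulation clauses by hand. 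If this fails, I will add copies to make the structure more homogeneous, for instance by duplicating successors so that each $R$-image contains both $p$ and non-$p$ points.
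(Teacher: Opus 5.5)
Your reduction of items $\mathbf{2)}$ and $\mathbf{4)}$ to items $\mathbf{1)}$ and $\mathbf{3)}$ via $\mathtt{tr}$ is correct. Recasting the formula induction as invariance under a bisimulation is also legitimate: the paper's ``by induction on the $\lozenge$-free formula'' is the special case of the identity on a common carrier. The gap is that the substance of this proposition is the pair of countermodels, and you do not supply them. Item $\mathbf{3)}$ gets nothing concrete, and your sketch for item $\mathbf{1)}$ misses a feature that your own clauses force. Suppose $t{\leq}s$, $t{R}u$, $u{\models}p$ and $s{Z}s'$. Your $\leftharpoondown$-clause gives $t'{\leq^{\prime}}s'$ with $t{Z}t'$. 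Your $\square$-clause (forth along ${\leq}{\circ}{R}$, applied to $t{\leq}t{R}u$) then gives $c'{\geq^{\prime}}t'$ with an $R'$-successor satisfying $p$. Since $s'{\not\models}{\lozenge}p$, the point $c'$ is not below $s'$. So the second model cannot just be your chain with the $p$-successor of $a$ removed. It needs an extra point above the match of $a$, outside the down-set of the evaluation point, carrying a $p$-successor, and the remaining clauses must then be checked for it.

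The paper avoids your detour through a lower point by evaluating at a minimal one. It takes $W{=}\{a,b,c,d,e\}$, $a{\leq}c{\leq}e$, $b{\leq}d$, $V(p){=}\{d\}$, with $R'{=}\{(a,b),(a,d),(c,d)\}$ versus $R''{=}\{(a,b),(c,d)\}$.
\begin{itemize}
\item Both frames are in $\mathcal{C}_{\bducfra}$.
\item ${\leq}{\circ}R'{=}{\leq}{\circ}R''$, so $\square$-formulas agree.
\item The successor-free points $e$ and $d$ make every $\blacklozenge$-formula false everywhere in both models.
\item $\blacksquare$-formulas agree because $b{\models}A$ implies $d{\models}A$.
\item Yet $a{\models}{\lozenge}p$ holds only in the first model.
\end{itemize}
Item $\mathbf{3)}$ uses the six-point analogue $a{\leq}c{\leq}e$, $b{\leq}d{\leq}f$, $V(p){=}\{f\}$, $R'{=}\{(a,b),(a,d),(a,f),(c,d),(c,f),(e,f)\}$ and $R''{=}R'{\setminus}\{(c,f)\}$, evaluating $\blacklozenge p$ at $a$.

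A secondary problem is your $\blacksquare$-clause. It requires every $R'$-successor to be directly $Z$-related to an $R$-successor. That is stronger than invariance needs, and natural countermodels violate it. In the models above, any $Z$ relating the two copies of $a$ must match $a$ in the second model (successors $\{b\}$) with a point below $a$ in the first model all of whose successors are $Z$-related into $\{b\}$. The only candidate is $a$, with successors $\{b,d\}$, and the $p$-point $d$ cannot be related to $b$. With $Z$ the identity, the same failure occurs at $c$ in the six-point models. The paper's formula induction succeeds there only because of heredity. You therefore have two options:
\begin{itemize}
\item Weaken the $\blacksquare$-clause to: every $R'$-successor of $t'$ lies ${\leq^{\prime}}$-above a point $Z$-related to some $R$-successor of $t$. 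For $\blacklozenge$, require instead that every $R'$-successor of $t'$ be $Z$-related to a point lying ${\leq}$-below some $R$-successor of $t$.
\item Build models meeting your strict version. For item $\mathbf{1)}$, add to the first model a point $a^{*}{\leq}a$ whose only successor is $b$, and relate $a^{*}$ to $a$. The frame stays in $\mathcal{C}_{\bducfra}$ and all your clauses then hold.
\end{itemize}
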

\begin{proof}
$\mathbf{1)}$~Let $(W^{\prime},{\leq^{\prime}},{R^{\prime}})$ and $(W^{\prime\prime},{\leq^{\prime\prime}},{R^{\prime\prime}})$ be the frames in ${\mathcal C}_{\bducfra}$ defined by $W^{\prime}{=}\{a,b,c,d,e\}$, $a{\leq^{\prime}}c$, $b{\leq^{\prime}}d$, $c{\leq^{\prime}}e$, $a{R^{\prime}}b$, $a{R^{\prime}}d$, $c{R^{\prime}}d$, $W^{\prime\prime}{=}\{a,b,c,d,e\}$, $a{\leq^{\prime\prime}}c$, $b{\leq^{\prime\prime}}d$, $c{\leq^{\prime\prime}}e$, $a{R^{\prime\prime}}b$ and $c{R^{\prime\prime}}d$.
Let $V^{\prime}$ be a valuation on $(W^{\prime},{\leq^{\prime}},{R^{\prime}})$ such that $V^{\prime}(p){=}
$\linebreak$
\{d\}$ and $V^{\prime\prime}$ be a valuation on $(W^{\prime\prime},{\leq^{\prime\prime}},{R^{\prime\prime}})$ such that $V^{\prime\prime}(p){=}\{d\}$.
The reader may easily verify that $(W^{\prime},{\leq^{\prime}},{R^{\prime}},V^{\prime}),a{\models}{\lozenge}p$ and $(W^{\prime\prime},{\leq^{\prime\prime}},{R^{\prime\prime}},V^{\prime\prime}),a{\not\models}{\lozenge}p$.
In other respect, by induction on the $\lozenge$-free formula $A$, the reader may easily verify that for all $s{\in}\{a,b,c,d,e\}$, $(W^{\prime},{\leq^{\prime}},{R^{\prime}},V^{\prime}),s{\models}A$ if and only if $(W^{\prime\prime},{\leq^{\prime\prime}},{R^{\prime\prime}},V^{\prime\prime}),s{\models}A$.
\\
\\
$\mathbf{2)}$~Dual to the proof of Item~$\mathbf{1)}$.
%
%
\\
\\
$\mathbf{3)}$~Let $(W^{\prime},{\leq^{\prime}},{R^{\prime}})$ and $(W^{\prime\prime},{\leq^{\prime\prime}},{R^{\prime\prime}})$ be the frames in ${\mathcal C}_{\bducfra}$ defined by $W^{\prime}{=}\{a,b,c,
$\linebreak$
d,e,f\}$, $a{\leq^{\prime}}c$, $b{\leq^{\prime}}d$, $c{\leq^{\prime}}e$, $d{\leq^{\prime}}f$, $a{R^{\prime}}b$, $a{R^{\prime}}d$, $a{R^{\prime}}f$, $c{R^{\prime}}d$, $c{R^{\prime}}f$, $e{R^{\prime}}f$, $W^{\prime\prime}{=}\{a,b,c,
$\linebreak$
d,e,f\}$, $a{\leq^{\prime\prime}}c$, $b{\leq^{\prime\prime}}d$, $c{\leq^{\prime\prime}}e$, $d{\leq^{\prime\prime}}f$, $a{R^{\prime\prime}}b$, $a{R^{\prime\prime}}d$, $a{R^{\prime\prime}}f$, $c{R^{\prime\prime}}d$ and $e{R^{\prime\prime}}f$.
Let $V^{\prime}$ be a valuation on $(W^{\prime},{\leq^{\prime}},{R^{\prime}})$ such that $V^{\prime}(p){=}\{f\}$ and $V^{\prime\prime}$ be a valuation on $(W^{\prime\prime},{\leq^{\prime\prime}},
$\linebreak$
{R^{\prime\prime}})$ such that $V^{\prime\prime}(p){=}\{f\}$.
The reader may easily verify that $(W^{\prime},{\leq^{\prime}},{R^{\prime}},V^{\prime}),a{\models}{\blacklozenge}p$ and $(W^{\prime\prime},{\leq^{\prime\prime}},{R^{\prime\prime}},V^{\prime\prime}),a{\not\models}{\blacklozenge}p$.
In other respect, by induction on the $\blacklozenge$-free formula $A$, the reader may easily verify that for all $s{\in}\{a,b,c,d,e,f\}$, $(W^{\prime},{\leq^{\prime}},{R^{\prime}},V^{\prime}),s{\models}A$ if and only if $(W^{\prime\prime},{\leq^{\prime\prime}},{R^{\prime\prime}},V^{\prime\prime}),s{\models}A$.
\\
\\
$\mathbf{4)}$~Dual to the proof of Item~$\mathbf{3)}$.
%
%
\medskip
\end{proof}
\section{Correspondence: some examples}\label{section:miscellaneous}
\begin{table}[ht]
\begin{center}
\begin{tabular}{|c|c|c|}
\hline
%
%
%
%
%
%
%
%
$1$&$\forall s,\exists t,s{\geq}{\circ}{R}t$&$(+,{\lozenge}{\top})$
\\
\hline
$2$&$\forall s,\exists t,s{\leq}{\circ}{R}t$&$(-,{\square}{\bot})$
\\
\hline
$3$&$\forall s,s{\geq}{\circ}{R}{\circ}{\geq}s$&$(+,p{\rightharpoonup}{\lozenge}p)$
\\
\hline
$4$&$\forall s,s{\leq}{\circ}{R}{\circ}{\leq}s$&$(-,p{\leftharpoondown}{\square}p)$
\\
\hline
$5$&$\forall s,s{R}{\circ}{\geq}s$&$(+,p{\rightharpoonup}{\blacklozenge}p)$
\\
\hline
$6$&$\forall s,s{R}{\circ}{\leq}s$&$(-,p{\leftharpoondown}{\blacksquare}p)$
\\
\hline
$7$&$\forall s,t,s{R}t\Rightarrow t{\geq}{\circ}{R}{\circ}{\geq}s$&$(+,p{\rightharpoonup}{\square}{\lozenge}p)$
\\
\hline
$8$&$\forall s,t,s{R}t\Rightarrow t{\leq}{\circ}{R}{\circ}{\leq}s$&$(-,p{\leftharpoondown}{\lozenge}{\square}p)$
\\
\hline
$9$&$\forall s,t,s{R}{\circ}{\leq}t\Rightarrow t{R}{\circ}{\geq}s$&$(+,p{\rightharpoonup}{\square}{\blacklozenge}p)$
\\
\hline
$10$&$\forall s,t,s{R}{\circ}{\geq}t\Rightarrow t{R}{\circ}{\leq}s$&$(-,p{\leftharpoondown}{\lozenge}{\blacksquare}p)$
\\
\hline
$11$&$\forall s,\exists t,t{\leq}s\&\forall u,t{R}u\Rightarrow u{\geq}{\circ}{R}{\circ}{\geq}s$&$(+,p{\rightharpoonup}{\blacksquare}{\lozenge}p)$
\\
\hline
$12$&$\forall s,\exists t,t{\geq}s\&\forall u,t{R}u\Rightarrow u{\leq}{\circ}{R}{\circ}{\leq}s$&$(-,p{\leftharpoondown}{\blacklozenge}{\square}p)$
\\
\hline
$13$&$\forall s,\exists t,t{\leq}s\&\forall u,t{R}{\circ}{\leq}u\Rightarrow u{R}{\circ}{\geq}s$&$(+,p{\rightharpoonup}{\blacksquare}{\blacklozenge}p)$
\\
\hline
$14$&$\forall s,\exists t,t{\geq}s\&\forall u,t{R}{\circ}{\geq}u\Rightarrow u{R}{\circ}{\leq}s$&$(-,p{\leftharpoondown}{\blacklozenge}{\blacksquare}p)$
\\
\hline
\end{tabular}
\vspace{+0.25cm}
\caption{}\label{table:modal:definability:results:a}
\end{center}
\end{table}
\vspace{-0.50cm}
\subsection{Definability}
A frame $(W,{\leq},{R})$ {\em respects an inference rule $\frac{(\alpha_{1},A_{1}),\ldots,(\alpha_{m},A_{m})}{(\beta,B)}$}\/ if and only if if for all $i{\in}(m)$, $(W,{\leq},{R}){\models}(\alpha_{i},A_{i})$ then $(W,{\leq},{R}){\models}(\beta,B)$.
\\
\\
A class ${\mathcal C}$ of frames {\em corresponds to an inference rule $\frac{(\alpha_{1},A_{1}),\ldots,(\alpha_{m},A_{m})}{(\beta,B)}$ with respect to a class ${\mathcal D}$ of frames}\/ if for all frames $(W,{\leq},{R})$, if $(W,{\leq},{R})$ is in ${\mathcal D}$ then $(W,{\leq},{R})$ respects $\frac{(\alpha_{1},A_{1}),\ldots,(\alpha_{m},A_{m})}{(\beta,B)}$ if and only if $(W,{\leq},{R})$ is in ${\mathcal C}$.
\\
\\
A class ${\mathcal C}$ of frames is {\em modally definable with respect to a class ${\mathcal D}$ of frames}\/ if there exists an inference rule $\frac{(\alpha_{1},A_{1}),\ldots,(\alpha_{m},A_{m})}{(\beta,B)}$ such that ${\mathcal C}$ corresponds to $\frac{(\alpha_{1},A_{1}),\ldots,(\alpha_{m},A_{m})}{(\beta,B)}$ with respect to ${\mathcal D}$.
In that case, $\frac{(\alpha_{1},A_{1}),\ldots,(\alpha_{m},A_{m})}{(\beta,B)}$ constitutes a {\em modal definition of ${\mathcal C}$ with respect to ${\mathcal D}$.}
\\
\\
A class ${\mathcal C}$ of frames is {\em positively definable with respect to a class ${\mathcal D}$ of frames}\/ if there exists $(\beta,B){\in}\SFo$ such that ${\mathcal C}$ corresponds to $\frac{\cdot}{(\beta,B)}$ with respect ${\mathcal D}$.\footnote{Obviously, if ${\mathcal C}$ is positively definable with respect to ${\mathcal D}$ then ${\mathcal C}$ is modally definable with respect to ${\mathcal D}$.}
In that case, $(\beta,B)$ constitutes a {\em positive definition of ${\mathcal C}$ with respect to ${\mathcal D}$.}
\\
\\
A class ${\mathcal C}$ of frames is {\em negatively definable with respect to a class ${\mathcal D}$ of frames}\/ if there exists $m{\in}\N$ and there exists $(\alpha_{1},A_{1}),\ldots,(\alpha_{m},A_{m}){\in}\SFo$ such that either ${\mathcal C}$ corresponds to $\frac{(\alpha_{1},A_{1}),\ldots,(\alpha_{m},A_{m})}{(-,\top)}$ with respect to ${\mathcal D}$, or ${\mathcal C}$ corresponds to $\frac{(\alpha_{1},A_{1}),\ldots,(\alpha_{m},A_{m})}{(+,\bot)}$ with respect to ${\mathcal D}$.\footnote{Obviously, if ${\mathcal C}$ is negatively definable with respect to ${\mathcal D}$ then ${\mathcal C}$ is modally definable with respect to ${\mathcal D}$.}
In that case, $(\alpha_{1},A_{1}),\ldots,(\alpha_{m},A_{m})$ constitute a {\em negative definition of ${\mathcal C}$ with respect to ${\mathcal D}$.}
\subsection{About positive definability}
In Proposition~\ref{Proposition:about:correpsondence:table:a}, we show that the elementary conditions considered in Table~\ref{table:modal:definability:results:a} determine positively definable classes of frames.
\begin{proposition}\label{Proposition:about:correpsondence:table:a}
For each condition $\varphi$ considered in Table~\ref{table:modal:definability:results:a} and for each corresponding signed formula $(\beta,B)$, with respect to ${\mathcal C}_{\allfra}$, $(\beta,B)$ is a positive definition of the class of all frames where $\varphi$ hold.
\end{proposition}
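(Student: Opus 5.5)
The plan is to use duality to halve the work and then verify the seven odd-numbered rows one at a time. For each even row $2k$, the condition and formula are the images of those in row $2k{-}1$ under the transposition $\mathtt{tr}$. On the formula side, $\mathtt{tr}$ exchanges $+$ with $-$, $\rightharpoonup$ with $\leftharpoondown$, $\lozenge$ with $\square$ and $\blacklozenge$ with $\blacksquare$. On the frame side, the condition of row $2k$ is that of row $2k{-}1$ with $\leq$ and $\geq$ swapped.

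Concretely, take a frame $(W,{\leq},{R})$ and suppose $\varphi_{2k}$ holds in it iff $\varphi_{2k-1}$ holds in $(W,{\leq},{R})^{\mathtt{tr}}$. Proposition~\ref{Proposition:about:duality:complement:valuation:transpose:formulas:validity:frame} gives $(W,{\leq},{R}){\models}(\beta_{2k},B_{2k})$ iff $(W,{\leq},{R})^{\mathtt{tr}}{\models}\mathtt{tr}(\beta_{2k},B_{2k}){=}(\beta_{2k-1},B_{2k-1})$. So the even rows follow from the odd rows. I will first check, row by row, that $\mathtt{tr}$ maps each even formula to the odd formula above it; for instance, $\mathtt{tr}(-,p{\leftharpoondown}{\lozenge}{\blacksquare}p){=}(+,p{\rightharpoonup}{\square}{\blacklozenge}p)$. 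I will also check that the even condition is literally the odd condition for the converse preorder. Here I must take some care with $\exists t,t{\leq}s$ becoming $\exists t,t{\geq}s$.

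For each odd row I prove two directions.

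\emph{Soundness.} Assume $\varphi$ and an arbitrary valuation, and unfold the truth conditions. For example, in row $3$, suppose $s{\models}p$ and $s{\leq}t$. By $\varphi$ applied to $t$, there are $u{\leq}t$, $u{R}v$ and $v{\geq}t$. Heredity (Proposition~\ref{Proposition:let:be:a:model:for:all:formulas:and:for:all:if:and:then}) gives $v{\models}p$, hence $t{\models}{\lozenge}p$. In general the trailing $\geq$ in the conditions is exactly what Heredity absorbs. The leading $\geq$ or $\leq$ matches the clause for $\lozenge$ or for $\square$, $\blacklozenge$ respectively.

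\emph{Converse.} Suppose $\varphi$ fails, and pick a witness $s$, or a pair $s,t$. Take the valuation $V(p){=}\{x:\ s{\leq}x\}$, which is $\leq$-closed. Then $s{\models}p$, so it suffices to show the consequent fails at $s$ or at the relevant $\leq$-successor. For row $3$: ${\lozenge}p$ fails at $s$ precisely because no $u{\leq}s$, $u{R}v$ has $s{\leq}v$. Row $5$ works the same way via the $\blacklozenge$ clause at $t{=}s$. Rows $7$ and $9$ use the $\square$ clause and evaluate ${\lozenge}p$ or ${\blacklozenge}p$ at the $R$-successor, resp.\ at the $R{\circ}{\leq}$-successor. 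Row $1$ uses no valuation: ${\lozenge}{\top}$ holds at $s$ iff some $u{\leq}s$ has an $R$-successor.

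The main obstacle is rows $11$ and $13$, where $\varphi$ has the form $\forall s\exists t(\ldots\forall u\ldots)$. Soundness is routine there. For the converse, assume $\varphi$ fails at $s$: for every $t{\leq}s$ there is $u$ with $t{R}u$ (resp.\ $t{R}{\circ}{\leq}u$) such that $u{\not\models}{\lozenge}p$ (resp.\ ${\blacklozenge}p$) under $V(p){=}\{x:\ s{\leq}x\}$. Here ${\lozenge}p$ at $u$ means exactly $u{\geq}{\circ}{R}{\circ}{\geq}s$. I must check that the single valuation generated by $s$ refutes ${\blacksquare}{\lozenge}p$ (resp.\ ${\blacksquare}{\blacklozenge}p$) at $s$ itself. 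This amounts to showing that every $t{\leq}s$ has an $R$-successor not satisfying the inner formula. For row $13$, I will also need to pass from $t{R}w{\leq}u$ to the fact that $w$ fails ${\blacklozenge}p$, using Heredity contrapositively. This is where I expect the bookkeeping to be most delicate, and I would verify it carefully before relying on the dual argument for rows $12$ and $14$.
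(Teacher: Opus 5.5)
Your proposal is correct and follows essentially the paper's route. For the nontrivial direction, the paper treats only row $13$, and it uses exactly your valuation $V(p){=}\{v{\in}W:\ s{\leq}v\}$ generated by a counter-witness $s$; your checks of the other odd rows and of the Heredity step from $t{R}w{\leq}u$ in row $13$ go through. Your reduction of the even rows to the odd ones via $\mathtt{tr}$ and Proposition~\ref{Proposition:about:duality:complement:valuation:transpose:formulas:validity:frame} is sound, since each even condition is literally the odd one with $\leq$ and $\geq$ exchanged, and it makes explicit (together with the soundness direction) what the paper leaves to the reader.
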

\begin{proof}\footnote{We only consider the case of the $13$th condition.}
Let $(W,{\leq},{R})$ be a frame.
\\
\\
Suppose there exists $s{\in}W$ such that for all $t{\in}W$, if $t{\leq}s$ then there exists $u{\in}W$ such that $t{R}{\circ}{\leq}u$ and not $u{R}{\circ}{\geq}s$.
In that case, $V$ being a valuation on $(W,{\leq})$ such that $V(p){=}\{v{\in}W:\ s{\leq}v\}$,
the reader may easily verify that $s{\models}p$ and $s{\not\models}{\blacksquare}{\blacklozenge}p$.
Thus, $(W,{\leq},{R}){\not\models}p{\rightharpoonup}{\blacksquare}{\blacklozenge}p$.
\medskip
\end{proof}
Conditions~$1$ and~$2$ in Table~\ref{table:modal:definability:results:a} are somehow related to the condition of seriality.
\begin{proposition}\label{proposition:positive:definition:seriality}
With respect to ${\mathcal C}_{\allfra}$, $(+,{\blacklozenge}{\top})$ and $(-,{\blacksquare}{\bot})$ are positive definitions of ${\mathcal C}_{\serial}$.
\end{proposition}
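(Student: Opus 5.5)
We have to show two things: for every frame $(W,{\leq},{R})$, $(W,{\leq},{R}){\models}(+,{\blacklozenge}{\top})$ if and only if the frame is serial, and $(W,{\leq},{R}){\models}(-,{\blacksquare}{\bot})$ if and only if the frame is serial. Neither formula contains atoms, so its truth value at a point does not depend on the valuation. Validity in the frame therefore reduces to truth at every point of an arbitrary model based on it, and we may fix any valuation, for instance $V(p){=}\emptyset$ for all $p$.

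\textbf{The formula $(+,{\blacklozenge}{\top})$.} By the truth condition of $\blacklozenge$, $s{\models}{\blacklozenge}{\top}$ if and only if every $t$ with $s{\leq}t$ has an $R$-successor, since ${\top}$ holds everywhere.

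Suppose the frame is serial. Then every $t$ has an $R$-successor, so ${\blacklozenge}{\top}$ holds at every $s$.

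Conversely, suppose ${\blacklozenge}{\top}$ holds at every $s$. Given any $s$, reflexivity of $\leq$ gives $s{\leq}s$. Hence $s$ itself has an $R$-successor, and the frame is serial.

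\textbf{The formula $(-,{\blacksquare}{\bot})$.} Here there are two routes.

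The direct route reads off the truth condition. We have $s{\models}{\blacksquare}{\bot}$ if and only if there is $t{\leq}s$ with no $R$-successor, because ${\bot}$ holds nowhere. So $(-,{\blacksquare}{\bot})$ is true at every $s$ exactly when every $t{\leq}s$ has an $R$-successor, for every $s$.
\begin{itemize}
\item If the frame is serial, this holds trivially.
\item Conversely, take $t{=}s$, using reflexivity of $\leq$, to conclude that $s$ has an $R$-successor.
\end{itemize}

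Alternatively, one can invoke duality. Note that $\mathtt{tr}(+,{\blacklozenge}{\top}){=}(-,{\blacksquare}{\bot})$. Seriality does not mention $\leq$, so ${\mathcal C}_{\serial}$ is dual. Proposition~\ref{Proposition:about:duality:complement:valuation:transpose:formulas:validity:frame} then transfers the first equivalence to the frame $(W,{\geq},{R})$, which is serial exactly when $(W,{\leq},{R})$ is.

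There is no real obstacle here. The only point requiring care is to use reflexivity of the preorder to instantiate $t{=}s$ in the converse directions, and to note that validity is independent of the valuation.
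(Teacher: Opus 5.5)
Your proof is correct. The paper leaves this proposition to the reader, and your direct unfolding of the truth conditions of $\blacklozenge$ and $\blacksquare$, using reflexivity of $\leq$ to take $t{=}s$, is exactly the intended routine check; your alternative for $(-,{\blacksquare}{\bot})$ via $\mathtt{tr}(+,{\blacklozenge}{\top}){=}(-,{\blacksquare}{\bot})$ and the duality of ${\mathcal C}_{\serial}$ is also sound and matches the paper's habit of obtaining the second half of such results ``dually''.
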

In Proposition~\ref{Proposition:forward:confluence:is:modally:definable}, we show that ${\mathcal C}_{\fcfra}$ and ${\mathcal C}_{\dcfra}$ are positively definable.
\begin{proposition}\label{Proposition:forward:confluence:is:modally:definable}
With respect to ${\mathcal C}_{\allfra}$,
\begin{enumerate}
\item $(+,{\square}(p{\rightharpoonup}q){\rightharpoonup}({\lozenge}p{\rightharpoonup}{\blacklozenge}q))$, $(-,({\lozenge}p{\rightharpoonup}{\blacklozenge}q){\leftharpoondown}{\square}(p{\rightharpoonup}q))$, $(+,{\lozenge}p{\rightharpoonup}{\blacklozenge}p)$ and $(-,
$\linebreak$
{\blacklozenge}p{\leftharpoondown}{\lozenge}p)$ are positive definitions of ${\mathcal C}_{\fcfra}$,
\item $(+,({\square}p{\leftharpoondown}{\blacksquare}q){\rightharpoonup}{\lozenge}(p{\leftharpoondown}q))$, $(-,{\lozenge}(p{\leftharpoondown}q){\leftharpoondown}({\square}p{\leftharpoondown}{\blacksquare}q))$, $(+,{\blacksquare}p{\rightharpoonup}{\square}p)$ and $(-,
$\linebreak$
{\square}p{\leftharpoondown}{\blacksquare}p)$ are positive definitions of ${\mathcal C}_{\dcfra}$.
\end{enumerate}
\end{proposition}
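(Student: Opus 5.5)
By duality, Item~$\mathbf{2)}$ follows from Item~$\mathbf{1)}$, so I would concentrate on Item~$\mathbf{1)}$. Proposition~\ref{Proposition:about:fc:bc:dc:uc:preorder:first} gives that $(W,{\leq},{R})$ is in ${\mathcal C}_{\fcfra}$ if and only if $(W,{\leq},{R})^{\mathtt{tr}}$ is in ${\mathcal C}_{\dcfra}$. The map $\mathtt{tr}$ sends each signed formula of Item~$\mathbf{1)}$ to one of Item~$\mathbf{2)}$; for instance, $\mathtt{tr}(+,{\lozenge}p{\rightharpoonup}{\blacklozenge}p){=}(-,{\square}p{\leftharpoondown}{\blacksquare}p)$. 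Proposition~\ref{Proposition:about:duality:complement:valuation:transpose:formulas:validity:frame} then transfers validity between a frame and its transpose. Since $\mathtt{tr}$ is an involution on frames, this transfers positive definability.

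Within Item~$\mathbf{1)}$, I would reduce the four signed formulas to two. By Proposition~\ref{proposition:about:validity:and:signs:of:signed:formulas}, a frame validates $(-,({\lozenge}p{\rightharpoonup}{\blacklozenge}q){\leftharpoondown}{\square}(p{\rightharpoonup}q))$ if and only if it validates $(+,\cdot)$ of the corresponding formula. Concretely, $s{\not\models}B{\leftharpoondown}A$ holds for all $s$ exactly when no point satisfies $A$ without satisfying $B$, which is exactly validity of $(+,A{\rightharpoonup}B)$ by heredity. The same argument handles the pair $(-,{\blacklozenge}p{\leftharpoondown}{\lozenge}p)$ and $(+,{\lozenge}p{\rightharpoonup}{\blacklozenge}p)$. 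So it remains to treat $(+,{\square}(p{\rightharpoonup}q){\rightharpoonup}({\lozenge}p{\rightharpoonup}{\blacklozenge}q))$ and $(+,{\lozenge}p{\rightharpoonup}{\blacklozenge}p)$.

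For soundness, Proposition~\ref{about:sat:formulas:forward:downward:frames} already gives that ${\mathcal C}_{\fcfra}$ validates the first formula. The second is its instance with $q{:=}p$, because ${\square}(p{\rightharpoonup}p)$ is always true; alternatively one can quote the proof of Proposition~\ref{interdefinable:1}.

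For the converse, it suffices to show that any frame validating $(+,{\lozenge}p{\rightharpoonup}{\blacklozenge}p)$ is forward confluent; the first formula then follows, since substituting $q{:=}p$ turns it into an equivalent of the second. So suppose $t{\leq}s$ and $t{R}u$, and aim for $v$ with $s{R}v$ and $v{\geq}u$. Take $V(p){=}\{w:\ u{\leq}w\}$, which is $\leq$-closed. Then $s{\models}{\lozenge}p$, witnessed by $t{\leq}s$, $t{R}u$ and $u{\models}p$. Hence $s{\models}{\blacklozenge}p$, and instantiating its universal clause at $s{\leq}s$ gives $v$ with $s{R}v$ and $u{\leq}v$, i.e.\ $s{R}{\circ}{\geq}u$, as required.

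I expect the main obstacle to be bookkeeping rather than substance. The two points needing care are checking that $\mathtt{tr}$ maps the list in Item~$\mathbf{1)}$ exactly onto the list in Item~$\mathbf{2)}$, and verifying the sign-switching equivalence correctly through the semantics of $\leftharpoondown$.
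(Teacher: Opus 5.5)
Your proposal is correct and follows the paper's proof. The key step is the same valuation $V(p){=}\{w:\ u{\leq}w\}$, which makes ${\lozenge}p$ true but ${\blacklozenge}p$ false wherever forward confluence fails, and Item~2 is likewise obtained by duality; you also fill in what the paper leaves to the reader, namely soundness, the substitution $q{:=}p$, and the sign switch between $(-,B{\leftharpoondown}A)$ and $(+,A{\rightharpoonup}B)$.

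One small correction: Proposition~\ref{proposition:about:validity:and:signs:of:signed:formulas} only relates $(-,B{\leftharpoondown}A)$ to $(+,{\rceil}(B{\leftharpoondown}A))$. The equivalence with $(+,A{\rightharpoonup}B)$ therefore rests on your direct semantic argument, which needs only reflexivity of $\leq$.
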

\begin{proof}
$\mathbf{1)}$\footnote{We only consider the case of $(+,{\lozenge}p{\rightharpoonup}{\blacklozenge}p)$.}
Let $(W,{\leq},{R})$ be a frame.
\\
\\
%
%
%
%
%
%
%
Suppose $(W,{\leq},{R})$ is not forward confluent, i.e. there exists $s,t{\in}W$ such that $s{\geq}{\circ}{R}t$ and not $s{R}{\circ}{\geq}t$.
In that case, $V$ being a valuation on $(W,{\leq})$ such that $V(p){=}\{u{\in}W:\ t{\leq}u\}$,
the reader may easily verify that $s{\models}{\lozenge}p$ and $s{\not\models}{\blacklozenge}p$.
Thus, $(W,{\leq},{R}){\not\models}{\lozenge}p{\rightharpoonup}{\blacklozenge}p$.
\\
\\
$\mathbf{2)}$ Dual to the proof of Item~$\mathbf{1)}$.
\medskip
\end{proof}
%
%
%
%
%
%
%
%
%
%
%
%
%
%
%
%
%
%
\subsection{About negative definability}
Regarding negative definability, there is nothing more to say than what we have already said.
\begin{proposition}
For all classes ${\mathcal C},{\mathcal D}$ of frames, the following conditions are equivalent:
\begin{enumerate}
\item ${\mathcal C}$ is negatively definable with respect to ${\mathcal D}$,
\item ${\mathcal D}{\setminus}{\mathcal C}$ is positively definable with respect to ${\mathcal D}$.
\end{enumerate}
\end{proposition}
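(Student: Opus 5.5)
The plan is to prove both implications by turning a negative definition of ${\mathcal C}$ into a single signed formula that positively defines ${\mathcal D}{\setminus}{\mathcal C}$, and conversely. The working tool is the following observation. For a frame $(W,{\leq},{R})$, the rule $\frac{(\alpha_{1},A_{1}),\ldots,(\alpha_{m},A_{m})}{(-,\top)}$ is respected exactly when not all the premisses $(\alpha_{i},A_{i})$ are valid in the frame, because $(-,\top)$ is never true in any model. The same holds for the conclusion $(+,\bot)$. Hence ${\mathcal C}$ corresponds to such a rule with respect to ${\mathcal D}$ if and only if, for every frame in ${\mathcal D}$, the frame lies in ${\mathcal D}{\setminus}{\mathcal C}$ exactly when it validates every premiss. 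The proof therefore reduces to showing that validity of finitely many signed formulas in a frame is equivalent to validity of one signed formula, and vice versa.

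\textbf{From 1 to 2.} Given premisses $(\alpha_{1},A_{1}),\ldots,(\alpha_{m},A_{m})$, first use Proposition~\ref{proposition:about:validity:and:signs:of:signed:formulas} to replace each $(\alpha_{i},A_{i})$ by $(+,A_{i}{\mid}^{\alpha_{i}})$ without changing frame validity. The candidate positive definition is then $(+,B)$ with $B$ the conjunction of the $A_{i}{\mid}^{\alpha_{i}}$ over a renamed copy of the atoms, namely $B{=}\sigma_{1}(A_{1}{\mid}^{\alpha_{1}}){\wedge}\ldots{\wedge}\sigma_{m}(A_{m}{\mid}^{\alpha_{m}})$. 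Here each $\sigma_{i}$ renames atoms injectively, and the $\sigma_{i}$ have pairwise disjoint ranges. In general a conjunction is valid in a frame iff each conjunct is valid, since truth of $\wedge$ is pointwise. Renaming atoms injectively preserves frame validity, because valuations range over all assignments of $\leq$-closed sets. So renaming is not strictly needed, but it does no harm. The case $m{=}0$ gives $(+,\top)$, which is valid everywhere, so ${\mathcal D}{\setminus}{\mathcal C}{=}{\mathcal D}$, and that case is consistent.

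\textbf{From 2 to 1.} If $(\beta,B)$ positively defines ${\mathcal D}{\setminus}{\mathcal C}$, take the rule $\frac{(\beta,B)}{(-,\top)}$. By the observation above, a frame in ${\mathcal D}$ respects this rule iff it does not validate $(\beta,B)$, iff it is not in ${\mathcal D}{\setminus}{\mathcal C}$, iff it is in ${\mathcal C}$. Note also the case where the given negative definition uses the conclusion $(+,\bot)$ is handled identically in the first direction, because $(+,\bot)$, like $(-,\top)$, is true in no model.

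\textbf{Main obstacle.} The only point needing care is that ``respects a rule'' here is defined through frame validity of each whole premiss, not per substitution instance or per model. Otherwise combining premisses into one conjunction could fail, since the same valuation is then shared across conjuncts. The step I expect to check is that frame validity of a conjunction is equivalent to frame validity of each conjunct. This is immediate from the pointwise clause for $\wedge$, and it is where the renaming of atoms could be dropped.
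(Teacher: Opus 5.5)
Your proof is correct and follows essentially the same route as the paper. You reduce the negative rule to joint frame-validity of its premisses; for 1$\Rightarrow$2 you normalize signs via Proposition~\ref{proposition:about:validity:and:signs:of:signed:formulas} and take the conjunction of the $A_{i}{\mid}^{\alpha_{i}}$, and for 2$\Rightarrow$1 you use $\frac{(\beta,B)}{(-,\top)}$ (or $(+,\bot)$). The atom renaming, as you note yourself, is unnecessary, since frame validity of a conjunction is equivalent to frame validity of each conjunct.
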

\begin{proof}
$\mathbf{1){\Rightarrow}2)}$:
Suppose ${\mathcal C}$ is negatively definable with respect to ${\mathcal D}$.
Hence, there exists $m{\in}\N$ and there exists $(\alpha_{1},A_{1}),\ldots,(\alpha_{m},A_{m}){\in}\SFo$ such that for all frames $(W,{\leq},{R})$, if $(W,{\leq},{R})$ is in ${\mathcal D}$ then there exists $i{\in}(m)$ such that $(W,{\leq},{R}){\not\models}(\alpha_{i},A_{i})$ if and only if $(W,{\leq},{R})$ is in ${\mathcal C}$.
Thus, for all frames $(W,{\leq},{R})$, if $(W,{\leq},{R})$ is in ${\mathcal D}$ then for all $i{\in}(m)$, $(W,{\leq},{R}){\models}(\alpha_{i},A_{i})$ if and only if $(W,{\leq},{R})$ is in ${\mathcal D}{\setminus}{\mathcal C}$.
Consequently, by Proposition~\ref{proposition:about:validity:and:signs:of:signed:formulas}, ${\mathcal D}{\setminus}{\mathcal C}$ corresponds to $\frac{\cdot}{(+,A_{1}{\mid}^{\alpha_{1}}{\wedge}\ldots{\wedge}A_{m}{\mid}^{\alpha_{m}})}$ with respect to ${\mathcal D}$ and ${\mathcal D}{\setminus}{\mathcal C}$ corresponds to $\frac{\cdot}{(-,A_{1}{\mid}_{\alpha_{1}}{\vee}\ldots{\vee}A_{m}{\mid}_{\alpha_{m}})}$ with respect to ${\mathcal D}$.
\\
\\
$\mathbf{2){\Rightarrow}1)}$:
Suppose ${\mathcal D}{\setminus}{\mathcal C}$ is positively definable with respect to ${\mathcal D}$.
Hence, there exists $(\beta,B){\in}\SFo$ such that for all frames $(W,{\leq},{R})$, if $(W,{\leq},{R})$ is in ${\mathcal D}$ then $(W,{\leq},{R}){\models}
$\linebreak$
(\beta,B)$ if and only if $(W,{\leq},{R})$ is in ${\mathcal D}{\setminus}{\mathcal C}$.
Thus, for all frames $(W,{\leq},{R})$, if $(W,{\leq},{R})$ is in ${\mathcal D}$ then $(W,{\leq},{R}){\not\models}(\beta,B)$ if and only if $(W,{\leq},{R})$ is in ${\mathcal C}$.
Consequently, ${\mathcal C}$ corresponds to $\frac{(\beta,B)}{(+,{\bot})}$ with respect ${\mathcal D}$ and ${\mathcal C}$ corresponds to $\frac{(\beta,B)}{(-,{\top})}$ with respect ${\mathcal D}$.
%
%
\medskip
\end{proof}
\subsection{Some modally undefinable classes of frames}
Conditions~$3$--$6$ in Table~\ref{table:modal:definability:results:a} are somehow related to the condition of reflexivity.
\begin{proposition}\label{Proposition:there:is:no:formula:such:that:for:all:frames:is:reflexive:if:and:only:if:and:ref}
With respect to ${\mathcal C}_{\fbducfra}$, ${\mathcal C}_{\reflexive}$ is not modally definable.
\end{proposition}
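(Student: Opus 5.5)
The plan is to exhibit two frames in ${\mathcal C}_{\fbducfra}$, one reflexive and one not, that validate exactly the same signed formulas. Since whether a frame respects an inference rule depends only on which signed formulas it validates, any candidate rule is respected by both frames or by neither. So no rule can single out the reflexive one, and ${\mathcal C}_{\reflexive}$ is not modally definable with respect to ${\mathcal C}_{\fbducfra}$.

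\textbf{The two frames.} Let $F_{1}{=}(\{a\},{\leq_{1}},{R_{1}})$ with ${\leq_{1}}{=}{R_{1}}{=}\{(a,a)\}$; it is reflexive. Let $F_{2}{=}(\{a,b\},{\leq_{2}},{R_{2}})$ with ${\leq_{2}}$ the universal relation on $\{a,b\}$ and ${R_{2}}{=}\{(a,b),(b,a)\}$; it is not reflexive. Membership of $F_{1}$ in ${\mathcal C}_{\fbducfra}$ is immediate. For $F_{2}$, since ${\leq_{2}}$ is universal, each of ${\geq}{\circ}{R}$, ${R}{\circ}{\geq}$, ${\leq}{\circ}{R}$ and ${R}{\circ}{\leq}$ is the universal relation: the domain and range of ${R_{2}}$ are both $\{a,b\}$. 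Hence all four confluence inclusions hold trivially.

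\textbf{Matching valuations.} Because ${\leq_{2}}$ is universal, the only ${\leq_{2}}$-closed subsets of $\{a,b\}$ are $\emptyset$ and $\{a,b\}$. So valuations on $F_{2}$ correspond bijectively to valuations on $F_{1}$ via $V_{2}(p){=}\{a,b\}$ iff $V_{1}(p){=}\{a\}$.

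\textbf{Key lemma.} For corresponding valuations and every $A{\in}\Fo$: $A$ is true at $a$ in $F_{2}$ iff it is true at $b$ in $F_{2}$ iff it is true at $a$ in $F_{1}$. I prove this by induction on $A$.

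\begin{itemize}
\item Atoms, $\top$, $\bot$, $\vee$ and $\wedge$ are immediate.
\item Once the subformulas are known to be constant on $\{a,b\}$, every quantifier "for all $t$ with $s{\leq}t$" or "there exists $t$ with $t{\leq}s$" in $F_{2}$ ranges over $\{a,b\}$. It therefore reduces to evaluating a constant truth value, just as the single point does in $F_{1}$. This handles ${\rightharpoonup}$ and ${\leftharpoondown}$.
\item For the four modal connectives, every point of $F_{2}$ has exactly one $R_{2}$-successor and every point is reached. So "there exists $u$ with $t{R}u$" and "for all $u$ with $t{R}u$" both reduce to the constant value of the subformula, exactly as with the single reflexive loop in $F_{1}$.
\end{itemize}

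This gives, for all $(\alpha,A){\in}\SFo$, $F_{1}{\models}(\alpha,A)$ iff $F_{2}{\models}(\alpha,A)$.

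\textbf{Conclusion and main obstacle.} Suppose a rule $\frac{(\alpha_{1},A_{1})\ldots(\alpha_{m},A_{m})}{(\beta,B)}$ were a modal definition of ${\mathcal C}_{\reflexive}$ with respect to ${\mathcal C}_{\fbducfra}$. Then $F_{1}$ respects it, being reflexive. By the equality of validities just shown, $F_{2}$ respects it too. Hence $F_{2}$ would be reflexive, a contradiction. The only real work is the inductive lemma, and in particular checking the modal clauses and ${\leftharpoondown}$ against the exact truth conditions. I expect this to be routine thanks to the universality of ${\leq_{2}}$ and the totality and surjectivity of ${R_{2}}$.
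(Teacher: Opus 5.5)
Your proof is correct and follows the paper's argument. The paper uses the same non-reflexive frame (universal preorder on $\{a,b\}$ with $a{R}b$ and $b{R}a$) and shows by induction on formulas that it validates exactly the same signed formulas as a reflexive frame in the class; the only difference is that its reflexive witness is the two-point frame with the same universal preorder and loops $a{R}a$, $b{R}b$ instead of your one-point loop, which changes nothing in the argument.
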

\begin{proof}
For the sake of the contradiction, suppose there exists an inference rule
\linebreak$
\frac{(\alpha_{1},A_{1}),\ldots,(\alpha_{m},A_{m})}{(\beta,B)}$ such that $(\ast)$~for all frames $(W,{\leq},{R})$ in ${\mathcal C}_{\fbducfra}$, $(W,{\leq},{R})$ is reflexive if and only if $(W,{\leq},{R})$ respects $\frac{(\alpha_{1},A_{1}),\ldots,(\alpha_{m},A_{m})}{(\beta,B)}$.
Let $(W^{\prime},{\leq^{\prime}},{R^{\prime}})$ and $(W^{\prime\prime},{\leq^{\prime\prime}},{R^{\prime\prime}})$ be the frames in ${\mathcal C}_{\fbducfra}$ such that $W^{\prime}{=}\{a,b\}$, $a{\leq^{\prime}}b$, $b{\leq^{\prime}}a$, $a{R^{\prime}}a$, $b{R^{\prime}}b$, $W^{\prime\prime}{=}\{a,b\}$, $a{\leq}^{\prime\prime}b$, $b{\leq}^{\prime\prime}a$, $a{R^{\prime\prime}}b$ and $b{R^{\prime\prime}}a$.
Obviously, $(W^{\prime},{\leq^{\prime}},{R^{\prime}})$ is reflexive and $(W^{\prime\prime},{\leq^{\prime\prime}},{R^{\prime\prime}})$ is not reflexive.
Hence, by $(\ast)$, $(W^{\prime},{\leq^{\prime}},{R^{\prime}})$ respects $\frac{(\alpha_{1},A_{1}),\ldots,(\alpha_{m},A_{m})}{(\beta,B)}$ and $(W^{\prime\prime},{\leq^{\prime\prime}},{R^{\prime\prime}})$ does not respect $\frac{(\alpha_{1},A_{1}),\ldots,(\alpha_{m},A_{m})}{(\beta,B)}$.
Thus, for all $i{\in}(m)$, $(W^{\prime\prime},{\leq^{\prime\prime}},
$\linebreak$
{R^{\prime\prime}}){\models}(\alpha_{i},A_{i})$ and $(W^{\prime\prime},{\leq^{\prime\prime}},{R^{\prime\prime}}){\not\models}(\beta,B)$.
\begin{claim}
For all $(\gamma,C){\in}\SFo$, $(W^{\prime},{\leq^{\prime}},{R^{\prime}}){\models}(\gamma,C)$ if and only if $(W^{\prime\prime},{\leq^{\prime\prime}},{R^{\prime\prime}}){\models}(\gamma,
$\linebreak$
C)$.
\end{claim}
\begin{proof}
By induction on $C$.
\medskip
\end{proof}
Since for all $i{\in}(m)$, $(W^{\prime\prime},{\leq^{\prime\prime}},{R^{\prime\prime}}){\models}(\alpha_{i},A_{i})$ and $(W^{\prime\prime},{\leq^{\prime\prime}},{R^{\prime\prime}}){\not\models}(\beta,B)$, therefore for all $i{\in}(m)$, $(W^{\prime},{\leq^{\prime}},{R^{\prime}}){\models}(\alpha_{i},A_{i})$ and $(W^{\prime},{\leq^{\prime}},{R^{\prime}}){\not\models}(\beta,B)$.
Consequently, $(W^{\prime},
$\linebreak$
{\leq^{\prime}},{R^{\prime}})$ does not respect $\frac{(\alpha_{1},A_{1}),\ldots,(\alpha_{m},A_{m})}{(\beta,B)}$: a contradiction.
\medskip
\end{proof}
Conditions~$7$--$14$ in Table~\ref{table:modal:definability:results:a} are somehow related to the condition of symmetry.
\begin{proposition}\label{Proposition:there:is:no:formula:such:that:for:all:frames:is:reflexive:if:and:only:if:and:sym}
With respect to ${\mathcal C}_{\fbducfra}$, ${\mathcal C}_{\sym}$ is not modally definable.
\end{proposition}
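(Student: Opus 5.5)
The plan is to mimic the proof of Proposition~\ref{Proposition:there:is:no:formula:such:that:for:all:frames:is:reflexive:if:and:only:if:and:ref}. I would exhibit two frames in ${\mathcal C}_{\fbducfra}$, one symmetric and one not, that validate exactly the same signed formulas. Then suppose an inference rule $\frac{(\alpha_{1},A_{1}),\ldots,(\alpha_{m},A_{m})}{(\beta,B)}$ defined ${\mathcal C}_{\sym}$ with respect to ${\mathcal C}_{\fbducfra}$. The non-symmetric frame would fail to respect the rule, so it validates every premise but not the conclusion. Transferring this to the symmetric frame would show that the symmetric frame also fails to respect the rule, which is a contradiction.

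\textbf{Choice of frames.} Take $W^{\prime}{=}W^{\prime\prime}{=}\{a,b,c\}$ with ${\leq^{\prime}}{=}{\leq^{\prime\prime}}{=}W{\times}W$, the universal preorder. Let $R^{\prime}{=}\{(a,b),(b,a),(c,c)\}$, which is symmetric, and $R^{\prime\prime}{=}\{(a,b),(b,c),(c,a)\}$, which is not symmetric since $a{R^{\prime\prime}}b$ but not $b{R^{\prime\prime}}a$. Both relations are serial and have range $W$.

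\textbf{Checking the confluences.} Because $\leq$ is universal, we have ${\geq}{\circ}{R}{=}{\leq}{\circ}{R}{=}W{\times}\mathrm{ran}(R)$ and ${R}{\circ}{\geq}{=}{R}{\circ}{\leq}{=}\mathrm{dom}(R){\times}W$. Since $\mathrm{dom}(R){=}\mathrm{ran}(R){=}W$ in both frames, all four composites equal $W{\times}W$. Hence forward, backward, downward and upward confluence hold trivially, and both frames lie in ${\mathcal C}_{\fbducfra}$.

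\textbf{Key claim.} For all $(\gamma,C){\in}\SFo$, $(W^{\prime},{\leq^{\prime}},{R^{\prime}}){\models}(\gamma,C)$ if and only if $(W^{\prime\prime},{\leq^{\prime\prime}},{R^{\prime\prime}}){\models}(\gamma,C)$.

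Since $\leq$ is universal, every $\leq$-closed set is either $\emptyset$ or $W$. So valuations on the two frames are literally the same functions, and it suffices to prove the following by induction on $C$. For every such valuation $V$, the truth set of $C$ is $\emptyset$ or $W$, and it is the same set in $(W,{\leq},{R^{\prime}},V)$ and in $(W,{\leq},{R^{\prime\prime}},V)$. The cases run as follows.
\begin{itemize}
\item Atoms, $\top$, $\bot$, $\vee$ and $\wedge$ are immediate.
\item For $\rightharpoonup$ and $\leftharpoondown$, the quantification ranges over all of $W$. If $C$ and $D$ have constant truth sets, then $C{\rightharpoonup}D$ and $C{\leftharpoondown}D$ also have constant truth sets, determined only by the truth sets of $C$ and $D$.
\item For the modal connectives, assume the truth set of $C$ is $\emptyset$ or $W$. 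Then $\lozenge C$ holds everywhere iff $R{\neq}\emptyset$ and $C$ holds everywhere, and otherwise nowhere. Similarly, $\square C$ holds everywhere iff $R{=}\emptyset$ or $C$ holds everywhere. Next, $\blacklozenge C$ holds everywhere iff $R$ is serial and $C$ holds everywhere. Finally, $\blacksquare C$ holds everywhere iff some point has no $R$-successor or $C$ holds everywhere.
\end{itemize}
So each modal clause depends only on whether $R$ is nonempty and whether it is serial, and both $R^{\prime}$ and $R^{\prime\prime}$ are nonempty and serial. The claim then gives equal truth for signed formulas of both signs.

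\textbf{Main obstacle.} The only delicate step is choosing $R^{\prime}$ and $R^{\prime\prime}$ so that upward and downward confluence hold, which with a universal preorder forces $\mathrm{ran}(R){=}W$, and so that forward and backward confluence hold, which forces seriality. This is what pushes the counterexample to three points. A $2$-cycle together with a loop versus a $3$-cycle meets all these requirements. With the claim established, the contradiction argument is identical to the one for ${\mathcal C}_{\reflexive}$.
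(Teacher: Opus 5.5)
Your proof is correct and follows the paper's architecture. You exhibit two frames in $\mathcal{C}_{\fbducfra}$, one symmetric and one not, that validate exactly the same signed formulas, and then run the contradiction argument used for $\mathcal{C}_{\reflexive}$. The difference lies entirely in the witnesses.

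The paper uses six-element frames whose preorder is a genuine partial order, the chains $a\leq c\leq e$ and $b\leq d\leq f$. Its $R^{\prime\prime}$ is obtained from the symmetric $R^{\prime}$ by deleting the single pair $(d,c)$. Indistinguishability then comes from that pair being redundant: $d$ still sees $a$ and $e$, and $a\leq c\leq e$. So by heredity, $e$ can replace $c$ as a witness in the existential clauses ($\lozenge$, $\blacklozenge$). In the universal clauses ($\square$, $\blacksquare$), the constraint $a$ satisfies $C$, imposed through $d{R^{\prime\prime}}a$, already yields that $c$ satisfies $C$.

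You instead take the universal preorder. This forces every valuation and every truth set to be $\emptyset$ or $W$, and reduces each modal clause to the two bits ``$R{\neq}\emptyset$'' and ``$R$ is serial''. The indistinguishability claim then becomes a fully explicit induction, in the same spirit as the paper's own two-point witnesses for reflexivity and determinism. What the paper's choice buys is a slightly stronger fact: symmetry stays undefinable even among frames in $\mathcal{C}_{\fbducfra}$ whose preorder is antisymmetric. Frames with a universal preorder on several points cannot show that.

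There are two harmless slips in your last paragraph:
\begin{itemize}
\item With a universal preorder and $R{\neq}\emptyset$, forward and downward confluence each force $\mathrm{dom}(R){=}W$, while backward and upward confluence each force $\mathrm{ran}(R){=}W$; your attribution mixes these up.
\item Three points are not needed: $R^{\prime}{=}\{(a,a),(b,b)\}$ versus $R^{\prime\prime}{=}\{(a,a),(a,b),(b,b)\}$ on $\{a,b\}$ already works.
\end{itemize}
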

\begin{proof}
Similar to the proof of Proposition~\ref{Proposition:there:is:no:formula:such:that:for:all:frames:is:reflexive:if:and:only:if:and:ref}, this time considering the frames $(W^{\prime},{\leq^{\prime}},
$\linebreak$
{R^{\prime}})$ and $(W^{\prime\prime},{\leq^{\prime\prime}},{R^{\prime\prime}})$ in ${\mathcal C}_{\fbducfra}$ such that $W^{\prime}{=}\{a,b,c,d,e,f\}$, $a{\leq^{\prime}}c$, $b{\leq^{\prime}}d$, $c{\leq^{\prime}}e$, $d{\leq^{\prime}}f$, $a{R^{\prime}}b$, $a{R^{\prime}}d$, $b{R^{\prime}}a$, $b{R^{\prime}}c$, $c{R^{\prime}}b$, $c{R^{\prime}}d$, $c{R^{\prime}}f$, $d{R^{\prime}}a$, $d{R^{\prime}}c$, $d{R^{\prime}}e$, $e{R^{\prime}}d$, $e{R^{\prime}}f$, $f{R^{\prime}}c$, $f{R^{\prime}}e$, $W^{\prime\prime}{=}\{a,b,c,d,e,f\}$, $a{\leq^{\prime\prime}}c$, $b{\leq^{\prime\prime}}d$, $c{\leq^{\prime\prime}}e$, $d{\leq^{\prime\prime}}f$, $a{R^{\prime\prime}}b$, $a{R^{\prime\prime}}d$, $b{R^{\prime\prime}}a$, $b{R^{\prime\prime}}c$, $c{R^{\prime\prime}}b$, $c{R^{\prime\prime}}d$, $c{R^{\prime\prime}}f$, $d{R^{\prime\prime}}a$, $d{R^{\prime\prime}}e$, $e{R^{\prime\prime}}d$, $e{R^{\prime\prime}}f$, $f{R^{\prime\prime}}c$ and $f{R^{\prime\prime}}e$.
\medskip
\end{proof}
Indeed, similar results hold for ${\mathcal C}_{\transitive}$, ${\mathcal C}_{\Euclidean}$ and ${\mathcal C}_{\deterministic}$.
\begin{proposition}\label{Proposition:there:is:no:formula:such:that:for:all:frames:is:transitive:if:and:only:if:and:ref}
With respect to ${\mathcal C}_{\fbducfra}$, neither ${\mathcal C}_{\transitive}$, nor ${\mathcal C}_{\Euclidean}$, nor ${\mathcal C}_{\deterministic}$ are modally definable.
\end{proposition}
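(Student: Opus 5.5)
The plan is to reuse the method of the proof of Proposition~\ref{Proposition:there:is:no:formula:such:that:for:all:frames:is:reflexive:if:and:only:if:and:ref}: a collapsed two-element cluster makes the modal connectives insensitive to the shape of $R$. For each of the three classes, I will exhibit two frames $(W^{\prime},{\leq^{\prime}},{R^{\prime}})$ and $(W^{\prime\prime},{\leq^{\prime\prime}},{R^{\prime\prime}})$ in ${\mathcal C}_{\fbducfra}$ such that the first is in the class and the second is not. I will then show that they validate exactly the same signed formulas. The contradiction with a putative modal definition $\frac{(\alpha_{1},A_{1}),\ldots,(\alpha_{m},A_{m})}{(\beta,B)}$ is then word for word as in that proof.

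\textbf{The frames.} In all cases take $W^{\prime}{=}W^{\prime\prime}{=}\{a,b\}$ with ${\leq^{\prime}}{=}{\leq^{\prime\prime}}{=}W{\times}W$ (a single cluster), and take $R^{\prime}{=}\{(a,a),(b,b)\}$, which is transitive, Euclidean and deterministic.
\begin{itemize}
\item For ${\mathcal C}_{\transitive}$ and ${\mathcal C}_{\Euclidean}$, take $R^{\prime\prime}{=}\{(a,b),(b,a)\}$. It is not transitive, since $a{R^{\prime\prime}}b$ and $b{R^{\prime\prime}}a$ but not $a{R^{\prime\prime}}a$. It is not Euclidean, since $a{R^{\prime\prime}}b$ twice but not $b{R^{\prime\prime}}b$. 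These are exactly the frames already used for reflexivity.
\item For ${\mathcal C}_{\deterministic}$, take $R^{\prime\prime}{=}W{\times}W$, which is not deterministic.
\end{itemize}
Since $\leq$ is universal, each of the four confluence conditions reduces to every element having an $R$-successor or every element having an $R$-predecessor. All the relations chosen are serial and every element has a predecessor, so all frames lie in ${\mathcal C}_{\fbducfra}$.

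\textbf{The claim.} For all $(\gamma,C){\in}\SFo$, $(W^{\prime},{\leq^{\prime}},{R^{\prime}}){\models}(\gamma,C)$ iff $(W^{\prime\prime},{\leq^{\prime\prime}},{R^{\prime\prime}}){\models}(\gamma,C)$. Because $\leq$ is universal, the valuations on the two frames are the same: each $V(p)$ is either $\emptyset$ or $W$. By the Heredity Property (Proposition~\ref{Proposition:let:be:a:model:for:all:formulas:and:for:all:if:and:then}), every formula has truth set $\emptyset$ or $W$ in either model. I will then show by induction on $C$ that, for a fixed valuation, the truth set of $C$ is the same in both models. The modal cases are where something must be checked, and they use only that $R$ is serial:
\begin{itemize}
\item $s{\models}{\lozenge}A$ iff $A$ holds somewhere;
\item $s{\models}{\square}A$ iff $A$ holds everywhere;
\item $s{\models}{\blacklozenge}A$ iff $A$ holds (every point has a successor and the truth set is constant);
\item $s{\models}{\blacksquare}A$ iff $A$ holds.
\end{itemize}
Given constant truth sets, these coincide across the two models. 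The cases ${\rightharpoonup}$ and ${\leftharpoondown}$ are immediate because $\leq$ is the same. Equality of truth sets under every valuation yields the claim.

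The main obstacle is the choice of counterexample frames. They must violate the target property while lying in ${\mathcal C}_{\fbducfra}$. In particular, the swap relation does not work for determinism, which is why $W{\times}W$ is used there. Once the universal cluster with a serial, predecessor-total $R$ is fixed, the induction is routine.
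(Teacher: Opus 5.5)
Your proposal is correct and follows the paper's proof. For ${\mathcal C}_{\transitive}$ and ${\mathcal C}_{\Euclidean}$ the paper reuses the two-element cluster frames from the reflexivity argument, and for ${\mathcal C}_{\deterministic}$ it uses exactly your pair $R^{\prime}{=}\{(a,a),(b,b)\}$ versus $R^{\prime\prime}{=}W{\times}W$, with the same claim that both frames validate the same signed formulas; your check of the modal clauses (constant truth sets plus seriality) fills in the induction on $C$ that the paper leaves to the reader.
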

\begin{proof}
%
%
``${\mathcal C}_{\transitive}$'':
Similar to the proof of Proposition~\ref{Proposition:there:is:no:formula:such:that:for:all:frames:is:reflexive:if:and:only:if:and:ref}.
\\
\\
%
%
``${\mathcal C}_{\Euclidean}$'':
Similar to the proof of Proposition~\ref{Proposition:there:is:no:formula:such:that:for:all:frames:is:reflexive:if:and:only:if:and:ref}.
\\
\\
%
%
``${\mathcal C}_{\deterministic}$'':
Similar to the proof of Proposition~\ref{Proposition:there:is:no:formula:such:that:for:all:frames:is:reflexive:if:and:only:if:and:ref}, this time considering the frames $(W^{\prime},
$\linebreak$
{\leq^{\prime}},{R^{\prime}})$ and $(W^{\prime\prime},{\leq^{\prime\prime}},{R^{\prime\prime}})$ in ${\mathcal C}_{\fbducfra}$ such that $W^{\prime}{=}\{a,b\}$, $a{\leq^{\prime}}b$, $b{\leq^{\prime}}a$, $a{R^{\prime}}a$, $b{R^{\prime}}b$, $W^{\prime\prime}{=}\{a,b\}$, $a{\leq}^{\prime\prime}b$, $b{\leq}^{\prime\prime}a$, $a{R^{\prime\prime}}a$, $a{R^{\prime\prime}}b$, $b{R^{\prime\prime}}a$ and $b{R^{\prime\prime}}b$.
\medskip
\end{proof}
Now, about ${\mathcal C}_{\bcfra}$ and ${\mathcal C}_{\ucfra}$.
\begin{proposition}\label{Proposition:backward:confluence:is:not:modally:definable}
With respect to ${\mathcal C}_{\fdcfra}$, neither ${\mathcal C}_{\bcfra}$, nor ${\mathcal C}_{\ucfra}$ are modally definable.
\end{proposition}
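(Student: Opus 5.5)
The plan is to follow the proof of Proposition~\ref{Proposition:there:is:no:formula:such:that:for:all:frames:is:reflexive:if:and:only:if:and:ref}. I will exhibit a frame $(W^{\prime},{\leq^{\prime}},{R^{\prime}})$ in ${\mathcal C}_{\fdcfra}$ that is backward and upward confluent, and a frame $(W^{\prime\prime},{\leq^{\prime\prime}},{R^{\prime\prime}})$ in ${\mathcal C}_{\fdcfra}$ that is not backward confluent. The key claim to establish is that for all $(\gamma,C){\in}\SFo$, $(W^{\prime},{\leq^{\prime}},{R^{\prime}}){\models}(\gamma,C)$ if and only if $(W^{\prime\prime},{\leq^{\prime\prime}},{R^{\prime\prime}}){\models}(\gamma,C)$. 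Given a hypothetical modal definition of ${\mathcal C}_{\bcfra}$ with respect to ${\mathcal C}_{\fdcfra}$, the contradiction is then derived word for word as in that proof. For ${\mathcal C}_{\ucfra}$, the argument is dual: ${\mathcal C}_{\fdcfra}$ is dual by Proposition~\ref{Proposition:about:fc:bc:dc:uc:preorder:first}, so I apply Proposition~\ref{Proposition:about:duality:complement:valuation:transpose:formulas:validity:frame} to the transposed frames. Alternatively, I check directly that the same $W^{\prime\prime}$ also fails upward confluence, or build its transpose.

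Concretely, I take $W^{\prime}{=}\{a,b,c\}$ with $b{\leq^{\prime}}c$, $a{R^{\prime}}b$ and $a{R^{\prime}}c$ (the preorders being the reflexive closures of the displayed pairs). I take $W^{\prime\prime}$ to be the disjoint union of a copy of $W^{\prime}$ with a second component $\{a_{1},b_{1},c_{1},b_{2},c_{2}\}$, where $b_{1}{\leq^{\prime\prime}}c_{1}$, $b_{2}{\leq^{\prime\prime}}c_{2}$, $a_{1}{R^{\prime\prime}}b_{1}$, $a_{1}{R^{\prime\prime}}b_{2}$ and $a_{1}{R^{\prime\prime}}c_{2}$. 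Backward confluence fails there: $a_{1}{R^{\prime\prime}}b_{1}{\leq^{\prime\prime}}c_{1}$, but the only $\leq^{\prime\prime}$-successor of $a_{1}$ is $a_{1}$ itself, and not $a_{1}{R^{\prime\prime}}c_{1}$. Forward and downward confluence hold trivially, since the only $R$-source in either frame has no other element below or above it.

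The equivalence of validities is obtained from two truth-preservation lemmas, each proved by induction on formulas. The first lemma concerns maps $f$ that satisfy the forth conditions for $\leq$ and $R$, the back conditions for both $\leq$ and $\geq$ (needed because of $\leftharpoondown$, $\lozenge$ and $\blacksquare$), and the back condition for $R$. For such a map and a valuation $V$ on the target, the pulled-back valuation $f^{-1}\circ V$ makes $x$ and $f(x)$ satisfy the same formulas. The second lemma says that a subframe closed under $\leq$, $\geq$ and $R$-successors preserves truth. I will use the map $f$ that is the identity on the copy of $W^{\prime}$ and sends $a_{1}{\mapsto}a$, $b_{1},b_{2}{\mapsto}b$, $c_{1},c_{2}{\mapsto}c$.

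The two directions then go as follows. If $W^{\prime\prime}$ validates $(\gamma,C)$, then so does its generated component $W^{\prime}$, by the second lemma. If $W^{\prime}$ validates $(\gamma,C)$, then every model on $W^{\prime\prime}$ is a disjoint union of a model on $W^{\prime}$ and a model on the second component. So I must show that every valuation on the second component arises as a pullback, or else handle that component directly. This is the main obstacle: a pullback valuation forces $b_{1},b_{2}$ to agree and $c_{1},c_{2}$ to agree, which arbitrary valuations need not do. I will first try to prove the claim for the second component directly, by a simultaneous induction over pairs of points, noting that heredity ties $b_{i}$ to $c_{i}$. If that fails, I will enlarge $W^{\prime}$ so that it carries its own pair of $b$'s and $c$'s, making $f$ injective on the relevant fibres and the pullback surjective onto valuations.
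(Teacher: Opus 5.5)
Your key claim is false for the frames you chose, so neither of your fallback routes can complete the argument. Consider $(+,({\lozenge}p{\rightharpoonup}{\square}q){\rightharpoonup}{\square}(p{\rightharpoonup}q))$. It is valid on every backward confluent frame. Suppose $s{\models}{\lozenge}p{\rightharpoonup}{\square}q$, $s{\leq}t$, $t{R}u$, $u{\leq}v$ and $v{\models}p$. Backward confluence gives $t^{\prime}$ with $t{\leq}t^{\prime}$ and $t^{\prime}{R}v$. Then $t^{\prime}{\models}{\lozenge}p$, so by heredity $t^{\prime}{\models}{\square}q$, hence $v{\models}q$. So the formula is valid on $(W^{\prime},{\leq^{\prime}},{R^{\prime}})$.

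Now take $V(p){=}\{c_{1}\}$ and $V(q){=}\emptyset$ in your second component. None of $b_{1},b_{2},c_{2}$ satisfies $p$, and $a_{1}$ is $\leq^{\prime\prime}$-isolated, so $a_{1}{\not\models}{\lozenge}p$ and the antecedent holds at $a_{1}$. But $b_{1}{\not\models}p{\rightharpoonup}q$ because $b_{1}{\leq^{\prime\prime}}c_{1}$, so $a_{1}{\not\models}{\square}(p{\rightharpoonup}q)$, and $(W^{\prime\prime},{\leq^{\prime\prime}},{R^{\prime\prime}})$ does not validate the formula. This is exactly the valuation you worried about: it separates $c_{1}$ from $c_{2}$, so it is not a pullback. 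Your bounded-morphism lemma only transfers validity from $W^{\prime\prime}$ to its image, never back. Enlarging $W^{\prime}$ cannot help either: the formula holds on all backward confluent frames, so no backward confluent frame agrees in validity with a frame containing your second component. The defect lies in that component. The unwitnessed instance $a_{1}{R^{\prime\prime}}b_{1}{\leq^{\prime\prime}}c_{1}$ has its top point $c_{1}$ visible to ${\square}$ through the implication evaluated at $b_{1}$.

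The missing idea is to make the unwitnessed successor invisible by heredity, by sandwiching it between two successors kept in both frames. You should also keep the same preorder, so that both frames carry exactly the same valuations. That is what the paper does: on $\{a,b,c,d\}$ with $b{\leq}c{\leq}d$, let $a$ see $b,c,d$ in $W^{\prime}$ and only $b,d$ in $W^{\prime\prime}$. For each valuation, an induction on formulas shows that both models satisfy the same formulas at every point. At $a$, heredity makes ${\lozenge}A$ and ${\blacklozenge}A$ reduce to $d{\models}A$, and ${\square}A$ and ${\blacksquare}A$ reduce to $b{\models}A$; all other clauses coincide. Both frames lie in $\mathcal{C}_{\fdcfra}$ because $a$ is $\leq$-isolated. $W^{\prime}$ is backward and upward confluent. $W^{\prime\prime}$ is neither, since $a{R^{\prime\prime}}b{\leq^{\prime\prime}}c$ and $a{R^{\prime\prime}}d{\geq^{\prime\prime}}c$ but not $a{R^{\prime\prime}}c$, so one pair settles both classes. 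Your duality reduction from $\mathcal{C}_{\ucfra}$ to $\mathcal{C}_{\bcfra}$ is correct. Your alternative remark is not: your $W^{\prime\prime}$ is in fact upward confluent.
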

\begin{proof}
Similar to the proof of Proposition~\ref{Proposition:there:is:no:formula:such:that:for:all:frames:is:reflexive:if:and:only:if:and:ref}, this time considering the frames $(W^{\prime},{\leq^{\prime}},
$\linebreak$
{R^{\prime}})$ and $(W^{\prime\prime},{\leq^{\prime\prime}},{R^{\prime\prime}})$ in ${\mathcal C}_{\fdcfra}$ such that $W^{\prime}{=}\{a,b,c,d\}$, $b{\leq^{\prime}}c$, $c{\leq^{\prime}}d$, $a{R^{\prime}}b$, $a{R^{\prime}}c$, $a{R^{\prime}}d$, $W^{\prime\prime}{=}\{a,b,c,d\}$, $b{\leq}^{\prime\prime}c$, $c{\leq}^{\prime\prime}d$, $a{R^{\prime\prime}}b$ and $a{R^{\prime\prime}}d$.
\medskip
\end{proof}
\section{Correspondence: the general situation}\label{section:correspondence:general:situation}
\subsection{First-order formulas}
Let a countable set (with typical members called {\em individual variables}\/ and denoted $x$, $y$, etc) be given.
\\
\\
Let $\Fo^{\prime}$ be the set (with typical members called {\em first-order formulas}\/ and denoted $A$, $B$, etc) defined by
\begin{itemize}
\item $A::=R(x,y){\mid}x{=}y{\mid}x{\leq}y{\mid}{\bot}{\mid}{\neg}A{\mid}(A{\vee}A){\mid}{\forall x}A$,
\end{itemize}
$x,y$ ranging over the set of all individual variables.
\\
\\
For all $A{\in}\Fo^{\prime}$, the {\em length of $A$}\/ (denoted ${\parallel}A{\parallel}$) is the number of symbols in $A$.
\\
\\
We follow the standard rules for omission of the parentheses.
\\
\\
We define the other Boolean constructs as usual.
\\
\\
For all individual variables $x$ and for all $A{\in}\Fo^{\prime}$, we write ${\exists x}A$ as an abbreviation instead of ${\neg}{\forall x}{\neg}A$.
\\
\\
The first-order formulas of the form $R(x,y)$, $x{=}y$ and $x{\leq}y$ are called {\em atomic formulas.}
\\
\\
For all $A{\in}\Fo^{\prime}$, let $fiv(A)$ be the set of all individual variables freely occurring in $A$.
\\
\\
A first-order formula $A$ is called a {\em sentence}\/ if $fiv(A){=}\emptyset$.
\\
\\
Sometimes, we write $\bar{x}$ for a list $x_{1},\ldots,x_{m}$ of pairwise distinct individual variables.
We leave it to the context to determine the length of such list.
\\
\\
When $\bar{x}$ is a list of pairwise distinct individual variables, we write $A(\bar{x})$ to denote a first-order formula $A$ whose free individual variables belongs to $\bar{x}$.
\\
\\
For all individual variables $x$, let $(\cdot)^{x}\ :\ \Fo^{\prime}{\longrightarrow}\Fo^{\prime}$ be the function defined as follows:
\begin{itemize}
\item $(R(y,z))^{x}$ is $R(y,z)$,
\item $(y{=}z)^{x}$ is $y{=}z$,
\item $(y{\leq}z)^{x}$ is $y{\leq}z$,
\item $({\bot})^{x}$ is $\bot$,
\item $({\neg}C)^{x}$ is ${\neg}(C)^{x}$,
\item $(C{\vee}D)^{x}$ is $(C)^{x}{\vee}(D)^{x}$,
\item $({\forall y}C)^{x}$ is ${\forall y}(R(x,y){\rightarrow}(C)^{x})$.
\end{itemize}
{\bf From now on in this article, when we write $(C)^{x}$, we assume that $x$ does not occur in $C$.}
\begin{proposition}
For all individual variables $x$ and for all $C{\in}\Fo^{\prime}$, $fiv((C)^{x}){\subseteq}\{x\}{\cup}
$\linebreak$
fiv(C)$.
\end{proposition}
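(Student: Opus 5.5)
The plan is a structural induction on $C$, with $x$ fixed but arbitrary. I will \emph{not} use the standing convention that $x$ does not occur in $C$: the statement is claimed for all $x$ and all $C$, and the inclusion survives even when $x$ occurs in $C$, including as the bound variable of a quantifier. I use the usual clauses for $fiv$: the variables of an atomic formula; $fiv({\bot}){=}\emptyset$; $fiv({\neg}C){=}fiv(C)$; $fiv(C{\vee}D){=}fiv(C){\cup}fiv(D)$; and $fiv({\forall y}C){=}fiv(C){\setminus}\{y\}$. I also read $R(x,y){\rightarrow}D$ as the abbreviation ${\neg}R(x,y){\vee}D$, so its free variables are $\{x,y\}{\cup}fiv(D)$.

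\textbf{Base and propositional cases.} If $C$ is $R(y,z)$, $y{=}z$, $y{\leq}z$ or ${\bot}$, then $(C)^{x}$ is $C$ itself. Hence $fiv((C)^{x}){=}fiv(C){\subseteq}\{x\}{\cup}fiv(C)$.

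If $C$ is ${\neg}D$, then $fiv((C)^{x}){=}fiv((D)^{x})$. By the induction hypothesis this is included in $\{x\}{\cup}fiv(D){=}\{x\}{\cup}fiv(C)$.

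If $C$ is $D{\vee}E$, then $fiv((C)^{x}){=}fiv((D)^{x}){\cup}fiv((E)^{x})$. By the induction hypothesis this is included in $\{x\}{\cup}fiv(D){\cup}fiv(E){=}\{x\}{\cup}fiv(C)$.

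\textbf{Quantifier case.} This is the only step needing care, because the translation inserts the new atom $R(x,y)$. Suppose $C$ is ${\forall y}D$. Then
\[
fiv((C)^{x}){=}(\{x,y\}{\cup}fiv((D)^{x})){\setminus}\{y\}.
\]
By the induction hypothesis, $fiv((D)^{x}){\subseteq}\{x\}{\cup}fiv(D)$. Therefore
\[
fiv((C)^{x}){\subseteq}(\{x,y\}{\cup}fiv(D)){\setminus}\{y\}{\subseteq}\{x\}{\cup}(fiv(D){\setminus}\{y\}){=}\{x\}{\cup}fiv(C).
\]
The middle inclusion holds whether or not $y$ is $x$. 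If $y{\neq}x$, removing $y$ keeps $x$. If $y{=}x$, removing $y$ deletes $x$ from the left-hand side, which only shrinks it, so it still lies in $\{x\}{\cup}(fiv(D){\setminus}\{y\})$.

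The only point of potential confusion is this quantifier case when the bound variable coincides with $x$. The set computation above handles it uniformly, so the induction closes without the convention.
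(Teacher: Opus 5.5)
Your proof is correct and takes the same approach as the paper, which states only that the result follows by induction on $C$. You fill in the details the paper leaves to the reader, including the quantifier case where the bound variable coincides with $x$. You also rightly note that the standing convention that $x$ does not occur in $C$ is not needed for this inclusion.
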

\begin{proof}
By induction on $C$.
\medskip
\end{proof}
\subsection{Satisfiability of first-order formulas}
The {\em satisfiability of a first-order formula $A(\bar{x})$ in a frame $(W,{\leq},{R})$ with respect to a list $\bar{s}$ of elements in $W$}\/ (in symbols $(W,{\leq},{R}){\models}A(\bar{x})\ \lbrack\bar{s}\rbrack$) is defined as follows:\footnote{When we write $(W,{\leq},{R}){\models}A(\bar{x})\ \lbrack\bar{s}\rbrack$, we mean that the elements in $\bar{s}$ are the current values of the individual variables in $\bar{x}$.}
%
%
\begin{itemize}
\item $(W,{\leq},{R}){\models}R(x_{i},x_{j})\ \lbrack\bar{s}\rbrack$ if and only if $s_{i}{R}s_{j}$,
\item $(W,{\leq},{R}){\models}x_{i}{=}x_{j}\ \lbrack\bar{s}\rbrack$ if and only if $s_{i}{=}s_{j}$,
\item $(W,{\leq},{R}){\models}x_{i}{\leq}x_{j}\ \lbrack\bar{s}\rbrack$ if and only if $s_{i}{\leq}s_{j}$,
\item $(W,{\leq},{R}){\not\models}{\bot}\ \lbrack\bar{s}\rbrack$,
\item $(W,{\leq},{R}){\models}{\neg}A\ \lbrack\bar{s}\rbrack$ if and only if $(W,{\leq},{R}){\not\models}A\ \lbrack\bar{s}\rbrack$,
\item $(W,{\leq},{R}){\models}A{\vee}B\ \lbrack\bar{s}\rbrack$ if and only if either $(W,{\leq},{R}){\models}A\ \lbrack\bar{s}\rbrack$, or $(W,{\leq},{R}){\models}B
$\linebreak$
\lbrack\bar{s}\rbrack$,
\item $(W,{\leq},{R}){\models}{\forall x}A(\bar{x},x)\ \lbrack\bar{s}\rbrack$ if and only if for all $s{\in}W$, $(W,{\leq},{R}){\models}A(\bar{x},x)\ \lbrack\bar{s},s\rbrack$.
\end{itemize}
When $(W,{\leq},{R}){\models}A(\bar{x})\ \lbrack\bar{s}\rbrack$, we also say that $A(\bar{x})$ {\em holds in $(W,{\leq},{R})$ when $\bar{x}$ is interpreted by $\bar{s}$.}
\\
\\
A first-order formula $A(\bar{x})$ is {\em valid in a frame $(W,{\leq},{R})$}\/ (in symbols $(W,{\leq},{R}){\models}
$\linebreak$
A(\bar{x})$) if $A(\bar{x})$ is satisfied in $(W,{\leq},{R})$ with respect to all lists $\bar{s}$ of elements in $W$.
\\
\\
A first-order formula $A$ is {\em valid in a class ${\mathcal C}$ of frames}\/ (in symbols ${\mathcal C}{\models}A$) if $A$ is valid in all frames in ${\mathcal C}$.
\\
\\
A class ${\mathcal C}$ of frames is {\em elementary}\/ if there exists a sentence $C$ such that for all frames $(W,{\leq},{R})$, $(W,{\leq},{R}){\models}C$ if and only if $(W,{\leq},{R})$ is in ${\mathcal C}$.
\subsection{Relativized reducts}
A frame $(W^{\prime},{\leq^{\prime}},{R^{\prime}})$ is the {\em relativized reduct of a frame $(W,{\leq},{R})$}\/ if there exists $s{\in}W$ such that $(W^{\prime},{\leq^{\prime}},{R^{\prime}})$ is the restriction of $(W,{\leq},{R})$ to the set of all $t{\in}W$ such that $s{R}t$.
In that case, $(W^{\prime},{\leq^{\prime}},{R^{\prime}})$ is the {\em relativized reduct of $(W,{\leq},{R})$ with respect to $s$.}
\begin{proposition}\label{Proposition:condition:suffisante:pour:existence:reduction}
For all frames $(W,{\leq},{R})$ and for all $s{\in}W$, there exists a relativized reduct of $(W,{\leq},{R})$ with respect to $s$ if and only if $(W,{\leq},{R}){\models}{\exists y}R(x,y)\ \lbrack s\rbrack$.
\end{proposition}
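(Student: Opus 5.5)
The plan is to unfold both sides of the equivalence and observe that each reduces to the same condition, namely that the set $R(s){=}\{t{\in}W:\ s{R}t\}$ is nonempty.

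\emph{Right-hand side.} By the truth conditions for first-order formulas, $(W,{\leq},{R}){\models}{\exists y}R(x,y)\ \lbrack s\rbrack$ means $(W,{\leq},{R}){\not\models}{\forall y}{\neg}R(x,y)\ \lbrack s\rbrack$. This holds exactly when there is some $t{\in}W$ with $(W,{\leq},{R}){\models}R(x,y)\ \lbrack s,t\rbrack$, that is, with $s{R}t$. So the right-hand side says precisely that $R(s){\neq}\emptyset$.

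\emph{Left-hand side.} By definition, a relativized reduct with respect to $s$ is the restriction of $(W,{\leq},{R})$ to $R(s)$. It is a frame only if its first component, together with the restricted preorder, is a \emph{nonempty} preorder, because frames are defined as triples whose underlying preorder is nonempty.

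\emph{The two directions.}
\begin{itemize}
\item Suppose such a reduct exists. Its domain $R(s)$ must then be nonempty, so some $t$ satisfies $s{R}t$, and the first-order formula holds at $s$ by the unfolding above.
\item Suppose the formula holds at $s$. Then $R(s)$ is nonempty. The restriction of $\leq$ to $R(s)$ is still reflexive and transitive, since both properties are universal conditions and are inherited by any subset. Hence $(R(s),{\leq}{\cap}(R(s){\times}R(s)),{R}{\cap}(R(s){\times}R(s)))$ is a frame, and it is by definition the relativized reduct with respect to $s$.
\end{itemize}

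There is no real obstacle here. The only point needing care is to note explicitly that nonemptiness is built into the definition of a frame, since the restriction itself is always defined. The preorder properties pass to the subset automatically.
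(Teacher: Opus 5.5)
Your proposal is correct, and since the paper leaves this proof to the reader, your unfolding is essentially the intended argument. The only real content is the one you single out: frames must have a nonempty underlying preorder, and the restriction of a preorder to any subset is still a preorder, so the restriction to $\{t{\in}W:\ s{R}t\}$ is a frame exactly when $s$ has an $R$-successor.
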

The following result~---~a classic result from model theory~\cite[Theorem $5.1.1$]{Hodges:1993}~---~will be useful for proving the undecidability of the modal definability problem.
\begin{proposition}[Relativization Property]\label{Syntactic:Restriction:Proposition}
Let $(W,{\leq},{R}),(W^{\prime},{\leq^{\prime}},{R^{\prime}})$ be frames and $s{\in}W$.
If $(W^{\prime},{\leq^{\prime}},{R^{\prime}})$ is the relativized reduct of $(W,{\leq},{R})$ with respect to $s$ then for all first-order formulas $C(\bar{y})$ and for all lists $\bar{t}$ of elements in $W^{\prime}$, if $\bar{y}$ and $\bar{t}$ have the same length then $(C(\bar{y}))^{x}$ holds in $(W,{\leq},{R})$ when $x$ is interpreted by $s$ and $\bar{y}$ is interpreted by $\bar{t}$ if and only if $C(\bar{y})$ holds in $(W^{\prime},{\leq^{\prime}},{R^{\prime}})$ when $\bar{y}$ is interpreted by $\bar{t}$, i.e. $(W,{\leq},{R}){\models}(C(\bar{y}))^{x}\ \lbrack s,\bar{t}\rbrack$ if and only if $(W^{\prime},{\leq^{\prime}},{R^{\prime}}){\models}C(\bar{y})\ \lbrack\bar{t}\rbrack$.\footnote{When we write $(W,{\leq},{R}){\models}(C(\bar{y}))^{x}\ \lbrack s,\bar{t}\rbrack$ and $(W^{\prime},{\leq^{\prime}},{R^{\prime}}){\models}C(\bar{y})\ \lbrack\bar{t}\rbrack$, we mean that $s$ is the current value of $x$ and the elements in $\bar{t}$ are the current values of the individual variables in $\bar{y}$.}
\end{proposition}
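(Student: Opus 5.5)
We prove the Relativization Property by induction on the first-order formula $C(\bar{y})$, simultaneously for all lists $\bar{y}$ and $\bar{t}$ of equal length with $\bar{t}$ in $W^{\prime}$. Throughout, we keep fixed that $W^{\prime}{=}\{t{\in}W:\ s{R}t\}$ and that ${\leq^{\prime}}$ and ${R^{\prime}}$ are the restrictions of ${\leq}$ and ${R}$ to $W^{\prime}$. Recall the standing convention that $x$ does not occur in $C$. Hence, by the preceding proposition on $fiv((C)^{x})$, the only free variables of $(C(\bar{y}))^{x}$ are among $x,\bar{y}$, so the assignment $\lbrack s,\bar{t}\rbrack$ makes sense. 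This is the standard argument of~\cite[Theorem $5.1.1$]{Hodges:1993}, adapted to the present signature.

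\textbf{Atomic and Boolean cases.} In the atomic cases $R(y_{i},y_{j})$, $y_{i}{=}y_{j}$ and $y_{i}{\leq}y_{j}$, the translation leaves the formula unchanged. Since $t_{i},t_{j}{\in}W^{\prime}$ and the relations of $(W^{\prime},{\leq^{\prime}},{R^{\prime}})$ are restrictions, we have $t_{i}{R}t_{j}$ iff $t_{i}{R^{\prime}}t_{j}$, and similarly for $=$ and $\leq$. The case $\bot$ is trivial. The cases ${\neg}C$ and $C{\vee}D$ follow immediately from the induction hypothesis, because $(\cdot)^{x}$ commutes with ${\neg}$ and ${\vee}$ and the satisfaction clauses for these connectives are the same in both frames.

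\textbf{Quantifier case.} This is the only case with content. Let the formula be ${\forall z}C(\bar{y},z)$, whose translation is ${\forall z}(R(x,z){\rightarrow}(C)^{x})$.
\begin{itemize}
\item First suppose $(W,{\leq},{R}){\models}{\forall z}(R(x,z){\rightarrow}(C)^{x})\ \lbrack s,\bar{t}\rbrack$, and let $u{\in}W^{\prime}$. Then $s{R}u$, so $(W,{\leq},{R}){\models}(C)^{x}\ \lbrack s,\bar{t},u\rbrack$. By the induction hypothesis applied to the list $\bar{t},u$ of elements of $W^{\prime}$, we get $(W^{\prime},{\leq^{\prime}},{R^{\prime}}){\models}C\ \lbrack\bar{t},u\rbrack$.
\item Conversely, suppose $(W^{\prime},{\leq^{\prime}},{R^{\prime}}){\models}{\forall z}C\ \lbrack\bar{t}\rbrack$, and let $u{\in}W$ with $s{R}u$. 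Then $u{\in}W^{\prime}$, and the induction hypothesis gives $(W,{\leq},{R}){\models}(C)^{x}\ \lbrack s,\bar{t},u\rbrack$. Any $u$ with not $s{R}u$ satisfies the implication vacuously.
\end{itemize}

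\textbf{Main obstacle: variable bookkeeping.} The quantified variable $z$ may coincide with some $y_{i}$, and the induction must be stated for arbitrary finite lists of variables. We handle this as follows.
\begin{itemize}
\item We formulate the induction hypothesis for all formulas and all assignments of the free variables into $W^{\prime}$.
\item When $z$ equals some $y_{i}$, the new value $u$ overrides $t_{i}$; since $u{\in}W^{\prime}$, the assignment still ranges over $W^{\prime}$.
\item The variable $x$ is never captured or reassigned, because it does not occur in $C$. Its value therefore stays $s$ throughout the induction.
\end{itemize}
With these points settled, the induction closes.
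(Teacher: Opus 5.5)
Your proof is correct and takes the same route as the paper, whose proof consists only of the words ``By induction on $C$''. Your atomic, Boolean and quantifier cases, with the key observation that $s{R}u$ if and only if $u{\in}W^{\prime}$, supply exactly the details the paper leaves to the reader.
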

\begin{proof}
By induction on $C$.
\medskip
\end{proof}
\subsection{About definability with respect to ${\mathcal C}_{\allfra}$}
We are now ready to prove the undecidability of the following decision problems:
\begin{description}
\item[$\mathbf{MD}$:] determine whether a given elementary class of frames is modally definable with respect to ${\mathcal C}_{\allfra}$,
\item[$\mathbf{PD}$:] determine whether a given elementary class of frames is positively definable with respect to ${\mathcal C}_{\allfra}$,
\item[$\mathbf{ND}$:] determine whether a given elementary class of frames is negatively definable with respect to ${\mathcal C}_{\allfra}$.
\end{description}
\begin{proposition}\label{proposition:modal:definability:is:undecidable}
The following decision problems are undecidable: $\mathbf{MD}$, $\mathbf{PD}$ and $\mathbf{ND}$.
\end{proposition}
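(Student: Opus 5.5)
We reduce from the undecidable problem of deciding whether a first-order sentence is valid on all frames. In the signature $\{R,\leq,=\}$ with $\leq$ constrained to be a preorder, this problem is undecidable: a binary relation symbol $R$ is left unconstrained, and validity for a single free binary relation is already undecidable by Church's theorem. Fix a class ${\mathcal C}_{0}$ of frames that is elementary, say defined by a sentence $C_{0}$, and that is \emph{not} modally definable with respect to ${\mathcal C}_{\allfra}$. Such a class exists: ${\mathcal C}_{\reflexive}$ works, since Proposition~\ref{Proposition:there:is:no:formula:such:that:for:all:frames:is:reflexive:if:and:only:if:and:ref} shows non-definability even relative to ${\mathcal C}_{\fbducfra}$. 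The two-frame argument there lifts directly to ${\mathcal C}_{\allfra}$, because both frames lie in ${\mathcal C}_{\fbducfra}{\subseteq}{\mathcal C}_{\allfra}$.

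Given an arbitrary sentence $D$, we build the elementary class ${\mathcal C}_{D}$ defined by a sentence of the form ${\forall x}({\exists y}R(x,y){\rightarrow}(D)^{x}){\rightarrow}C_{0}$, or a variant combining $D$ and $C_{0}$. We then aim to prove two things:
\begin{itemize}
\item if $D$ is valid, then ${\mathcal C}_{D}{=}{\mathcal C}_{0}$, which is not modally, positively or negatively definable;
\item if $D$ is not valid, then ${\mathcal C}_{D}$ is trivially definable, for instance it equals ${\mathcal C}_{\allfra}$, which is defined by $(+,{\top})$ positively and by $\frac{(-,{\top})}{(+,{\bot})}$ negatively, or is empty.
\end{itemize}
This would make $\mathbf{MD}$, $\mathbf{PD}$ and $\mathbf{ND}$ all undecidable at once.

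The relativization is what makes the second case work. By Proposition~\ref{Syntactic:Restriction:Proposition}, $(D)^{x}$ holds at $s$ exactly when $D$ holds in the relativized reduct at $s$. By Proposition~\ref{Proposition:condition:suffisante:pour:existence:reduction}, a reduct exists exactly when $s$ has an $R$-successor. If $D$ fails in some frame $F$, we can glue a fresh root $r$ with $r{R}t$ for every $t$ in $F$, keeping $\leq$ as the disjoint sum of $F$'s preorder with $\{r\}$. The reduct at $r$ is then $F$, so the antecedent fails at $r$. Because of this, the precise form of the sentence must be chosen carefully. I would instead take ${\mathcal C}_{D}$ to be defined by $C_{0}{\vee}{\exists x}({\exists y}R(x,y){\wedge}{\neg}(D)^{x})$.
\begin{itemize}
\item If $D$ is valid, every reduct satisfies $D$, so the second disjunct never holds and ${\mathcal C}_{D}{=}{\mathcal C}_{0}$.
\item If $D$ is not valid, ${\mathcal C}_{D}$ contains ${\mathcal C}_{0}$ together with all frames that have a point whose reduct falsifies $D$.
\end{itemize}
The remaining work in the invalid case is to show that ${\mathcal C}_{D}$ is nonetheless modally definable, or else to rearrange the construction so that it collapses to ${\mathcal C}_{\allfra}$.

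The main obstacle is this invalid case, since generated-subframe style arguments do not obviously give ${\mathcal C}_{D}{=}{\mathcal C}_{\allfra}$. My plan is to exploit the invariance used in the proof of Proposition~\ref{Proposition:there:is:no:formula:such:that:for:all:frames:is:reflexive:if:and:only:if:and:ref} in reverse. The key claim to establish is that validity of signed formulas is preserved under taking disjoint unions, and reflected back to each component; this is checked by induction, since every truth condition only inspects $\leq$- and $R$-neighbours. Then, for $D$ invalid, I would show that any rule $\frac{\cdot}{(\beta,B)}$ with $(\beta,B)$ valid on all frames defines ${\mathcal C}_{D}$ with respect to ${\mathcal C}_{\allfra}$ only if ${\mathcal C}_{D}$ is closed in the right sense.

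If that fails, I would fall back on the cleaner choice where ${\mathcal C}_{D}$ is defined by $(D)^{\ast}{\rightarrow}C_{0}$, with $(D)^{\ast}$ a relativization to a definable part that every frame can be extended to falsify. I would then verify directly that this class is ${\mathcal C}_{\allfra}$, up to frames indistinguishable by signed formulas, so that $(+,{\top})$ serves as a positive definition and $\frac{(-,{\top})}{(+,{\bot})}$ as a negative one. The delicate point throughout is matching the class exactly, not merely up to modal equivalence, and this is where I expect the proof to require the most care.
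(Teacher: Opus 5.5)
Your plan breaks down in the invalid case, and the problem is not just missing detail.

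\textbf{Your first target is unattainable.} Your first list asks for valid $D$ to give ${\mathcal C}_{D}{=}{\mathcal C}_{\reflexive}$ and invalid $D$ to give ${\mathcal C}_{\allfra}$ or $\emptyset$. No computable map $D\mapsto{\mathcal C}_{D}$ can do this. If it did, $D$ would be valid exactly when the one-point frame with $R$ reflexive belongs to ${\mathcal C}_{D}$ and the one-point frame with $R$ empty does not. Whether a fixed finite frame satisfies a given sentence is decidable, so validity would become decidable.

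\textbf{Your concrete class fails too.} The class defined by $C_{0}{\vee}{\exists x}({\exists y}R(x,y){\wedge}{\neg}(D)^{x})$ is in general not modally definable when $D$ is invalid. Take the invalid sentence $D{=}{\forall x}{\forall y}\,x{=}y$ and the frames $(W^{\prime},{\leq^{\prime}},{R^{\prime}})$, $(W^{\prime\prime},{\leq^{\prime\prime}},{R^{\prime\prime}})$ from the proof of Proposition~\ref{Proposition:there:is:no:formula:such:that:for:all:frames:is:reflexive:if:and:only:if:and:ref}. The relativized reducts of the second frame at $a$ and at $b$ are one-point frames with empty $R$, and $D$ holds there. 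So the second frame is not reflexive and has no witness point, hence lies outside ${\mathcal C}_{D}$. The first frame is reflexive, hence lies inside. The two frames validate the same signed formulas, so no rule defines ${\mathcal C}_{D}$.

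Your own disjoint-union lemma shows that, for invalid $D$, ${\mathcal C}_{D}$ is never positively definable. The one-point frame with empty $R$ is never in ${\mathcal C}_{D}$. But if $D$ fails in some frame $F$, its disjoint union with ``$F$ plus a fresh root seeing all of $F$'' is in ${\mathcal C}_{D}$. A positive definition valid on that union would then be valid on the one-point frame, a contradiction. Finally, as you note yourself, agreeing with ${\mathcal C}_{\allfra}$ only up to frames indistinguishable by signed formulas does not give definability.

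\textbf{The missing idea is to swap the roles of the two cases.} Let the valid case be the trivially definable one by making the class empty, and let the invalid case produce a non-definable class. The paper uses the sentence ${\exists x}({\exists y}R(x,y){\wedge}{\neg}(C)^{x}){\wedge}{\exists y}B_{1}(y){\wedge}B_{2}$. Here $B_{1}(y)$ says that $y$ is $R$-isolated and $B_{2}$ says that there is at most one $R$-isolated point.

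If $C$ is valid, the Relativization Property makes the first conjunct false in every frame. The class is then empty, defined positively by $(+,{\bot})$ and negatively by $\frac{(+,{\top})}{(+,{\bot})}$.

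If $C$ fails in $(W,{\leq},{R})$, add a fresh root $s^{\prime}$ with $s^{\prime}R^{\prime}w$ for all $w{\in}W$ and one fresh isolated point $t^{\prime}$. The resulting frame is in the class. Adding a second isolated point $t^{\prime\prime}$ gives a frame outside the class that validates exactly the same signed formulas. So the class is not even modally definable.

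The non-definable ingredient (``exactly one isolated point'') is conjoined with the witness and is unaffected by the witness component. That is exactly what your disjunction with ${\mathcal C}_{\reflexive}$ lacks. And because the invalid-case classes vary with $C$, the finite-model-checking obstruction above does not apply.
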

\begin{proof}\footnote{This proof is an adaptation of the proof developed in~\cite{Balbiani:Tinchev:2017} within the framework of modal logics.
We only consider the case of $\mathbf{MD}$.}
As is well-known, the problem of determining the validity of sentences in ${\mathcal C}_{\allfra}$ is undecidable~\cite{Kalmar:1937}.
Hence, it suffices to reduce the problem of determining the validity of sentences in ${\mathcal C}_{\allfra}$ to $\mathbf{MD}$.
Let $B_{1}(y){=}{\forall z}({\neg}R(y,z){\wedge}{\neg}R(z,y))$ and $B_{2}{=}{\forall y^{\prime}}{\forall y^{\prime\prime}}(B_{1}(y^{\prime}){\wedge}B_{1}(y^{\prime\prime}){\rightarrow}y^{\prime}{=}y^{\prime\prime})$.
Suppose there exists a sentence $C$ such that either ${\mathcal C}_{\allfra}{\models}C$ and the class of frames determined by the sentence ${\exists x}({\exists y}R(x,y){\wedge}
$\linebreak$
{\neg}(C)^{x}){\wedge}{\exists y}B_{1}(y){\wedge}B_{2}$ is not positively definable with respect to ${\mathcal C}_{\allfra}$, or ${\mathcal C}_{\allfra}{\not\models}C$ and the class of frames determined by the sentence ${\exists x}({\exists y}R(x,y){\wedge}{\neg}(C)^{x}){\wedge}{\exists y}B_{1}(y){\wedge}B_{2}$ is modally definable with respect to ${\mathcal C}_{\allfra}$.
\\
\\
In the former case, neither $(+,\bot)$, nor $(-,\top)$ are positive definitions of the class of frames determined by ${\exists x}({\exists y}R(x,y){\wedge}{\neg}(C)^{x}){\wedge}{\exists y}B_{1}(y){\wedge}B_{2}$ with respect to ${\mathcal C}_{\allfra}$.
Thus, there exists a frame $(W,{\leq},{R})$ such that $(W,{\leq},{R}){\models}{\exists x}({\exists y}R(x,y){\wedge}{\neg}(C)^{x}){\wedge}
$\linebreak$
{\exists y}B_{1}(y){\wedge}B_{2}$.
Consequently, there exists $s{\in}W$ such that $(W,{\leq},{R}){\models}{\exists y}R(x,y)\ \lbrack s\rbrack$ and $(W,{\leq},{R}){\not\models}(C)^{x}\ \lbrack s\rbrack$.
Hence, by Proposition~\ref{Proposition:condition:suffisante:pour:existence:reduction}, there exists a relativized reduct $(W^{\prime},{\leq^{\prime}},{R^{\prime}})$ of $(W,{\leq},{R})$ with respect to $s$.
Since $(W,{\leq},{R}){\not\models}(C)^{x}\ \lbrack s\rbrack$, therefore by Proposition~\ref{Syntactic:Restriction:Proposition}, $(W^{\prime},{\leq^{\prime}},{R^{\prime}}){\not\models}C$.
Thus, ${\mathcal C}_{\allfra}\not\models C$: a contradiction.
\\
\\
In the latter case, there exists a frame $(W,{\leq},{R})$ such that $(W,{\leq},{R}){\not\models}C$ and there exists a modal definition $\frac{(\alpha_{1},A_{1})\ \ldots\ (\alpha_{m},A_{m})}{(\beta,B)}$ of the class of frames determined by
\linebreak$
{\exists x}({\exists y}R(x,y){\wedge}{\neg}(C)^{x}){\wedge}{\exists y}B_{1}(y){\wedge}B_{2}$ with respect to ${\mathcal C}_{\allfra}$.
Let $(W^{\prime},{\leq^{\prime}},{R^{\prime}}),(W^{\prime\prime},
$\linebreak$
{\leq^{\prime\prime}},{R^{\prime\prime}})$ be the frames defined by $W^{\prime}{=}W\cup\{s^{\prime},t^{\prime}\}$, ${\leq^{\prime}}{=}{\leq}{\cup}\{(s^{\prime},s^{\prime}),(t^{\prime},t^{\prime})\}$, $R^{\prime}{=}R{\cup}
$\linebreak$
(\{s^{\prime}\}\times W)$, $W^{\prime\prime}{=}W\cup\{s^{\prime},t^{\prime},t^{\prime\prime}\}$, ${\leq^{\prime\prime}}{=}{\leq}{\cup}\{(s^{\prime},s^{\prime}),(t^{\prime},t^{\prime}),(t^{\prime\prime},t^{\prime\prime})\}$ and $R^{\prime\prime}{=}R{\cup}
$\linebreak$
(\{s^{\prime}\}\times W)$.
The reader may easily verify that $(W,{\leq},{R})$ is the relativized reduct of $(W^{\prime},{\leq^{\prime}},{R^{\prime}})$ with respect to $s^{\prime}$.
Consequently, by Proposition~\ref{Proposition:condition:suffisante:pour:existence:reduction}, $(W^{\prime},{\leq^{\prime}},{R^{\prime}}){\models}
$\linebreak$
{\exists y}R(x,y)\ \lbrack s^{\prime}\rbrack$.
Moreover, $(W^{\prime},{\leq^{\prime}},{R^{\prime}}){\models}{\exists y}B_{1}(y)$, $(W^{\prime},{\leq^{\prime}},{R^{\prime}}){\models}B_{2}$ and $(W^{\prime\prime},{\leq^{\prime\prime}},
$\linebreak$
{R^{\prime\prime}}){\not\models}B_{2}$.
Hence, $(W^{\prime\prime},{\leq^{\prime\prime}},{R^{\prime\prime}}){\not\models}{\exists x}({\exists y}R(x,y){\wedge}{\neg}(C)^{x}){\wedge}{\exists y}B_{1}(y){\wedge}B_{2}$.
%
%
\begin{claim}
For all $(\gamma,C){\in}\SFo$, $(W^{\prime},{\leq^{\prime}},{R^{\prime}}){\models}(\gamma,C)$ if and only if $(W^{\prime\prime},{\leq^{\prime\prime}},{R^{\prime\prime}}){\models}(\gamma,
$\linebreak$
C)$.
\end{claim}
\begin{proof}
By induction on $C$.
\medskip
\end{proof}
%
%
Since $(W^{\prime\prime},{\leq^{\prime\prime}},{R^{\prime\prime}}){\not\models}{\exists x}({\exists y}R(x,y){\wedge}{\neg}(C)^{x}){\wedge}{\exists y}B_{1}(y){\wedge}B_{2}$ and $\frac{(\alpha_{1},A_{1})\ \ldots\ (\alpha_{m},A_{m})}{(\beta,B)}$ is a modal definition of the class of frames determined by ${\exists x}({\exists y}R(x,y){\wedge}{\neg}(C)^{x}){\wedge}
$\linebreak$
{\exists y}B_{1}(y){\wedge}B_{2}$ with respect to ${\mathcal C}_{\allfra}$, therefore $(W^{\prime},{\leq^{\prime}},{R^{\prime}}){\not\models}{\exists x}({\exists y}R(x,y){\wedge}
$\linebreak$
{\neg}(C)^{x}){\wedge}{\exists y}B_{1}(y){\wedge}B_{2}$.
Since $(W^{\prime},{\leq^{\prime}},{R^{\prime}}){\models}{\exists y}B_{1}(y)$ and $(W^{\prime},{\leq^{\prime}},{R^{\prime}}){\models}B_{2}$, therefore $(W^{\prime},{\leq^{\prime}},{R^{\prime}}){\not\models}{\exists x}({\exists y}R(x,y){\wedge}{\neg}(C)^{x})$.
Thus, either $(W^{\prime},{\leq^{\prime}},{R^{\prime}}){\not\models}{\exists y}R(x,
$\linebreak$
y)\ \lbrack s^{\prime}\rbrack$, or $(W^{\prime},{\leq^{\prime}},{R^{\prime}}){\models}(C)^{x}\ \lbrack s^{\prime}\rbrack$.
Since $(W^{\prime},{\leq^{\prime}},{R^{\prime}}){\models}{\exists y}R(x,y)\ \lbrack s^{\prime}\rbrack$, therefore $(W^{\prime},{\leq^{\prime}},{R^{\prime}}){\models}(C)^{x}\ \lbrack s^{\prime}\rbrack$.
Since $(W,{\leq},{R})$ is the relativized reduct of $(W^{\prime},{\leq^{\prime}},{R^{\prime}})$ with respect to $s^{\prime}$, therefore by Proposition~\ref{Syntactic:Restriction:Proposition}, $(W,{\leq},{R}){\models}C$: a contradiction.
\medskip
\end{proof}
%
%
%
%
%
%
%
%
%
%
%
%
%
%
%
%
\begin{table}[ht]
\begin{center}
\begin{tabular}{|c|c|c|c|}
\hline
$(\Rule1^{+})$&$\frac{(+,p),(+,p{\rightharpoonup}q)}{(+,q)}$&$(\Rule1^{-})$&$\frac{(-,p),(-,p{\leftharpoondown}q)}{(-,q)}$
\\
\hline
$(\Rule2^{+})$&$\frac{(-,p{\leftharpoondown}q)}{(+,q{\rightharpoonup}p)}$&$(\Rule2^{-})$&$\frac{(+,p{\rightharpoonup}q)}{(-,q{\leftharpoondown}p)}$
\\
\hline
$(\Rule3^{+})$&$\frac{(+,p{\rightharpoonup}q)}{(+,{\lozenge}p{\rightharpoonup}{\lozenge}q)}$&$(\Rule3^{-})$&$\frac{(-,p{\leftharpoondown}q)}{(-,{\square}p{\leftharpoondown}{\square}q)}$
\\
\hline
$(\Rule4^{+})$&$\frac{(+,p)}{(+,{\square}p)}$&$(\Rule4^{-})$&$\frac{(-,p)}{(-,{\lozenge}p)}$
\\
\hline
$(\Rule5^{+})$&$\frac{(+,p{\rightharpoonup}q)}{(+,{\blacklozenge}p{\rightharpoonup}{\blacklozenge}q)}$&$(\Rule5^{-})$&$\frac{(-,p{\leftharpoondown}q)}{(-,{\blacksquare}p{\leftharpoondown}{\blacksquare}q)}$
\\
\hline
$(\Rule6^{+})$&$\frac{(+,p)}{(+,{\blacksquare}p)}$&$(\Rule6^{-})$&$\frac{(-,p)}{(-,{\blacklozenge}p)}$
\\
\hline
$(\Rule7^{+})$&$\frac{(+,p{\vee}q)}{(+,{\blacksquare}p{\curlyvee}{\blacklozenge}q)}$&$(\Rule7^{-})$&$\frac{(-,p{\wedge}q)}{(-,{\blacklozenge}p{\curlywedge}{\blacksquare}q)}$
\\
\hline
$(\Rule8^{+})$&$\frac{(+,{\lozenge}p{\rightharpoonup}q{\vee}{\square}r)}{(+,{\lozenge}p{\rightharpoonup}q{\vee}{\lozenge}(p{\wedge}r))}$&$(\Rule8^{-})$&$\frac{(-,{\square}p{\leftharpoondown}q{\wedge}{\lozenge}r)}{(-,{\square}p{\leftharpoondown}q{\wedge}{\square}(p{\vee}r))}$
\\
\hline
%
%
%
%
%
%
%
%
%
%
%
%
%
%
%
%
%
%
%
%
%
%
%
%
%
%
%
%
%
%
%
%
%
%
%
%
%
%
%
%
%
%
%
%
%
%
%
%
%
%
\end{tabular}
\vspace{+0.25cm}
\caption{}\label{table:rules}
\end{center}
\end{table}
\vspace{-0.50cm}
\section{Hilbert-style axiomatization}\label{section:hilbert:style:axiomatization}
An {\em intuitionistic modal logic (IML)}\/ is a set $\L$ of signed formulas closed for uniform substitution, closed under the inference rules described in Table~\ref{table:rules} and containing the signed formulas called {\em axioms}\/ and described in Table~\ref{table:axioms}.\footnote{The reason why we have separated the condition of closure under uniform substitution and the condition of closure under the inference rules described in Table~\ref{table:rules} is simply the following: uniform substitution is not an inference rule.}
\\
\\
The presentation of IMLs as sets of signed formulas is reminiscent of the distinction done by Rauszer~\cite{Rauszer:1980} between on one hand, ``provable formulas'' and ``theorems'' and on the other hand, ``rejected formulas'' and ``antitheorems''.
\\
\\
All the inference rules described in Table~\ref{table:rules} but inference rule $(\Rule7^{+})$ and its dual $(\Rule7^{-})$ and all the axioms described in Table~\ref{table:axioms} but axioms $(\Formula11^{+})$ and $(\Formula12^{+})$ and their duals $(\Formula11^{-})$ and $(\Formula12^{-})$ have been already considered in the above-mentioned literature about intuitionistic modal logics.
In particular, inference rule $(\Rule8^{+})$ and its dual $(\Rule8^{-})$ have been considered in~\cite{Prenosil:2014}.
See also~\cite{Balbiani:Gencer:preliminary:draft:SL}.
\begin{table}[ht]
\begin{center}
\begin{tabular}{|c|c|}
\hline
$(\Formula1^{+})$&$(+,(p{\rightharpoonup}(q{\rightharpoonup}r)){\rightharpoonup}((p{\rightharpoonup}q){\rightharpoonup}(p{\rightharpoonup}r)))$
\\
\hline
$(\Formula2^{+})$&$(+,p{\rightharpoonup}(q{\rightharpoonup}p))$
\\
\hline
$(\Formula3^{+})$&$(+,p{\rightharpoonup}(q{\rightharpoonup}p{\wedge}q)))$, $(+,p{\wedge}q{\rightharpoonup}p)$ and $(+,p{\wedge}q{\rightharpoonup}q)$
\\
\hline
$(\Formula4^{+})$&$(+,(p{\rightharpoonup}r){\rightharpoonup}((q{\rightharpoonup}r){\rightharpoonup}(p{\vee}q{\rightharpoonup}r)))$
\\
\hline
$(\Formula5^{+})$&$(+,{\top})$
\\
\hline
$(\Formula6^{+})$&$(+,(p{\rightharpoonup}q){\wedge}r{\rightharpoonup}(p{\leftharpoondown}r){\vee}q)$
\\
\hline
$(\Formula7^{+})$&$(+,{\rceil}{\lozenge}{\bot})$
\\
\hline
$(\Formula8^{+})$&$(+,{\square}(p{\rightharpoonup}q){\rightharpoonup}({\square}p{\rightharpoonup}{\square}q))$
\\
\hline
$(\Formula9^{+})$&$(+,{\square}p{\rightharpoonup}{\blacksquare}p)$
\\
\hline
$(\Formula10^{+})$&$(+,{\square}(p{\rightharpoonup}q){\rightharpoonup}({\blacklozenge}p{\rightharpoonup}{\blacklozenge}q))$
\\
\hline
$(\Formula11^{+})$&$(+,{\blacklozenge}(p{\vee}q){\rightharpoonup}{\lozenge}p{\curlyvee}{\blacklozenge}q)$
\\
\hline
$(\Formula12^{+})$&$(+,{\square}p{\rightharpoonup}{\blacksquare}(q{\leftharpoondown}p){\curlyvee}{\blacklozenge}q)$
\\
\hline
$(\Formula1^{-})$&$(-,(p{\leftharpoondown}(q{\leftharpoondown}r)){\leftharpoondown}((p{\leftharpoondown}q){\leftharpoondown}(p{\leftharpoondown}r)))$
\\
\hline
$(\Formula2^{-})$&$(-,p{\leftharpoondown}(q{\leftharpoondown}p)$
\\
\hline
$(\Formula3^{-})$&$(-,p{\leftharpoondown}(q{\leftharpoondown}p{\vee}q)$, $(-,p{\vee}q{\leftharpoondown}p)$ and $(-,p{\vee}q{\leftharpoondown}q)$
\\
\hline
$(\Formula4^{-})$&$(-,(p{\leftharpoondown}r){\leftharpoondown}((q{\leftharpoondown}r){\leftharpoondown}(p{\wedge}q{\leftharpoondown}r)))$
\\
\hline
$(\Formula5^{-})$&$(-,{\bot})$
\\
\hline
$(\Formula6^{-})$&$(-,(p{\leftharpoondown}q){\vee}r{\leftharpoondown}(p{\rightharpoonup}r){\wedge}q)$
\\
\hline
$(\Formula7^{-})$&$(-,{\lfloor}{\square}{\top})$
\\
\hline
$(\Formula8^{-})$&$(-,{\lozenge}(p{\leftharpoondown}q){\leftharpoondown}({\lozenge}p{\leftharpoondown}{\lozenge}q))$
\\
\hline
$(\Formula9^{-})$&$(-,{\lozenge}p{\leftharpoondown}{\blacklozenge}p)$
\\
\hline
$(\Formula10^{-})$&$(-,{\lozenge}(p{\leftharpoondown}q){\leftharpoondown}({\blacksquare}p{\leftharpoondown}{\blacksquare}q))$
\\
\hline
$(\Formula11^{-})$&$(-,{\blacksquare}(p{\wedge}q){\leftharpoondown}{\square}p{\curlywedge}{\blacksquare}q)$
\\
\hline
$(\Formula12^{-})$&$(-,{\lozenge}p{\leftharpoondown}{\blacklozenge}(q{\rightharpoonup}p){\curlywedge}{\blacksquare}q)$
\\
\hline
\end{tabular}
\vspace{+0.25cm}
\caption{}\label{table:axioms}
\end{center}
\end{table}
\vspace{-0.50cm}
\begin{proposition}\label{proposition:soundness}
For all classes ${\mathcal C}$ of frames, $\Log({\mathcal C})$ is an IML.
\end{proposition}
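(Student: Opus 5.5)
The plan is to verify the three defining conditions of an IML for $\Log({\mathcal C})$: closure under uniform substitution, validity of every axiom in Table~\ref{table:axioms}, and closure under every inference rule in Table~\ref{table:rules}. Throughout, the duality apparatus halves the work. Since ${\mathcal C}$ is arbitrary, we cannot apply Proposition~\ref{Proposition:about:duality:complement:valuation:transpose:formulas:validity:class:of:frames:bis} to ${\mathcal C}$ itself. Instead we reason frame by frame. Suppose each ``$+$'' item is established for \emph{every} frame, i.e. for ${\mathcal C}_{\allfra}$. Since ${\mathcal C}_{\allfra}$ is dual and $\mathtt{tr}$ maps each ``$+$'' item of the tables onto its ``$-$'' counterpart (by Proposition~\ref{Proposition:for:all:formulas:if:is:an:instance:of:then:is:an:instance:of}, also on instances), Proposition~\ref{Proposition:about:duality:complement:valuation:transpose:formulas:validity:frame} transfers the result to the ``$-$'' items for every frame. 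Validity on ${\mathcal C}$ then follows, because it is validity in each member frame.

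For uniform substitution, I will prove the standard substitution lemma. Given a model $(W,{\leq},{R},V)$ and a substitution $\sigma$, define $V_{\sigma}(p){=}\{s:\ s{\models}\sigma(p)\}$. By the Heredity Property (Proposition~\ref{Proposition:let:be:a:model:for:all:formulas:and:for:all:if:and:then}) this is a valuation, and an induction on $A$ gives $s{\models}\sigma(A)$ in $V$ iff $s{\models}A$ in $V_{\sigma}$. Hence frame validity of $(\alpha,A)$ implies frame validity of $(\alpha,\sigma(A))$. The same lemma reduces closure under each rule to the rule's schematic form with atoms, because a rule instance $\sigma$ can be absorbed into the valuation. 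One caveat needs care here: the rules are stated at the level of frame validity, not model truth. For example, $(\Rule4^{+})$ requires that if $p$ is valid then $\square p$ is valid. This is immediate, because validity quantifies over all valuations, and premises of the form $(+,\sigma(p))$ hold under all valuations.

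The axioms $(\Formula1^{+})$--$(\Formula5^{+})$ are routine intuitionistic facts. $(\Formula6^{+})$ is the standard Heyting--Brouwer interaction: if $s{\models}p{\rightharpoonup}q$, $s{\models}r$ and $s{\not\models}q$, then $s{\not\models}p$, so $s$ witnesses $p{\leftharpoondown}r$ using $s{\leq}s$. $(\Formula7^{+})$ holds since no world satisfies $\bot$. For $(\Formula8^{+})$ and $(\Formula10^{+})$ we unfold the definitions using transitivity of $\leq$. $(\Formula9^{+})$ takes $t{=}s$. For $(\Formula11^{+})$ we use Proposition~\ref{Proposition:about:curlyvee:and:curlywedge:when:we:have:identity:instead:of:preorder}. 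Take $t{\geq}s$ with $t{\not\models}{\blacklozenge}q$; we want some $u{\geq}t$ with $u{\models}{\lozenge}p$ and $u{\not\models}{\blacklozenge}q$. Since $t{\not\models}{\blacklozenge}q$ there is $u{\geq}t$ all of whose $R$-successors falsify $q$. But $u{\models}{\blacklozenge}(p{\vee}q)$ by heredity, so some $R$-successor of $u$ satisfies $p$, whence $u{\models}{\lozenge}p$; and $u{\not\models}{\blacklozenge}q$. $(\Formula12^{+})$ is argued similarly. Given $t{\geq}s$ with $t{\not\models}{\blacklozenge}q$, pick $u{\geq}t$ all of whose successors falsify $q$. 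The successors satisfy $p$ by ${\square}p$ at $s$, so each of them satisfies $q{\leftharpoondown}p$ (witnessed by itself). Hence $u{\models}{\blacksquare}(q{\leftharpoondown}p)$ with witness $u$ itself, and $u{\not\models}{\blacklozenge}q$.

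For the rules, $(\Rule1^{+})$ and $(\Rule2^{+})$ are direct; for $(\Rule2^{+})$, if $q{\rightharpoonup}p$ failed at some world, a witness $t$ with $t{\models}q$ and $t{\not\models}p$ would satisfy $p{\leftharpoondown}q$ at $t$. $(\Rule3^{+})$, $(\Rule4^{+})$, $(\Rule5^{+})$ and $(\Rule6^{+})$ follow by monotonicity of the truth conditions. For $(\Rule7^{+})$, since $p{\vee}q$ is valid, every world satisfies $p$ or $q$. Then at any $t$ either all $R$-successors of $t$ satisfy $p$, giving ${\blacksquare}p$ at $t$, or some successor satisfies $q$. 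If ${\blacklozenge}q$ fails at $t$, pick $u{\geq}t$ with no successor satisfying $q$; then $u{\models}{\blacksquare}p$ and $u{\not\models}{\blacklozenge}q$, as required. I expect $(\Rule8^{+})$ to be the main obstacle, since it is a genuinely non-local rule. The plan is to argue by contraposition. Suppose $s{\models}{\lozenge}p$, $s{\not\models}q$ and $s{\not\models}{\lozenge}(p{\wedge}r)$. Then change the valuation of $r$ to $V'(r){=}\{v:\ v{\models}r\}{\cup}\{v:\ v{\not\models}p\}$, or more precisely to the set of worlds $v$ not lying in the upward closure of a successor of a predecessor of $s$ satisfying $p$. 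This makes ${\square}r$ true at $s$ without changing $q$, $p$, ${\lozenge}p$ or ${\lozenge}(p{\wedge}r)$ at $s$. This contradicts validity of the premise, once we ensure $p,q$ do not depend on $r$; by the substitution lemma we may assume $p,q,r$ are distinct atoms. The delicate point is choosing the modified set to be $\leq$-closed while keeping ${\lozenge}(p{\wedge}r)$ false at $s$, and this is where I would spend the most effort.
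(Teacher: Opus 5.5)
Your treatment of uniform substitution, the duality transfer to the ``$-$'' items, the axioms, and the rules $(\Rule1^{+})$--$(\Rule7^{+})$ is sound; your argument for $(\Formula12^{+})$ is essentially the paper's. The case $(\Rule8^{+})$, however, is a genuine gap, and the route you sketch for it does not work.

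After assuming $s{\models}{\lozenge}p$, $s{\not\models}q$ and $s{\not\models}{\lozenge}(p{\wedge}r)$, you modify $V(r)$ so that ${\square}r$ becomes true at $s$. You then claim this contradicts the validity of the premise ${\lozenge}p{\rightharpoonup}q{\vee}{\square}r$. It does not: making ${\square}r$ true at $s$ makes the premise \emph{hold} at $s$. To refute the premise you would need a world satisfying ${\lozenge}p$ and falsifying both $q$ and ${\square}r$.

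The reduction ``by the substitution lemma we may assume $p,q,r$ are distinct atoms'' is also unavailable for an argument that changes the valuation of one atom. Validity of $\sigma({\lozenge}p{\rightharpoonup}q{\vee}{\square}r)$ in a frame only tells you that the atomic premise holds under valuations of the form $V_{\sigma}$. Your modified valuation need not be of that form, e.g.\ when $\sigma(r)$ shares atoms with $\sigma(p)$. Absorbing $\sigma$ into the valuation is legitimate only for rules that preserve truth in each single model.

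The missing idea is that $(\Rule8^{+})$ is not non-local at all. It does preserve truth in each model, provided you apply the premise at the predecessor witnessing ${\lozenge}A$ rather than at $s$. Suppose ${\lozenge}A{\rightharpoonup}B{\vee}{\square}C$ is true in a model and $s{\models}{\lozenge}A$, $s{\not\models}B$. Pick $t{\leq}s$ and $u$ with $t{R}u$ and $u{\models}A$.

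\begin{itemize}
\item $t{\models}{\lozenge}A$, witnessed by $t$ itself.
\item $t{\not\models}B$ by the Heredity Property, since $t{\leq}s$ and $s{\not\models}B$.
\item So the premise at $t$ gives $t{\models}{\square}C$.
\item Since $t{\leq}t$ and $t{R}u$, we get $u{\models}C$, hence $u{\models}A{\wedge}C$.
\item Therefore $s{\models}{\lozenge}(A{\wedge}C)$.
\end{itemize}

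This is exactly the paper's argument. It needs no change of valuation and works directly for arbitrary formulas $A,B,C$, so no reduction to atoms is needed.
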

\begin{proof}\footnote{We only consider the case of $(\Rule8^{+})$ and $(\Formula12^{+})$.}
%
%
%
%
%
%
%
For the sake of the contradiction, suppose there exists $A,B,C{\in}\Fo$ and there exists a frame $(W,{\leq},{R})$ such that $(W,{\leq},{R}){\models}{\lozenge}A{\rightharpoonup}B{\vee}{\square}C$ and $(W,{\leq},{R}){\not\models}{\lozenge}A{\rightharpoonup}B
$\linebreak$
{\vee}{\lozenge}(A{\wedge}C)$.
Hence, there exists a model $(W,{\leq},{R},V)$ based on $(W,{\leq},{R})$ and there exists $s{\in}W$ such that $s{\models}{\lozenge}A$, $s{\not\models}B$ and $s{\not\models}{\lozenge}(A{\wedge}C)$.
Thus, there exists $t{\in}W$ such that $t{\leq}s$ and there exists $u{\in}W$ such that $t{R}u$ and $u{\models}A$.
Consequently, $t{\models}{\lozenge}A$.
Moreover, since $s{\not\models}B$, therefore $t{\not\models}B$.
Since $(W,{\leq},{R}){\models}{\lozenge}A{\rightharpoonup}B{\vee}{\square}C$, therefore $t{\models}{\square}C$.
Since $t{R}u$, therefore $u{\models}C$.
Since $t{\leq}s$, $t{R}u$ and $u{\models}A$, therefore $s{\models}{\lozenge}(A{\wedge}C)$: a contradiction.
\\
\\
%
%
For the sake of the contradiction, suppose there exists $A,B{\in}\Fo$ and there exists a frame $(W,{\leq},{R})$ such that $(W,{\leq},{R}){\not\models}{\square}A{\rightharpoonup}{\blacksquare}(B{\leftharpoondown}A){\curlyvee}{\blacklozenge}B)$.
Hence, there exists a model $(W,{\leq},{R},V)$ based on $(W,{\leq},{R})$ and there exists $s{\in}W$ such that $s{\models}{\square}A$, $s{\models}{\blacksquare}(B{\leftharpoondown}A){\rightharpoonup}{\blacklozenge}B$ and $s{\not\models}{\blacklozenge}B$.
Thus, there exists $t{\in}W$ such that $s{\leq}t$ and $\mathbf{(\ast)}$~for all $u{\in}W$, if $t{R}u$ then $u{\not\models}B$.
Consequently, $t{\not\models}{\blacklozenge}B$.
Since $s{\models}{\blacksquare}(B{\leftharpoondown}A){\rightharpoonup}{\blacklozenge}B$ and $s{\leq}t$, therefore $t{\not\models}{\blacksquare}(B{\leftharpoondown}A)$.
Hence, there exists $v{\in}W$ such that $t{R}v$ and $v{\not\models}B{\leftharpoondown}A$.
Since~$\mathbf{(\ast)}$, therefore $v{\not\models}B$.
Moreover, since $s{\models}{\square}A$ and $s{\leq}t$, therefore $v{\models}A$.
Thus, $v{\models}B{\leftharpoondown}A$: a contradiction.
\medskip
\end{proof}
Obviously, for all families $(\L_{i})_{i{\in}I}$ of IMLs, $\bigcap\{\L_{i}:\ i{\in}I\}$ is an IML.
As a result, there exists a least IML (denoted $\L_{\min}$).
In other respect, obviously, $\SFo$ is the greatest IML.
\begin{proposition}
$\L_{\min}$ is dual.
\end{proposition}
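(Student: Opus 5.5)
The plan is to show that $\mathtt{tr}(\L_{\min})$ is itself an IML. By minimality this gives $\L_{\min}{\subseteq}\mathtt{tr}(\L_{\min})$. Applying $\mathtt{tr}$ to both sides and using Proposition~\ref{Proposition:for:all:sets:of:formulas:2} then yields $\mathtt{tr}(\L_{\min}){\subseteq}\L_{\min}$, hence $\mathtt{tr}(\L_{\min}){=}\L_{\min}$. (Monotonicity of $\mathtt{tr}$ on sets is immediate.) More generally, I would prove the lemma that for every IML $\L$, $\mathtt{tr}(\L)$ is an IML, and then specialise it to $\L_{\min}$.

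To verify that $\mathtt{tr}(\L)$ is an IML, three clauses have to be checked.

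\textbf{Closure under uniform substitution.} Suppose $(\beta,B){\in}\mathtt{tr}(\L)$, so $(\beta,B){=}\mathtt{tr}(\alpha,A)$ with $(\alpha,A){\in}\L$, and let $\sigma$ be a substitution. Define $\sigma^{\mathtt{tr}}$ by $\sigma^{\mathtt{tr}}(p){=}$ the formula component of $\mathtt{tr}(+,\sigma(p))$, extended homomorphically. One checks by induction on $A$ that $\mathtt{tr}(\alpha,\sigma^{\mathtt{tr}}(A)){=}(\beta,\sigma(B))$; this is essentially the content behind Proposition~\ref{Proposition:for:all:formulas:if:is:an:instance:of:then:is:an:instance:of}. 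Closure of $\L$ under substitution then gives $(\beta,\sigma(B)){\in}\mathtt{tr}(\L)$.

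\textbf{Axioms.} Table~\ref{table:axioms} is organised in dual pairs: each $(\Formula i^{-})$ is literally $\mathtt{tr}(\Formula i^{+})$, possibly up to a renaming of atoms, which is absorbed by uniform substitution. So every axiom $X$ satisfies $\mathtt{tr}(X){\in}\L$, and therefore $X{=}\mathtt{tr}(\mathtt{tr}(X)){\in}\mathtt{tr}(\L)$.

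\textbf{Inference rules.} The same pairing holds in Table~\ref{table:rules}: $\mathtt{tr}$ sends the premises and conclusion of $(\Rule i^{+})$ to those of $(\Rule i^{-})$, and conversely. Suppose the translated premises, instantiated by $\sigma$, lie in $\mathtt{tr}(\L)$. Their $\mathtt{tr}$-images then lie in $\L$ and are instances (via $\sigma^{\mathtt{tr}}$) of the premises of the dual rule. Closure of $\L$ under that rule puts the corresponding conclusion in $\L$, and applying $\mathtt{tr}$ puts the required conclusion in $\mathtt{tr}(\L)$.

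The main obstacle is the bookkeeping in these last two steps. One has to confirm entry by entry that each $(\Formula i^{-})$ and $(\Rule i^{-})$ is exactly the $\mathtt{tr}$-image of its positive counterpart. For instance, $(\Formula7^{+}){=}(+,{\rceil}{\lozenge}{\bot})$, i.e.\ $(+,{\lozenge}{\bot}{\rightharpoonup}{\bot})$, maps to $(-,{\square}{\top}{\leftharpoondown}{\top}){=}(-,{\lfloor}{\square}{\top})$. Likewise, ${\curlyvee}$ maps to ${\curlywedge}$, since $(A{\rightharpoonup}B){\rightharpoonup}B$ maps to $(A{\leftharpoondown}B){\leftharpoondown}B$, and the $n$-ary conjunction ending in ${\top}$ maps to the disjunction ending in ${\bot}$. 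Where the correspondence holds only after permuting the atoms $p,q,r$, as in $(\Formula12^{\pm})$, one invokes closure under substitution (or rule closure together with substitution) to fix it. Alternatively, one can simply observe that the whole table is designed to be $\mathtt{tr}$-symmetric.
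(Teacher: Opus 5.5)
Your proof is correct and takes the paper's route. The paper only remarks that Tables~\ref{table:rules} and~\ref{table:axioms} are closed under $\mathtt{tr}$; your argument (that $\mathtt{tr}(\L_{\min})$ is an IML via the conjugated substitution $\sigma^{\mathtt{tr}}$, followed by minimality and involutivity of $\mathtt{tr}$) spells out what that remark leaves implicit, and your caveat about permuting atoms is unnecessary, since $\mathtt{tr}(\Formula12^{+})$ is literally $(\Formula12^{-})$, as is every other pair in both tables.
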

\begin{proof}
It suffices to notice that for all inference rules $\frac{(\alpha_{1},A_{1}),\ldots,(\alpha_{m},A_{m})}{(\beta,B)}$ in Table~\ref{table:rules}, $\frac{\mathtt{tr}(\alpha_{1},A_{1}),\ldots,\mathtt{tr}(\alpha_{m},A_{m})}{\mathtt{tr}(\beta,B)}$ is in Table~\ref{table:rules} too and for all signed formulas $(\gamma,C)$ in Table~\ref{table:axioms}, $\mathtt{tr}(\gamma,C)$ is in Table~\ref{table:axioms} too.
\medskip
\end{proof}
%
%
%
%
%
%
%
%
\begin{table}[ht]
\begin{center}
\begin{tabular}{|c|c|}
\hline
$(\Formula13^{+})$&$(+,p{\rightharpoonup}p{\vee}q)$ and $(+,q{\rightharpoonup}p{\vee}q)$
\\
\hline
$(\Formula14^{+})$&$(+,\bot{\rightharpoonup}p)$
\\
\hline
$(\Formula15^{+})$&$(+,p{\rightharpoonup}p)$
\\
\hline
$(\Formula16^{+})$&$(+,(p{\rightharpoonup}r){\rightharpoonup}((r{\rightharpoonup}q){\rightharpoonup}(p{\rightharpoonup}q)))$
\\
\hline
%
%
%
%
%
%
%
%
%
%
%
%
%
%
%
%
%
%
%
%
%
%
%
%
%
%
%
%
%
%
%
%
%
%
%
%
%
%
%
%
%
%
%
%
%
%
%
%
%
%
%
%
%
%
%
%
$(\Formula17^{+})$&$(+,{\lozenge}(t_{1}{\wedge}\ldots{\wedge}t_{n}{\wedge}p){\rightharpoonup}{\lozenge}t_{1}{\wedge}\ldots{\wedge}{\lozenge}t_{n})$
\\
\hline
$(\Formula18^{+})$&$(+,(t_{1}{\rightharpoonup}\ldots(t_{n}{\rightharpoonup}t_{1}{\wedge}\ldots{\wedge}t_{n})\ldots))$
\\
\hline
%
%
%
%
%
%
%
%
$(\Formula19^{+})$&$(+,t_{1}{\wedge}\ldots{\wedge}t_{n}{\wedge}(t{\rightharpoonup}u){\wedge}t{\rightharpoonup}u)$
\\
\hline
%
%
%
%
%
%
%
%
$(\Formula20^{+})$&$(+,{\blacklozenge}(({\bot}{\leftharpoondown}{\top}){\rightharpoonup}({\top}{\rightharpoonup}{\bot}){\vee}p){\rightharpoonup}{\lozenge}p)$
\\
\hline
$(\Formula21^{+})$&$(+,{\blacklozenge}((p{\leftharpoondown}{\top}){\rightharpoonup}p{\vee}q){\rightharpoonup}{\blacklozenge}(p{\vee}q))$
\\
\hline
$(\Formula22^{+})$&$(+,{\blacklozenge}(({\bot}{\leftharpoondown}p){\rightharpoonup}(p{\rightharpoonup}{\bot}){\vee}q){\rightharpoonup}{\blacklozenge}(p{\rightharpoonup}q))$
\\
\hline
$(\Formula23^{+})$&$(+,{\lozenge}p{\vee}{\lozenge}q{\rightharpoonup}{\lozenge}(p{\vee}q))$
\\
\hline
$(\Formula24^{+})$&$(+,{\lozenge}(p{\vee}q){\rightharpoonup}{\lozenge}p{\vee}{\lozenge}q)$
\\
\hline
$(\Formula13^{-})$&$(-,p{\leftharpoondown}p{\wedge}q)$ and $(-,q{\leftharpoondown}p{\wedge}q)$
\\
\hline
$(\Formula14^{-})$&$(-,\top{\leftharpoondown}p)$
\\
\hline
$(\Formula15^{-})$&$(-,p{\leftharpoondown}p)$
\\
\hline
$(\Formula16^{-})$&$(-,(p{\leftharpoondown}r){\leftharpoondown}((r{\leftharpoondown}q){\leftharpoondown}(p{\leftharpoondown}q)))$
\\
\hline
%
%
%
%
%
%
%
%
%
%
%
%
%
%
%
%
%
%
%
%
%
%
%
%
%
%
%
%
%
%
%
%
%
%
%
%
%
%
%
%
%
%
%
%
%
%
%
%
%
%
%
%
%
%
%
%
$(\Formula17^{-})$&$(-,{\square}(t_{1}{\vee}\ldots{\vee}t_{n}{\vee}p){\leftharpoondown}{\square}t_{1}{\vee}\ldots{\vee}{\square}t_{n})$
\\
\hline
$(\Formula18^{-})$&$(-,(t_{1}{\leftharpoondown}\ldots(t_{n}{\leftharpoondown}t_{1}{\vee}\ldots{\vee}t_{n})\ldots))$
\\
\hline
%
%
%
%
%
%
%
%
$(\Formula19^{-})$&$(-,t_{1}{\vee}\ldots{\vee}t_{n}{\vee}(t{\leftharpoondown}u){\vee}t{\leftharpoondown}u)$
\\
\hline
%
%
%
%
%
%
%
%
$(\Formula20^{-})$&$(-,{\blacksquare}(({\top}{\rightharpoonup}{\bot}){\leftharpoondown}({\bot}{\leftharpoondown}{\top}){\wedge}p){\leftharpoondown}{\square}p)$
\\
\hline
$(\Formula21^{-})$&$(-,{\blacksquare}((p{\rightharpoonup}{\bot}){\leftharpoondown}p{\wedge}q){\leftharpoondown}{\blacksquare}(p{\wedge}q))$
\\
\hline
$(\Formula22^{-})$&$(-,{\blacksquare}(({\top}{\rightharpoonup}p){\leftharpoondown}(p{\leftharpoondown}{\top}){\wedge}q){\leftharpoondown}{\blacksquare}(p{\leftharpoondown}q))$
\\
\hline
$(\Formula23^{-})$&$(-,{\square}p{\wedge}{\square}q{\leftharpoondown}{\square}(p{\wedge}q))$
\\
\hline
$(\Formula24^{-})$&$(-,{\square}(p{\wedge}q){\leftharpoondown}{\square}p{\wedge}{\square}q)$
\\
\hline
\end{tabular}
\vspace{+0.25cm}
\caption{}\label{table:formulas:for:Proposition:for:all:formulas:and:and:1}
\end{center}
\end{table}
\vspace{-0.50cm}
\begin{proposition}\label{Proposition:for:all:formulas:and:and:2:axioms}
For all IMLs $\L$, $\L$ contains the signed formulas in Table~\ref{table:formulas:for:Proposition:for:all:formulas:and:and:1}.
\end{proposition}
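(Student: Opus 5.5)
Since $\L_{\min}$ is dual and every IML contains $\L_{\min}$, I only derive the positive items $(\Formula13^{+})$--$(\Formula24^{+})$ in an arbitrary IML $\L$. Each negative item $(\Formula n^{-})$ is $\mathtt{tr}$ of $(\Formula n^{+})$. The derivation of $(\Formula n^{+})$ uses only the axioms of Table~\ref{table:axioms}, the rules of Table~\ref{table:rules} and uniform substitution. Applying $\mathtt{tr}$ to every step turns it into a derivation of $(\Formula n^{-})$ from the dual axioms and dual rules, which also belong to $\L$ (Proposition~\ref{Proposition:for:all:formulas:if:is:an:instance:of:then:is:an:instance:of} handles the instances).

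For the purely propositional items I work in the $\{\rightharpoonup,\wedge,\vee,\top,\bot\}$ fragment. Here $(\Formula1^{+})$--$(\Formula5^{+})$, closure under $(\Rule1^{+})$ and uniform substitution give the usual Hilbert calculus, apart from the missing axioms $p{\rightharpoonup}p{\vee}q$ and ${\bot}{\rightharpoonup}p$.
\begin{itemize}
\item $(\Formula15^{+})$, $(\Formula16^{+})$, $(\Formula18^{+})$ and $(\Formula19^{+})$ are standard $\mathbf{K}$/$\mathbf{S}$ combinatory derivations, together with an induction on $n$ that uses the conjunction axioms $(\Formula3^{+})$.
\item $(\Formula13^{+})$ and $(\Formula14^{+})$ are the delicate ones. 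My plan is to use the co-implication rules $(\Rule2^{+})$ and $(\Rule2^{-})$ with the negative axioms. From $(\Formula3^{-})$, namely $(-,p{\vee}q{\leftharpoondown}p)$, rule $(\Rule2^{+})$ yields $(+,p{\rightharpoonup}p{\vee}q)$. The same move applied to $(\Formula5^{-})$ in the form $(-,\bot)$, combined with $(\Formula14^{-})$-style reasoning, should yield $(+,\bot{\rightharpoonup}p)$. Concretely, I would first get $(-,p{\leftharpoondown}\bot)$ from $(-,\bot)$ by a $(\Rule1^{-})$ argument with a suitable instance of $(\Formula2^{-})$, then apply $(\Rule2^{+})$.
\end{itemize}
Once these are available, the full deduction theorem for $\rightharpoonup$ is at hand, and the rest of the intuitionistic reasoning becomes routine.

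For the modal items, I first derive monotonicity. Rule $(\Rule3^{+})$ gives it for $\lozenge$, and rule $(\Rule4^{+})$ with $(\Formula8^{+})$ gives it for $\square$.
\begin{itemize}
\item $(\Formula17^{+})$: apply $(\Rule3^{+})$ to each projection $t_1{\wedge}\ldots{\wedge}t_n{\wedge}p{\rightharpoonup}t_i$ and combine the results with $(\Formula3^{+})$.
\item $(\Formula23^{+})$: apply $(\Rule3^{+})$ to $(\Formula13^{+})$ and then use $(\Formula4^{+})$.
\item $(\Formula24^{+})$: use $(\Rule8^{+})$. I start from a theorem $\lozenge(p{\vee}q){\rightharpoonup}\lozenge p{\vee}\lozenge q{\vee}\square(\ldots)$. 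The natural choice is to obtain, by $(\Rule4^{+})$ and $(\Formula8^{+})$, that $\square$ of a suitable implication lets $(\Rule8^{+})$ move the $\vee$ outside $\lozenge$. I expect to need $(\Rule8^{+})$ twice, with instances built from $(\Formula13^{+})$.
\item $(\Formula20^{+})$--$(\Formula22^{+})$: the inner formulas, such as $(\bot{\leftharpoondown}\top){\rightharpoonup}(\top{\rightharpoonup}\bot){\vee}p$, have to be related to $p$ or to $p{\vee}q$. I would use $(\Formula6^{+})$ to prove the relevant co-implication facts (for instance $(+,\rceil(\bot{\leftharpoondown}\top))$, or dual equivalents). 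I would then pass under $\blacklozenge$ via $(\Rule5^{+})$ or via $(\Rule4^{+})$ with $(\Formula10^{+})$. For $(\Formula20^{+})$ I would finish with $(\Formula11^{+})$ (taking $p:=\bot$) and $(\Formula9^{-})$ transported by $(\Rule2^{+})$ to get $\blacklozenge p{\rightharpoonup}\lozenge p$.
\end{itemize}

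I expect the main obstacle to be $(\Formula24^{+})$ and $(\Formula20^{+})$--$(\Formula22^{+})$. These require discovering the right instances of $(\Rule8^{+})$ and $(\Formula11^{+})$, and handling $\curlyvee$ correctly. If a direct syntactic derivation stalls, I would look for an intermediate lemma of the form $(+,A{\curlyvee}B{\rightharpoonup}A{\vee}B)$ restricted to the particular shapes that occur, obtained from $(\Formula6^{+})$.
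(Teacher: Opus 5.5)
\textbf{The gap: $(\Formula24^{+})$.} Your duality reduction is sound. So are your derivations of $(\Formula13^{+})$, of $(\Formula14^{+})$ (via $(\Formula5^{-})$, $(\Formula2^{-})$, $(\Rule1^{-})$, $(\Rule2^{+})$), of the propositional items, of $(\Formula17^{+})$ and of $(\Formula23^{+})$. The genuine gap is $(\Formula24^{+})$, because the $(\Rule8^{+})$ route cannot work.

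That rule only adds a disjunct $\lozenge(A\wedge C)$ once $\lozenge A\rightharpoonup B\vee\square C$ is already a theorem. Suppose, as you suggest, that $\lozenge p\vee\lozenge q$ already sits in $B$ and $\square C$ is a theorem obtained with $(\Rule4^{+})$. Then the output is $\lozenge(p\vee q)\rightharpoonup\lozenge p\vee\lozenge q\vee\lozenge((p\vee q)\wedge C)$, and its last disjunct must again be split under $\lozenge$. The argument is circular.

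A genuinely useful premise would be $\lozenge(p\vee q)\rightharpoonup\lozenge p\vee\square C$ with $(p\vee q)\wedge C\rightharpoonup q$ provable, but such a premise is not even valid. Take $s\leq s'$ (plus reflexivity), $s R u$, $s' R w$, with $q$ true only at $u$ and $p$ true only at $w$. Then $\lozenge(p\vee q)$ holds at $s$, while $\lozenge p$ and $\square C$ both fail there.

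In this system, splitting a disjunction under $\lozenge$ is obtained from $(\Formula8^{-})$, the dual of the normality of $\square$. The missing idea is to run your own duality device in the other direction:
\begin{enumerate}
\item Derive $(+,\square p\wedge\square q\rightharpoonup\square(p\wedge q))$ from $(\Formula3^{+})$, $(\Rule4^{+})$ and $(\Formula8^{+})$.
\item Apply $(\Rule2^{-})$ to get $(-,\square(p\wedge q)\leftharpoondown\square p\wedge\square q)$, which is exactly $(\Formula24^{-})$.
\item Conclude $(\Formula24^{+})=\mathtt{tr}(\Formula24^{-})$ by duality of $\L_{\min}$.
\end{enumerate}
Equivalently, obtain $(-,\lozenge p\vee\lozenge q\leftharpoondown\lozenge(p\vee q))$ with $(\Formula3^{-})$, $(\Rule4^{-})$, $(\Formula8^{-})$, $(\Rule1^{-})$, and apply $(\Rule2^{+})$. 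For this item, the negative formula is the one to derive first, contrary to your ``positive items only'' scheme.

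\textbf{A false step in $(\Formula20^{+})$.} The fact $(+,{\rceil}(\bot\leftharpoondown\top))$ lies in no consistent IML. In this paper, $s\models A\leftharpoondown B$ means that some $t\leq s$ satisfies $B$ and not $A$, so $\bot\leftharpoondown\top$ holds everywhere. You seem to have read $\leftharpoondown$ backwards.

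What you actually need is $(+,\bot\leftharpoondown\top)$. It comes from the instance $(\bot\rightharpoonup\bot)\wedge\top\rightharpoonup(\bot\leftharpoondown\top)\vee\bot$ of $(\Formula6^{+})$, together with $(\Formula15^{+})$, $(\Formula5^{+})$, $(\Formula14^{+})$ and $(\Formula4^{+})$. With it, the formula under $\blacklozenge$ implies $p$, since $(\top\rightharpoonup\bot)\rightharpoonup p$ is provable. Rule $(\Rule5^{+})$ lifts this implication under $\blacklozenge$. Then $\blacklozenge p\rightharpoonup\lozenge p$, obtained from $(\Formula9^{-})$ by $(\Rule2^{+})$, finishes the proof; $(\Formula11^{+})$ is not needed.

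Similarly, $(\Formula21^{+})$ and $(\Formula22^{+})$ follow by $(\Rule5^{+})$ from two consequences of $(\Formula6^{+})$: $(p\leftharpoondown\top)\vee p$ and $p\rightharpoonup(\bot\leftharpoondown p)$. This is in line with what you intended there.
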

%
%
%
%
Obviously, for all IMLs $\L$ and for all sets $G$ of signed formulas, there exists a least IML (denoted $\L{\odot}G$) containing $\L{\cup}G$.
For all IMLs $\L$ and for all $(\alpha,A){\in}\SFo$, we write $\L{\odot}(\alpha,A)$ instead of $\L{\odot}\{(\alpha,A)\}$.
\begin{proposition}\label{Proposition:about:dual:IML}
For all dual IMLs $\L$ and for all dual sets $G$ of signed formulas, $\L{\odot}G$ is dual.
\end{proposition}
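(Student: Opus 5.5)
The plan is to show that $\mathtt{tr}(\L{\odot}G)$ is itself an IML containing $\L{\cup}G$. Minimality of $\L{\odot}G$ will then give $\L{\odot}G{\subseteq}\mathtt{tr}(\L{\odot}G)$. Applying $\mathtt{tr}$ to both sides and using Proposition~\ref{Proposition:for:all:sets:of:formulas:2} gives $\mathtt{tr}(\L{\odot}G){\subseteq}\L{\odot}G$. Applying $\mathtt{tr}$ again to this last inclusion returns the first one, so $\mathtt{tr}(\L{\odot}G){=}\L{\odot}G$.

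\textbf{Containment.} Since $\L$ and $G$ are dual, $\L{=}\mathtt{tr}(\L){\subseteq}\mathtt{tr}(\L{\odot}G)$ and $G{=}\mathtt{tr}(G){\subseteq}\mathtt{tr}(\L{\odot}G)$.

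\textbf{$\mathtt{tr}(\L{\odot}G)$ is an IML.} Write $\mathtt{M}$ for $\L{\odot}G$; there are three conditions to check.

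\emph{Axioms.} Each axiom $(\gamma,C)$ of Table~\ref{table:axioms} has its dual $\mathtt{tr}(\gamma,C)$ in Table~\ref{table:axioms}, as already observed in the proof that $\L_{\min}$ is dual. Hence $\mathtt{tr}(\gamma,C){\in}\mathtt{M}$, so $(\gamma,C){=}\mathtt{tr}(\mathtt{tr}(\gamma,C)){\in}\mathtt{tr}(\mathtt{M})$ by Proposition~\ref{Proposition:for:all:formulas:2}.

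\emph{Uniform substitution.} For a substitution $\sigma$, let $\sigma^{\mathtt{tr}}$ be the substitution sending each atom $p$ to the formula component of $\mathtt{tr}(+,\sigma(p))$. A routine induction on $A$ shows that $\mathtt{tr}(\alpha,\sigma(A))$ is $(\alpha',\sigma^{\mathtt{tr}}(A'))$, where $(\alpha',A'){=}\mathtt{tr}(\alpha,A)$. This is the content behind Proposition~\ref{Proposition:for:all:formulas:if:is:an:instance:of:then:is:an:instance:of}, made explicit for the substitution. Now let $(\alpha,A){\in}\mathtt{tr}(\mathtt{M})$. Then $\mathtt{tr}(\alpha,A){\in}\mathtt{M}$, so by closure of $\mathtt{M}$ under substitution, $\mathtt{tr}(\alpha,\sigma(A))$, being a substitution instance of it, lies in $\mathtt{M}$. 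Hence $(\alpha,\sigma(A)){\in}\mathtt{tr}(\mathtt{M})$.

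\emph{Inference rules.} Suppose the premises $(\alpha_{i},\sigma(A_{i}))$ of a rule in Table~\ref{table:rules} lie in $\mathtt{tr}(\mathtt{M})$. By the commutation just proved, their translations are the premises of the dual rule, which is also in Table~\ref{table:rules}, instantiated by $\sigma^{\mathtt{tr}}$. Those translations lie in $\mathtt{M}$, so by closure of $\mathtt{M}$ under the dual rule, $\mathtt{tr}(\beta,\sigma(B))$ lies in $\mathtt{M}$. Therefore $(\beta,\sigma(B)){\in}\mathtt{tr}(\mathtt{M})$.

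The only step needing care is the commutation lemma $\mathtt{tr}(\alpha,\sigma(A)){=}\mathtt{tr}$-applied-to-$(\alpha,A)$-then-$\sigma^{\mathtt{tr}}$. I expect the induction to go through because $\mathtt{tr}$ acts on the formula component connective-by-connective, swapping each connective with its dual and fixing atoms. The swap of the sign is independent of the formula, so it commutes with substitution.
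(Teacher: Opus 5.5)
Your proof is correct and follows essentially the route the paper has in mind, which is the same observation it uses to show that $\L_{\min}$ is dual: because $\mathtt{tr}$ maps Table~\ref{table:rules} and Table~\ref{table:axioms} onto themselves, $\mathtt{tr}(\L{\odot}G)$ is an IML containing $\L{\cup}G$, and minimality plus the involution property of $\mathtt{tr}$ give equality. Your explicit substitution $\sigma^{\mathtt{tr}}$ is the right device for the rules, since one substitution must serve all premises and the conclusion at once, which Proposition~\ref{Proposition:for:all:formulas:if:is:an:instance:of:then:is:an:instance:of} by itself would not guarantee; the axiom check could even be skipped, because $\L{=}\mathtt{tr}(\L){\subseteq}\mathtt{tr}(\L{\odot}G)$ already holds.
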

An IML $\L$ is {\em consistent}\/ if for all $A{\in}\Fo$, either $(+,A){\not\in}\L$, or $(-,A){\not\in}\L$.
\begin{proposition}\label{Proposition:for:all:constructive:model:logics:the:following:conditions:are:equivalent:is:consistent}
For all IMLs $\L$, the following conditions are equivalent: $\mathbf{1)}$~$\L$ is consistent; $\mathbf{2)}$~$\L{\not=}\SFo$; $\mathbf{3)}$~$(+,{\bot}){\not\in}\L$; $\mathbf{4)}$~$(-,{\top}){\not\in}\L$.
\end{proposition}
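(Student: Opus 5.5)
I will prove the equivalences in a cycle, $\mathbf{1)}{\Rightarrow}\mathbf{3)}$, $\mathbf{3)}{\Rightarrow}\mathbf{2)}$ (more conveniently, the contrapositive of $\mathbf{2)}{\Rightarrow}\mathbf{3)}$), and then attach $\mathbf{4)}$ by the same pattern. The implications $\mathbf{1)}{\Rightarrow}\mathbf{3)}$, $\mathbf{1)}{\Rightarrow}\mathbf{4)}$ and $\mathbf{1)}{\Rightarrow}\mathbf{2)}$ are immediate from the axioms $(\Formula5^{+})$ and $(\Formula5^{-})$. Indeed, $(+,{\top}){\in}\L$ and $(-,{\bot}){\in}\L$. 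So if $(+,{\bot}){\in}\L$ then both $(+,{\bot})$ and $(-,{\bot})$ are in $\L$, and $\L$ is inconsistent. Likewise, if $(-,{\top}){\in}\L$ then $\L$ is inconsistent. Finally, $\L{=}\SFo$ trivially makes $\L$ inconsistent.

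The real work is showing that $(+,{\bot}){\in}\L$ forces $\L{=}\SFo$, and dually that $(-,{\top}){\in}\L$ forces $\L{=}\SFo$.

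\begin{itemize}
\item Suppose $(+,{\bot}){\in}\L$. By Proposition~\ref{Proposition:for:all:formulas:and:and:2:axioms}, $(+,{\bot}{\rightharpoonup}p){\in}\L$ (axiom $(\Formula14^{+})$). Since $\L$ is closed under uniform substitution, $(+,{\bot}{\rightharpoonup}A){\in}\L$ for every $A{\in}\Fo$. Applying $(\Rule1^{+})$ with the substitution $p{\mapsto}{\bot}$, $q{\mapsto}A$ then gives $(+,A){\in}\L$.
\item To obtain the negative signed formulas, I will also derive $(-,{\top})$. From $(+,{\bot}){\in}\L$ we get $(+,{\top}{\rightharpoonup}{\bot}){\in}\L$ as an instance of the previous step. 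Rule $(\Rule2^{-})$ with $p{\mapsto}{\top}$, $q{\mapsto}{\bot}$ turns this into $(-,{\bot}{\leftharpoondown}{\top}){\in}\L$. Since $(-,{\bot}){\in}\L$ by $(\Formula5^{-})$, rule $(\Rule1^{-})$ with $p{\mapsto}{\bot}$, $q{\mapsto}{\top}$ yields $(-,{\top}){\in}\L$.
\item Now $(\Formula14^{-})$ gives $(-,{\top}{\leftharpoondown}A){\in}\L$ for all $A$, and $(\Rule1^{-})$ with $p{\mapsto}{\top}$, $q{\mapsto}A$ gives $(-,A){\in}\L$. Hence $\L{=}\SFo$.
\end{itemize}

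The case $(-,{\top}){\in}\L$ is exactly dual. From it I derive $(-,A)$ for all $A$, then use $(\Rule2^{+})$ on $(-,{\bot}{\leftharpoondown}{\top})$ to get $(+,{\top}{\rightharpoonup}{\bot})$, then use $(\Rule1^{+})$ with $(+,{\top})$ to get $(+,{\bot})$, and conclude as before.

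Combining everything: $\mathbf{3)}$ and $\mathbf{4)}$ are each equivalent to $\L{\neq}\SFo$. Moreover, $\mathbf{2)}$ implies $\mathbf{1)}$, because an inconsistent $\L$ contains $(+,A)$ and $(-,A)$ for some $A$. By $(\Formula14^{+})$ and $(\Formula14^{-})$ this gives $(+,{\bot}{\rightharpoonup}A)$ and $(-,{\top}{\leftharpoondown}A)$. To reach $(+,{\bot})$, I plan to use rule $(\Rule2^{+})$ on an instance of $(-,A{\leftharpoondown}{\bot})$, obtained by combining $(-,A)$ with $(\Formula2^{-})$ and $(\Rule1^{-})$. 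Then $(\Rule1^{+})$ with $(+,A)$ gives $(+,{\bot})$.

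This last derivation is the main obstacle, namely getting $(+,{\bot})$ from a clash $(+,A),(-,A)$. Concretely, $(\Formula2^{-})$ instantiated with $p{\mapsto}A$, $q{\mapsto}{\bot}$ gives $(-,A{\leftharpoondown}({\bot}{\leftharpoondown}A))$. With $(-,A)$, rule $(\Rule1^{-})$ yields $(-,{\bot}{\leftharpoondown}A)$. Rule $(\Rule2^{+})$ then gives $(+,A{\rightharpoonup}{\bot})$, and $(\Rule1^{+})$ with $(+,A)$ gives $(+,{\bot})$. I expect the only delicate point is checking that the instantiations match the exact shape of $(\Rule1^{-})$ and $(\Rule2^{+})$.
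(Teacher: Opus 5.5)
Your proof is correct and is essentially the paper's argument. The paper runs the cycle $\mathbf{2)}{\Rightarrow}\mathbf{3)}{\Rightarrow}\mathbf{4)}{\Rightarrow}\mathbf{1)}$ in contrapositive form, using essentially the same axioms and rules ($(\Formula2^{\pm})$, $(\Formula5^{\pm})$, $(\Formula14^{+})$, $(\Rule1^{\pm})$, $(\Rule2^{\pm})$); you instead route everything through $\L{=}\SFo$, and from a clash $(+,A),(-,A)$ you derive $(+,{\bot})$ by exactly the dual of the paper's chain to $(-,{\top})$. One small slip is that your plan writes $(-,A{\leftharpoondown}{\bot})$ where you mean $(-,{\bot}{\leftharpoondown}A)$, which is what your concrete derivation correctly uses.
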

\begin{proof}
$\mathbf{2){\Rightarrow}3)}$:
Suppose $\L{\not=}\SFo$.
Hence, there exists $(\alpha,A){\in}\SFo$ such that $(\alpha,A){\not\in}
$\linebreak$
\L$.
Obviously, either $\alpha{=}+$, or $\alpha{=}-$.
\\
\\
In the former case, since $(+,{\bot}{\rightharpoonup}A){\in}\L$ using $(\Formula14^{+})$, therefore $(+,{\bot}){\not\in}\L$ using $(\Rule1^{+})$.
\\
\\
In the latter case, since $(-,{\bot}){\in}\L$ using $(\Formula5^{-})$, therefore $(-,{\bot}{\leftharpoondown}A){\not\in}\L$ using $(\Rule1^{-})$.
Thus, $(+,A{\rightharpoonup}{\bot}){\not\in}\L$ using $(\Rule2^{-})$.
Since $(+,{\bot}{\rightharpoonup}(A{\rightharpoonup}{\bot})){\in}\L$ using $(\Formula2^{+})$, therefore $(+,{\bot}){\not\in}\L$ using $(\Rule1^{+})$.
\\
\\
$\mathbf{3){\Rightarrow}4)}$:
Suppose $(+,{\bot}){\not\in}\L$.
Since $(+,{\top}){\in}\L$ using $(\Formula5^{+})$, therefore $(+,{\top}{\rightharpoonup}{\bot}){\not\in}\L$ using $(\Rule1^{+})$.
Consequently, $(-,{\bot}{\leftharpoondown}{\top}){\not\in}\L$ using $(\Rule2^{+})$.
Since $(-,{\top}{\leftharpoondown}({\bot}{\leftharpoondown}{\top})){\in}\L$ using $(\Formula2^{-})$, therefore $(-,{\top}){\not\in}\L$ using $(\Rule1^{-})$.
\\
\\
$\mathbf{4){\Rightarrow}1)}$:
Suppose there exists $A{\in}\Fo$ such that $(+,A){\in}\L$ and $(-,A){\in}\L$.
Since $(+,A
$\linebreak$
{\rightharpoonup}({\top}{\rightharpoonup}A)){\in}\L$ using $(\Formula2^{+})$, therefore $(+,{\top}{\rightharpoonup}A){\in}\L$ using $(\Rule1^{+})$.
Hence, $(-,A{\leftharpoondown}
$\linebreak$
{\top}){\in}\L$ using $(\Rule2^{-})$.
Since $(-,A){\in}\L$, therefore $(-,{\top}){\in}\L$ using $(\Rule1^{-})$.
\medskip
\end{proof}
{\bf From now on in this article, let $\L$ be a consistent IML.}
\section{Filters, ideals, tips and clips}\label{section:filters:ideals:tips:and:clips}
%
%
%
%
%
%
%
%
%
A {\em filter}\/ is a set $\Gamma$ of formulas such that
\begin{itemize}
\item $\L^{+}{\subseteq}\Gamma$,
\item for all $A,B{\in}\Fo$, if $A{\in}\Gamma$ and $A{\rightharpoonup}B{\in}\Gamma$ then $B{\in}\Gamma$.
\end{itemize}
Obviously, for all families $(\Gamma_{i})_{i{\in}I}$ of filters, $\bigcap\{\Gamma_{i}:\ i{\in}I\}$ is a filter and for all nonempty chains $(\Gamma_{i})_{i{\in}I}$ of filters, $\bigcup\{\Gamma_{i}:\ i{\in}I\}$ is a filter.
As a result, there exists a least filter (which is nothing but $\L^{+}$).
In other respect, obviously, $\Fo$ is the greatest filter.
\\
\\
A filter $\Gamma$ is {\em proper}\/ if ${\bot}{\not\in}\Gamma$.
\begin{proposition}\label{Proposition:for:all:constructive:model:logics:the:following:proper:filters}
For all filters $\Gamma$, $\Gamma$ is proper if and only if $\Gamma{\not=}\Fo$.
\end{proposition}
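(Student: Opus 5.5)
We prove both directions directly from the definition of a filter, the only non-trivial ingredient being the signed formula $(\Formula14^{+})$, which every IML contains by Proposition~\ref{Proposition:for:all:formulas:and:and:2:axioms}.

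For the direction from left to right, suppose $\Gamma$ is proper, i.e. ${\bot}{\not\in}\Gamma$. Since ${\bot}{\in}\Fo$, we have $\Gamma{\not=}\Fo$. This is immediate.

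For the direction from right to left, we argue by contraposition. Suppose $\Gamma$ is not proper, i.e. ${\bot}{\in}\Gamma$, and let $A{\in}\Fo$ be arbitrary. By Proposition~\ref{Proposition:for:all:formulas:and:and:2:axioms}, $(+,{\bot}{\rightharpoonup}p){\in}\L$ (axiom $(\Formula14^{+})$). Since $\L$ is closed for uniform substitution, applying the substitution sending $p$ to $A$ gives $(+,{\bot}{\rightharpoonup}A){\in}\L$, i.e. ${\bot}{\rightharpoonup}A{\in}\L^{+}$. Because $\Gamma$ is a filter, $\L^{+}{\subseteq}\Gamma$, so ${\bot}{\rightharpoonup}A{\in}\Gamma$. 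Closure of filters under modus ponens, together with ${\bot}{\in}\Gamma$, then yields $A{\in}\Gamma$. As $A$ was arbitrary, $\Gamma{=}\Fo$.

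No step is a real obstacle. The only point needing care is that the substitution instance ${\bot}{\rightharpoonup}A$ indeed lies in $\L^{+}$, which follows from the standing assumption that IMLs are closed for uniform substitution. The consistency of $\L$ is not needed for this equivalence.
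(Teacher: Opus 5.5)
Your proof is correct and matches the intended argument; the paper leaves this proposition to the reader. Your key step (instantiate $(\Formula14^{+})$ by uniform substitution to get ${\bot}{\rightharpoonup}A{\in}\L^{+}{\subseteq}\Gamma$, then apply modus ponens with ${\bot}{\in}\Gamma$) is exactly how the paper itself uses $(\Formula14^{+})$ in step $\mathbf{2){\Rightarrow}3)}$ of Proposition~\ref{Proposition:for:all:constructive:model:logics:the:following:conditions:are:equivalent:is:consistent}.
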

\begin{proposition}\label{Proposition:48bis:Gamma:plus:Sigma:is:a:filter}
For all filters $\Gamma$ and for all sets $\Sigma$ of formulas, $\Gamma{+}\Sigma$ is a filter.
\end{proposition}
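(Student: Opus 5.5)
We check the two defining conditions of a filter for $\Gamma{+}\Sigma$. The tools are the intuitionistic axioms of Table~\ref{table:axioms}, the derived theorems of Table~\ref{table:formulas:for:Proposition:for:all:formulas:and:and:1} (Proposition~\ref{Proposition:for:all:formulas:and:and:2:axioms}), and the closure of $\Gamma$ under modus ponens together with $\L^{+}{\subseteq}\Gamma$. Throughout, we use that $\L$ is closed under uniform substitution and $(\Rule1^{+})$, so every instance of these schemata lies in $\L^{+}$ and hence in $\Gamma$.

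\emph{First condition: $\L^{+}{\subseteq}\Gamma{+}\Sigma$.} Let $B{\in}\L^{+}$ and take $n{=}0$. Then $A_{1}{\wedge}\ldots{\wedge}A_{n}$ is just ${\top}$, and we must show ${\top}{\rightharpoonup}B{\in}\Gamma$. By $(\Formula2^{+})$, $(+,B{\rightharpoonup}({\top}{\rightharpoonup}B)){\in}\L$. Applying $(\Rule1^{+})$ gives $(+,{\top}{\rightharpoonup}B){\in}\L$, hence ${\top}{\rightharpoonup}B{\in}\L^{+}{\subseteq}\Gamma$.

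\emph{Second condition: closure under modus ponens.} Suppose $B{\in}\Gamma{+}\Sigma$ and $B{\rightharpoonup}C{\in}\Gamma{+}\Sigma$. Then there are $A_{1},\ldots,A_{n}{\in}\Sigma$ with $A_{1}{\wedge}\ldots{\wedge}A_{n}{\rightharpoonup}B{\in}\Gamma$, and $A^{\prime}_{1},\ldots,A^{\prime}_{m}{\in}\Sigma$ with $A^{\prime}_{1}{\wedge}\ldots{\wedge}A^{\prime}_{m}{\rightharpoonup}(B{\rightharpoonup}C){\in}\Gamma$. Let $D$ be $A_{1}{\wedge}\ldots{\wedge}A_{n}{\wedge}A^{\prime}_{1}{\wedge}\ldots{\wedge}A^{\prime}_{m}$, which is again a finite conjunction of members of $\Sigma$. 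The plan has three steps.

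\begin{enumerate}
\item Weaken both hypotheses to antecedent $D$. The intuitionistic theorems $D{\rightharpoonup}A_{1}{\wedge}\ldots{\wedge}A_{n}$ and $D{\rightharpoonup}A^{\prime}_{1}{\wedge}\ldots{\wedge}A^{\prime}_{m}$ belong to $\L^{+}$. They are obtained from $(\Formula3^{+})$, $(\Formula18^{+})$ and $(\Formula19^{+})$ by routine reasoning. Combining them with transitivity $(\Formula16^{+})$ and modus ponens inside $\Gamma$ yields $D{\rightharpoonup}B{\in}\Gamma$ and $D{\rightharpoonup}(B{\rightharpoonup}C){\in}\Gamma$.
\item Use the instance $(D{\rightharpoonup}(B{\rightharpoonup}C)){\rightharpoonup}((D{\rightharpoonup}B){\rightharpoonup}(D{\rightharpoonup}C))$ of $(\Formula1^{+})$, which lies in $\Gamma$.
\item Apply modus ponens twice in $\Gamma$ to obtain $D{\rightharpoonup}C{\in}\Gamma$, so $C{\in}\Gamma{+}\Sigma$.
\end{enumerate}

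The main obstacle is step~1, namely justifying in $\L^{+}$ that a long conjunction implies each of its sub-conjunctions. This must respect the specific bracketing convention $(A_{1}{\wedge}\ldots(A_{n}{\wedge}{\top})\ldots)$. The argument is by induction on $n$ using the projection axioms $(\Formula3^{+})$ and the composition axiom $(\Formula1^{+})$. Equivalently, the deduction theorem for $\IPL$ applies, since $\L^{+}$ contains all $\IPL$ theorems and is closed under $(\Rule1^{+})$.
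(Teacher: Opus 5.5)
Your proof is correct. The paper gives no proof of this proposition (it falls under ``missing proofs are left to the reader''), and your argument is exactly the routine verification intended: take $n{=}0$ with $(\Formula2^{+})$ and $(\Rule1^{+})$ to get $\L^{+}{\subseteq}\Gamma{+}\Sigma$, then merge the two finite conjunctions into $D$, weaken both antecedents to $D$ using $\IPL$ reasoning, and close under modus ponens via the $(\Formula1^{+})$ instance.
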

\begin{proposition}\label{Proposition:about:filters:formulas:and:operation:plus:for:a:single:formula}
For all filters $\Gamma$ and for all $A{\in}\Fo$, $\Gamma{+}A{=}\{B{\in}\Fo:\ A{\rightharpoonup}B{\in}\Gamma\}$.
\end{proposition}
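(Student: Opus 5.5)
We prove the two inclusions of $\Gamma{+}A{=}\{B{\in}\Fo:\ A{\rightharpoonup}B{\in}\Gamma\}$ separately. By definition, $B{\in}\Gamma{+}A$ means that there exist $n{\in}\N$ and $A_{1},\ldots,A_{n}{\in}\{A\}$ with $A_{1}{\wedge}\ldots{\wedge}A_{n}{\rightharpoonup}B{\in}\Gamma$. Since each $A_{i}$ equals $A$, the conjunction is $(A{\wedge}(A{\wedge}\ldots(A{\wedge}{\top})\ldots))$ with $n$ copies of $A$. For $n{=}0$ it is just ${\top}$. The whole argument therefore reduces to showing, inside an arbitrary filter, that $C_{n}{\rightharpoonup}B{\in}\Gamma$ if and only if $A{\rightharpoonup}B{\in}\Gamma$, where $C_{n}$ is this $n$-fold conjunction.

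\textbf{Inclusion $\supseteq$.} Suppose $A{\rightharpoonup}B{\in}\Gamma$ and take $n{=}1$, so $C_{1}{=}A{\wedge}{\top}$. Using $(\Formula3^{+})$ we have $(+,A{\wedge}{\top}{\rightharpoonup}A){\in}\L$. Together with $(\Formula16^{+})$ from Proposition~\ref{Proposition:for:all:formulas:and:and:2:axioms}, this gives $(A{\wedge}{\top}{\rightharpoonup}A){\rightharpoonup}((A{\rightharpoonup}B){\rightharpoonup}(A{\wedge}{\top}{\rightharpoonup}B)){\in}\L^{+}$. Both formulas lie in $\L^{+}{\subseteq}\Gamma$, and $\Gamma$ is closed under detachment, so two applications yield $A{\wedge}{\top}{\rightharpoonup}B{\in}\Gamma$.

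\textbf{Inclusion $\subseteq$.} Suppose $C_{n}{\rightharpoonup}B{\in}\Gamma$. The key fact is $(+,A{\rightharpoonup}C_{n}){\in}\L$ for every $n$. For $n{=}0$ this is $A{\rightharpoonup}{\top}$, which follows from $(\Formula5^{+})$, $(\Formula2^{+})$ and $(\Rule1^{+})$. For $n{\geq}1$ it is an instance of $(\Formula18^{+})$ with $t_{1}{=}\ldots{=}t_{n}{=}A$, which gives $A{\rightharpoonup}(A{\rightharpoonup}\ldots(A{\rightharpoonup}C_{n})\ldots)$. The repeated antecedents then have to be contracted to a single $A$. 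To do this, prove $(+,(A{\rightharpoonup}(A{\rightharpoonup}D)){\rightharpoonup}(A{\rightharpoonup}D)){\in}\L$ from $(\Formula1^{+})$, $(\Formula15^{+})$ and $(\Rule1^{+})$, and apply it $n{-}1$ times by induction on $n$. Alternatively, build $A{\rightharpoonup}C_{n}$ directly by induction on $n$ from $(\Formula3^{+})$ and $(\Formula1^{+})$.

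Once $A{\rightharpoonup}C_{n}{\in}\L^{+}{\subseteq}\Gamma$ is available, instantiate $(\Formula16^{+})$ as $(A{\rightharpoonup}C_{n}){\rightharpoonup}((C_{n}{\rightharpoonup}B){\rightharpoonup}(A{\rightharpoonup}B))$. Detaching twice inside $\Gamma$ gives $A{\rightharpoonup}B{\in}\Gamma$.

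The only step requiring care is deriving $A{\rightharpoonup}C_{n}$ for arbitrary $n$, because of the right-nested conjunction ending in ${\top}$. This is routine $\IPL$ Hilbert-style reasoning, and the induction on $n$ handles it.
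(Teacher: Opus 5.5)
The paper leaves this proposition to the reader. Your argument is a correct and routine way to fill it in: the $n{=}1$ instance with $(\Formula3^{+})$ and $(\Formula16^{+})$ gives $\supseteq$, and $A{\rightharpoonup}C_{n}{\in}\L^{+}$ chained through $(\Formula16^{+})$ gives $\subseteq$, which is the kind of manipulation the paper itself does with $(\Formula16^{+})$ and $(\Formula18^{+})$ in the Existence Property for $\lozenge$. One small precision: your contraction step also uses $(\Formula2^{+})$, needed to obtain $(A{\rightharpoonup}(A{\rightharpoonup}D)){\rightharpoonup}(A{\rightharpoonup}A)$ from $A{\rightharpoonup}A$, not just $(\Formula1^{+})$ and $(\Formula15^{+})$; your alternative direct induction, combining $A{\rightharpoonup}(C_{n}{\rightharpoonup}A{\wedge}C_{n})$ from $(\Formula3^{+})$ with $A{\rightharpoonup}C_{n}$ via $(\Formula1^{+})$, avoids contraction altogether.
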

\begin{proposition}\label{Proposition:for:all:filters:and:for:all:formulas:is:a:filter:for:all:filters:if:and:then}
For all filters $\Gamma$ and for all $A{\in}\Fo$, $\Gamma{\subseteq}\Gamma{+}A$, $A{\in}\Gamma{+}A$, $\Gamma{+}A$ is a filter and for all filters $\Delta$, if $\Gamma{\subseteq}\Delta$ and $A{\in}\Delta$ then $\Gamma{+}A{\subseteq}\Delta$.
\end{proposition}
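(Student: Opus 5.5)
The plan is to derive all four claims from the two propositions immediately preceding the statement, namely Proposition~\ref{Proposition:48bis:Gamma:plus:Sigma:is:a:filter} and Proposition~\ref{Proposition:about:filters:formulas:and:operation:plus:for:a:single:formula}, together with a few theorems of $\L$ from Table~\ref{table:formulas:for:Proposition:for:all:formulas:and:and:1}. Throughout, Proposition~\ref{Proposition:about:filters:formulas:and:operation:plus:for:a:single:formula} lets us use the description $\Gamma{+}A{=}\{B{\in}\Fo:\ A{\rightharpoonup}B{\in}\Gamma\}$.

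The first two claims are immediate. That $\Gamma{+}A$ is a filter is Proposition~\ref{Proposition:48bis:Gamma:plus:Sigma:is:a:filter} with $\Sigma{=}\{A\}$. For $A{\in}\Gamma{+}A$, note that $(+,A{\rightharpoonup}A){\in}\L$ by $(\Formula15^{+})$ (Proposition~\ref{Proposition:for:all:formulas:and:and:2:axioms}). Hence $A{\rightharpoonup}A{\in}\L^{+}{\subseteq}\Gamma$.

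For $\Gamma{\subseteq}\Gamma{+}A$, take $B{\in}\Gamma$. Since $(+,B{\rightharpoonup}(A{\rightharpoonup}B)){\in}\L$ is an instance of $(\Formula2^{+})$, this formula lies in $\L^{+}{\subseteq}\Gamma$. Closure of $\Gamma$ under modus ponens then yields $A{\rightharpoonup}B{\in}\Gamma$, that is, $B{\in}\Gamma{+}A$.

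For minimality, let $\Delta$ be a filter with $\Gamma{\subseteq}\Delta$ and $A{\in}\Delta$, and take $B{\in}\Gamma{+}A$. Then $A{\rightharpoonup}B{\in}\Gamma{\subseteq}\Delta$. Since $A{\in}\Delta$ and $\Delta$ is closed under modus ponens, $B{\in}\Delta$.

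There is no genuine obstacle here. The only point needing care is checking that the relevant axiom instances are in $\L^{+}$, which holds by closure of $\L$ under uniform substitution.
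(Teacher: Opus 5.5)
Your proof is correct: each of the four claims follows as you describe from the two preceding propositions, the instances $(+,A{\rightharpoonup}A)$ of $(\Formula15^{+})$ and $(+,B{\rightharpoonup}(A{\rightharpoonup}B))$ of $(\Formula2^{+})$, and closure of filters under modus ponens. The paper gives no proof and leaves it to the reader; your argument is the routine one it evidently intends.
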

\begin{proposition}\label{Proposition:about:filters:and:square:operation}
For all filters $\Gamma$, ${\square}\Gamma$ is a filter.
\end{proposition}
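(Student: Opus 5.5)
We have to check the two defining conditions of a filter for ${\square}\Gamma{=}\{A{\in}\Fo:\ {\square}A{\in}\Gamma\}$: that $\L^{+}{\subseteq}{\square}\Gamma$, and that ${\square}\Gamma$ is closed under modus ponens. Both will be reduced to the fact that $\Gamma$ is itself a filter. For that we use only the inference rule $(\Rule4^{+})$, the axiom $(\Formula8^{+})$, and the closure of $\L$ under uniform substitution.

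\emph{First condition.} Let $A{\in}\L^{+}$, i.e.\ $(+,A){\in}\L$. Since $\L$ is closed under $(\Rule4^{+})$, instantiated with the substitution sending $p$ to $A$, we get $(+,{\square}A){\in}\L$. Hence ${\square}A{\in}\L^{+}$. As $\Gamma$ is a filter, $\L^{+}{\subseteq}\Gamma$, so ${\square}A{\in}\Gamma$, that is, $A{\in}{\square}\Gamma$.

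\emph{Second condition.} Let $A,B{\in}\Fo$ with $A{\in}{\square}\Gamma$ and $A{\rightharpoonup}B{\in}{\square}\Gamma$, i.e.\ ${\square}A{\in}\Gamma$ and ${\square}(A{\rightharpoonup}B){\in}\Gamma$. Because $\L$ is closed under uniform substitution and contains $(\Formula8^{+})$, the formula ${\square}(A{\rightharpoonup}B){\rightharpoonup}({\square}A{\rightharpoonup}{\square}B)$ lies in $\L^{+}$, hence in $\Gamma$. Two applications of the modus-ponens closure of the filter $\Gamma$ then give first ${\square}A{\rightharpoonup}{\square}B{\in}\Gamma$ and then ${\square}B{\in}\Gamma$. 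Thus $B{\in}{\square}\Gamma$.

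There is no real obstacle here. The only point to watch is that axioms are stated for atoms $p,q$, so the step placing an instance of $(\Formula8^{+})$ in $\Gamma$ must go through uniform substitution. Likewise, the use of $(\Rule4^{+})$ relies on the paper's definition of closure under an inference rule, which quantifies over all substitutions $\sigma$.
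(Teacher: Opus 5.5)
Your proof is correct and is the routine verification the paper intends (it leaves this proposition to the reader). You obtain $\L^{+}{\subseteq}{\square}\Gamma$ from closure under $(\Rule4^{+})$, and closure of ${\square}\Gamma$ under modus ponens from a substitution instance of $(\Formula8^{+})$ followed by two modus ponens steps inside the filter $\Gamma$.
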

A proper filter $\Gamma$ is {\em prime}\/ if for all $A,B{\in}\Fo$, if $A{\vee}B{\in}\Gamma$ then either $A{\in}\Gamma$, or $B{\in}\Gamma$.
\begin{proposition}[Lindenbaum Lemma for filters]\label{Proposition:let:be:a:filter:and:be:a:formula:if:then:there:exists:a:prime:filter:such:that:and:for:all:filters:if:and:then}
Let $\Gamma$ be a filter and $A$ be a formula.
If $A{\not\in}\Gamma$ then there exists a prime filter $\Delta$ such that $\Gamma{\subseteq}\Delta$, $A{\not\in}\Delta$ and for all filters $\Lambda$, if $\Delta{\subseteq}\Lambda$ and $A{\not\in}\Lambda$ then $\Delta{=}\Lambda$.
\end{proposition}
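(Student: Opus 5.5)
We apply Zorn's Lemma to the set ${\mathcal F}$ of all filters $\Lambda$ such that $\Gamma{\subseteq}\Lambda$ and $A{\not\in}\Lambda$, ordered by inclusion. ${\mathcal F}$ is nonempty because $\Gamma{\in}{\mathcal F}$. If $(\Lambda_{i})_{i{\in}I}$ is a nonempty chain in ${\mathcal F}$, its union is a filter, as noted right after the definition of filters. The union contains $\Gamma$ and does not contain $A$, so it is an upper bound of the chain in ${\mathcal F}$. Zorn's Lemma then gives a maximal element $\Delta$ of ${\mathcal F}$. By construction $\Gamma{\subseteq}\Delta$ and $A{\not\in}\Delta$. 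Maximality is exactly the last clause of the statement: for all filters $\Lambda$, if $\Delta{\subseteq}\Lambda$ and $A{\not\in}\Lambda$ then $\Delta{=}\Lambda$.

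It remains to show that $\Delta$ is prime. First, $\Delta$ is proper. If ${\bot}{\in}\Delta$, then since $(+,{\bot}{\rightharpoonup}A){\in}\L$ by $(\Formula14^{+})$ in Proposition~\ref{Proposition:for:all:formulas:and:and:2:axioms}, and $\L^{+}{\subseteq}\Delta$, closure of $\Delta$ under modus ponens would give $A{\in}\Delta$. Equivalently, by Proposition~\ref{Proposition:for:all:constructive:model:logics:the:following:proper:filters}, $\Delta{\not=}\Fo$ since $A{\not\in}\Delta$.

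Next, suppose $B{\vee}C{\in}\Delta$, $B{\not\in}\Delta$ and $C{\not\in}\Delta$. By Proposition~\ref{Proposition:for:all:filters:and:for:all:formulas:is:a:filter:for:all:filters:if:and:then}, $\Delta{+}B$ and $\Delta{+}C$ are filters strictly containing $\Delta$. By maximality, $A{\in}\Delta{+}B$ and $A{\in}\Delta{+}C$. Proposition~\ref{Proposition:about:filters:formulas:and:operation:plus:for:a:single:formula} then gives $B{\rightharpoonup}A{\in}\Delta$ and $C{\rightharpoonup}A{\in}\Delta$. Since $(+,(B{\rightharpoonup}A){\rightharpoonup}((C{\rightharpoonup}A){\rightharpoonup}(B{\vee}C{\rightharpoonup}A))){\in}\L$ is an instance of $(\Formula4^{+})$ and $\L$ is closed under substitution, this formula lies in $\L^{+}{\subseteq}\Delta$. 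Three applications of modus ponens inside $\Delta$ yield $A{\in}\Delta$, a contradiction.

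The main point needing care is the primeness step. It relies on the explicit description $\Gamma{+}A{=}\{B:\ A{\rightharpoonup}B{\in}\Gamma\}$, which is where a deduction-theorem-like fact is hidden. That fact is already Proposition~\ref{Proposition:about:filters:formulas:and:operation:plus:for:a:single:formula}, so the argument above is complete.
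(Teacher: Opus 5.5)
Your proof is correct and takes essentially the paper's route. The paper leaves this lemma to the reader, but its Existence Property proofs (e.g.\ for $\rightharpoonup$) use the same Kuratowski--Zorn argument on filters that extend the given one and omit a fixed formula, followed by a primeness check the paper calls ``routine''; your use of $(\Formula14^{+})$ for properness, and of $\Delta{+}B{=}\{E:\ B{\rightharpoonup}E{\in}\Delta\}$ with $(\Formula4^{+})$ for the disjunction property, fills in that check correctly.
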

An {\em ideal}\/ is a set $\Gamma$ of formulas such that
\begin{itemize}
\item $\L^{-}{\subseteq}\Gamma$,
\item for all $A,B{\in}\Fo$, if $A{\in}\Gamma$ and $A{\leftharpoondown}B{\in}\Gamma$ then $B{\in}\Gamma$.
\end{itemize}
Obviously, for all families $(\Gamma_{i})_{i{\in}I}$ of ideals, $\bigcap\{\Gamma_{i}:\ i{\in}I\}$ is an ideal and for all nonempty chains $(\Gamma_{i})_{i{\in}I}$ of ideal, $\bigcup\{\Gamma_{i}:\ i{\in}I\}$ is an ideal.
As a result, there exists a least ideal (which is nothing but $\L^{-}$).
In other respect, obviously, $\Fo$ is the greatest ideal.
\\
\\
An ideal $\Gamma$ is {\em proper}\/ if ${\top}{\not\in}\Gamma$.
\begin{proposition}\label{Proposition:for:all:constructive:model:logics:the:following:proper:ideals}
For all ideals $\Gamma$, $\Gamma$ is proper if and only if $\Gamma{\not=}\Fo$.
\end{proposition}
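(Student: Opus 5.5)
The plan is to prove the two directions separately, mirroring the filter case (Proposition~\ref{Proposition:for:all:constructive:model:logics:the:following:proper:filters}).

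\emph{Left to right.} Suppose $\Gamma$ is proper, i.e.\ ${\top}{\not\in}\Gamma$. Since ${\top}{\in}\Fo$, we immediately get $\Gamma{\not=}\Fo$.

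\emph{Right to left.} We argue by contraposition: assume ${\top}{\in}\Gamma$ and show that every $B{\in}\Fo$ belongs to $\Gamma$. By $(\Formula14^{-})$ of Table~\ref{table:formulas:for:Proposition:for:all:formulas:and:and:1}, which holds in every IML by Proposition~\ref{Proposition:for:all:formulas:and:and:2:axioms}, we have $(-,{\top}{\leftharpoondown}p){\in}\L$. Since $\L$ is closed for uniform substitution, $(-,{\top}{\leftharpoondown}B){\in}\L$, so ${\top}{\leftharpoondown}B{\in}\L^{-}{\subseteq}\Gamma$. The ideal closure condition, applied to ${\top}{\in}\Gamma$ and ${\top}{\leftharpoondown}B{\in}\Gamma$, then gives $B{\in}\Gamma$. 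Hence $\Gamma{=}\Fo$.

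The only point needing care is checking the orientation of the closure condition. It reads: from $A$ and $A{\leftharpoondown}B$ infer $B$. With $A{=}{\top}$, the required formula is exactly ${\top}{\leftharpoondown}B$, which matches $(\Formula14^{-})$. So no real obstacle is expected.
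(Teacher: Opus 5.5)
Your proof is correct and is exactly the intended argument; the paper leaves this proof to the reader. The nontrivial direction takes the instance $(-,{\top}{\leftharpoondown}B)$ of $(\Formula14^{-})$, which lies in $\L^{-}{\subseteq}\Gamma$, and applies the ideal closure condition with $A{=}{\top}$, dually to the way $(\Formula14^{+})$ is used in the paper's consistency proposition. If you prefer not to rely on Proposition~\ref{Proposition:for:all:formulas:and:and:2:axioms}, you can obtain $(-,{\top}{\leftharpoondown}B)$ directly: apply $(\Rule1^{+})$ to $(\Formula5^{+})$ and the instance $(+,{\top}{\rightharpoonup}(B{\rightharpoonup}{\top}))$ of $(\Formula2^{+})$, then apply $(\Rule2^{-})$.
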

\begin{proposition}\label{Proposition:53bis:Gamma:moins:Sigma:is:an:ideal}
For all ideals $\Gamma$ and for all sets $\Sigma$ of formulas, $\Gamma{-}\Sigma$ is an ideal.
\end{proposition}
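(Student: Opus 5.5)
The claim is that for an ideal $\Gamma$ and any set $\Sigma$, $\Gamma{-}\Sigma$ is an ideal. I would dualize the proof of Proposition~\ref{Proposition:48bis:Gamma:plus:Sigma:is:a:filter}, using the $-$-axioms and the rules $(\Rule1^{-})$ and $(\Rule2^{-})$ instead of their $+$ counterparts. The key tool is the notion of an \emph{ideal-derivable} signed formula. If $(-,C{\leftharpoondown}D)\in\L$, then $C\in\Gamma$ implies $D\in\Gamma$. Indeed, $\L^{-}\subseteq\Gamma$ gives $C{\leftharpoondown}D\in\Gamma$, and closure of $\Gamma$ then gives $D\in\Gamma$. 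So every $\leftharpoondown$-"anti-theorem" of $\L$ acts as a valid transfer inside $\Gamma$.

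First I check that $\L^{-}\subseteq\Gamma{-}\Sigma$. Take $n{=}0$: the empty disjunction is $\bot$, so I need $\bot{\leftharpoondown}B\in\Gamma$ whenever $B\in\L^{-}$. From $(-,B)\in\L$ and $(-,B{\leftharpoondown}(\bot{\leftharpoondown}B))\in\L$, an instance of $(\Formula2^{-})$, rule $(\Rule1^{-})$ yields $(-,\bot{\leftharpoondown}B)\in\L$. Hence $\bot{\leftharpoondown}B\in\L^{-}\subseteq\Gamma$.

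Second I check closure under detachment. Suppose $B\in\Gamma{-}\Sigma$ via $A_{1}{\vee}\ldots{\vee}A_{n}{\leftharpoondown}B\in\Gamma$, and $B{\leftharpoondown}C\in\Gamma{-}\Sigma$ via $A'_{1}{\vee}\ldots{\vee}A'_{k}{\leftharpoondown}(B{\leftharpoondown}C)\in\Gamma$. Let $D$ be the disjunction of all the $A_{i}$ and $A'_{j}$. I first weaken both antecedents to $D$. This uses that $\L$ contains $(-,(E{\leftharpoondown}B){\leftharpoondown}(D{\leftharpoondown}B))$ when $E$ is a subdisjunction of $D$. That fact follows from the dual propositional machinery: the formulas of Table~\ref{table:formulas:for:Proposition:for:all:formulas:and:and:1} $(\Formula13^{-})$, $(\Formula16^{-})$, together with $(\Formula3^{-})$ and $(\Formula4^{-})$, combined by $(\Rule1^{-})$ and $(\Rule2^{-})$. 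So $D{\leftharpoondown}B\in\Gamma$ and $D{\leftharpoondown}(B{\leftharpoondown}C)\in\Gamma$. Then the instance $(D{\leftharpoondown}(B{\leftharpoondown}C)){\leftharpoondown}((D{\leftharpoondown}B){\leftharpoondown}(D{\leftharpoondown}C))$ of $(\Formula1^{-})$ lies in $\L^{-}$, hence in $\Gamma$. Applying ideal closure twice gives $D{\leftharpoondown}C\in\Gamma$, i.e. $C\in\Gamma{-}\Sigma$.

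The main obstacle is the bookkeeping step where disjunctions are regrouped. The paper's $A_{1}{\vee}\ldots{\vee}A_{n}$ is a right-nested disjunction ending in $\bot$, so merging two such lists into one requires associativity, commutativity and $\bot$-absorption "anti-theorems" of the shape $(-,E{\leftharpoondown}E')$. My plan is to derive these once as a lemma, by induction on list length, from $(\Formula3^{-})$, $(\Formula4^{-})$, $(\Formula13^{-})$–$(\Formula16^{-})$. Alternatively, I would shortcut this by applying $\mathtt{tr}$ to the corresponding $+$-facts used for filters, since $\L_{\min}$ is dual and these facts are already in $\L_{\min}\subseteq\L$.
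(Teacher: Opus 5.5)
Your proposal is correct and is essentially the argument the paper leaves to the reader, namely the $\mathtt{tr}$-dual of the filter case (Proposition~\ref{Proposition:48bis:Gamma:plus:Sigma:is:a:filter}): $n{=}0$ with $(\Formula2^{-})$ and $(\Rule1^{-})$ for $\L^{-}{\subseteq}\Gamma{-}\Sigma$, weakening to a common disjunction via $(\Formula16^{-})$, and double detachment along an instance of $(\Formula1^{-})$. One small precision: the subdisjunction anti-theorems $(-,D{\leftharpoondown}E)$ rest on $(\Formula1^{-})$--$(\Formula3^{-})$ and $(\Formula5^{-})$ (the duals of the conjunction machinery, with $(\Formula5^{-})$ handling the trailing $\bot$), not on $(\Formula4^{-})$ or $(\Formula13^{-})$, which concern $\wedge$ --- though your $\mathtt{tr}$ shortcut, applied to atom-level schemata and followed by uniform substitution, covers this anyway.
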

\begin{proposition}\label{Proposition:about:ideals:formulas:and:operation:moins:for:a:single:formula}
For all ideals $\Gamma$ and for all $A{\in}\Fo$, $\Gamma{-}A{=}\{B{\in}\Fo:\ A{\leftharpoondown}B{\in}\Gamma\}$.
\end{proposition}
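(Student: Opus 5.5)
We must show that for every ideal $\Gamma$ and every $A{\in}\Fo$, $\Gamma{-}A{=}\{B{\in}\Fo:\ A{\leftharpoondown}B{\in}\Gamma\}$. This is the dual of Proposition~\ref{Proposition:about:filters:formulas:and:operation:plus:for:a:single:formula}, so the plan is to repeat that argument with every ingredient replaced by its $\mathtt{tr}$-dual. By definition, $\Gamma{-}A$ consists of those $B$ for which there is $n{\in}\N$ with $A{\vee}\ldots{\vee}A{\leftharpoondown}B{\in}\Gamma$, where the disjunction has $n$ copies of $A$ and abbreviates $(A{\vee}\ldots(A{\vee}{\bot})\ldots)$.

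For the inclusion from right to left, suppose $A{\leftharpoondown}B{\in}\Gamma$. Take $n{=}1$; we must show $A{\vee}{\bot}{\leftharpoondown}B{\in}\Gamma$. Using $(\Formula3^{-})$ and $(\Formula4^{-})$ together with $(\Formula14^{-})$ and $(\Formula15^{-})$ (Proposition~\ref{Proposition:for:all:formulas:and:and:2:axioms}), we obtain $(-,A{\leftharpoondown}A{\vee}{\bot}){\in}\L$. Composing with the transitivity axiom $(\Formula16^{-})$, applying $(\Rule1^{-})$ twice and using closure of $\Gamma$ under detachment for ${\leftharpoondown}$ then transfers $A{\leftharpoondown}B{\in}\Gamma$ to $A{\vee}{\bot}{\leftharpoondown}B{\in}\Gamma$. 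The one thing to get right is the order of arguments in the dual detachment: from $C{\in}\Gamma$ and $C{\leftharpoondown}D{\in}\Gamma$ we get $D{\in}\Gamma$, and elements of $\L^{-}$ lie in $\Gamma$.

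For the inclusion from left to right, suppose $A{\vee}\ldots{\vee}A{\leftharpoondown}B{\in}\Gamma$ with $n$ copies of $A$. The case $n{=}0$ gives ${\bot}{\leftharpoondown}B{\in}\Gamma$. Then $(\Formula14^{-})$-style reasoning, namely $(-,A{\leftharpoondown}{\bot}){\in}\L$ dualizing ${\bot}{\rightharpoonup}A$ (obtained from $(\Formula14^{+})$ via $(\Rule2^{-})$), combined with $(\Formula16^{-})$, yields $A{\leftharpoondown}B{\in}\Gamma$. For $n{\geq}1$, the main step is to show $(-,A{\leftharpoondown}A{\vee}(A{\vee}\ldots)){\in}\L$, i.e.\ that $A$ dually entails the $n$-fold disjunction. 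We prove this by induction on $n$ using $(\Formula3^{-})$, $(\Formula15^{-})$ and $(\Formula16^{-})$, with the base case handled as above. Chaining via $(\Formula16^{-})$ and two applications of $(\Rule1^{-})$ inside $\Gamma$ then gives $A{\leftharpoondown}B{\in}\Gamma$.

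Where the needed $\L^{-}$-facts are awkward to derive directly, a cleaner route is available. Derive the positive counterpart in $\L^{+}$ (for instance $(+,A{\vee}\ldots{\vee}A{\rightharpoonup}A)$ via $(\Formula4^{+})$, $(\Formula14^{+})$, $(\Formula15^{+})$) and then convert it with $(\Rule2^{-})$, which turns $(+,C{\rightharpoonup}D)$ into $(-,D{\leftharpoondown}C)$. This conversion lemma is the step I expect to need the most care with: checking argument order under $\leftharpoondown$ so that each instance of $(\Formula16^{-})$ and $(\Rule1^{-})$ actually fires in the right direction. Everything else is routine bookkeeping mirroring the filter case.
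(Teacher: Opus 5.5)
Your overall plan is the natural one (the paper gives no proof and leaves this to the reader). Each inclusion should come from an $\L^{-}$-fact pushed through $(\Formula16^{-})$ and $(\Rule1^{-})$, followed by the ideal's closure under detachment. However, your right-to-left step uses the entailment in the wrong direction, so it fails as written.

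The mechanism is as follows. Suppose $(-,X{\leftharpoondown}Z){\in}\L$, i.e.\ $Z$ entails $X$. Then the instance of $(\Formula16^{-})$ with $p{:=}X$, $r{:=}Z$, $q{:=}B$, together with $(\Rule1^{-})$, gives $(-,(Z{\leftharpoondown}B){\leftharpoondown}(X{\leftharpoondown}B)){\in}\L$. So membership in $\Gamma$ passes from $Z{\leftharpoondown}B$ to $X{\leftharpoondown}B$.

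Your fact $(-,A{\leftharpoondown}A{\vee}{\bot})$ has $X{=}A$ and $Z{=}A{\vee}{\bot}$. It therefore transfers $A{\vee}{\bot}{\leftharpoondown}B{\in}\Gamma$ to $A{\leftharpoondown}B{\in}\Gamma$, which is the converse of what this inclusion needs; it is really the $n{=}1$ case of your other inclusion. The two inclusions need opposite entailments, so they cannot both rest on facts of the form $(-,A{\leftharpoondown}\cdot)$.

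The repair is short. $(-,A{\vee}{\bot}{\leftharpoondown}A)$ is literally an instance of $(\Formula3^{-})$ with $p{:=}A$, $q{:=}{\bot}$. Then $(\Formula16^{-})$ with $p{:=}A{\vee}{\bot}$, $r{:=}A$, $q{:=}B$ and $(\Rule1^{-})$ put $(A{\leftharpoondown}B){\leftharpoondown}(A{\vee}{\bot}{\leftharpoondown}B)$ into $\L^{-}{\subseteq}\Gamma$. Detachment in $\Gamma$ then gives $A{\vee}{\bot}{\leftharpoondown}B{\in}\Gamma$. Note also that $(\Formula4^{-})$ and $(\Formula14^{-})$, which you cite for this step, concern ${\wedge}$ and ${\top}$. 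The map $\mathtt{tr}$ swaps ${\vee}/{\wedge}$ and ${\bot}/{\top}$ as well as the signs, so relabelling axiom names cannot yield facts about $A{\vee}{\bot}$.

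The left-to-right inclusion is correct if you commit to your fallback route rather than the purely negative induction. Put $D_{0}{=}{\bot}$ and $D_{n+1}{=}A{\vee}D_{n}$.
\begin{itemize}
\item $(+,D_{n}{\rightharpoonup}A)$ follows by induction from $(\Formula14^{+})$, $(\Formula4^{+})$, $(\Formula15^{+})$ and $(\Rule1^{+})$.
\item $(\Rule2^{-})$ turns this into $(-,A{\leftharpoondown}D_{n})$.
\item $(\Formula16^{-})$ with $p{:=}A$, $r{:=}D_{n}$, $q{:=}B$, plus detachment in $\Gamma$, gives $A{\leftharpoondown}B{\in}\Gamma$.
\end{itemize}
This works uniformly in $n$, so $n{=}0$ needs no separate case. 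The induction you sketch from $(\Formula3^{-})$, $(\Formula15^{-})$, $(\Formula16^{-})$ alone is not a safe substitute. Those clauses of $(\Formula3^{-})$ give upper-bound facts (that $p$ and $q$ entail $p{\vee}q$). By contrast, $D_{n}$ entailing $A$ is a disjunction-elimination fact, whose natural source is $(\Formula4^{+})$.
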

\begin{proposition}\label{Proposition:for:all:ideals:and:for:all:formulas:is:an:ideal:for:all:ideals:if:and:then}
For all ideals $\Gamma$ and for all $A{\in}\Fo$, $\Gamma{\subseteq}\Gamma{-}A$, $A{\in}\Gamma{-}A$, $\Gamma{-}A$ is an ideal and for all ideals $\Delta$, if $\Gamma{\subseteq}\Delta$ and $A{\in}\Delta$ then $\Gamma{-}A{\subseteq}\Delta$.
\end{proposition}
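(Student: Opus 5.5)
The statement to prove is the ideal counterpart of Proposition~\ref{Proposition:for:all:filters:and:for:all:formulas:is:a:filter:for:all:filters:if:and:then}. I would copy that proof with the roles of $\rightharpoonup$, $\wedge$, $\top$ taken over by $\leftharpoondown$, $\vee$, $\bot$. The tools are Proposition~\ref{Proposition:about:ideals:formulas:and:operation:moins:for:a:single:formula}, which gives $\Gamma{-}A{=}\{B{\in}\Fo:\ A{\leftharpoondown}B{\in}\Gamma\}$, and Proposition~\ref{Proposition:53bis:Gamma:moins:Sigma:is:an:ideal}, which says $\Gamma{-}A$ is an ideal. So the third claim is immediate, and three claims remain.

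For $\Gamma{\subseteq}\Gamma{-}A$, I take $B{\in}\Gamma$. Axiom $(\Formula2^{-})$ puts $(-,B{\leftharpoondown}(A{\leftharpoondown}B))$ in $\L$, hence $B{\leftharpoondown}(A{\leftharpoondown}B){\in}\L^{-}{\subseteq}\Gamma$. The closure condition of ideals, applied to $B$, gives $A{\leftharpoondown}B{\in}\Gamma$, that is, $B{\in}\Gamma{-}A$.

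For $A{\in}\Gamma{-}A$, I need $A{\leftharpoondown}A{\in}\Gamma$. This follows from $(\Formula15^{-})$, which lies in every IML by Proposition~\ref{Proposition:for:all:formulas:and:and:2:axioms}, together with $\L^{-}{\subseteq}\Gamma$.

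For minimality, let $\Delta$ be an ideal with $\Gamma{\subseteq}\Delta$ and $A{\in}\Delta$, and let $B{\in}\Gamma{-}A$. Then $A{\leftharpoondown}B{\in}\Gamma{\subseteq}\Delta$. Since $A{\in}\Delta$, the closure condition of $\Delta$ gives $B{\in}\Delta$.

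The only point needing care is the direction of the closure condition: from $A{\in}\Gamma$ and $A{\leftharpoondown}B{\in}\Gamma$ one gets $B{\in}\Gamma$. I must check that $(\Formula2^{-})$ and $(\Formula15^{-})$ are instantiated with the arguments in the matching order. With the ordering above they are, so I expect no real obstacle.
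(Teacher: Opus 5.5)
Your proof is correct. The paper leaves this proposition to the reader, and your argument is the routine one it intends: the dual of the filter case, using the characterization $\Gamma{-}A{=}\{B{\in}\Fo:\ A{\leftharpoondown}B{\in}\Gamma\}$, the fact that $\Gamma{-}A$ is an ideal, the instances $(-,B{\leftharpoondown}(A{\leftharpoondown}B))$ of $(\Formula2^{-})$ and $(-,A{\leftharpoondown}A)$ of $(\Formula15^{-})$ (both instantiated in the right order), and the closure condition of ideals.
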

\begin{proposition}\label{Proposition:about:ideals:and:lozenge:operation}
For all ideals $\Gamma$, ${\lozenge}\Gamma$ is an ideal.
\end{proposition}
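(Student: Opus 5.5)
We must show: for every ideal $\Gamma$, ${\lozenge}\Gamma{=}\{A:\ {\lozenge}A{\in}\Gamma\}$ is an ideal. The plan is to dualize the argument for Proposition~\ref{Proposition:about:filters:and:square:operation}, replacing $\rightharpoonup$ by $\leftharpoondown$, $\square$ by $\lozenge$, $(+)$ by $(-)$, and using the negative rules and axioms of Tables~\ref{table:rules} and~\ref{table:axioms}. Two conditions must be checked: $\L^{-}{\subseteq}{\lozenge}\Gamma$, and closure of ${\lozenge}\Gamma$ under the dual modus ponens for $\leftharpoondown$.

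\textbf{First condition.} Let $A{\in}\L^{-}$, i.e. $(-,A){\in}\L$. Since $\L$ is closed under $(\Rule4^{-})$ and under uniform substitution, $(-,{\lozenge}A){\in}\L$. Hence ${\lozenge}A{\in}\L^{-}{\subseteq}\Gamma$, since $\Gamma$ is an ideal. Therefore $A{\in}{\lozenge}\Gamma$.

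\textbf{Second condition.} Suppose $A{\in}{\lozenge}\Gamma$ and $A{\leftharpoondown}B{\in}{\lozenge}\Gamma$, so that ${\lozenge}A{\in}\Gamma$ and ${\lozenge}(A{\leftharpoondown}B){\in}\Gamma$. The relevant axiom is $(\Formula8^{-})$, instantiated as $(-,{\lozenge}(A{\leftharpoondown}B){\leftharpoondown}({\lozenge}A{\leftharpoondown}{\lozenge}B)){\in}\L$. Thus ${\lozenge}(A{\leftharpoondown}B){\leftharpoondown}({\lozenge}A{\leftharpoondown}{\lozenge}B){\in}\L^{-}{\subseteq}\Gamma$. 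Two applications of the ideal closure condition then give the result:
\begin{enumerate}[(i)]
\item from ${\lozenge}(A{\leftharpoondown}B){\in}\Gamma$ we obtain ${\lozenge}A{\leftharpoondown}{\lozenge}B{\in}\Gamma$;
\item together with ${\lozenge}A{\in}\Gamma$, we obtain ${\lozenge}B{\in}\Gamma$.
\end{enumerate}
Hence $B{\in}{\lozenge}\Gamma$.

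\textbf{Main obstacle.} The only delicate point is checking that $(\Formula8^{-})$ is written with the nesting needed for the ideal rule, namely first argument ${\lozenge}(p{\leftharpoondown}q)$ followed by ${\lozenge}p{\leftharpoondown}{\lozenge}q$. This is exactly the dual of the shape of $(\Formula8^{+})$ used for filters, so both steps go through directly. If the nesting had been different, one would reorder the arguments using the permutation-type theorems of Table~\ref{table:formulas:for:Proposition:for:all:formulas:and:and:1} (e.g. $(\Formula16^{-})$) together with $(\Rule1^{-})$. Alternatively, one could invoke duality, since $\L_{\min}$ is dual and $\mathtt{tr}$ swaps filters and ideals. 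However, $\L$ itself need not be dual, so the direct argument above is the one to use.
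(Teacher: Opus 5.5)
Your proof is correct and is essentially the argument the paper intends; the paper leaves this proof to the reader. It is the exact dual of the argument that ${\square}\Gamma$ is a filter: $(\Rule4^{-})$ gives $\L^{-}{\subseteq}{\lozenge}\Gamma$, and the substitution instance of $(\Formula8^{-})$ followed by two applications of the ideal closure condition gives closure under the dual modus ponens. Your caution against appealing to duality is also well placed, since the fixed consistent IML $\L$ need not be dual.
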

A proper ideal $\Gamma$ is {\em prime}\/ if for all $A,B{\in}\Fo$, if $A{\wedge}B{\in}\Gamma$ then either $A{\in}\Gamma$, or $B{\in}\Gamma$.
\begin{proposition}[Lindenbaum Lemma for ideals]\label{Proposition:let:be:an:ideal:and:be:a:formula:if:then:there:exists:a:prime:ideal:such:that:and:for:all:ideals:if:and:then}
Let $\Gamma$ be an ideal and $A$ be a formula.
If $A{\not\in}\Gamma$ then there exists a prime ideal $\Delta$ such that $\Gamma{\subseteq}\Delta$, $A{\not\in}\Delta$ and for all ideals $\Lambda$, if $\Delta{\subseteq}\Lambda$ and $A{\not\in}\Lambda$ then $\Delta{=}\Lambda$.
\end{proposition}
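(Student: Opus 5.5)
We mirror the argument behind the Lindenbaum Lemma for filters, replacing $\rightharpoonup$, $\vee$ and $\L^{+}$ by $\leftharpoondown$, $\wedge$ and $\L^{-}$. Suppose $A{\not\in}\Gamma$ and let ${\mathcal S}$ be the set of all ideals $\Lambda$ such that $\Gamma{\subseteq}\Lambda$ and $A{\not\in}\Lambda$. Then ${\mathcal S}$ is nonempty since $\Gamma{\in}{\mathcal S}$. Moreover, the union of a nonempty chain in ${\mathcal S}$ is an ideal (as noted right after the definition of ideals), contains $\Gamma$ and omits $A$. By Zorn's Lemma, ${\mathcal S}$ has a maximal element $\Delta$. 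Maximality is exactly the stated condition: for all ideals $\Lambda$, if $\Delta{\subseteq}\Lambda$ and $A{\not\in}\Lambda$ then $\Delta{=}\Lambda$. It remains to show that $\Delta$ is a prime ideal.

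First, $\Delta$ is proper. Since $A{\not\in}\Delta$, we have $\Delta{\not=}\Fo$, so Proposition~\ref{Proposition:for:all:constructive:model:logics:the:following:proper:ideals} gives ${\top}{\not\in}\Delta$.

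For primeness, suppose $B{\wedge}C{\in}\Delta$ but $B{\not\in}\Delta$ and $C{\not\in}\Delta$. By Proposition~\ref{Proposition:for:all:ideals:and:for:all:formulas:is:an:ideal:for:all:ideals:if:and:then}, $\Delta{-}B$ and $\Delta{-}C$ are ideals that strictly extend $\Delta$. By maximality, $A{\in}\Delta{-}B$ and $A{\in}\Delta{-}C$. Proposition~\ref{Proposition:about:ideals:formulas:and:operation:moins:for:a:single:formula} then yields $B{\leftharpoondown}A{\in}\Delta$ and $C{\leftharpoondown}A{\in}\Delta$. We now combine these two facts using the axiom $(\Formula4^{-})$, namely $(-,(p{\leftharpoondown}r){\leftharpoondown}((q{\leftharpoondown}r){\leftharpoondown}(p{\wedge}q{\leftharpoondown}r)))$. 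By uniform substitution, its instance with $B,C,A$ lies in $\L^{-}{\subseteq}\Delta$. Applying the closure of ideals under detachment along $\leftharpoondown$ twice gives $B{\wedge}C{\leftharpoondown}A{\in}\Delta$. One more detachment with $B{\wedge}C{\in}\Delta$ gives $A{\in}\Delta$, a contradiction.

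The only delicate point is to check the orientation of $\leftharpoondown$ in the ideal closure condition. That condition reads: if $A{\in}\Gamma$ and $A{\leftharpoondown}B{\in}\Gamma$ then $B{\in}\Gamma$. So the antecedent $B{\leftharpoondown}A$ of the first detachment must be the left argument of the main $\leftharpoondown$, which is how $(\Formula4^{-})$ is written. If the orientation turned out not to match, I would invoke the dual statement through $\mathtt{tr}$ together with the duality of $\L_{\min}$. Otherwise the argument is a routine dualization of the filter case.
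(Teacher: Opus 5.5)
Your proof is correct. It is the routine Zorn's-lemma argument that the paper leaves to the reader, following the same pattern the paper uses for its Existence Properties: take a maximal ideal extending $\Gamma$ and omitting $A$, get properness from $\Delta{\not=}\Fo$, and get primeness from $\Delta{-}B$, $\Delta{-}C$, the instance of $(\Formula4^{-})$ and three detachments. Your orientation check succeeds, so the fallback via $\mathtt{tr}$ is never needed.
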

%
%
%
%
%
%
%
%
%
%
%
%
%
%
%
%
%
%
%
%
%
%
%
%
%
%
%
%
%
%
%
%
%
%
%
%
%
%
%
%
%
%
%
%
%
%
%
%
A {\em tip}\/ is a couple $(\Gamma^{+},\Gamma^{-})$ where $\Gamma^{+}$ is a filter and $\Gamma^{-}$ is an ideal.
\\
\\
A tip $(\Gamma^{+},\Gamma^{-})$ is {\em coherent}\/ if $\Gamma^{+}{\cap}\Gamma^{-}{=}{\emptyset}$.
\begin{proposition}\label{Proposition:let:be:a:tip:if:is:coherent:then:is:proper:and:is:proper}
Let $(\Gamma^{+},\Gamma^{-})$ be a tip.
If $(\Gamma^{+},\Gamma^{-})$ is coherent then $\Gamma^{+}$ is proper and $\Gamma^{-}$ is proper.
\end{proposition}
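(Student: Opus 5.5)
We argue by contraposition, separately for $\Gamma^{+}$ and for $\Gamma^{-}$. In each case the aim is to show that improperness of one component forces a common formula into $\Gamma^{+}{\cap}\Gamma^{-}$. The tools are the facts $\L^{+}{\subseteq}\Gamma^{+}$ and $\L^{-}{\subseteq}\Gamma^{-}$, closure of filters under $\rightharpoonup$-detachment and of ideals under $\leftharpoondown$-detachment, and the axioms $(\Formula5^{+})$, $(\Formula5^{-})$, $(\Formula14^{+})$ and $(\Formula14^{-})$ (the latter two being available by Proposition~\ref{Proposition:for:all:formulas:and:and:2:axioms}).

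First, suppose $\Gamma^{+}$ is not proper, so ${\bot}{\in}\Gamma^{+}$. By Proposition~\ref{Proposition:for:all:constructive:model:logics:the:following:proper:filters}, $\Gamma^{+}{=}\Fo$; directly, for every $B$ we have $(+,{\bot}{\rightharpoonup}B){\in}\L$, hence ${\bot}{\rightharpoonup}B{\in}\Gamma^{+}$, and detachment gives $B{\in}\Gamma^{+}$. On the other hand, $(-,{\bot}){\in}\L$ by $(\Formula5^{-})$, so ${\bot}{\in}\L^{-}{\subseteq}\Gamma^{-}$. Thus ${\bot}{\in}\Gamma^{+}{\cap}\Gamma^{-}$, contradicting coherence.

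Dually, suppose $\Gamma^{-}$ is not proper, so ${\top}{\in}\Gamma^{-}$. By Proposition~\ref{Proposition:for:all:constructive:model:logics:the:following:proper:ideals}, $\Gamma^{-}{=}\Fo$; alternatively, use $(\Formula14^{-})$, which puts ${\top}{\leftharpoondown}B$ in $\L^{-}{\subseteq}\Gamma^{-}$, and then ideal detachment. In particular ${\top}{\in}\Gamma^{-}$, and ${\top}{\in}\L^{+}{\subseteq}\Gamma^{+}$ by $(\Formula5^{+})$. This again makes $\Gamma^{+}{\cap}\Gamma^{-}$ nonempty.

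There is no real obstacle here. The only point to watch is that membership of ${\bot}$ (resp. ${\top}$) in the least ideal (resp. filter) comes straight from the axioms $(\Formula5^{-})$ and $(\Formula5^{+})$, via $\L^{-}{\subseteq}\Gamma^{-}$ and $\L^{+}{\subseteq}\Gamma^{+}$.
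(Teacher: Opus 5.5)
Your proof is correct and is the routine argument the paper leaves to the reader (it gives no proof): if ${\bot}{\in}\Gamma^{+}$ then ${\bot}{\in}\L^{-}{\subseteq}\Gamma^{-}$ by $(\Formula5^{-})$, and if ${\top}{\in}\Gamma^{-}$ then ${\top}{\in}\L^{+}{\subseteq}\Gamma^{+}$ by $(\Formula5^{+})$, so either way coherence fails. The detours showing $\Gamma^{+}{=}\Fo$ (resp. $\Gamma^{-}{=}\Fo$) via $(\Formula14^{\pm})$ are harmless but unnecessary, since the hypothesis ${\bot}{\in}\Gamma^{+}$ (resp. ${\top}{\in}\Gamma^{-}$) already supplies one side of the common element.
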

%
%
%
%
%
%
%
%
%
%
%
%
%
%
%
%
%
%
%
%
\begin{proposition}\label{proposition:consistent:BML:a:property:about:tips}
For all $A{\in}\Fo$,
\begin{enumerate}
\item if $A{\not\in}\L^{-}$ then $(\L^{+}{+}A,\L^{-})$ is coherent,
\item if $A{\not\in}\L^{+}$ then $(\L^{+},\L^{-}{-}A)$ is coherent.
\end{enumerate}
\end{proposition}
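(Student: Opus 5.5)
We prove Item~$\mathbf{1)}$ directly and obtain Item~$\mathbf{2)}$ by the dual argument. Suppose $A{\not\in}\L^{-}$. By Proposition~\ref{Proposition:for:all:filters:and:for:all:formulas:is:a:filter:for:all:filters:if:and:then}, $\L^{+}{+}A$ is a filter, and $\L^{-}$ is the least ideal. So $(\L^{+}{+}A,\L^{-})$ is a tip, and it remains to show that it is coherent.

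Suppose, for contradiction, that some $B$ lies in $(\L^{+}{+}A){\cap}\L^{-}$. By Proposition~\ref{Proposition:about:filters:formulas:and:operation:plus:for:a:single:formula}, $B{\in}\L^{+}{+}A$ means $(+,A{\rightharpoonup}B){\in}\L$. Applying $(\Rule2^{-})$ gives $(-,B{\leftharpoondown}A){\in}\L$. We also have $(-,B){\in}\L$, since $B{\in}\L^{-}$. Hence $(\Rule1^{-})$, instantiated with $p{:=}B$ and $q{:=}A$, yields $(-,A){\in}\L$, that is, $A{\in}\L^{-}$. This contradicts the hypothesis.

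For Item~$\mathbf{2)}$, suppose $A{\not\in}\L^{+}$. By Proposition~\ref{Proposition:for:all:ideals:and:for:all:formulas:is:an:ideal:for:all:ideals:if:and:then}, $\L^{-}{-}A$ is an ideal, so $(\L^{+},\L^{-}{-}A)$ is a tip. Suppose some $B$ lies in $\L^{+}{\cap}(\L^{-}{-}A)$. By Proposition~\ref{Proposition:about:ideals:formulas:and:operation:moins:for:a:single:formula}, this gives $(-,A{\leftharpoondown}B){\in}\L$. Applying $(\Rule2^{+})$ gives $(+,B{\rightharpoonup}A){\in}\L$. Together with $(+,B){\in}\L$, the rule $(\Rule1^{+})$ yields $(+,A){\in}\L$, that is, $A{\in}\L^{+}$. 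This is again a contradiction.

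The only delicate point is checking the orientation of the substitution instances in $(\Rule2^{\pm})$ and $(\Rule1^{\pm})$. In Item~$\mathbf{1)}$ we need $(\Rule2^{-})$ with $p{:=}A$ and $q{:=}B$, producing $(-,B{\leftharpoondown}A)$, which is exactly the premise shape $(-,p{\leftharpoondown}q)$ required by $(\Rule1^{-})$ with $p{:=}B$ and $q{:=}A$. In Item~$\mathbf{2)}$ we need $(\Rule2^{+})$ with $p{:=}A$ and $q{:=}B$, producing $(+,B{\rightharpoonup}A)$, the premise shape of $(\Rule1^{+})$ with $p{:=}B$ and $q{:=}A$. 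Once these instances are confirmed, each item takes two rule applications.
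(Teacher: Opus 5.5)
Your proof is correct and follows the paper's argument exactly. From $A{\rightharpoonup}B{\in}\L^{+}$ you get $B{\leftharpoondown}A{\in}\L^{-}$ by $(\Rule2^{-})$ and then $A{\in}\L^{-}$ by $(\Rule1^{-})$, and you write out Item~2 via $(\Rule2^{+})$ and $(\Rule1^{+})$, where the paper only says ``dual'' and also leaves implicit the check that the pairs are tips.
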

\begin{proof}
$\mathbf{1)}$ Suppose $(\L^{+}{+}A,\L^{-})$ is not coherent.
Hence, there exists $B{\in}\Fo$ such that $B{\in}\L^{+}{+}A$ and $B{\in}\L^{-}$.
Thus, $A{\rightharpoonup}B{\in}\L^{+}$.
Consequently, $B{\leftharpoondown}A{\in}\L^{-}$ using $(\Rule2^{-})$.
Since $B{\in}\L^{-}$, therefore $A{\in}\L^{-}$ using $(\Rule1^{-})$.
\\
\\
$\mathbf{2)}$ Dual to the proof of Item~$\mathbf{1)}$.
%
%
\medskip
\end{proof}
A coherent tip $(\Gamma^{+},\Gamma^{-})$ is {\em exhaustive}\/ if $\Gamma^{+}{\cup}\Gamma^{-}{=}\Fo$.
\begin{proposition}\label{Proposition:let:be:a:tip:if:is:exhaustive:then:is:prime:and:is:prime}
For all coherent tips $(\Gamma^{+},\Gamma^{-})$, if $(\Gamma^{+},\Gamma^{-})$ is exhaustive then $\Gamma^{+}$ is prime and $\Gamma^{-}$ is prime.
\end{proposition}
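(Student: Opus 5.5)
By Proposition~\ref{Proposition:let:be:a:tip:if:is:coherent:then:is:proper:and:is:proper}, coherence already gives that $\Gamma^{+}$ is a proper filter and $\Gamma^{-}$ is a proper ideal. So only the disjunction property of $\Gamma^{+}$ and the conjunction property of $\Gamma^{-}$ need proof. The second is dual to the first, so I will argue for $\Gamma^{+}$ and obtain the ideal case by exchanging the roles of the rules and axioms with their $\mathtt{tr}$-counterparts.

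Suppose $A{\vee}B{\in}\Gamma^{+}$, $A{\not\in}\Gamma^{+}$ and $B{\not\in}\Gamma^{+}$. By exhaustiveness, $A{\in}\Gamma^{-}$ and $B{\in}\Gamma^{-}$. The aim is to show $A{\vee}B{\in}\Gamma^{-}$, which contradicts coherence. By $(\Formula3^{-})$, $(-,A{\leftharpoondown}(B{\leftharpoondown}A{\vee}B)){\in}\L$, so $A{\leftharpoondown}(B{\leftharpoondown}A{\vee}B){\in}\L^{-}{\subseteq}\Gamma^{-}$. Since $A{\in}\Gamma^{-}$ and $\Gamma^{-}$ is an ideal, its closure condition gives $B{\leftharpoondown}A{\vee}B{\in}\Gamma^{-}$. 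Since $B{\in}\Gamma^{-}$, closure applied again gives $A{\vee}B{\in}\Gamma^{-}$. Hence $A{\vee}B{\in}\Gamma^{+}{\cap}\Gamma^{-}$, contradicting coherence.

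Dually, suppose $A{\wedge}B{\in}\Gamma^{-}$ with $A,B{\not\in}\Gamma^{-}$. Then $A,B{\in}\Gamma^{+}$. By $(\Formula3^{+})$, $A{\rightharpoonup}(B{\rightharpoonup}A{\wedge}B){\in}\L^{+}{\subseteq}\Gamma^{+}$, and two applications of filter closure give $A{\wedge}B{\in}\Gamma^{+}$, again contradicting coherence.

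There is no real obstacle. The only point to check is the bracketing of $(\Formula3^{-})$: it must be read as $A{\leftharpoondown}(B{\leftharpoondown}(A{\vee}B))$, since ${\vee}$ binds tighter than ${\leftharpoondown}$ under the standard omission rules. With that reading, the ideal closure condition ``$C{\in}\Gamma$ and $C{\leftharpoondown}D{\in}\Gamma$ imply $D{\in}\Gamma$'' applies exactly as written.
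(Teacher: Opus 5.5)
Your argument is correct. Coherence gives properness, and a failure of primality would put $A{\vee}B$ into $\Gamma^{-}$ (via $(\Formula3^{-})$ and two uses of ideal closure) or $A{\wedge}B$ into $\Gamma^{+}$ (via $(\Formula3^{+})$ and two uses of filter closure), contradicting coherence. The paper gives no proof here (it falls under ``missing proofs are left to the reader''), so I cannot check yours against one, but yours is the natural routine argument, and your bracketing of $(\Formula3^{-})$ is the reading under which that axiom is semantically valid.
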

\begin{proposition}\label{Proposition:for:all:exhaustive:tips:if:then:if:then}
For all exhaustive tips $(\Gamma^{+},\Gamma^{-}),(\Delta^{+},\Delta^{-})$, $\Gamma^{+}{\subseteq}\Delta^{+}$ if and only if $\Delta^{-}{\subseteq}\Gamma^{-}$.
\end{proposition}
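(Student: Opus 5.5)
Since exhaustive tips are coherent and exhaustive, each of $\Gamma^{+},\Gamma^{-}$ is the complement of the other in $\Fo$: coherence gives $\Gamma^{+}{\cap}\Gamma^{-}{=}\emptyset$ and exhaustiveness gives $\Gamma^{+}{\cup}\Gamma^{-}{=}\Fo$, hence $\Gamma^{-}{=}\Fo{\setminus}\Gamma^{+}$, and likewise $\Delta^{-}{=}\Fo{\setminus}\Delta^{+}$. The plan is to reduce the statement to this purely set-theoretic observation and to contraposition of inclusion under complementation.

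For the left-to-right direction, I will suppose $\Gamma^{+}{\subseteq}\Delta^{+}$ and take $A{\in}\Delta^{-}$. By coherence of $(\Delta^{+},\Delta^{-})$, $A{\not\in}\Delta^{+}$, so $A{\not\in}\Gamma^{+}$ because $\Gamma^{+}{\subseteq}\Delta^{+}$. Exhaustiveness of $(\Gamma^{+},\Gamma^{-})$ then yields $A{\in}\Gamma^{-}$, which gives $\Delta^{-}{\subseteq}\Gamma^{-}$.

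For the right-to-left direction, I will suppose $\Delta^{-}{\subseteq}\Gamma^{-}$ and take $A{\in}\Gamma^{+}$. By coherence of $(\Gamma^{+},\Gamma^{-})$, $A{\not\in}\Gamma^{-}$, so $A{\not\in}\Delta^{-}$. Exhaustiveness of $(\Delta^{+},\Delta^{-})$ then gives $A{\in}\Delta^{+}$, hence $\Gamma^{+}{\subseteq}\Delta^{+}$.

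I expect no real obstacle. The only point needing care is to use coherence and exhaustiveness for the appropriate tip in each step, and to note that neither the filter/ideal structure nor any axioms of $\L$ are required.
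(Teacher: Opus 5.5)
Your proof is correct and is the intended argument; the paper leaves this proof to the reader. Since exhaustive tips are coherent by definition, $\Gamma^{-}{=}\Fo{\setminus}\Gamma^{+}$ and $\Delta^{-}{=}\Fo{\setminus}\Delta^{+}$, so inclusion reverses under complementation, and, as you note, no property of filters, ideals or $\L$ is needed.
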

\begin{proposition}[Lindenbaum Lemma for tips]\label{Proposition:for:all:coherent:tips:there:exists:an:exhaustive:tip:such:that:and}
For all coherent tips $(\Gamma^{+},\Gamma^{-})$, there exists an exhaustive tip $(\Delta^{+},\Delta^{-})$ such that $\Gamma^{+}{\subseteq}\Delta^{+}$ and $\Gamma^{-}{\subseteq}\Delta^{-}$.
\end{proposition}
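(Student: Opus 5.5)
We argue by Zorn's Lemma on coherent tips ordered componentwise. Let $T$ be the set of all coherent tips $(\Delta^{+},\Delta^{-})$ with $\Gamma^{+}{\subseteq}\Delta^{+}$ and $\Gamma^{-}{\subseteq}\Delta^{-}$, ordered by $(\Delta^{+},\Delta^{-}){\preceq}(\Lambda^{+},\Lambda^{-})$ iff $\Delta^{+}{\subseteq}\Lambda^{+}$ and $\Delta^{-}{\subseteq}\Lambda^{-}$. The set $T$ is nonempty, since it contains $(\Gamma^{+},\Gamma^{-})$. For a nonempty chain in $T$, the componentwise union is again a tip: as observed earlier, unions of nonempty chains of filters (resp.\ ideals) are filters (resp.\ ideals). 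It is also coherent, because any $A$ common to both unions already lies in both components of a single member of the chain, the chain being linearly ordered. Hence Zorn's Lemma gives a maximal element $(\Delta^{+},\Delta^{-})$ of $T$, and it remains to show that it is exhaustive.

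Suppose for contradiction that some $A{\in}\Fo$ lies in neither $\Delta^{+}$ nor $\Delta^{-}$. By maximality, neither $(\Delta^{+}{+}A,\Delta^{-})$ nor $(\Delta^{+},\Delta^{-}{-}A)$ is coherent; both are tips by Propositions~\ref{Proposition:for:all:filters:and:for:all:formulas:is:a:filter:for:all:filters:if:and:then} and~\ref{Proposition:for:all:ideals:and:for:all:formulas:is:an:ideal:for:all:ideals:if:and:then}, and both strictly extend $(\Delta^{+},\Delta^{-})$. This yields $B,C$ with $A{\rightharpoonup}B{\in}\Delta^{+}$, $B{\in}\Delta^{-}$, $C{\in}\Delta^{+}$ and $C{\leftharpoondown}A{\in}\Delta^{-}$, using Propositions~\ref{Proposition:about:filters:formulas:and:operation:plus:for:a:single:formula} and~\ref{Proposition:about:ideals:formulas:and:operation:moins:for:a:single:formula}. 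We want to derive an incoherence, in the spirit of the proof of Proposition~\ref{proposition:consistent:BML:a:property:about:tips}, but now relative to the arbitrary pair $(\Delta^{+},\Delta^{-})$ rather than to $\L$.

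The key step, and the main obstacle, is the interaction between a filter and an ideal. Neither the filter nor the ideal conditions alone relate the two components, so the link must come through axiom $(\Formula6^{+})$, which is where the co-implication enters. Since $A{\rightharpoonup}B{\in}\Delta^{+}$ and $C{\in}\Delta^{+}$, conjunction in the filter (via $(\Formula3^{+})$ and $(\Rule1^{+})$) gives $(A{\rightharpoonup}B){\wedge}C{\in}\Delta^{+}$. Then $(\Formula6^{+})$ yields $(A{\leftharpoondown}C){\vee}B{\in}\Delta^{+}$. I expect that the orientation of $\leftharpoondown$ in $(\Formula6^{+})$ matches the one needed to compare with $C{\leftharpoondown}A$, possibly after adjusting by an argument from Table~\ref{table:formulas:for:Proposition:for:all:formulas:and:and:1}.

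We still have to rule out $(A{\leftharpoondown}C){\vee}B$ lying in the filter while its pieces lie in the ideal. Using the dual machinery ($(\Formula3^{-})$, $(\Formula4^{-})$, $(\Rule1^{-})$), the ideal $\Delta^{-}$ is closed under $\vee$ of its members. From $B{\in}\Delta^{-}$ and $C{\leftharpoondown}A{\in}\Delta^{-}$ we get that the relevant disjunction lies in $\Delta^{-}$, which contradicts coherence. If the orientation of $(\Formula6^{+})$ does not fit directly, I would instead use $(\Formula6^{-})$ symmetrically: combine $B$ and $C{\leftharpoondown}A$ in $\Delta^{-}$ to get $(C{\leftharpoondown}A){\vee}B{\in}\Delta^{-}$, then apply $(\Formula6^{-})$ to obtain $(C{\rightharpoonup}B){\wedge}A{\in}\Delta^{-}$. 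This would have to be reconciled with membership in $\Delta^{+}$ of a formula implying it, using the fact that the ideal is closed upward along $(-,{\cdot}{\leftharpoondown}{\cdot})$-consequences derived via $(\Rule2^{-})$ from $\L^{+}$-implications. Checking which variant lines up exactly is the delicate bookkeeping; the rest of the argument is routine.
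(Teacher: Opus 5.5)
\textbf{There is a genuine gap, though it is easy to repair.}

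Your Zorn's Lemma set-up is correct, and since the paper leaves this proof to the reader, it is the natural route. The chain argument, the maximal element, and the use of maximality to make both $(\Delta^{+}{+}A,\Delta^{-})$ and $(\Delta^{+},\Delta^{-}{-}A)$ incoherent are all fine. Your choice of $(\Formula6^{+})$ as the axiom linking the two components is also right.

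The gap is in the second witness. By Proposition~\ref{Proposition:about:ideals:formulas:and:operation:moins:for:a:single:formula}, $\Delta^{-}{-}A{=}\{X{\in}\Fo:\ A{\leftharpoondown}X{\in}\Delta^{-}\}$. So the incoherence of $(\Delta^{+},\Delta^{-}{-}A)$ gives $C{\in}\Delta^{+}$ with $A{\leftharpoondown}C{\in}\Delta^{-}$, not $C{\leftharpoondown}A{\in}\Delta^{-}$.

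With the memberships you actually wrote down, neither of your routes closes:
\begin{itemize}
\item $(\Formula6^{+})$ puts $(A{\leftharpoondown}C){\vee}B$ into $\Delta^{+}$, whereas the disjunction you can put into $\Delta^{-}$ is $(C{\leftharpoondown}A){\vee}B$.
\item The $(\Formula6^{-})$ route yields $(C{\rightharpoonup}B){\wedge}A{\in}\Delta^{-}$. Placing this, or anything that $\L^{+}$-implies it, in $\Delta^{+}$ would already require $A{\in}\Delta^{+}$.
\end{itemize}
In fact, take $\L{=}\L_{\min}$ and atoms $A,B,C$. The four facts $A{\rightharpoonup}B,C{\in}\Delta^{+}$ and $B,C{\leftharpoondown}A{\in}\Delta^{-}$ all hold in the exhaustive tip of formulas true/false at a one-point model where $A,B$ are false and $C$ is true. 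So no bookkeeping with these four facts alone can produce an incoherence. Since you leave the decisive step as an expectation, the proof is incomplete as written.

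Once the orientation is corrected, your first route closes immediately, with no adjustment needed.
\begin{itemize}
\item From $A{\rightharpoonup}B{\in}\Delta^{+}$ and $C{\in}\Delta^{+}$, an instance of $(\Formula3^{+})$ and the filter's own closure under modus ponens give $(A{\rightharpoonup}B){\wedge}C{\in}\Delta^{+}$. This closure is part of the definition of a filter; $(\Rule1^{+})$ is a closure property of $\L$, not of filters.
\item The instance $(A{\rightharpoonup}B){\wedge}C{\rightharpoonup}(A{\leftharpoondown}C){\vee}B$ of $(\Formula6^{+})$ then gives $(A{\leftharpoondown}C){\vee}B{\in}\Delta^{+}$.
\item Dually, the instance $(A{\leftharpoondown}C){\leftharpoondown}(B{\leftharpoondown}(A{\leftharpoondown}C){\vee}B)$ of $(\Formula3^{-})$ lies in $\L^{-}{\subseteq}\Delta^{-}$.
\item Two applications of the ideal condition, first with $A{\leftharpoondown}C{\in}\Delta^{-}$ and then with $B{\in}\Delta^{-}$, give $(A{\leftharpoondown}C){\vee}B{\in}\Delta^{-}$. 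This contradicts coherence.
\end{itemize}
Alternatively, the corresponding instance of $(\Formula6^{-})$ sends $(A{\leftharpoondown}C){\vee}B{\in}\Delta^{-}$ to $(A{\rightharpoonup}B){\wedge}C{\in}\Delta^{-}$, which also contradicts coherence.
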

A {\em clip}\/ is a triple $((\Gamma^{+},\Gamma^{-}),A^{+},A^{-})$ where $(\Gamma^{+},\Gamma^{-})$ is an exhaustive tip and $A^{+},A^{-}$ are formulas.
\\
\\
A clip $((\Gamma^{+},\Gamma^{-}),A^{+},A^{-})$ is {\em balanced}\/ if for all $C{\in}\Fo$,
\begin{itemize}
\item if ${\blacklozenge}((A^{-}{\leftharpoondown}A^{+}){\rightharpoonup}(A^{+}{\rightharpoonup}A^{-}){\vee}C){\in}\Gamma^{+}$ then ${\lozenge}C{\in}\Gamma^{+}$,
\item if ${\blacksquare}((A^{+}{\rightharpoonup}A^{-}){\leftharpoondown}(A^{-}{\leftharpoondown}A^{+}){\wedge}C){\in}\Gamma^{-}$ then ${\square}C{\in}\Gamma^{-}$.
\end{itemize}
\begin{proposition}\label{Proposition:for:all:exhaustive:tips:is:balanced}
For all exhaustive tips $(\Gamma^{+},\Gamma^{-})$, $((\Gamma^{+},\Gamma^{-}),{\top},{\bot})$ is balanced.
\end{proposition}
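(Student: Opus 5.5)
With $A^{+}{=}{\top}$ and $A^{-}{=}{\bot}$, the two balance conditions read as follows. First, if ${\blacklozenge}(({\bot}{\leftharpoondown}{\top}){\rightharpoonup}({\top}{\rightharpoonup}{\bot}){\vee}C){\in}\Gamma^{+}$ then ${\lozenge}C{\in}\Gamma^{+}$. Second, if ${\blacksquare}(({\top}{\rightharpoonup}{\bot}){\leftharpoondown}({\bot}{\leftharpoondown}{\top}){\wedge}C){\in}\Gamma^{-}$ then ${\square}C{\in}\Gamma^{-}$. These are exactly the antecedents of $(\Formula20^{+})$ and $(\Formula20^{-})$ from Table~\ref{table:formulas:for:Proposition:for:all:formulas:and:and:1}. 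The plan is therefore to instantiate those two signed formulas, which belong to $\L$ by Proposition~\ref{Proposition:for:all:formulas:and:and:2:axioms}, and then use the closure properties of filters and ideals.

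For the first condition, take the substitution sending $p$ to $C$ in $(\Formula20^{+})$. Since $\L$ is closed for uniform substitution, this gives $(+,{\blacklozenge}(({\bot}{\leftharpoondown}{\top}){\rightharpoonup}({\top}{\rightharpoonup}{\bot}){\vee}C){\rightharpoonup}{\lozenge}C){\in}\L$. Hence this implication lies in $\L^{+}$, and so in $\Gamma^{+}$, because $\Gamma^{+}$ is a filter. If the antecedent is in $\Gamma^{+}$, then closure of filters under modus ponens yields ${\lozenge}C{\in}\Gamma^{+}$.

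For the second condition, the same substitution in $(\Formula20^{-})$ gives $(-,{\blacksquare}(({\top}{\rightharpoonup}{\bot}){\leftharpoondown}({\bot}{\leftharpoondown}{\top}){\wedge}C){\leftharpoondown}{\square}C){\in}\L$. So this formula lies in $\L^{-}{\subseteq}\Gamma^{-}$, since $\Gamma^{-}$ is an ideal. If ${\blacksquare}(({\top}{\rightharpoonup}{\bot}){\leftharpoondown}({\bot}{\leftharpoondown}{\top}){\wedge}C){\in}\Gamma^{-}$, the ideal closure condition (from $A{\in}\Gamma^{-}$ and $A{\leftharpoondown}B{\in}\Gamma^{-}$ infer $B{\in}\Gamma^{-}$) gives ${\square}C{\in}\Gamma^{-}$.

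The only genuine work lies in Proposition~\ref{Proposition:for:all:formulas:and:and:2:axioms}, which we take as given: showing that $(\Formula20^{\pm})$ are derivable in every IML. If that derivation had to be supplied, I would get $(\Formula20^{+})$ from $(\Formula11^{+})$ with $p{:=}C$, $q{:=}{\bot}$ together with $(\Formula10^{+})$ and $(\Rule4^{+})$. The formula $({\bot}{\leftharpoondown}{\top}){\rightharpoonup}({\top}{\rightharpoonup}{\bot}){\vee}C$ is $\IPL$-equivalent to $C{\vee}{\bot}$, because ${\bot}{\leftharpoondown}{\top}$ is refutable by $(\Formula5^{-})$-type reasoning. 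Then ${\blacklozenge}{\bot}$ is refutable via $(\Rule6^{-})$, which collapses ${\lozenge}C{\curlyvee}{\blacklozenge}{\bot}$ to ${\lozenge}C$. The dual case $(\Formula20^{-})$ follows by the duality of $\L_{\min}$. Exhaustiveness of $(\Gamma^{+},\Gamma^{-})$ is not even needed, only that its components are a filter and an ideal.
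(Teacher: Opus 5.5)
Your argument is correct and is essentially the paper's proof. You put the instance of $(\Formula20^{+})$ in $\L^{+}\subseteq\Gamma^{+}$ and close under modus ponens; the paper does the same, phrased as a contradiction, and leaves the ${\blacksquare}$ case as dual. You are also right that exhaustiveness is never used.

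Your optional sketch of how $(\Formula20^{+})$ would be derived is wrong, on two counts.

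\textbf{${\bot}{\leftharpoondown}{\top}$ is provable, not refutable.} By the truth condition of ${\leftharpoondown}$, $s{\models}{\bot}{\leftharpoondown}{\top}$ holds at every point (take $t{=}s$). It is in fact provable, e.g.\ from $(\Formula6^{+})$ with $p,q{:=}{\bot}$, $r{:=}{\top}$. That is exactly why the argument of ${\blacklozenge}$ is equivalent to $C$. If ${\bot}{\leftharpoondown}{\top}$ were refutable, the argument would be equivalent to ${\top}$, and $(\Formula20^{+})$ would become the invalid ${\blacklozenge}{\top}{\rightharpoonup}{\lozenge}C$.

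\textbf{The collapse to ${\lozenge}C$ fails.} Because ${\blacklozenge}{\bot}$ is refutable, ${\lozenge}C{\curlyvee}{\blacklozenge}{\bot}$ is equivalent to ${\rceil}{\rceil}{\lozenge}C$, not to ${\lozenge}C$. Take $W{=}\{a,b\}$ with $a{\leq}b$, $R{=}\{(b,b)\}$ and $V(p){=}\{b\}$. The point $a$ satisfies ${\rceil}{\rceil}{\lozenge}p$ but not ${\lozenge}p$, since $a$ has no $R$-successor and nothing lies below $a$ except $a$ itself. So $(\Formula11^{+})$ with $q{:=}{\bot}$ only yields ${\blacklozenge}C{\rightharpoonup}{\rceil}{\rceil}{\lozenge}C$.

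\textbf{A correct route.} Take $(\Formula9^{-})$ together with $(\Rule2^{+})$, which gives $(+,{\blacklozenge}p{\rightharpoonup}{\lozenge}p)$. Combine it by $(\Formula16^{+})$ with $(\Rule5^{+})$, applied to the provable implication from the argument of ${\blacklozenge}$ to $C$.

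Since you explicitly took the proposition about the table of derived formulas as given, as the paper does, none of this affects your proof of the statement itself.
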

\begin{proof}
Let $(\Gamma^{+},\Gamma^{-})$ be an exhaustive tip.
For the sake of the contradiction, suppose $((\Gamma^{+},\Gamma^{-}),{\top},{\bot})$ is not balanced.
Hence, there exists $C{\in}\Fo$ such that either ${\blacklozenge}(({\bot}{\leftharpoondown}{\top}){\rightharpoonup}({\top}{\rightharpoonup}{\bot}){\vee}C){\in}\Gamma^{+}$ and ${\lozenge}C{\not\in}\Gamma^{+}$, or ${\blacksquare}(({\top}{\rightharpoonup}{\bot}){\leftharpoondown}({\bot}{\leftharpoondown}{\top}){\wedge}C){\in}\Gamma^{-}$ and ${\square}C{\not\in}\Gamma^{-}$.
\\
\\
In the former case, since ${\blacklozenge}(({\bot}{\leftharpoondown}{\top}){\rightharpoonup}({\top}{\rightharpoonup}{\bot}){\vee}C){\rightharpoonup}{\lozenge}C{\in}\L^{+}$ using $(\Formula20^{+})$, therefore ${\lozenge}C{\in}\Gamma^{+}$: a contradiction.
%
%
\\
\\
In the latter case, a dual reasoning can be done.
\medskip
\end{proof}
\section{Existence Properties}\label{section:existence:properties}
The following Existence Properties will be crucially used in the proof of Proposition~\ref{Proposition:let:be:a:formula:for:all}.
%
%
%
%
%
%
%
%
\begin{proposition}[Existence Property for $\rightharpoonup$]\label{Proposition:let:be:a:balanced:clip:let:be:formulas:if:then:there:exists:a:balanced:clip:such:that:and:1}
Let $((\Gamma^{+},\Gamma^{-}),A^{+},A^{-})$ be a balanced clip.
Let $C,D$ be formulas.
If $C{\rightharpoonup}D{\in}\Gamma^{-}$ then there exists a balanced clip $((\Delta^{+},\Delta^{-}),
$\linebreak$
B^{+},B^{-})$ such that $\Gamma^{+}{\subseteq}\Delta^{+}$, $\Delta^{-}{\subseteq}\Gamma^{-}$, $C{\in}\Delta^{+}$ and $D{\in}\Delta^{-}$.
\end{proposition}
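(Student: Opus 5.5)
The plan is to take $B^{+}{=}{\top}$ and $B^{-}{=}{\bot}$. By Proposition~\ref{Proposition:for:all:exhaustive:tips:is:balanced}, the clip $((\Delta^{+},\Delta^{-}),{\top},{\bot})$ is balanced as soon as $(\Delta^{+},\Delta^{-})$ is an exhaustive tip. Balancedness of the given clip $((\Gamma^{+},\Gamma^{-}),A^{+},A^{-})$ is therefore not needed. What remains is to build an exhaustive tip $(\Delta^{+},\Delta^{-})$ with $\Gamma^{+}{\subseteq}\Delta^{+}$, $C{\in}\Delta^{+}$ and $D{\in}\Delta^{-}$. The inclusion $\Delta^{-}{\subseteq}\Gamma^{-}$ then follows from Proposition~\ref{Proposition:for:all:exhaustive:tips:if:then:if:then}, because both tips are exhaustive and $\Gamma^{+}{\subseteq}\Delta^{+}$.

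To build it, I consider the tip $(\Gamma^{+}{+}C,\L^{-}{-}D)$. By Proposition~\ref{Proposition:for:all:filters:and:for:all:formulas:is:a:filter:for:all:filters:if:and:then}, $\Gamma^{+}{+}C$ is a filter containing $\Gamma^{+}$ and $C$. Likewise, by Proposition~\ref{Proposition:for:all:ideals:and:for:all:formulas:is:an:ideal:for:all:ideals:if:and:then}, $\L^{-}{-}D$ is an ideal containing $D$. If this tip is coherent, the Lindenbaum Lemma for tips (Proposition~\ref{Proposition:for:all:coherent:tips:there:exists:an:exhaustive:tip:such:that:and}) extends it to an exhaustive tip $(\Delta^{+},\Delta^{-})$ with $\Gamma^{+}{\subseteq}\Gamma^{+}{+}C{\subseteq}\Delta^{+}$, $C{\in}\Delta^{+}$ and $D{\in}\L^{-}{-}D{\subseteq}\Delta^{-}$, as required.

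The only real step is coherence, which I prove by contradiction. Suppose some $E$ lies in both $\Gamma^{+}{+}C$ and $\L^{-}{-}D$. By Propositions~\ref{Proposition:about:filters:formulas:and:operation:plus:for:a:single:formula} and~\ref{Proposition:about:ideals:formulas:and:operation:moins:for:a:single:formula}, this gives $C{\rightharpoonup}E{\in}\Gamma^{+}$ and $D{\leftharpoondown}E{\in}\L^{-}$. Rule $(\Rule2^{+})$ turns the latter into $E{\rightharpoonup}D{\in}\L^{+}{\subseteq}\Gamma^{+}$. The instance $(C{\rightharpoonup}E){\rightharpoonup}((E{\rightharpoonup}D){\rightharpoonup}(C{\rightharpoonup}D))$ of $(\Formula16^{+})$ lies in $\L^{+}{\subseteq}\Gamma^{+}$ by Proposition~\ref{Proposition:for:all:formulas:and:and:2:axioms}. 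Two applications of closure of the filter $\Gamma^{+}$ under modus ponens then yield $C{\rightharpoonup}D{\in}\Gamma^{+}$. Since $C{\rightharpoonup}D{\in}\Gamma^{-}$ by hypothesis, this contradicts the coherence of the exhaustive tip $(\Gamma^{+},\Gamma^{-})$.

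I do not expect a serious obstacle here. The one point that needs care is using $\L^{-}{-}D$ rather than $\Gamma^{-}{-}D$ as the ideal component, since $\Gamma^{-}$ is not contained in $\Delta^{-}$ in general.
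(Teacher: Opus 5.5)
Your proof is correct, but it is organized differently from the paper's.

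The paper applies the Kuratowski--Zorn Lemma directly to the family of filters that contain $\Gamma^{+}$ and omit $C{\rightharpoonup}D$, and takes a maximal element $\Delta$. It then treats as routine that $\Delta$ is prime, that $C{\in}\Delta$ and that $D{\not\in}\Delta$. For the middle claim, maximality gives $C{\rightharpoonup}(C{\rightharpoonup}D){\in}\Delta$ when $C{\not\in}\Delta$. For the last, $D{\in}\Delta$ would force $C{\rightharpoonup}D{\in}\Delta$ by $(\Formula2^{+})$. Finally it sets $\Delta^{+}{=}\Delta$, $\Delta^{-}{=}\Fo{\setminus}\Delta$, $B^{+}{=}{\top}$, $B^{-}{=}{\bot}$. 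This tacitly requires $\Fo{\setminus}\Delta$ to be an ideal, which needs a further small argument using primeness and $(\Formula6^{+})$.

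You instead build $C$ and $D$ into the starting tip $(\Gamma^{+}{+}C,\L^{-}{-}D)$. The only logic-specific check is then coherence, which you settle explicitly with $(\Rule2^{+})$ and $(\Formula16^{+})$. You hand maximality, primeness and the ideal property to the Lindenbaum Lemma for tips, and you get $\Delta^{-}{\subseteq}\Gamma^{-}$ from the proposition on exhaustive tips.

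Your route is more modular and leaves nothing ``routine'' at this level. The skipped work does reappear inside the Lindenbaum Lemma for tips, whose proof the paper leaves to the reader. The paper's route is self-contained at the level of filters and follows the same Zorn template it uses for all the other Existence Properties.

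Both proofs end with the choice $B^{+}{=}{\top}$, $B^{-}{=}{\bot}$. As you observed, neither uses the balancedness of the given clip.
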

\begin{proof}
%
%
Suppose $C{\rightharpoonup}D{\in}\Gamma^{-}$.
Let ${\mathcal S}$ be the set of all filters $\Delta$ such that $\Gamma^{+}{\subseteq}\Delta$ and $C{\rightharpoonup}D{\not\in}\Delta$.
\begin{claim}
$\Gamma^{+}$ is in ${\mathcal S}$.
\end{claim}
\begin{proof}
For the sake of the contradiction, suppose $\Gamma^{+}$ is not in ${\mathcal S}$.
Hence, $C{\rightharpoonup}D{\in}\Gamma^{+}$.
Thus $C{\rightharpoonup}D{\not\in}\Gamma^{-}$: a contradiction.
\medskip
\end{proof}
Moreover, for all nonempty chains $(\Delta_{i})_{i{\in}I}$ in ${\mathcal S}$, $\bigcup\{\Delta_{i}\ :\ i{\in}I\}$ is in ${\mathcal S}$.
Hence, by Kuratowski-Zorn Lemma,\footnote{See~\cite[Chapter~$10$]{Davey:Priestley:2002} and~\cite[Chapter~$1$]{Wechler:1992}.} ${\mathcal S}$ possesses a maximal element $\Delta$.
It is a routine task to show that $\Delta$ is a prime filter such that $\Gamma^{+}{\subseteq}\Delta$ and $C{\rightharpoonup}D{\not\in}\Delta$.
Moreover, $C{\in}\Delta$ and $D{\not\in}\Delta$.
And in the end, choose $\Delta^{+}{=}\Delta$, $\Delta^{-}{=}\Fo{\setminus}\Delta$, $B^{+}{=}{\top}$ and $B^{-}{=}{\bot}$.
Thus, $\Gamma^{+}{\subseteq}\Delta^{+}$, $\Delta^{-}{\subseteq}\Gamma^{-}$, $C{\in}\Delta^{+}$ and $D{\in}\Delta^{-}$.
\medskip
\end{proof}
\begin{proposition}[Existence Property for $\leftharpoondown$]\label{Proposition:let:be:a:balanced:clip:let:be:formulas:if:then:there:exists:a:balanced:clip:such:that:and:2}
Let $((\Gamma^{+},\Gamma^{-}),A^{+},A^{-})$ be a balanced clip.
Let $C,D$ be formulas.
If $C{\leftharpoondown}D{\in}\Gamma^{+}$ then there exists a balanced clip $((\Delta^{+},\Delta^{-}),
$\linebreak$
B^{+},B^{-})$ such that $\Delta^{+}{\subseteq}\Gamma^{+}$, $\Gamma^{-}{\subseteq}\Delta^{-}$, $C{\in}\Delta^{-}$ and $D{\in}\Delta^{+}$.
\end{proposition}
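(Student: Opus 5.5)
This is the dual of Proposition~\ref{Proposition:let:be:a:balanced:clip:let:be:formulas:if:then:there:exists:a:balanced:clip:such:that:and:1}, so I would mirror that proof with ideals in place of filters. Suppose $C{\leftharpoondown}D{\in}\Gamma^{+}$. Let ${\mathcal S}$ be the set of all ideals $\Delta$ such that $\Gamma^{-}{\subseteq}\Delta$ and $C{\leftharpoondown}D{\not\in}\Delta$.

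First, $\Gamma^{-}$ is in ${\mathcal S}$. Indeed, $(\Gamma^{+},\Gamma^{-})$ is coherent, so $C{\leftharpoondown}D{\in}\Gamma^{+}$ gives $C{\leftharpoondown}D{\not\in}\Gamma^{-}$. Second, the union of a nonempty chain in ${\mathcal S}$ is an ideal, as noted after the definition of ideals, and it still omits $C{\leftharpoondown}D$. By Kuratowski-Zorn Lemma, ${\mathcal S}$ therefore has a maximal element $\Delta$.

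Next I would show that $\Delta$ is a prime ideal. Properness holds: if ${\top}{\in}\Delta$, then $(-,{\top}{\leftharpoondown}(C{\leftharpoondown}D))$ is in $\L$ by $(\Formula14^{-})$, so $(\Rule1^{-})$ would give $C{\leftharpoondown}D{\in}\Delta$. For primeness, suppose $A{\wedge}B{\in}\Delta$ but $A,B{\not\in}\Delta$. By maximality, $C{\leftharpoondown}D{\in}\Delta{-}A$ and $C{\leftharpoondown}D{\in}\Delta{-}B$, using Proposition~\ref{Proposition:for:all:ideals:and:for:all:formulas:is:an:ideal:for:all:ideals:if:and:then}. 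Then the dual of the filter disjunction argument, based on $(\Formula4^{-})$ and Proposition~\ref{Proposition:about:ideals:formulas:and:operation:moins:for:a:single:formula}, yields $A{\wedge}B{\leftharpoondown}(C{\leftharpoondown}D){\in}\Delta$. Since $A{\wedge}B{\in}\Delta$, this gives $C{\leftharpoondown}D{\in}\Delta$, a contradiction. Alternatively, one can invoke the Lindenbaum Lemma for ideals (Proposition~\ref{Proposition:let:be:an:ideal:and:be:a:formula:if:then:there:exists:a:prime:ideal:such:that:and:for:all:ideals:if:and:then}) directly with formula $C{\leftharpoondown}D$, which already gives a prime ideal $\Delta{\supseteq}\Gamma^{-}$ omitting $C{\leftharpoondown}D$; I would do that to avoid redoing the routine work.

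Then I would check that $C{\in}\Delta$ and $D{\not\in}\Delta$. If $C{\not\in}\Delta$, maximality gives $C{\leftharpoondown}D{\in}\Delta{-}C$, i.e. $C{\leftharpoondown}(C{\leftharpoondown}D){\in}\Delta$. Together with the ideal analogue of the fact that $D{\leftharpoondown}(C{\leftharpoondown}D)$ behaves appropriately, this should give $C{\leftharpoondown}D{\in}\Delta$; more simply, the maximality argument is dual to the filter case. If $D{\in}\Delta$, then $(-,D{\leftharpoondown}(C{\leftharpoondown}D))$ is in $\L$ by $(\Formula2^{-})$, so $C{\leftharpoondown}D{\in}\Delta$ by $(\Rule1^{-})$, again a contradiction.

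Finally, set $\Delta^{-}{=}\Delta$, $\Delta^{+}{=}\Fo{\setminus}\Delta$, $B^{+}{=}{\top}$ and $B^{-}{=}{\bot}$.

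The step I expect to be the main obstacle is verifying that $(\Delta^{+},\Delta^{-})$ is an exhaustive tip, in particular that $\Fo{\setminus}\Delta$ is a filter. I would first check that it contains $\L^{+}$: if $A{\in}\L^{+}{\cap}\Delta$, then Proposition~\ref{proposition:consistent:BML:a:property:about:tips}-style reasoning with $(\Rule2^{-})$ and $(\Rule1^{-})$ puts every formula in $\Delta$, contradicting properness. For closure under modus ponens, if $A,A{\rightharpoonup}B{\not\in}\Delta$ and $B{\in}\Delta$, I would use $(\Formula6^{-})$ together with primeness and the maximality of $\Delta$ (whose only excluded formula is $C{\leftharpoondown}D$) to derive a contradiction. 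This exactly mirrors the ``routine task'' of the previous proof.

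Given this, exhaustiveness and coherence are immediate. $\Gamma^{-}{\subseteq}\Delta^{-}$ holds by construction, and $\Delta^{+}{\subseteq}\Gamma^{+}$ follows by complement since $(\Gamma^{+},\Gamma^{-})$ is exhaustive. Also $C{\in}\Delta^{-}$ and $D{\in}\Delta^{+}$. Balance of the clip $((\Delta^{+},\Delta^{-}),{\top},{\bot})$ follows from Proposition~\ref{Proposition:for:all:exhaustive:tips:is:balanced}.
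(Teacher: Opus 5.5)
Your proposal is correct and is essentially the paper's proof, which the paper simply declares dual to the one for $\rightharpoonup$: take a maximal (prime) ideal $\Delta\supseteq\Gamma^{-}$ omitting $C{\leftharpoondown}D$, set $\Delta^{-}{=}\Delta$, $\Delta^{+}{=}\Fo{\setminus}\Delta$, $B^{+}{=}\top$, $B^{-}{=}\bot$, and get balancedness from Proposition~\ref{Proposition:for:all:exhaustive:tips:is:balanced}. One small repair: to pass from $C{\leftharpoondown}(C{\leftharpoondown}D)\in\Delta$ to $C{\leftharpoondown}D\in\Delta$ you need the dual of contraction, $(C{\leftharpoondown}(C{\leftharpoondown}D)){\leftharpoondown}(C{\leftharpoondown}D)\in\L^{-}$ (which holds since contraction is in $\L_{\min}$ and $\L_{\min}$ is dual), not any fact about $D{\leftharpoondown}(C{\leftharpoondown}D)$; also, the steps you attribute to $(\Rule1^{-})$ are really uses of the closure condition of ideals.
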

\begin{proof}
Dual to the proof of Proposition~\ref{Proposition:let:be:a:balanced:clip:let:be:formulas:if:then:there:exists:a:balanced:clip:such:that:and:1}.
\medskip
\end{proof}
%
%
%
%
%
%
%
%
\begin{proposition}[Existence Property for $\lozenge$]\label{Proposition:let:be:a:balanced:clip:let:be:a:formula:if:then:for:all:balanced:clips:if:and:then:there:exists:a:balanced:clip:such:that:and:1:lozenge:Gamma:plus}
Let $((\Gamma^{+},\Gamma^{-}),A^{+},A^{-})$ be a balanced clip.
Let $D$ be a formula.
If ${\lozenge}D{\in}\Gamma^{+}$ then there exists a balanced clip $((\Delta^{+},\Delta^{-}),B^{+},
$\linebreak$
B^{-})$ such that $\Delta^{+}{\subseteq}\Gamma^{+}$ and $\Gamma^{-}{\subseteq}\Delta^{-}$ and there exists a balanced clip $((\Lambda^{+},\Lambda^{-}),C^{+},
$\linebreak$
C^{-})$ such that ${\square}\Delta^{+}{\subseteq}\Lambda^{+}$, ${\lozenge}\Delta^{-}{\subseteq}\Lambda^{-}$, $B^{-}{\leftharpoondown}B^{+}{\in}\Lambda^{+}$, $B^{+}{\rightharpoonup}B^{-}{\in}\Lambda^{-}$ and $D{\in}\Lambda^{+}$.
\end{proposition}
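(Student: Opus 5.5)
The plan is to build the two clips in two stages: first go \emph{down} from $(\Gamma^{+},\Gamma^{-})$ to an exhaustive tip $(\Delta^{+},\Delta^{-})$ that still ``sees'' $D$, then go \emph{sideways} along the canonical $R$ to $(\Lambda^{+},\Lambda^{-})$ with $D{\in}\Lambda^{+}$. Going down cannot be avoided. Taking $\Delta{=}\Gamma$ would need something like ${\square}(D{\rightharpoonup}E){\rightharpoonup}({\lozenge}D{\rightharpoonup}{\lozenge}E)$, and this is not valid for P\v{r}enosil's $\lozenge$: the witness $t$ lies \emph{below} $s$, while ${\square}$ at $s$ only controls worlds \emph{above} $s$.

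\textbf{Stage 1 (choosing $\Delta$).} Since ${\lozenge}D{\in}\Gamma^{+}$ and the tip is coherent, ${\lozenge}D{\not\in}\Gamma^{-}$. By the Lindenbaum Lemma for ideals, I will take a prime ideal $\Delta^{-}{\supseteq}\Gamma^{-}$ with ${\lozenge}D{\not\in}\Delta^{-}$ that is maximal among ideals omitting ${\lozenge}D$, and set $\Delta^{+}{=}\Fo{\setminus}\Delta^{-}$. I will check that $\Delta^{+}$ is a filter containing $\L^{+}$, using $(\Rule2^{\pm})$ and the dual properties stated earlier, so that $(\Delta^{+},\Delta^{-})$ is exhaustive. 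By Proposition~\ref{Proposition:for:all:exhaustive:tips:if:then:if:then}, $\Delta^{+}{\subseteq}\Gamma^{+}$, and by construction ${\lozenge}D{\in}\Delta^{+}$. Maximality yields a key fact: for every $F{\not\in}\Delta^{-}$ we have ${\lozenge}D{\in}\Delta^{-}{-}F$, that is, $F{\leftharpoondown}{\lozenge}D{\in}\Delta^{-}$. For $B^{+},B^{-}$ I will first try $B^{+}{=}{\top}$ and $B^{-}{=}{\bot}$. Then balancedness of the first clip is Proposition~\ref{Proposition:for:all:exhaustive:tips:is:balanced}. The side conditions ${\bot}{\leftharpoondown}{\top}{\in}\Lambda^{+}$ and ${\top}{\rightharpoonup}{\bot}{\in}\Lambda^{-}$ should follow from $(\Formula7^{\pm})$ and related theorems, since $\rceil{\lozenge}{\bot}$ and $\lfloor{\square}{\top}$ pass through ${\square}\Delta^{+}$ and ${\lozenge}\Delta^{-}$. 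If that fails, I will instead take $B^{+},B^{-}$ to be the formulas witnessing the maximality of $\Delta^{-}$, so that the balancedness axioms $(\Formula20^{\pm})$ can absorb the extra information.

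\textbf{Stage 2 (choosing $\Lambda$).} I will show that the tip $({\square}\Delta^{+}{+}D,\,{\lozenge}\Delta^{-})$ is coherent; it is a tip by Propositions~\ref{Proposition:about:filters:and:square:operation}, \ref{Proposition:48bis:Gamma:plus:Sigma:is:a:filter} and~\ref{Proposition:about:ideals:and:lozenge:operation}. Suppose instead that some $E$ has ${\lozenge}E{\in}\Delta^{-}$ and $D{\rightharpoonup}E{\in}{\square}\Delta^{+}$, i.e.\ ${\square}(D{\rightharpoonup}E){\in}\Delta^{+}$. I will derive ${\lozenge}D{\in}\Delta^{-}$, a contradiction, as follows. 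Since ${\square}(D{\rightharpoonup}E){\not\in}\Delta^{-}$, the key fact gives ${\square}(D{\rightharpoonup}E){\leftharpoondown}{\lozenge}D{\in}\Delta^{-}$. Rewritten through $(\Rule2^{+})$, this is the ideal-side counterpart of ``${\lozenge}D{\rightharpoonup}B{\vee}{\square}(D{\rightharpoonup}E)$'' with $B$ absorbing $\Delta^{-}$. Rule $(\Rule8^{+})$ and its dual $(\Rule8^{-})$, with $(\Rule3^{+})$ to replace ${\lozenge}(D{\wedge}(D{\rightharpoonup}E))$ by ${\lozenge}E$, then give ${\lozenge}D{\leftharpoondown}{\lozenge}E$-type membership in $\Delta^{-}$. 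Together with ${\lozenge}E{\in}\Delta^{-}$ and $(\Rule1^{-})$, this puts ${\lozenge}D{\in}\Delta^{-}$. Coherence established, the Lindenbaum Lemma for tips gives an exhaustive $(\Lambda^{+},\Lambda^{-})$ extending $({\square}\Delta^{+}{+}D,{\lozenge}\Delta^{-})$. I take $C^{+}{=}{\top}$ and $C^{-}{=}{\bot}$, so balancedness again comes from Proposition~\ref{Proposition:for:all:exhaustive:tips:is:balanced}.

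\textbf{Expected obstacle.} The hard step is the coherence argument in Stage 2. Rule $(\Rule8^{+})$ is stated for theorems (elements of $\L$), not for members of an arbitrary prime ideal. So I must package the relevant finitely many members of $\Delta^{-}$ into a single formula, via the $\Delta^{-}{-}F$ description of Proposition~\ref{Proposition:about:ideals:formulas:and:operation:moins:for:a:single:formula}, and obtain a derivation in $\L$ to which $(\Rule8^{-})$ applies. If this packaging does not go through with $B^{\pm}{=}{\top},{\bot}$, that is exactly where nontrivial $B^{+},B^{-}$ and the balancedness of the original clip will be needed.
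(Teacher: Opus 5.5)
\textbf{Gap in Stage 2.} Your route is viable: first go down to $\Delta$ via an ideal maximal among those containing $\Gamma^{-}$ and omitting ${\lozenge}D$, then go sideways via the tip $({\square}\Delta^{+}{+}D,{\lozenge}\Delta^{-})$. Stage~1 and the final Lindenbaum step are fine. The gap is the coherence of that tip, which is the whole content of the proposition. Your sketch applies $(\Rule2^{+})$ and $(\Rule8^{\pm})$ to members of $\Delta^{-}$, which is not legitimate, since these rules only act on $\L$. You then leave the step open with fallback plans: packaging members of $\Delta^{-}$, nontrivial $B^{\pm}$, or balancedness of the given clip. None of these is needed. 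What is missing is that the co-implication already does the packaging at the level of theorems.

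Put $Q:={\square}(D{\rightharpoonup}E){\leftharpoondown}{\lozenge}D$. The formula ${\lozenge}D{\rightharpoonup}Q{\vee}{\square}(D{\rightharpoonup}E)$ is an instance of $B{\rightharpoonup}(A{\leftharpoondown}B){\vee}A$, which is in $\L^{+}$ by $(\Formula6^{+})$ (with $p,q,r:=A,A,B$) and $(\Formula15^{+})$. Hence $(\Rule8^{+})$ gives ${\lozenge}D{\rightharpoonup}Q{\vee}{\lozenge}(D{\wedge}(D{\rightharpoonup}E)){\in}\L^{+}$. Applying $(\Rule3^{+})$ to $D{\wedge}(D{\rightharpoonup}E){\rightharpoonup}E$ then yields ${\lozenge}D{\rightharpoonup}Q{\vee}{\lozenge}E{\in}\L^{+}$.

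Only now use $\Delta$. Since ${\lozenge}D{\in}\Delta^{+}$ and $\Delta^{+}$ is a prime filter, either $Q{\in}\Delta^{+}$ or ${\lozenge}E{\in}\Delta^{+}$. The first contradicts your key fact $Q{\in}\Delta^{-}$, which applies because ${\square}(D{\rightharpoonup}E){\in}\Delta^{+}$. The second contradicts ${\lozenge}E{\in}\Delta^{-}$. So $B^{+}{=}{\top}$ and $B^{-}{=}{\bot}$ suffice, and $(\Rule8^{-})$ is never used.

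The side conditions ${\bot}{\leftharpoondown}{\top}{\in}\Lambda^{+}$ and ${\top}{\rightharpoonup}{\bot}{\in}\Lambda^{-}$ have nothing to do with $(\Formula7^{\pm})$. They follow from the properness and exhaustiveness of $(\Lambda^{+},\Lambda^{-})$: ${\bot}{\in}\Lambda^{-}$, ${\top}{\not\in}\Lambda^{-}$, ${\top}{\in}\Lambda^{+}$ and ${\bot}{\not\in}\Lambda^{+}$.

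\textbf{Comparison with the paper.} Once patched, your proof is a genuinely different argument from the paper's. The paper goes sideways first. It takes $\Lambda$ maximal among filters containing $D$ with ${\lozenge}E{\in}\Gamma^{+}$ for all $E{\in}\Lambda$. It then takes $\Delta$ maximal among filters containing every ${\lozenge}I$ ($I{\in}\Lambda$) such that $G{\vee}{\square}H{\in}\Delta$ implies $G{\in}\Gamma^{+}$ or $H{\in}\Lambda$. There, $(\Rule8^{+})$ is applied to a theorem ${\lozenge}E_{1}{\wedge}\ldots{\wedge}{\lozenge}E_{n}{\rightharpoonup}G{\vee}{\square}H$ coming from the generators of $\L^{+}{+}\{{\lozenge}E:\ E{\in}\Lambda\}$, after $(\Formula16^{+})$--$(\Formula19^{+})$ collapse the conjunction under a single ${\lozenge}$. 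The contradiction there comes from the maximality of $\Lambda$.

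You exploit instead the maximality of $\Delta^{-}$, that is, the minimality of $\Delta$, internalized through ${\leftharpoondown}$. This is shorter, but it depends essentially on co-implication. The paper's argument never uses ${\leftharpoondown}$ and would go through unchanged in the ${\leftharpoondown}$-free fragment.
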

\begin{proof}\footnote{The reader is invited to see where is the only use of the inference rule $(\Rule8^{+})$ in this proof.}
Suppose ${\lozenge}D{\in}\Gamma^{+}$.
Let ${\mathcal S}$ be the set of all filters $\Lambda$ such that for all $E{\in}\Fo$, if $E{\in}\Lambda$ then ${\lozenge}E{\in}\Gamma^{+}$ and $D{\in}\Lambda$.
\begin{claim}
$\L^{+}{+}D$ is in ${\mathcal S}$.
\end{claim}
\begin{proof}
For the sake of the contradiction, suppose $\L^{+}{+}D$ is not in ${\mathcal S}$.
Hence, there exists $E{\in}\Fo$ such that $E{\in}\L^{+}{+}D$ and ${\lozenge}E{\not\in}\Gamma^{+}$.
Thus, $D{\rightharpoonup}E{\in}\L^{+}$.
Consequently, ${\lozenge}D{\rightharpoonup}{\lozenge}E{\in}\L^{+}$ using $(\Rule3^{+})$.
Since ${\lozenge}D{\in}\Gamma^{+}$, therefore ${\lozenge}E{\in}\Gamma^{+}$: a contradiction.
\medskip
\end{proof}
Moreover, for all nonempty chains $(\Lambda_{i})_{i{\in}I}$ in ${\mathcal S}$, $\bigcup\{\Lambda_{i}\ :\ i{\in}I\}$ is in ${\mathcal S}$.
Hence, by Kuratowski-Zorn Lemma, ${\mathcal S}$ possesses a maximal element $\Lambda$.
It is a routine task to show that $\Lambda$ is a prime filter such that for all $E{\in}\Fo$, if $E{\in}\Lambda$ then ${\lozenge}E{\in}\Gamma^{+}$ and $D{\in}\Lambda$.
Let ${\mathcal T}$ be the set of all filters $\Delta$ such that for all $G,H{\in}\Fo$, if $G{\vee}{\square}H{\in}\Delta$ then either $G{\in}\Gamma^{+}$, or $H{\in}\Lambda$ and for all $I{\in}\Fo$, if $I{\in}\Lambda$ then ${\lozenge}I{\in}\Delta$.
\begin{claim}
$\L^{+}{+}\{{\lozenge}E:\ E{\in}\Lambda\}$ is in ${\mathcal T}$.
\end{claim}
\begin{proof}
%
%
For the sake of the contradiction, suppose $\L^{+}{+}\{{\lozenge}E:\ E{\in}\Lambda\}$ is not in ${\mathcal T}$.
Hence, either there exists $G,H{\in}\Fo$ such that $G{\vee}{\square}H{\in}\L^{+}{+}\{{\lozenge}E:\ E{\in}\Lambda\}$, $G{\not\in}\Gamma^{+}$ and $H{\not\in}\Lambda$, or there exists $I{\in}\Fo$ such that $I{\in}\Lambda$ and ${\lozenge}I{\not\in}\L^{+}{+}\{{\lozenge}E:\ E{\in}\Lambda\}$.
\\
\\
In the former case, there exists $n{\in}\N$ and there exists $E_{1},\ldots,E_{n}{\in}\Lambda$ such that ${\lozenge}E_{1}{\wedge}\ldots
$\linebreak$
{\wedge}{\lozenge}E_{n}{\rightharpoonup}G{\vee}{\square}H{\in}\L^{+}$.
Since $H{\not\in}\Lambda$, therefore by the maximality of $\Lambda$ in ${\mathcal S}$, there exists $E^{\prime}{\in}\Fo$ such that $E^{\prime}{\in}\Lambda{+}H$ and ${\lozenge}E^{\prime}{\not\in}\Gamma^{+}$.
Thus, $H{\rightharpoonup}E^{\prime}{\in}\Lambda$.
Since $(E_{1}{\rightharpoonup}\ldots(E_{n}{\rightharpoonup}
$\linebreak$
((H{\rightharpoonup}E^{\prime}){\rightharpoonup}E_{1}{\wedge}\ldots{\wedge}E_{n}{\wedge}(H{\rightharpoonup}E^{\prime})))\ldots){\in}\L^{+}$ using $(\Formula18^{+})$ and $E_{1},\ldots,E_{n}{\in}\Lambda$,
\linebreak
therefore $E_{1}{\wedge}\ldots{\wedge}E_{n}{\wedge}(H{\rightharpoonup}E^{\prime}){\in}\Lambda$.
Consequently, ${\lozenge}(E_{1}{\wedge}\ldots{\wedge}E_{n}{\wedge}(H{\rightharpoonup}E^{\prime})){\in}
$\linebreak$
\Gamma^{+}$.
Since $({\lozenge}(E_{1}{\wedge}\ldots{\wedge}E_{n}{\wedge}(H{\rightharpoonup}E^{\prime})){\rightharpoonup}{\lozenge}E_{1}{\wedge}\ldots{\wedge}{\lozenge}E_{n}){\rightharpoonup}(({\lozenge}E_{1}{\wedge}\ldots{\wedge}{\lozenge}E_{n}{\rightharpoonup}G
$\linebreak$
{\vee}{\square}H){\rightharpoonup}({\lozenge}(E_{1}{\wedge}\ldots{\wedge}E_{n}{\wedge}(H{\rightharpoonup}E^{\prime})){\rightharpoonup}G{\vee}{\square}H)){\in}\L^{+}$ using $(\Formula16^{+})$, ${\lozenge}(E_{1}{\wedge}\ldots{\wedge}
$\linebreak$
E_{n}{\wedge}(H{\rightharpoonup}E^{\prime})){\rightharpoonup}{\lozenge}E_{1}{\wedge}\ldots{\wedge}{\lozenge}E_{n}{\in}\L^{+}$ using $(\Formula17^{+})$ and ${\lozenge}E_{1}{\wedge}\ldots{\wedge}{\lozenge}E_{n}{\rightharpoonup}G{\vee}{\square}H
$\linebreak$
{\in}\L^{+}$, therefore ${\lozenge}(E_{1}{\wedge}\ldots{\wedge}E_{n}{\wedge}(H{\rightharpoonup}E^{\prime})){\rightharpoonup}G{\vee}{\square}H{\in}\L^{+}$ using $(\Rule1^{+})$.
Hence,
\linebreak$
{\lozenge}(E_{1}{\wedge}\ldots{\wedge}E_{n}{\wedge}(H{\rightharpoonup}E^{\prime})){\rightharpoonup}G{\vee}{\lozenge}(E_{1}{\wedge}\ldots{\wedge}E_{n}{\wedge}(H{\rightharpoonup}E^{\prime}){\wedge}H){\in}\L^{+}$ using $(\Rule8^{+})$.
Since ${\lozenge}(E_{1}{\wedge}\ldots{\wedge}E_{n}{\wedge}(H{\rightharpoonup}E^{\prime})){\in}\Gamma^{+}$, therefore $G{\vee}{\lozenge}(E_{1}{\wedge}\ldots{\wedge}E_{n}{\wedge}(H{\rightharpoonup}E^{\prime}){\wedge}H)
$\linebreak$
{\in}\Gamma^{+}$.
Thus, either $G{\in}\Gamma^{+}$, or ${\lozenge}(E_{1}{\wedge}\ldots{\wedge}E_{n}{\wedge}(H{\rightharpoonup}E^{\prime}){\wedge}H){\in}\Gamma^{+}$.
Since $G{\not\in}\Gamma^{+}$, therefore ${\lozenge}(E_{1}{\wedge}\ldots{\wedge}E_{n}{\wedge}(H{\rightharpoonup}E^{\prime}){\wedge}H){\in}\Gamma^{+}$.
Since $E_{1}{\wedge}\ldots{\wedge}E_{n}{\wedge}(H{\rightharpoonup}E^{\prime}){\wedge}H{\rightharpoonup}E^{\prime}{\in}
$\linebreak$
\L^{+}$ using $(\Formula19^{+})$, therefore ${\lozenge}(E_{1}{\wedge}\ldots{\wedge}E_{n}{\wedge}(H{\rightharpoonup}E^{\prime}){\wedge}H){\rightharpoonup}{\lozenge}E^{\prime}{\in}\L^{+}$ using $(\Rule3^{+})$.
Since ${\lozenge}(E_{1}{\wedge}\ldots{\wedge}E_{n}{\wedge}(H{\rightharpoonup}E^{\prime}){\wedge}H){\in}\Gamma^{+}$, therefore ${\lozenge}E^{\prime}{\in}\Gamma^{+}$ a contradiction.
\\
\\
In the latter case, since ${\lozenge}I{\rightharpoonup}{\lozenge}I{\in}\L^{+}$ using $(\Formula15^{+})$, therefore ${\lozenge}I{\in}\L^{+}{+}\{{\lozenge}E:\ E{\in}\Lambda\}$: a contradiction.
\medskip
\end{proof}
Moreover, for all nonempty chains $(\Delta_{i})_{i{\in}I}$ in ${\mathcal T}$, $\bigcup\{\Delta_{i}\ :\ i{\in}I\}$ is in ${\mathcal T}$.
Hence, by Kuratowski-Zorn Lemma, ${\mathcal T}$ possesses a maximal element $\Delta$.
It is a routine task to show that $\Delta$ is a prime filter such that for all $G,H{\in}\Fo$, if $G{\vee}{\square}H{\in}\Delta$ then either $G{\in}\Gamma^{+}$, or $H{\in}\Lambda$ and for all $I{\in}\Fo$, if $I{\in}\Lambda$ then ${\lozenge}I{\in}\Delta$.
And in the end, choose $\Delta^{+}{=}\Delta$, $\Delta^{-}{=}\Fo{\setminus}\Delta$, $B^{+}{=}{\top}$, $B^{-}{=}{\bot}$, $\Lambda^{+}{=}\Lambda$, $\Lambda^{-}{=}\Fo{\setminus}\Lambda$, $C^{+}{=}{\top}$ and $C^{-}{=}{\bot}$.
Thus, $\Delta^{+}{\subseteq}
$\linebreak$
\Gamma^{+}$, $\Gamma^{-}{\subseteq}\Delta^{-}$, ${\square}\Delta^{+}{\subseteq}\Lambda^{+}$, ${\lozenge}\Delta^{-}{\subseteq}\Lambda^{-}$, $B^{-}{\leftharpoondown}B^{+}{\in}\Lambda^{+}$, $B^{+}{\rightharpoonup}B^{-}{\in}\Lambda^{-}$ and $D{\in}\Lambda^{+}$.
\medskip
\end{proof}
%
%
%
%
%
%
%
%
%
%
%
%
%
%
\begin{proposition}[Existence Property for $\square$]\label{Proposition:let:be:a:balanced:clip:let:be:a:formula:if:then:for:all:balanced:clips:if:and:then:there:exists:a:balanced:clip:such:that:and:2:square:Gamma:moins}
Let $((\Gamma^{+},\Gamma^{-}),A^{+},A^{-})$ be a balanced clip.
Let $D$ be a formula.
If ${\square}D{\in}\Gamma^{-}$ then there exists a balanced clip $((\Delta^{+},\Delta^{-}),B^{+},
$\linebreak$
B^{-})$ such that $\Gamma^{+}{\subseteq}\Delta^{+}$ and $\Delta^{-}{\subseteq}\Gamma^{-}$ and there exists a balanced clip $((\Lambda^{+},\Lambda^{-}),C^{+},
$\linebreak$
C^{-})$ such that ${\square}\Delta^{+}{\subseteq}\Lambda^{+}$, ${\lozenge}\Delta^{-}{\subseteq}\Lambda^{-}$, $B^{-}{\leftharpoondown}B^{+}{\in}\Lambda^{+}$, $B^{+}{\rightharpoonup}B^{-}{\in}\Lambda^{-}$ and $D{\in}\Lambda^{-}$.
\end{proposition}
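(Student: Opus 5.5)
This is the dual of the Existence Property for $\lozenge$, and I would carry out the dual argument directly in terms of ideals rather than appeal to $\mathtt{tr}$. The reason is that the clip statement is not literally closed under $\mathtt{tr}$: the required conditions still speak of ${\square}\Delta^{+}{\subseteq}\Lambda^{+}$ and ${\lozenge}\Delta^{-}{\subseteq}\Lambda^{-}$. So I would mirror the two Zorn constructions of the previous proof, exchanging filters with ideals, $\rightharpoonup$ with $\leftharpoondown$, $\wedge$ with $\vee$, and $\lozenge$ with $\square$. Throughout, I would use the dual axioms and rules $(\Formula15^{-})$--$(\Formula19^{-})$, $(\Rule1^{-})$, $(\Rule3^{-})$ and $(\Rule8^{-})$.

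\textbf{Step 1: the ideal $\Lambda$.} Suppose ${\square}D{\in}\Gamma^{-}$. Let ${\mathcal S}$ be the set of all ideals $\Lambda$ such that $D{\in}\Lambda$ and, for all $E{\in}\Lambda$, ${\square}E{\in}\Gamma^{-}$. First I would show $\L^{-}{-}D{\in}{\mathcal S}$. If $E{\in}\L^{-}{-}D$, then $D{\leftharpoondown}E{\in}\L^{-}$, hence ${\square}D{\leftharpoondown}{\square}E{\in}\L^{-}$ by $(\Rule3^{-})$. Since $\Gamma^{-}$ is an ideal containing ${\square}D$, this gives ${\square}E{\in}\Gamma^{-}$. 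The class ${\mathcal S}$ is closed under unions of chains, so Zorn gives a maximal $\Lambda$. Maximality makes $\Lambda$ a prime ideal in the usual way. It is proper because ${\square}{\top}{\not\in}\Gamma^{-}$: this follows from $(\Formula7^{-})$ and coherence, using that ${\square}{\top}$ lies in every filter via $(\Rule4^{+})$ applied to $(\Formula5^{+})$.

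\textbf{Step 2: the ideal $\Delta$.} Let ${\mathcal T}$ be the set of all ideals $\Delta$ satisfying two conditions:
\begin{itemize}
\item whenever $G{\wedge}{\lozenge}H{\in}\Delta$, either $G{\in}\Gamma^{-}$ or $H{\in}\Lambda$;
\item for all $I{\in}\Lambda$, ${\square}I{\in}\Delta$.
\end{itemize}
I would show that $\L^{-}{-}\{{\square}E:\ E{\in}\Lambda\}{\in}{\mathcal T}$; this is the main obstacle. Suppose $G{\wedge}{\lozenge}H$ lies in that ideal while $G{\not\in}\Gamma^{-}$ and $H{\not\in}\Lambda$. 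Then ${\square}E_{1}{\vee}\ldots{\vee}{\square}E_{n}{\leftharpoondown}G{\wedge}{\lozenge}H{\in}\L^{-}$ for some $E_{1},\ldots,E_{n}{\in}\Lambda$. Maximality of $\Lambda$ yields $E^{\prime}$ with $H{\leftharpoondown}E^{\prime}{\in}\Lambda$ and ${\square}E^{\prime}{\not\in}\Gamma^{-}$. Put $K{=}E_{1}{\vee}\ldots{\vee}E_{n}{\vee}(H{\leftharpoondown}E^{\prime})$. Then $K{\in}\Lambda$ by $(\Formula18^{-})$, so ${\square}K{\in}\Gamma^{-}$. Next, $(\Formula17^{-})$ and $(\Formula16^{-})$ chain to give ${\square}K{\leftharpoondown}G{\wedge}{\lozenge}H{\in}\L^{-}$. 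Then $(\Rule8^{-})$ gives ${\square}K{\leftharpoondown}G{\wedge}{\square}(K{\vee}H){\in}\L^{-}$, hence $G{\wedge}{\square}(K{\vee}H){\in}\Gamma^{-}$. Since $\Gamma^{-}$ is prime and $G{\not\in}\Gamma^{-}$, we get ${\square}(K{\vee}H){\in}\Gamma^{-}$. Finally, $(\Formula19^{-})$ and $(\Rule3^{-})$ give ${\square}E^{\prime}{\in}\Gamma^{-}$, a contradiction. The second clause of ${\mathcal T}$ for this ideal is immediate by $(\Formula15^{-})$. Chains in ${\mathcal T}$ are closed under union, so Zorn gives a maximal $\Delta$, which I would check is a prime ideal.

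\textbf{Step 3: the clips.} Set $\Delta^{-}{=}\Delta$, $\Delta^{+}{=}\Fo{\setminus}\Delta$, $\Lambda^{-}{=}\Lambda$, $\Lambda^{+}{=}\Fo{\setminus}\Lambda$ and $B^{+}{=}C^{+}{=}{\top}$, $B^{-}{=}C^{-}{=}{\bot}$. The following checks remain.
\begin{itemize}
\item $\Delta^{+}$ is a filter and $\Lambda^{+}$ is a filter (complements of prime ideals, using the $(\Rule2)$ rules), so both tips are exhaustive; the clips are then balanced by Proposition~\ref{Proposition:for:all:exhaustive:tips:is:balanced}.
\item $\Delta^{-}{\subseteq}\Gamma^{-}$: for $G{\in}\Delta$, take $G{\wedge}{\lozenge}{\bot}$. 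It lies in $\Delta$, because $\Delta$ is an ideal containing $G$, using $(\Formula13^{-})$. Since ${\bot}{\not\in}\Lambda$ fails (indeed ${\bot}{\in}\L^{-}{\subseteq}\Lambda$), the first clause of ${\mathcal T}$ must instead be satisfied differently; I would instead use ${\lozenge}H$ with $H$ chosen outside $\Lambda$, namely $H{=}{\top}$, giving $G{\in}\Gamma^{-}$.
\item ${\lozenge}\Delta^{-}{\subseteq}\Lambda^{-}$: if ${\lozenge}H{\in}\Delta$, apply the first clause with $G{=}{\top}$, which is not in $\Gamma^{-}$.
\item ${\square}\Delta^{+}{\subseteq}\Lambda^{+}$: this is the contrapositive of the second clause.
\item ${\bot}{\leftharpoondown}{\top}{\in}\Lambda^{+}$ and ${\top}{\rightharpoonup}{\bot}{\in}\Lambda^{-}$: these follow from properness.
\item $D{\in}\Lambda^{-}$ holds by construction.
\end{itemize}
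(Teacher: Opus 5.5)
Your proposal is correct and is exactly the dual of the paper's proof of the Existence Property for ${\lozenge}$, which is what the paper's one-line ``Dual to the proof'' refers to: the same two Zorn constructions (now on ideals), the same use of $(\Rule8^{-})$ together with $(\Formula16^{-})$--$(\Formula19^{-})$, and the same choice of clips with ${\top},{\bot}$. Just delete the false start with ${\lozenge}{\bot}$ in the check of $\Delta^{-}{\subseteq}\Gamma^{-}$, since the argument via $G{\wedge}{\lozenge}{\top}{\in}\Delta$ and ${\top}{\not\in}\Lambda$ is the right one. You may also note that $\Gamma^{+}{\subseteq}\Delta^{+}$ then follows by taking complements.
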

\begin{proof}
Dual to the proof of Proposition~\ref{Proposition:let:be:a:balanced:clip:let:be:a:formula:if:then:for:all:balanced:clips:if:and:then:there:exists:a:balanced:clip:such:that:and:1:lozenge:Gamma:plus}.
\medskip
\end{proof}
\begin{proposition}[Positive Existence Property for $\blacklozenge$]\label{Proposition:let:be:a:balanced:clip:let:be:a:formula:if:then:for:all:balanced:clips:if:and:then:there:exists:a:balanced:clip:such:that:and:1}
Let $((\Gamma^{+},\Gamma^{-}),A^{+},A^{-})$ be a
\linebreak
balanced clip.
Let $D$ be a formula.
If ${\blacklozenge}D{\in}\Gamma^{+}$ then for all balanced clips $((\Delta^{+},\Delta^{-}),
$\linebreak$
B^{+},B^{-})$, if $\Gamma^{+}{\subseteq}\Delta^{+}$ and $\Delta^{-}{\subseteq}\Gamma^{-}$ then there exists a balanced clip $((\Lambda^{+},\Lambda^{-}),C^{+},
$\linebreak$
C^{-})$ such that ${\square}\Delta^{+}{\subseteq}\Lambda^{+}$, ${\lozenge}\Delta^{-}{\subseteq}\Lambda^{-}$, $B^{-}{\leftharpoondown}B^{+}{\in}\Lambda^{+}$, $B^{+}{\rightharpoonup}B^{-}{\in}\Lambda^{-}$ and $D{\in}\Lambda^{+}$.
\end{proposition}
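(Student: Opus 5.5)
Suppose ${\blacklozenge}D{\in}\Gamma^{+}$, and fix a balanced clip $((\Delta^{+},\Delta^{-}),B^{+},B^{-})$ with $\Gamma^{+}{\subseteq}\Delta^{+}$ and $\Delta^{-}{\subseteq}\Gamma^{-}$. Then ${\blacklozenge}D{\in}\Delta^{+}$. The plan is to build a prime filter $\Lambda$ and take $\Lambda^{+}{=}\Lambda$, $\Lambda^{-}{=}\Fo{\setminus}\Lambda$, $C^{+}{=}{\top}$, $C^{-}{=}{\bot}$. Proposition~\ref{Proposition:for:all:exhaustive:tips:is:balanced} makes this clip balanced. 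Since $(\Delta^{+},\Delta^{-})$ is exhaustive, the requirement ${\lozenge}\Delta^{-}{\subseteq}\Lambda^{-}$ says that for every $E$, if $E{\in}\Lambda$ then ${\lozenge}E{\in}\Delta^{+}$. So the target is a prime filter $\Lambda$ with:
\begin{enumerate}[(i)]
\item ${\square}\Delta^{+}{\subseteq}\Lambda$;
\item $B^{-}{\leftharpoondown}B^{+}{\in}\Lambda$ and $D{\in}\Lambda$;
\item $B^{+}{\rightharpoonup}B^{-}{\not\in}\Lambda$;
\item ${\lozenge}E{\in}\Delta^{+}$ for all $E{\in}\Lambda$.
\end{enumerate}
Writing $X{=}B^{-}{\leftharpoondown}B^{+}$ and $Y{=}B^{+}{\rightharpoonup}B^{-}$, the formula $X{\rightharpoonup}Y{\vee}C$ occurring in the balancedness condition is exactly the shape to exploit.

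I will use a Zorn argument over the set ${\mathcal S}$ of filters $\Lambda$ that contain ${\square}\Delta^{+}$, $X$ and $D$, and satisfy the following: for all $E$ and all finite disjunctions, if $E{\rightharpoonup}Y{\vee}F{\in}\Lambda$ then ${\lozenge}F{\in}\Delta^{+}$, where $E$ ranges over formulas built from what is added. Cleaner: require that for every $F$, if $Y{\vee}F{\in}\Lambda$ then ${\lozenge}F{\in}\Delta^{+}$. Taking $F{=}{\bot}$ then excludes $Y$ from $\Lambda$, using $(\Formula7^{+})$ (${\lozenge}{\bot}{\not\in}\Delta^{+}$) and $(\Formula14^{+})$-style reasoning. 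Taking $F$ with $E{\rightharpoonup}F$ trivial recovers ${\lozenge}E{\in}\Delta^{+}$ for $E{\in}\Lambda$, since $E{\in}\Lambda$ gives $Y{\vee}E{\in}\Lambda$ by $(\Formula13^{+})$.

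Chains are closed under union, so a maximal element exists. Primeness follows in the standard way: if $G{\vee}H{\in}\Lambda$ with neither in $\Lambda$, maximality yields $F_{1},F_{2}$ such that $Y{\vee}F_{1}{\in}\Lambda{+}G$, $Y{\vee}F_{2}{\in}\Lambda{+}H$, ${\lozenge}F_{1}{\not\in}\Delta^{+}$ and ${\lozenge}F_{2}{\not\in}\Delta^{+}$. Then $Y{\vee}(F_{1}{\vee}F_{2}){\in}\Lambda$ by $(\Formula4^{+})$, so ${\lozenge}(F_{1}{\vee}F_{2}){\in}\Delta^{+}$. Axiom $(\Formula24^{+})$ and primeness of $\Delta^{+}$ then give a contradiction.

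The main obstacle is showing ${\mathcal S}$ nonempty, i.e.\ that the filter $\L^{+}{+}({\square}\Delta^{+}{\cup}\{X,D\})$ satisfies the condition. Suppose $Y{\vee}F$ is in it. Using that ${\square}\Delta^{+}$ is a filter closed under $\wedge$ (Proposition~\ref{Proposition:about:filters:and:square:operation}), there is $P$ with ${\square}P{\in}\Delta^{+}$ and $P{\wedge}X{\wedge}D{\rightharpoonup}Y{\vee}F{\in}\L^{+}$. Hence $P{\rightharpoonup}(D{\rightharpoonup}(X{\rightharpoonup}Y{\vee}F)){\in}\L^{+}$. Applying $(\Rule4^{+})$ and $(\Formula8^{+})$ gives ${\square}(D{\rightharpoonup}(X{\rightharpoonup}Y{\vee}F)){\in}\Delta^{+}$. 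Together with ${\blacklozenge}D{\in}\Delta^{+}$, axiom $(\Formula10^{+})$ yields ${\blacklozenge}(X{\rightharpoonup}Y{\vee}F){\in}\Delta^{+}$. Balancedness of $((\Delta^{+},\Delta^{-}),B^{+},B^{-})$ then gives ${\lozenge}F{\in}\Delta^{+}$, as required.

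The delicate points are the exact IPL manipulations needed to move $P$ and $D$ under the modalities. Identifying the invariant that makes the maximal filter prime is the other delicate point: it must be the ``$Y{\vee}F$'' form, so that disjunction-merging via $(\Formula24^{+})$ works. If this invariant fails to be chain-closed or prime-compatible, I would instead use $(\Formula11^{+})$ to split ${\blacklozenge}$ of a disjunction into ${\lozenge}$ and ${\blacklozenge}$ parts.
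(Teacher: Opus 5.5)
Your proposal is correct and follows essentially the paper's proof. It uses the same family ${\mathcal S}$ of filters containing ${\square}\Delta^{+}$, $B^{-}{\leftharpoondown}B^{+}$ and $D$, with the invariant ``$(B^{+}{\rightharpoonup}B^{-}){\vee}F{\in}\Lambda$ implies ${\lozenge}F{\in}\Delta^{+}$'', nonemptiness via $(\Formula10^{+})$ plus balancedness, Kuratowski--Zorn, and the clip $((\Lambda,\Fo{\setminus}\Lambda),{\top},{\bot})$; your primeness argument via $(\Formula4^{+})$ and $(\Formula24^{+})$ and your derivation of the required memberships via $(\Formula7^{+})$ and $(\Formula13^{+})$ correctly fill in what the paper calls routine.
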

\begin{proof}
Suppose ${\blacklozenge}D{\in}\Gamma^{+}$.
Let $((\Delta^{+},\Delta^{-}),B^{+},B^{-})$ be a balanced clip such that $\Gamma^{+}{\subseteq}\Delta^{+}$ and $\Delta^{-}{\subseteq}\Gamma^{-}$.
Since ${\blacklozenge}D{\in}\Gamma^{+}$, therefore ${\blacklozenge}D{\in}\Delta^{+}$.
Let ${\mathcal S}$ be the set of all filters $\Lambda$ such that ${\square}\Delta^{+}{\subseteq}\Lambda$, for all $E{\in}\Fo$, if $(B^{+}{\rightharpoonup}B^{-}){\vee}E{\in}\Lambda$ then ${\lozenge}E{\in}\Delta^{+}$, $B^{-}{\leftharpoondown}B^{+}{\in}\Lambda$ and $D{\in}\Lambda$.
\begin{claim}
$({\square}\Delta^{+}{+}D){+}(B^{-}{\leftharpoondown}B^{+})$ is in ${\mathcal S}$.
\end{claim}
\begin{proof}
For the sake of the contradiction, suppose $({\square}\Delta^{+}{+}D){+}(B^{-}{\leftharpoondown}B^{+})$ is not in ${\mathcal S}$.
Hence, there exists $E{\in}\Fo$ such that $(B^{+}{\rightharpoonup}B^{-}){\vee}E{\in}({\square}\Delta^{+}{+}D){+}
(B^{-}{\leftharpoondown}B^{+})$ and ${\lozenge}E{\not\in}\Delta^{+}$.
Thus, ${\square}(D{\rightharpoonup}((B^{-}{\leftharpoondown}B^{+}){\rightharpoonup}(B^{+}{\rightharpoonup}B^{-}){\vee}E)){\in}\Delta^{+}$.
Since ${\square}(D{\rightharpoonup}((B^{-}{\leftharpoondown}
$\linebreak$
B^{+}){\rightharpoonup}(B^{+}{\rightharpoonup}B^{-}){\vee}E)){\rightharpoonup}({\blacklozenge}D{\rightharpoonup}{\blacklozenge}((B^{-}{\leftharpoondown}B^{+}){\rightharpoonup}(B^{+}{\rightharpoonup}B^{-}){\vee}E)){\in}\L^{+}$ using
\linebreak$
(\Formula10^{+})$ and ${\blacklozenge}D{\in}\Delta^{+}$, therefore ${\blacklozenge}((B^{-}{\leftharpoondown}B^{+}){\rightharpoonup}(B^{+}{\rightharpoonup}B^{-}){\vee}E){\in}\Delta^{+}$.
Consequent\-ly, $((\Delta^{+},\Delta^{-}),B^{+},B^{-})$ being a balanced clip, ${\lozenge}E{\in}\Delta^{+}$: a contradiction.
\medskip
\end{proof}
Moreover, for all nonempty chains $(\Lambda_{i})_{i{\in}I}$ in ${\mathcal S}$, $\bigcup\{\Lambda_{i}\ :\ i{\in}I\}$ is in ${\mathcal S}$.
Hence, by Kuratowski-Zorn Lemma, ${\mathcal S}$ possesses a maximal element $\Lambda$.
It is a routine task to show that $\Lambda$ is a prime filter such that ${\square}\Delta^{+}{\subseteq}\Lambda$, ${\lozenge}\Delta^{-}{\cap}\Lambda{=}\emptyset$, $B^{-}{\leftharpoondown}B^{+}{\in}\Lambda$, $B^{+}{\rightharpoonup}B^{-}
$\linebreak$
{\not\in}\Lambda$ and $D{\in}\Lambda$.
And in the end, choose $\Lambda^{+}{=}\Lambda$, $\Lambda^{-}{=}\Fo{\setminus}\Lambda$, $C^{+}{=}{\top}$ and $C^{-}{=}{\bot}$.
Thus, ${\square}\Delta^{+}{\subseteq}\Lambda^{+}$, ${\lozenge}\Delta^{-}{\subseteq}\Lambda^{-}$, $B^{-}{\leftharpoondown}B^{+}{\in}\Lambda^{+}$, $B^{+}{\rightharpoonup}B^{-}{\in}\Lambda^{-}$ and $D{\in}\Lambda^{+}$.
\medskip
\end{proof}
\begin{proposition}[Negative Existence Property for $\blacklozenge$]\label{Proposition:let:be:a:balanced:clip:let:be:a:formula:if:then:there:exists:a:balanced:clip:such:that:and:and:for:all:balanced:clips:if:and:then:1}
Let $((\Gamma^{+},\Gamma^{-}),A^{+},A^{-})$ be a
\linebreak
balanced clip.
Let $D$ be a formula.
If ${\blacklozenge}D{\in}\Gamma^{-}$ then there exists a balanced clip $((\Delta^{+},\Delta^{-}),B^{+},B^{-})$ such that $\Gamma^{+}{\subseteq}\Delta^{+}$ and $\Delta^{-}{\subseteq}\Gamma^{-}$ and for all balanced clips $((\Lambda^{+},\Lambda^{-}),C^{+},C^{-})$, if ${\square}\Delta^{+}{\subseteq}\Lambda^{+}$, ${\lozenge}\Delta^{-}{\subseteq}\Lambda^{-}$, $B^{-}{\leftharpoondown}B^{+}{\in}\Lambda^{+}$ and $B^{+}{\rightharpoonup}B^{-}{\in}\Lambda^{-}$ then $D{\in}\Lambda^{-}$.
\end{proposition}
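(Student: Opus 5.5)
The plan is to take $B^{+}{=}{\top}$ and $B^{-}{=}D$ and to let $\Delta$ be a maximal prime filter above $\Gamma^{+}$ that avoids ${\blacklozenge}D$. With this choice the universal condition on $\Lambda$ is almost immediate. The real work is checking that the resulting clip is balanced, and this is where $(\Formula11^{+})$, $(\Formula12^{+})$ and the derived formulas of Table~\ref{table:formulas:for:Proposition:for:all:formulas:and:and:1} get used.

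\textbf{Step 1: choosing $\Delta$.} Suppose ${\blacklozenge}D{\in}\Gamma^{-}$. By coherence, ${\blacklozenge}D{\not\in}\Gamma^{+}$. The Lindenbaum Lemma for filters (Proposition~\ref{Proposition:let:be:a:filter:and:be:a:formula:if:then:there:exists:a:prime:filter:such:that:and:for:all:filters:if:and:then}) gives a prime filter $\Delta{\supseteq}\Gamma^{+}$ with ${\blacklozenge}D{\not\in}\Delta$ that is maximal among filters avoiding ${\blacklozenge}D$. Put $\Delta^{+}{=}\Delta$ and $\Delta^{-}{=}\Fo{\setminus}\Delta$, so $\Delta^{-}{\subseteq}\Gamma^{-}$ because $(\Gamma^{+},\Gamma^{-})$ is exhaustive. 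Put $B^{+}{=}{\top}$ and $B^{-}{=}D$.

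The maximality yields the key consequence $(\dagger)$: for every $E$, if $E{\not\in}\Delta$ then $E{\rightharpoonup}{\blacklozenge}D{\in}\Delta$. Indeed, ${\blacklozenge}D{\in}\Delta{+}E$, and Proposition~\ref{Proposition:about:filters:formulas:and:operation:plus:for:a:single:formula} turns this into $E{\rightharpoonup}{\blacklozenge}D{\in}\Delta$.

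\textbf{Step 2: the universal condition.} Let $((\Lambda^{+},\Lambda^{-}),C^{+},C^{-})$ be a balanced clip with ${\top}{\rightharpoonup}D{\in}\Lambda^{-}$. If $D{\in}\Lambda^{+}$, then $(\Formula2^{+})$ gives ${\top}{\rightharpoonup}D{\in}\Lambda^{+}$, which contradicts coherence. Hence $D{\in}\Lambda^{-}$ by exhaustiveness. This step uses none of the other hypotheses on $\Lambda$.

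\textbf{Step 3: first balance condition.} Suppose ${\blacklozenge}((D{\leftharpoondown}{\top}){\rightharpoonup}({\top}{\rightharpoonup}D){\vee}C){\in}\Delta$.
\begin{enumerate}
\item The body is interderivable with $(D{\leftharpoondown}{\top}){\rightharpoonup}D{\vee}C$, because ${\top}{\rightharpoonup}D$ and $D$ are $\L^{+}$-equivalent. Using $(\Formula10^{+})$ with $(\Rule4^{+})$, we can replace it under ${\blacklozenge}$.
\item $(\Formula21^{+})$ with $p{:=}D$ and $q{:=}C$ then gives ${\blacklozenge}(D{\vee}C){\in}\Delta$.
\item Commuting the disjunction inside ${\blacklozenge}$ and applying $(\Formula11^{+})$ gives ${\lozenge}C{\curlyvee}{\blacklozenge}D{\in}\Delta$.
\item If ${\lozenge}C{\not\in}\Delta$, then $(\dagger)$ gives ${\lozenge}C{\rightharpoonup}{\blacklozenge}D{\in}\Delta$. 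Modus ponens with $({\lozenge}C{\rightharpoonup}{\blacklozenge}D){\rightharpoonup}{\blacklozenge}D$ then gives ${\blacklozenge}D{\in}\Delta$, a contradiction.
\end{enumerate}
So ${\lozenge}C{\in}\Delta$.

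\textbf{Step 4: second balance condition.} Contrapositively, assume ${\square}C{\in}\Delta$; we show ${\blacksquare}X{\in}\Delta$ for $X{=}({\top}{\rightharpoonup}D){\leftharpoondown}(D{\leftharpoondown}{\top}){\wedge}C$.
\begin{enumerate}
\item $(\Formula12^{+})$ with $p{:=}C$ and $q{:=}D$ gives ${\blacksquare}(D{\leftharpoondown}C){\curlyvee}{\blacklozenge}D{\in}\Delta$.
\item By $(\dagger)$, exactly as in Step 3, this forces ${\blacksquare}(D{\leftharpoondown}C){\in}\Delta$.
\item It remains to get ${\blacksquare}(D{\leftharpoondown}C){\rightharpoonup}{\blacksquare}X{\in}\L^{+}$. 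I would obtain it from the monotonicity rule $(\Rule5^{-})$ together with $(\Rule2^{+})$/$(\Rule2^{-})$. This reduces the task to showing that $X$ follows from $D{\leftharpoondown}C$ in the Heyting--Brouwer fragment, i.e.\ that $(-,X{\leftharpoondown}(D{\leftharpoondown}C))$ or equivalently $(+,(D{\leftharpoondown}C){\rightharpoonup}X)$ lies in $\L$.
\end{enumerate}

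Semantically, $(D{\leftharpoondown}C){\rightharpoonup}X$ is clearly valid: a witness $u$ for $D{\leftharpoondown}C$ refutes ${\top}{\rightharpoonup}D$ and satisfies $D{\leftharpoondown}{\top}$ and $C$. I expect the main obstacle to be deriving it syntactically from $(\Formula1^{\pm})$--$(\Formula6^{\pm})$ and the rules. I would first try to establish $C{\rightharpoonup}(D{\vee}(D{\leftharpoondown}C))$ and the residuation-style consequences of $(\Formula6^{+})$, $(\Formula6^{-})$ and $(\Rule2^{\pm})$. I would then build the implication stepwise, using $(\Formula13^{\pm})$--$(\Formula16^{\pm})$ and the derived ${\top}$-laws.
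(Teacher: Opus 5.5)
Your proposal is correct and is essentially the paper's proof. It uses the same maximal prime filter $\Delta{\supseteq}\Gamma^{+}$ avoiding ${\blacklozenge}D$, the same choice $B^{+}{=}{\top}$, $B^{-}{=}D$, and the same use of $(\Formula21^{+})$, $(\Formula11^{+})$, $(\Formula12^{+})$ and maximality; your Step~2 and your rewriting of the body before applying $(\Formula21^{+})$ even spell out details the paper leaves implicit.

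The step you flag as the main obstacle needs no new derivation. $(\Formula22^{-})$ in Table~\ref{table:formulas:for:Proposition:for:all:formulas:and:and:1}, with $p{:=}D$ and $q{:=}C$, is literally $(-,{\blacksquare}X{\leftharpoondown}{\blacksquare}(D{\leftharpoondown}C))$. This is exactly what the paper invokes at this point (via the ideal $\Fo{\setminus}\Delta$), and $(\Rule2^{+})$ turns it into the required ${\blacksquare}(D{\leftharpoondown}C){\rightharpoonup}{\blacksquare}X{\in}\L^{+}$.
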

\begin{proof}
Suppose ${\blacklozenge}D{\in}\Gamma^{-}$.
Let ${\mathcal S}$ be the set of all filters $\Delta$ such that $\Gamma^{+}{\subseteq}\Delta$ and ${\blacklozenge}D{\not\in}\Delta$.
\begin{claim}
$\Gamma^{+}$ is in ${\mathcal S}$.
\end{claim}
\begin{proof}
For the sake of the contradiction, suppose $\Gamma^{+}$ is not in ${\mathcal S}$.
Hence, ${\blacklozenge}D{\in}\Gamma^{+}$.
Thus, ${\blacklozenge}D{\not\in}\Gamma^{-}$: a contradiction.
\medskip
\end{proof}
Moreover, for all nonempty chains $(\Delta_{i})_{i{\in}I}$ in ${\mathcal S}$, $\bigcup\{\Delta_{i}\ :\ i{\in}I\}$ is in ${\mathcal S}$.
Hence, by Kuratowski-Zorn Lemma, ${\mathcal S}$ possesses a maximal element $\Delta$.
It is a routine task to show that $\Delta$ is a prime filter such that $\Gamma^{+}{\subseteq}\Delta$ and ${\blacklozenge}D{\not\in}\Delta$.
\begin{claim}
$((\Delta,\Fo{\setminus}\Delta),{\top},D)$ is balanced.
\end{claim}
\begin{proof}
For the sake of the contradiction, suppose $((\Delta,\Fo{\setminus}\Delta),{\top},D)$ is not balanced.
Hence, there exists $C{\in}\Fo$ such that either ${\blacklozenge}((D{\leftharpoondown}{\top}){\rightharpoonup}({\top}{\rightharpoonup}D){\vee}C){\in}\Delta$ and ${\lozenge}C{\not\in}\Delta$, or ${\blacksquare}(({\top}{\rightharpoonup}D){\leftharpoondown}(D{\leftharpoondown}{\top}){\wedge}C){\not\in}\Delta$ and ${\square}C{\in}\Delta$.
\\
\\
In the former case, since ${\blacklozenge}((D{\leftharpoondown}{\top}){\rightharpoonup}({\top}{\rightharpoonup}D){\vee}C){\rightharpoonup}{\blacklozenge}(C{\vee}D){\in}\L^{+}$ using $(\Formula21^{+})$, therefore ${\blacklozenge}(C{\vee}D){\in}\Delta$.
Since ${\blacklozenge}(C{\vee}D){\rightharpoonup}(({\lozenge}C{\rightharpoonup}{\blacklozenge}D){\rightharpoonup}{\blacklozenge}D){\in}\L^{+}$ using $(\Formula11^{+})$,
\linebreak
therefore $({\lozenge}C{\rightharpoonup}{\blacklozenge}D){\rightharpoonup}{\blacklozenge}D{\in}\Delta$.
Since ${\lozenge}C{\not\in}\Delta$, therefore by the maximality of $\Delta$ in ${\mathcal S}$, ${\blacklozenge}D{\in}\Delta{+}{\lozenge}C$.
Thus, ${\lozenge}C{\rightharpoonup}{\blacklozenge}D{\in}\Delta$.
Since $({\lozenge}C{\rightharpoonup}{\blacklozenge}D){\rightharpoonup}{\blacklozenge}D{\in}\Delta$, therefore ${\blacklozenge}D{\in}\Delta$: a contradiction.
\\
\\
In the latter case, since ${\blacksquare}(({\top}{\rightharpoonup}D){\leftharpoondown}(D{\leftharpoondown}{\top}){\wedge}C){\leftharpoondown}{\blacksquare}(D{\leftharpoondown}C){\in}\L^{-}$ using $(\Formula22^{-})$, therefore ${\blacksquare}(D{\leftharpoondown}C){\not\in}\Delta$.
Consequently, by the maximality of $\Delta$ in ${\mathcal S}$, ${\blacklozenge}D{\in}\Delta{+}{\blacksquare}(D{\leftharpoondown}
$\linebreak$
C)$.
Hence, ${\blacksquare}(D{\leftharpoondown}C){\rightharpoonup}{\blacklozenge}D{\in}\Delta$.
Since ${\square}C{\rightharpoonup}(({\blacksquare}(D{\leftharpoondown}C){\rightharpoonup}{\blacklozenge}D){\rightharpoonup}{\blacklozenge}D){\in}\L^{+}$ using $\Formula12^{+}$ and ${\square}C{\in}\Delta$, therefore ${\blacklozenge}D{\in}\Delta$: a contradiction.
\medskip
\end{proof}
And in the end, choose $\Delta^{+}{=}\Delta$, $\Delta^{-}{=}\Fo{\setminus}\Delta$, $B^{+}{=}{\top}$ and $B^{-}{=}D$.
Thus, $\Gamma^{+}{\subseteq}\Delta^{+}$ and $\Delta^{-}{\subseteq}\Gamma^{-}$ and for all balanced clips $((\Lambda^{+},\Lambda^{-}),C^{+},C^{-})$, if ${\square}\Delta^{+}{\subseteq}\Lambda^{+}$, ${\lozenge}\Delta^{-}{\subseteq}\Lambda^{-}$, $B^{-}{\leftharpoondown}B^{+}{\in}\Lambda^{+}$ and $B^{+}{\rightharpoonup}B^{-}{\in}\Lambda^{-}$ then $D{\in}\Lambda^{-}$.
\medskip
\end{proof}
\begin{proposition}[Positive Existence Property for $\blacksquare$]\label{Proposition:let:be:a:balanced:clip:let:be:a:formula:if:then:there:exists:a:balanced:clip:such:that:and:and:for:all:balanced:clips:if:and:then:2}
Let $((\Gamma^{+},\Gamma^{-}),A^{+},A^{-})$ be a
\linebreak
balanced clip.
Let $D$ be a formula.
If ${\blacksquare}D{\in}\Gamma^{+}$ then there exists a balanced clip $((\Delta^{+},\Delta^{-}),B^{+},B^{-})$ such that $\Delta^{+}{\subseteq}\Gamma^{+}$ and $\Gamma^{-}{\subseteq}\Delta^{-}$ and for all balanced clips $((\Lambda^{+},\Lambda^{-}),C^{+},C^{-})$, if ${\square}\Delta^{+}{\subseteq}\Lambda^{+}$, ${\lozenge}\Delta^{-}{\subseteq}\Lambda^{-}$, $B^{-}{\leftharpoondown}B^{+}{\in}\Lambda^{+}$ and $B^{+}{\rightharpoonup}B^{-}{\in}\Lambda^{-}$ then $D{\in}\Lambda^{+}$.
\end{proposition}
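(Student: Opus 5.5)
This is the dual of the Negative Existence Property for $\blacklozenge$ (Proposition~\ref{Proposition:let:be:a:balanced:clip:let:be:a:formula:if:then:there:exists:a:balanced:clip:such:that:and:and:for:all:balanced:clips:if:and:then:1}). I would carry out that dualization explicitly, working with ideals instead of filters.

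Suppose ${\blacksquare}D{\in}\Gamma^{+}$. Let ${\mathcal S}$ be the set of all ideals $\Delta$ such that $\Gamma^{-}{\subseteq}\Delta$ and ${\blacksquare}D{\not\in}\Delta$. The tip is coherent and ${\blacksquare}D{\in}\Gamma^{+}$, so ${\blacksquare}D{\not\in}\Gamma^{-}$; hence $\Gamma^{-}$ is in ${\mathcal S}$. Unions of nonempty chains stay in ${\mathcal S}$, so the Kuratowski--Zorn Lemma gives a maximal element $\Delta$. As in the Lindenbaum Lemma for ideals, $\Delta$ is a prime ideal. Put $\Delta^{-}{=}\Delta$, $\Delta^{+}{=}\Fo{\setminus}\Delta$, $B^{+}{=}D$ and $B^{-}{=}{\bot}$. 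This choice of $(B^{+},B^{-})$ is the $\mathtt{tr}$-image of $({\top},D)$. Then $\Delta^{+}{\subseteq}\Gamma^{+}$ and $\Gamma^{-}{\subseteq}\Delta^{-}$, and $(\Delta^{+},\Delta^{-})$ is an exhaustive tip: $\Fo{\setminus}\Delta$ is a filter because $\Delta$ is a prime ideal, by the dual of Proposition~\ref{Proposition:let:be:a:tip:if:is:exhaustive:then:is:prime:and:is:prime}.

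The conclusion about $\Lambda$ is then immediate. Suppose $((\Lambda^{+},\Lambda^{-}),C^{+},C^{-})$ is balanced and $B^{-}{\leftharpoondown}B^{+}{=}{\bot}{\leftharpoondown}D{\in}\Lambda^{+}$. By the semantics, or syntactically via $(\Formula19^{-})$/$(\Formula14^{-})$-type theorems, ${\bot}{\leftharpoondown}D$ yields $D$ in the filter $\Lambda^{+}$. So $D{\in}\Lambda^{+}$.

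The main obstacle is showing that $((\Delta^{+},\Delta^{-}),D,{\bot})$ is balanced. Suppose not. Then there is $C$ for which one of two things happens.

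In the first case, ${\blacksquare}((D{\rightharpoonup}{\bot}){\leftharpoondown}({\bot}{\leftharpoondown}D){\wedge}C){\in}\Delta$ and ${\square}C{\not\in}\Delta$. Axiom $(\Formula21^{-})$, with $p{:=}D$ and $q{:=}C$, gives ${\blacksquare}(D{\wedge}C){\in}\Delta$ using $(\Rule1^{-})$. Axiom $(\Formula11^{-})$ then gives $({\square}C{\leftharpoondown}{\blacksquare}D){\leftharpoondown}{\blacksquare}D{\in}\Delta$, that is, ${\square}C{\curlywedge}{\blacksquare}D{\in}\Delta$ after suitably instantiating the order of the conjuncts. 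Since ${\square}C{\not\in}\Delta$, maximality of $\Delta$ in ${\mathcal S}$ gives ${\blacksquare}D{\in}\Delta{-}{\square}C$, i.e.\ ${\square}C{\leftharpoondown}{\blacksquare}D{\in}\Delta$. Then $(\Rule1^{-})$ yields ${\blacksquare}D{\in}\Delta$, a contradiction.

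In the second case, ${\blacklozenge}(({\bot}{\leftharpoondown}D){\rightharpoonup}(D{\rightharpoonup}{\bot}){\vee}C){\not\in}\Delta$ and ${\lozenge}C{\in}\Delta$. Axiom $(\Formula22^{+})$ gives ${\blacklozenge}(D{\rightharpoonup}C){\not\in}\Delta$. This step needs care: $(\Formula22^{+})$ is a $+$-theorem, and upward closure of the filter $\Fo{\setminus}\Delta$ is what is needed. By maximality, ${\blacksquare}D{\in}\Delta{-}{\blacklozenge}(D{\rightharpoonup}C)$. Finally, $(\Formula12^{-})$ with $p{:=}C$ and $q{:=}D$, together with ${\lozenge}C{\in}\Delta$, forces ${\blacksquare}D{\in}\Delta$, again a contradiction.

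The delicate part throughout is matching the exact shapes of $(\Formula11^{-})$, $(\Formula12^{-})$, $(\Formula21^{\pm})$ and $(\Formula22^{\pm})$ with the balancedness clauses. The cleanest route may be to invoke $\mathtt{tr}$ and the duality of $\L$ via Proposition~\ref{Proposition:about:dual:IML}, transporting the earlier proof wholesale rather than rechecking each instance.
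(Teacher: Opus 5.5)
Your proposal is correct and takes the paper's route: the paper's proof is only the instruction to dualize the Negative Existence Property for ${\blacklozenge}$, and your explicit dualization matches it step for step (a maximal ideal containing $\Gamma^{-}$ and avoiding ${\blacksquare}D$, $(B^{+},B^{-})=(D,{\bot})$, and $(\Formula21^{-})$, $(\Formula11^{-})$, $(\Formula22^{+})$, $(\Formula12^{-})$ in place of their duals). Two small repairs are needed. First, obtain $D{\in}\Lambda^{+}$ from ${\bot}{\leftharpoondown}D{\in}\Lambda^{+}$ via $(\Formula2^{-})$ with $q{:=}{\bot}$ together with coherence and exhaustiveness of the $\Lambda$-clip, rather than via semantics (circular, since completeness rests on this proposition) or $(\Formula19^{-})$, which gives the converse direction; second, drop the closing suggestion to use duality of $\L$, which is an arbitrary consistent IML and need not be dual, since what works is dualizing the argument itself, the axiom and rule tables being closed under $\mathtt{tr}$.
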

\begin{proof}
Dual to the proof of Proposition~\ref{Proposition:let:be:a:balanced:clip:let:be:a:formula:if:then:there:exists:a:balanced:clip:such:that:and:and:for:all:balanced:clips:if:and:then:1}.
\medskip
\end{proof}
\begin{proposition}[Negative Existence Property for $\blacksquare$]\label{Proposition:let:be:a:balanced:clip:let:be:a:formula:if:then:for:all:balanced:clips:if:and:then:there:exists:a:balanced:clip:such:that:and:2}
Let $((\Gamma^{+},\Gamma^{-}),A^{+},A^{-})$ be a balanced clip.
Let $D$ be a formula.
If ${\blacksquare}D{\in}\Gamma^{-}$ then for all balanced clips $((\Delta^{+},\Delta^{-}),
$\linebreak$
B^{+},B^{-})$, if $\Delta^{+}{\subseteq}\Gamma^{+}$ and $\Gamma^{-}{\subseteq}\Delta^{-}$ then there exists a balanced clip $((\Lambda^{+},\Lambda^{-}),C^{+},
$\linebreak$
C^{-})$ such that ${\square}\Delta^{+}{\subseteq}\Lambda^{+}$, ${\lozenge}\Delta^{-}{\subseteq}\Lambda^{-}$, $B^{-}{\leftharpoondown}B^{+}{\in}\Lambda^{+}$, $B^{+}{\rightharpoonup}B^{-}{\in}\Lambda^{-}$ and $D{\in}\Lambda^{-}$.
\end{proposition}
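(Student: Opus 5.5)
The plan is to dualize the proof of the Positive Existence Property for $\blacklozenge$, working with ideals instead of filters. Suppose ${\blacksquare}D{\in}\Gamma^{-}$ and let $((\Delta^{+},\Delta^{-}),B^{+},B^{-})$ be a balanced clip with $\Delta^{+}{\subseteq}\Gamma^{+}$ and $\Gamma^{-}{\subseteq}\Delta^{-}$. Then ${\blacksquare}D{\in}\Delta^{-}$. Let ${\mathcal S}$ be the set of all ideals $\Lambda$ satisfying four conditions:
\begin{itemize}
\item ${\lozenge}\Delta^{-}{\subseteq}\Lambda$;
\item for all $E{\in}\Fo$, if $(B^{-}{\leftharpoondown}B^{+}){\wedge}E{\in}\Lambda$ then ${\square}E{\in}\Delta^{-}$;
\item $B^{+}{\rightharpoonup}B^{-}{\in}\Lambda$;
\item $D{\in}\Lambda$.
\end{itemize}
By Proposition~\ref{Proposition:about:ideals:and:lozenge:operation}, ${\lozenge}\Delta^{-}$ is an ideal. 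I will show that the candidate $({\lozenge}\Delta^{-}{-}D){-}(B^{+}{\rightharpoonup}B^{-})$ lies in ${\mathcal S}$. By Propositions~\ref{Proposition:53bis:Gamma:moins:Sigma:is:an:ideal} and~\ref{Proposition:for:all:ideals:and:for:all:formulas:is:an:ideal:for:all:ideals:if:and:then}, it is an ideal containing ${\lozenge}\Delta^{-}$, $D$ and $B^{+}{\rightharpoonup}B^{-}$.

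The key step is the second condition. Suppose $(B^{-}{\leftharpoondown}B^{+}){\wedge}E$ belongs to the candidate. Unwinding Proposition~\ref{Proposition:about:ideals:formulas:and:operation:moins:for:a:single:formula} twice gives
\[{\lozenge}\bigl(D{\leftharpoondown}((B^{+}{\rightharpoonup}B^{-}){\leftharpoondown}(B^{-}{\leftharpoondown}B^{+}){\wedge}E)\bigr){\in}\Delta^{-}.\]
Axiom $(\Formula10^{-})$ is the ideal-side counterpart of $(\Formula10^{+})$. Instantiating it with $p{=}D$ and $q{=}(B^{+}{\rightharpoonup}B^{-}){\leftharpoondown}(B^{-}{\leftharpoondown}B^{+}){\wedge}E$, and using the closure of ideals under $\leftharpoondown$-detachment, I will conclude
\[{\blacksquare}D{\leftharpoondown}{\blacksquare}\bigl((B^{+}{\rightharpoonup}B^{-}){\leftharpoondown}(B^{-}{\leftharpoondown}B^{+}){\wedge}E\bigr){\in}\Delta^{-}.\]
Since ${\blacksquare}D{\in}\Delta^{-}$, this yields ${\blacksquare}((B^{+}{\rightharpoonup}B^{-}){\leftharpoondown}(B^{-}{\leftharpoondown}B^{+}){\wedge}E){\in}\Delta^{-}$. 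The second clause of balancedness of $((\Delta^{+},\Delta^{-}),B^{+},B^{-})$ then gives ${\square}E{\in}\Delta^{-}$, as required.

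The main obstacle is the bookkeeping in this step. One must check that the nested $\leftharpoondown$ produced by the two unwindings matches exactly the instance of $(\Formula10^{-})$. One must also check the direction of $\leftharpoondown$-detachment: ideals are closed under passing from $A$ and $A{\leftharpoondown}B$ to $B$. This is the mirror of the ${\square}(D{\rightharpoonup}\ldots)$ computation in the positive case. Some reassociation via $(\Formula16^{-})$ and $(\Rule1^{-})$ may be needed.

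Next, unions of nonempty chains in ${\mathcal S}$ stay in ${\mathcal S}$, so Kuratowski-Zorn gives a maximal element $\Lambda$. The routine part is to show that $\Lambda$ is a prime ideal with ${\square}\Delta^{+}{\cap}\Lambda{=}\emptyset$. Primality follows from maximality as in the Lindenbaum Lemma for ideals. For disjointness, suppose ${\square}E{\in}\Delta^{+}$ and $E{\in}\Lambda$. Since ${\top}{\leftharpoondown}E{\in}\L^{-}$ by $(\Formula14^{-})$ and ideals are closed under $\leftharpoondown$-detachment, ${\top}{\in}\Lambda$; then the third clause of ${\mathcal S}$ with ${\top}{\wedge}E$-type manipulations yields ${\square}E{\in}\Delta^{-}$, contradicting coherence. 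From the third clause, $B^{+}{\rightharpoonup}B^{-}{\in}\Lambda$. To show $B^{-}{\leftharpoondown}B^{+}{\not\in}\Lambda$, take $E{=}{\top}$: the second clause would give ${\square}{\top}{\in}\Delta^{-}$, contradicting $(\Formula7^{-})$ and $(\Formula5^{+})$-type facts, i.e.\ ${\square}{\top}{\in}\Delta^{+}$ by $(\Rule4^{+})$.

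Finally choose $\Lambda^{-}{=}\Lambda$, $\Lambda^{+}{=}\Fo{\setminus}\Lambda$, $C^{+}{=}{\top}$ and $C^{-}{=}{\bot}$. By Proposition~\ref{Proposition:for:all:exhaustive:tips:is:balanced} this clip is balanced. It satisfies ${\square}\Delta^{+}{\subseteq}\Lambda^{+}$, ${\lozenge}\Delta^{-}{\subseteq}\Lambda^{-}$, $B^{-}{\leftharpoondown}B^{+}{\in}\Lambda^{+}$, $B^{+}{\rightharpoonup}B^{-}{\in}\Lambda^{-}$ and $D{\in}\Lambda^{-}$, which completes the proof.
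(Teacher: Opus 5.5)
Your overall strategy is the paper's: the paper proves this proposition as the dual of the Positive Existence Property for $\blacklozenge$. Your ${\mathcal S}$, your seed ideal $({\lozenge}\Delta^{-}{-}D){-}(B^{+}{\rightharpoonup}B^{-})$ and your key computation are exactly the $\mathtt{tr}$-images of the paper's. That computation runs: two unwindings, $(\Formula10^{-})$ with $p{=}D$, two $\leftharpoondown$-detachments, then the second balancedness clause. No reassociation through $(\Formula16^{-})$ is needed.

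The step that fails is your proof of ${\square}\Delta^{+}{\cap}\Lambda{=}\emptyset$. You infer ${\top}{\in}\Lambda$ from $E{\in}\Lambda$ and ${\top}{\leftharpoondown}E{\in}\L^{-}$. But detachment in an ideal passes from $A$ and $A{\leftharpoondown}B$ to $B$. With $A{=}{\top}$ it yields $E$ from ${\top}$, never ${\top}$ from $E$. In fact ${\top}{\in}\Lambda$ would force $\Lambda{=}\Fo$, destroying the properness you need for $(\Fo{\setminus}\Lambda,\Lambda)$ to be an exhaustive tip. Also, the clause of ${\mathcal S}$ that does the work here is the second, not the third.

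The correct argument is one line. By $(\Formula13^{-})$, $E{\leftharpoondown}(B^{-}{\leftharpoondown}B^{+}){\wedge}E{\in}\L^{-}{\subseteq}\Lambda$, so $E{\in}\Lambda$ gives $(B^{-}{\leftharpoondown}B^{+}){\wedge}E{\in}\Lambda$. The second clause then gives ${\square}E{\in}\Delta^{-}$, contradicting ${\square}E{\in}\Delta^{+}$ and the coherence of $(\Delta^{+},\Delta^{-})$. This is the same move you make implicitly when you take $E{=}{\top}$ to show $B^{-}{\leftharpoondown}B^{+}{\not\in}\Lambda$.

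A smaller point concerns primality of the maximal $\Lambda$, which is not literally the Lindenbaum argument. Suppose $F{\wedge}G{\in}\Lambda$ with $F,G{\not\in}\Lambda$. Then $\Lambda{-}F$ and $\Lambda{-}G$ violate the second clause, possibly with different witnesses $E_{1},E_{2}$. You must pass to $(B^{-}{\leftharpoondown}B^{+}){\wedge}(E_{1}{\wedge}E_{2})$. It is stronger than both witnesses, so it lies in both extensions, and hence in $\Lambda$ via $(\Formula4^{-})$. Then use $(\Formula24^{-})$ and the primality of $\Delta^{-}$ to obtain ${\square}E_{1}{\in}\Delta^{-}$ or ${\square}E_{2}{\in}\Delta^{-}$. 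The paper also leaves this as ``routine'', so it is a matter of precision rather than a gap.
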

\begin{proof}
Dual to the proof of Proposition~\ref{Proposition:let:be:a:balanced:clip:let:be:a:formula:if:then:for:all:balanced:clips:if:and:then:there:exists:a:balanced:clip:such:that:and:1}.
\medskip
\end{proof}
\section{Canonical Model Construction}\label{section:canonical:model:construction}
Let $W_{\mathbf{c}}$ be the nonempty set of all balanced clips.
\\
\\
Let $\leq_{\mathbf{c}}$ be the preorder on $W_{\mathbf{c}}$ such that for all $((\Gamma^{+},\Gamma^{-}),A^{+},A^{-}),((\Delta^{+},\Delta^{-}),B^{+},
$\linebreak$
B^{-}){\in}W_{\mathbf{c}}$,
\begin{itemize}
\item $((\Gamma^{+},\Gamma^{-}),A^{+},A^{-}){\leq_{\mathbf{c}}}((\Delta^{+},\Delta^{-}),B^{+},B^{-})$ if and only if $\Gamma^{+}{\subseteq}\Delta^{+}$ and $\Delta^{-}{\subseteq}
$\linebreak$
\Gamma^{-}$.
\end{itemize}
Let $R_{\mathbf{c}}$ be the binary relation on $W_{\mathbf{c}}$ such that for all $((\Gamma^{+},\Gamma^{-}),A^{+},A^{-}),((\Delta^{+},\Delta^{-}),
$\linebreak$
B^{+},B^{-}){\in}W_{\mathbf{c}}$,
\begin{itemize}
\item $((\Gamma^{+},\Gamma^{-}),A^{+},A^{-}){R_{\mathbf{c}}}((\Delta^{+},\Delta^{-}),B^{+},B^{-})$ if and only if ${\square}\Gamma^{+}{\subseteq}\Delta^{+}$, ${\lozenge}\Gamma^{-}{\subseteq}
$\linebreak$
\Delta^{-}$, $A^{-}{\leftharpoondown}A^{+}{\in}\Delta^{+}$ and $A^{+}{\rightharpoonup}A^{-}{\in}\Delta^{-}$.
\end{itemize}
The frame $(W_{\mathbf{c}},{\leq_{\mathbf{c}}},{R_{\mathbf{c}}})$ is called {\em canonical frame of $\L$.}
%
%
\\
\\
Notice that for all atoms $p$, $\{((\Gamma^{+},\Gamma^{-}),A^{+},A^{-}){\in}W_{\mathbf{c}}\ :\ p{\in}\Gamma^{+}\}$ is $\leq_{\mathbf{c}}$-closed.
\\
\\
Let $V_{\mathbf{c}}$ be the valuation on $(W_{\mathbf{c}},{\leq_{\mathbf{c}}})$ such that for all atoms $p$,
\begin{itemize}
\item $V_{\mathbf{c}}(p){=}\{((\Gamma^{+},\Gamma^{-}),A^{+},A^{-}){\in}W_{\mathbf{c}}\ :\ p{\in}\Gamma^{+}\}$.
\end{itemize}
The valuation $V_{\mathbf{c}}$ on $(W_{\mathbf{c}},{\leq_{\mathbf{c}}})$ is called {\em canonical valuation of $\L$.}
\\
\\
The model $(W_{\mathbf{c}},{\leq_{\mathbf{c}}},{R_{\mathbf{c}}},V_{\mathbf{c}})$ is called {\em canonical model of $\L$.}
\begin{proposition}\label{Proposition:let:be:a:formula:for:all}
Let $E$ be a formula.
In $(W_{\mathbf{c}},{\leq_{\mathbf{c}}},{R_{\mathbf{c}}},V_{\mathbf{c}})$, for all $((\Gamma^{+},\Gamma^{-}),A^{+},A^{-})
$\linebreak$
{\in}W_{\mathbf{c}}$,
\begin{itemize}
\item if $E{\in}\Gamma^{+}$ then $((\Gamma^{+},\Gamma^{-}),A^{+},A^{-}){\models}E$,
\item if $E{\in}\Gamma^{-}$ then $((\Gamma^{+},\Gamma^{-}),A^{+},A^{-}){\not\models}E$.
\end{itemize}
\end{proposition}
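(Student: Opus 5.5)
We proceed by induction on $E$, proving both items simultaneously for all balanced clips. Since every clip has an exhaustive tip, $\Gamma^{+}$ and $\Gamma^{-}$ partition $\Fo$, with $\Gamma^{+}$ a prime filter and $\Gamma^{-}$ a prime ideal (Proposition~\ref{Proposition:let:be:a:tip:if:is:exhaustive:then:is:prime:and:is:prime}). Hence the two items together say that $E{\in}\Gamma^{+}$ if and only if the clip satisfies $E$. The cases are as follows.
\begin{itemize}
\item For atoms, this is the definition of $V_{\mathbf{c}}$.
\item For ${\top}$ and ${\bot}$, use $(\Formula5^{+})$ and $(\Formula5^{-})$ together with coherence.
\item For ${\vee}$ and ${\wedge}$, use primeness and the axioms $(\Formula3^{\pm})$ and $(\Formula13^{\pm})$.
\end{itemize}
In every remaining case, the item for $\Gamma^{-}$ is dual to the one for $\Gamma^{+}$. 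The duality $\mathtt{tr}$ and the duality of $\L_{\min}$ make each such step a mirror image, so we only treat one side per connective.

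For $C{\rightharpoonup}D$, suppose first $C{\rightharpoonup}D{\in}\Gamma^{+}$. Any $\leq_{\mathbf{c}}$-successor has a larger filter part. If it contains $C$, it contains $D$ by closure under modus ponens, and the induction hypothesis gives the claim. If instead $C{\rightharpoonup}D{\in}\Gamma^{-}$, we apply the Existence Property for $\rightharpoonup$ to obtain a successor clip with $C$ in its filter and $D$ in its ideal, and conclude by induction. The case of ${\leftharpoondown}$ is dual, via the Existence Property for $\leftharpoondown$; the negative half of ${\leftharpoondown}$ uses $(\Rule2^{\pm})$ and the filter/ideal correspondence of Proposition~\ref{Proposition:for:all:exhaustive:tips:if:then:if:then}.

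The modal cases follow one pattern: a membership gives a truth condition of the form ``there is/for all $\leq_{\mathbf{c}}$-related clip, there is/for all $R_{\mathbf{c}}$-successor''.
\begin{itemize}
\item \emph{$\lozenge$.} If ${\lozenge}D{\in}\Gamma^{+}$, the Existence Property for $\lozenge$ gives a clip $\Delta$ below ours and $\Lambda$ with $\Delta R_{\mathbf{c}}\Lambda$ and $D{\in}\Lambda^{+}$. If ${\lozenge}D{\in}\Gamma^{-}$, take any $\Delta{\leq_{\mathbf{c}}}$ our clip and any $\Lambda$ with $\Delta R_{\mathbf{c}}\Lambda$. Then $\Gamma^{-}{\subseteq}\Delta^{-}$, so ${\lozenge}D{\in}\Delta^{-}$, so $D{\in}{\lozenge}\Delta^{-}{\subseteq}\Lambda^{-}$.
\item \emph{$\square$.} Dually: $\square$ in $\Gamma^{+}$ propagates upward and then through $\square\Delta^{+}{\subseteq}\Lambda^{+}$, while $\square$ in $\Gamma^{-}$ uses the Existence Property for $\square$.
\item \emph{$\blacklozenge$.} Membership in $\Gamma^{+}$ uses the Positive Existence Property for $\blacklozenge$, applied to an arbitrary successor $\Delta$. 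Membership in $\Gamma^{-}$ uses the Negative Existence Property, which produces one successor all of whose $R_{\mathbf{c}}$-successors have $D$ in the ideal.
\item \emph{$\blacksquare$.} Symmetrically, via the two Existence Properties for $\blacksquare$.
\end{itemize}

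The Existence Properties were tailored to the definition of $R_{\mathbf{c}}$, including the extra conditions $B^{-}{\leftharpoondown}B^{+}{\in}\Lambda^{+}$ and $B^{+}{\rightharpoonup}B^{-}{\in}\Lambda^{-}$. Their hypotheses therefore match the relation exactly, and each modal case reduces to citing the right one plus the induction hypothesis. The main obstacle is bookkeeping rather than a new idea. One must check that the direction of $\leq_{\mathbf{c}}$ (filter grows, ideal shrinks) aligns with the truth conditions: $\lozenge$ and $\blacksquare$ look downward, $\square$ and $\blacklozenge$ look upward. One must also check that the trivially handled halves, such as ${\lozenge}D{\in}\Gamma^{-}$, only need the inclusions built into $R_{\mathbf{c}}$, namely ${\square}\Gamma^{+}{\subseteq}\Delta^{+}$ and ${\lozenge}\Gamma^{-}{\subseteq}\Delta^{-}$, and not the $B$-conditions.
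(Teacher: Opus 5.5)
Your proposal is correct and follows the paper's proof. Both proceed by induction on $E$: the Existence Properties supply the witnesses in the $\rightharpoonup$ and $\leftharpoondown$ cases and in the modal half-cases they cover, while the remaining halves follow from filter/ideal closure and the inclusions built into $\leq_{\mathbf{c}}$ and $R_{\mathbf{c}}$. One small correction: the half $C{\leftharpoondown}D{\in}\Gamma^{-}$ needs neither $(\Rule2^{\pm})$ nor Proposition~\ref{Proposition:for:all:exhaustive:tips:if:then:if:then}; $(\Rule2^{\pm})$ is a rule of $\L$ and cannot be applied inside an arbitrary ideal, and what suffices is closure of $\Delta^{-}$ under ``if $C{\in}\Delta^{-}$ and $C{\leftharpoondown}D{\in}\Delta^{-}$ then $D{\in}\Delta^{-}$'' for each clip below yours, the exact dual of your modus ponens step.
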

\begin{proof}
By induction on $E$, using Proposition~\ref{Proposition:let:be:a:balanced:clip:let:be:formulas:if:then:there:exists:a:balanced:clip:such:that:and:1}--\ref{Proposition:let:be:a:balanced:clip:let:be:a:formula:if:then:for:all:balanced:clips:if:and:then:there:exists:a:balanced:clip:such:that:and:2} in cases when $E{=}E^{\prime}{\rightharpoonup}E^{\prime\prime}$, $E{=}E^{\prime}
$\linebreak$
{\leftharpoondown}E^{\prime\prime}$, $E{=}{\lozenge}E^{\prime}$, $E{=}{\square}E^{\prime}$, $E{=}{\blacklozenge}E^{\prime}$ and $E{=}{\blacksquare}E^{\prime}$.
\medskip
\end{proof}
\begin{proposition}\label{Proposition:the:following:conditions:are:equivalent:is:reflexive}
If ${\blacksquare}p{\rightharpoonup}p{\in}\L^{+}$ and ${\blacklozenge}p{\leftharpoondown}p{\in}\L^{-}$ then $(W_{\mathbf{c}},{\leq_{\mathbf{c}}},{R_{\mathbf{c}}})$ is reflexive.
\end{proposition}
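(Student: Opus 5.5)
The plan is to show directly that every balanced clip $x{=}((\Gamma^{+},\Gamma^{-}),A^{+},A^{-})$ satisfies $x{R_{\mathbf{c}}}x$. That is, I will check the four defining conditions of $R_{\mathbf{c}}$ in turn: ${\square}\Gamma^{+}{\subseteq}\Gamma^{+}$, ${\lozenge}\Gamma^{-}{\subseteq}\Gamma^{-}$, $A^{-}{\leftharpoondown}A^{+}{\in}\Gamma^{+}$ and $A^{+}{\rightharpoonup}A^{-}{\in}\Gamma^{-}$.

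Throughout I use that $(\Gamma^{+},\Gamma^{-})$ is exhaustive and coherent. Hence a formula not in $\Gamma^{+}$ lies in $\Gamma^{-}$ and conversely. I also use that filters are closed under $\L^{+}$-implications and ideals under $\L^{-}$-co-implications.

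First I convert the hypotheses into usable forms. From ${\blacksquare}p{\rightharpoonup}p{\in}\L^{+}$, rule $(\Rule2^{-})$ gives $p{\leftharpoondown}{\blacksquare}p{\in}\L^{-}$. From ${\blacklozenge}p{\leftharpoondown}p{\in}\L^{-}$, rule $(\Rule2^{+})$ gives $p{\rightharpoonup}{\blacklozenge}p{\in}\L^{+}$. By uniform substitution, for every formula $F$:
\begin{itemize}
\item if $F{\in}\Gamma^{+}$ then ${\blacklozenge}F{\in}\Gamma^{+}$;
\item if ${\blacksquare}F{\in}\Gamma^{+}$ then $F{\in}\Gamma^{+}$;
\item if $F{\in}\Gamma^{-}$ then ${\blacksquare}F{\in}\Gamma^{-}$;
\item if ${\blacklozenge}F{\in}\Gamma^{-}$ then $F{\in}\Gamma^{-}$, using coherence together with the first item.
\end{itemize}

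For ${\square}\Gamma^{+}{\subseteq}\Gamma^{+}$, let ${\square}D{\in}\Gamma^{+}$. Axiom $(\Formula9^{+})$ gives ${\blacksquare}D{\in}\Gamma^{+}$, and then $D{\in}\Gamma^{+}$. Dually, let ${\lozenge}D{\in}\Gamma^{-}$. Axiom $(\Formula9^{-})$ and ideal closure give ${\blacklozenge}D{\in}\Gamma^{-}$. If $D$ were in $\Gamma^{+}$ we would get ${\blacklozenge}D{\in}\Gamma^{+}$, contradicting coherence, so $D{\in}\Gamma^{-}$.

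The main obstacle is the pair of conditions on $A^{+},A^{-}$, since balanced clips arising from the Negative Existence Property carry nontrivial $A^{-}$. Here I will exploit balancedness with degenerate choices of $C$. Write $X{=}A^{-}{\leftharpoondown}A^{+}$ and $Y{=}A^{+}{\rightharpoonup}A^{-}$.

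\emph{Step for $Y{\in}\Gamma^{-}$.} Suppose $Y{\in}\Gamma^{+}$.
\begin{itemize}
\item Since $Y{\rightharpoonup}Y{\vee}{\bot}$ is in $\L^{+}$ by $(\Formula13^{+})$, we get $Y{\vee}{\bot}{\in}\Gamma^{+}$.
\item By $(\Formula2^{+})$, $X{\rightharpoonup}Y{\vee}{\bot}{\in}\Gamma^{+}$.
\item Hence ${\blacklozenge}(X{\rightharpoonup}Y{\vee}{\bot}){\in}\Gamma^{+}$.
\item The first balancedness clause with $C{=}{\bot}$ yields ${\lozenge}{\bot}{\in}\Gamma^{+}$.
\item This contradicts $(\Formula7^{+})$, i.e.\ ${\rceil}{\lozenge}{\bot}{\in}\L^{+}$, together with properness of $\Gamma^{+}$.
\end{itemize}
Thus $Y{\notin}\Gamma^{+}$, and by exhaustivity $Y{\in}\Gamma^{-}$.

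\emph{Step for $X{\in}\Gamma^{+}$.} This is the dual argument. Suppose $X{\in}\Gamma^{-}$.
\begin{itemize}
\item Using $(\Formula13^{-})$ and $(\Formula2^{-})$, we get $Y{\leftharpoondown}X{\wedge}{\top}{\in}\Gamma^{-}$.
\item Hence ${\blacksquare}(Y{\leftharpoondown}X{\wedge}{\top}){\in}\Gamma^{-}$.
\item The second balancedness clause with $C{=}{\top}$ yields ${\square}{\top}{\in}\Gamma^{-}$.
\item This contradicts $(\Formula7^{-})$, i.e.\ ${\lfloor}{\square}{\top}{\in}\L^{-}$, together with properness of $\Gamma^{-}$.
\end{itemize}
Thus $X{\in}\Gamma^{+}$.

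The delicate point is the exact syntactic shape required by the balancedness clauses. For example, $Y{\vee}{\bot}$ versus $Y$ inside ${\blacklozenge}$ may need the monotonicity rule $(\Rule5^{+})$ applied to an $\L^{+}$-equivalence. Likewise, turning ${\rceil}{\lozenge}{\bot}{\in}\L^{+}$ into ${\lozenge}{\bot}{\notin}\Gamma^{+}$ needs a small derivation; I expect these to be routine applications of the Table~\ref{table:formulas:for:Proposition:for:all:formulas:and:and:1} formulas.

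Having verified all four conditions, $x{R_{\mathbf{c}}}x$ holds for every $x{\in}W_{\mathbf{c}}$, so the canonical frame is reflexive.
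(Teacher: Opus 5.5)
Your proof is correct and is essentially the paper's argument. The first two conditions go through $(\Formula9^{+})$, $(\Formula9^{-})$ and the hypotheses, and the conditions on $A^{+},A^{-}$ use the balancedness clauses with $C{=}{\top}$ and $C{=}{\bot}$, yielding contradictions with $(\Formula7^{-})$ and $(\Formula7^{+})$. The syntactic concern in your final paragraph does not arise: the formulas you derive already have exactly the shape the balancedness clauses require, so no appeal to $(\Rule5^{+})$ is needed.
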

\begin{proof}
Suppose ${\blacksquare}p{\rightharpoonup}p{\in}\L^{+}$ and ${\blacklozenge}p{\leftharpoondown}p{\in}\L^{-}$.
For the sake of the contradiction, suppose $(W_{\mathbf{c}},{\leq_{\mathbf{c}}},{R_{\mathbf{c}}})$ is not reflexive.
Hence, there exists a balanced clip $((\Delta^{+},\Delta^{-}),
$\linebreak$
B^{+},B^{-})$ such that not $((\Delta^{+},\Delta^{-}),B^{+},B^{-}){R_{\mathbf{c}}}((\Delta^{+},\Delta^{-}),B^{+},B^{-})$.
Thus, either ${\square}\Delta^{+}{\not\subseteq}\Delta^{+}$, or ${\lozenge}\Delta^{-}{\not\subseteq}\Delta^{-}$, or $B^{-}{\leftharpoondown}B^{+}{\not\in}\Delta^{+}$, or $B^{+}{\rightharpoonup}B^{-}{\not\in}\Delta^{-}$.
\\
\\
In the first case, there exists a formula $A$ such that ${\square}A{\in}\Delta^{+}$ and $A{\not\in}\Delta^{+}$.
Since ${\blacksquare}p{\rightharpoonup}p{\in}\L^{+}$, therefore ${\blacksquare}A{\rightharpoonup}A{\in}\L^{+}$.
Moreover, since ${\square}A{\rightharpoonup}{\blacksquare}A{\in}\L^{+}$ using $(\Formula9^{+})$, therefore ${\blacksquare}A{\in}\Delta^{+}$.
Since ${\blacksquare}A{\rightharpoonup}A{\in}\L^{+}$, therefore $A{\in}\Delta^{+}$: a contradiction.
%
%
\\
\\
In the second case, a dual reasoning can be done.
\\
\\
In the third case, $B^{-}{\leftharpoondown}B^{+}{\in}\Delta^{-}$.
Since $(B^{-}{\leftharpoondown}B^{+}){\leftharpoondown}(B^{-}{\leftharpoondown}B^{+}){\wedge}{\top}{\in}\L^{-}$ using
\linebreak$
(\Formula13^{-})$, therefore $(B^{-}{\leftharpoondown}B^{+}){\wedge}{\top}{\in}\Delta^{-}$.
Since $(B^{-}{\leftharpoondown}B^{+}){\wedge}{\top}{\leftharpoondown}((B^{+}{\rightharpoonup}B^{-}){\leftharpoondown}(B^{-}
$\linebreak$
{\leftharpoondown}B^{+}){\wedge}{\top}){\in}\L^{-}$ using $(\Formula2^{-})$, therefore $(B^{+}{\rightharpoonup}B^{-}){\leftharpoondown}(B^{-}{\leftharpoondown}B^{+}){\wedge}{\top}{\in}\Delta^{-}$.
Since ${\blacksquare}p{\rightharpoonup}p{\in}\L^{+}$, therefore ${\blacksquare}((B^{+}{\rightharpoonup}B^{-}){\leftharpoondown}(B^{-}{\leftharpoondown}B^{+}){\wedge}{\top}){\rightharpoonup}((B^{+}{\rightharpoonup}B^{-}){\leftharpoondown}(B^{-}{\leftharpoondown}B^{+})
$\linebreak$
{\wedge}{\top}){\in}\L^{+}$.
Consequently, $((B^{+}{\rightharpoonup}B^{-}){\leftharpoondown}(B^{-}{\leftharpoondown}B^{+}){\wedge}{\top}){\leftharpoondown}{\blacksquare}((B^{+}{\rightharpoonup}B^{-}){\leftharpoondown}(B^{-}{\leftharpoondown}
$\linebreak$
B^{+}){\wedge}{\top}){\in}\L^{-}$ using $(\Rule2^{-})$.
Since $(B^{+}{\rightharpoonup}B^{-}){\leftharpoondown}(B^{-}{\leftharpoondown}B^{+}){\wedge}{\top}{\in}\Delta^{-}$, therefore
\linebreak$
{\blacksquare}((B^{+}{\rightharpoonup}B^{-}){\leftharpoondown}(B^{-}{\leftharpoondown}B^{+}){\wedge}{\top}){\in}\Delta^{-}$.
Hence, $((\Delta^{+},\Delta^{-}),B^{+},B^{-})$ being a balan\-ced clip, ${\square}{\top}{\in}\Delta^{-}$.
Since ${\square}{\top}{\leftharpoondown}{\top}{\in}\L^{-}$ using $(\Formula7^{-})$, therefore ${\top}{\in}\Delta^{-}$.
Since ${\top}{\in}\L^{+}$ using $(\Formula5^{+})$, therefore ${\top}{\in}\Delta^{+}$.
Thus, ${\top}{\not\in}\Delta^{-}$: a contradiction.
%
%
\\
\\
In the fourth case, a dual reasoning can be done.
\medskip
\end{proof}
\begin{proposition}\label{Proposition:the:following:conditions:are:equivalent:is:serial}
If either ${\blacklozenge}{\top}{\in}\L^{+}$, or ${\blacksquare}{\bot}{\in}\L^{-}$ then $(W_{\mathbf{c}},{\leq_{\mathbf{c}}},{R_{\mathbf{c}}})$ is serial.
\end{proposition}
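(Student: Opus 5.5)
We treat the two hypotheses as dual and, in each case, invoke the Positive Existence Property for $\blacklozenge$ (Proposition~\ref{Proposition:let:be:a:balanced:clip:let:be:a:formula:if:then:for:all:balanced:clips:if:and:then:there:exists:a:balanced:clip:such:that:and:1}) with $D{=}{\top}$. Let $((\Delta^{+},\Delta^{-}),B^{+},B^{-})$ be an arbitrary balanced clip; we must find a balanced clip $((\Lambda^{+},\Lambda^{-}),C^{+},C^{-})$ with $((\Delta^{+},\Delta^{-}),B^{+},B^{-}){R_{\mathbf{c}}}((\Lambda^{+},\Lambda^{-}),C^{+},C^{-})$.

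\textbf{Case ${\blacklozenge}{\top}{\in}\L^{+}$.} Since $\Delta^{+}$ is a filter, $\L^{+}{\subseteq}\Delta^{+}$, so ${\blacklozenge}{\top}{\in}\Delta^{+}$. Now apply the Positive Existence Property for $\blacklozenge$ with $\Gamma$ and $\Delta$ both taken to be the given clip. The required inclusions $\Delta^{+}{\subseteq}\Delta^{+}$ and $\Delta^{-}{\subseteq}\Delta^{-}$ are trivial. The property then yields a balanced clip $((\Lambda^{+},\Lambda^{-}),C^{+},C^{-})$ satisfying ${\square}\Delta^{+}{\subseteq}\Lambda^{+}$, ${\lozenge}\Delta^{-}{\subseteq}\Lambda^{-}$, $B^{-}{\leftharpoondown}B^{+}{\in}\Lambda^{+}$ and $B^{+}{\rightharpoonup}B^{-}{\in}\Lambda^{-}$. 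These are exactly the defining conditions of $R_{\mathbf{c}}$, so seriality follows.

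\textbf{Case ${\blacksquare}{\bot}{\in}\L^{-}$.} The goal is to reduce to the first case by showing ${\blacklozenge}{\top}{\in}\Delta^{+}$.
\begin{itemize}
\item Because $\Delta^{-}$ is an ideal, $\L^{-}{\subseteq}\Delta^{-}$, so ${\blacksquare}{\bot}{\in}\Delta^{-}$.
\item Apply the Negative Existence Property for $\blacksquare$ (Proposition~\ref{Proposition:let:be:a:balanced:clip:let:be:a:formula:if:then:for:all:balanced:clips:if:and:then:there:exists:a:balanced:clip:such:that:and:2}) with $D{=}{\bot}$, again taking $\Gamma$ and $\Delta$ to be the same clip.
\item This gives a balanced clip $\Lambda$ that is an $R_{\mathbf{c}}$-successor of $((\Delta^{+},\Delta^{-}),B^{+},B^{-})$, with ${\bot}{\in}\Lambda^{-}$ as a harmless side condition.
\end{itemize}
So this case is handled just as directly as the first, and no syntactic transfer between ${\blacksquare}{\bot}$ and ${\blacklozenge}{\top}$ is needed.

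\textbf{Possible obstacle.} The only point needing care is the preliminary check of the Existence Properties' hypotheses. Specifically, $((\Delta^{+},\Delta^{-}),B^{+},B^{-})$ must itself be a balanced clip, which holds by definition of $W_{\mathbf{c}}$, and the inclusions required of the auxiliary clip must hold with $\Gamma{=}\Delta$, which they do by reflexivity of ${\subseteq}$. Once these are verified, the two cases follow immediately.
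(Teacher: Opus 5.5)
Your proof is correct and is exactly the paper's argument. The paper also cites the Positive Existence Property for $\blacklozenge$ in the case ${\blacklozenge}{\top}{\in}\L^{+}$ and the Negative Existence Property for $\blacksquare$ in the case ${\blacksquare}{\bot}{\in}\L^{-}$, instantiated as you do with the given clip in both roles and $D{=}{\top}$, resp.\ $D{=}{\bot}$.

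The only blemish is expository. Your opening sentence and the first line of the second case announce a reduction to the $\blacklozenge$ case, via showing ${\blacklozenge}{\top}{\in}\Delta^{+}$, that you never carry out and do not need, so you should delete those remarks.
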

\begin{proof}
%
%
%
%
%
%
%
By Proposition~\ref{Proposition:let:be:a:balanced:clip:let:be:a:formula:if:then:for:all:balanced:clips:if:and:then:there:exists:a:balanced:clip:such:that:and:1} and~\ref{Proposition:let:be:a:balanced:clip:let:be:a:formula:if:then:for:all:balanced:clips:if:and:then:there:exists:a:balanced:clip:such:that:and:2}.
%
%
%
%
%
%
%
%
%
%
%
%
\medskip
\end{proof}
\begin{proposition}\label{Proposition:the:following:conditions:are:equivalent:is:forward:confluent}
\begin{enumerate}
\item If ${\square}(p{\rightharpoonup}q){\rightharpoonup}({\lozenge}p{\rightharpoonup}{\blacklozenge}q){\in}\L^{+}$ then $(W_{\mathbf{c}},{\leq_{\mathbf{c}}},{R_{\mathbf{c}}})$ is forward confluent,
\item if ${\lozenge}(p{\leftharpoondown}q){\leftharpoondown}({\square}p{\leftharpoondown}{\blacksquare}q){\in}\L^{-}$ then $(W_{\mathbf{c}},{\leq_{\mathbf{c}}},{R_{\mathbf{c}}})$ is downward confluent.
\end{enumerate}
\end{proposition}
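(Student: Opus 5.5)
The plan is to verify forward confluence, ${\geq_{\mathbf{c}}}{\circ}{R_{\mathbf{c}}}{\subseteq}{R_{\mathbf{c}}}{\circ}{\geq_{\mathbf{c}}}$, directly in the canonical frame. The witness will be built by a Lindenbaum argument, and balancedness of the starting clip together with the hypothesis axiom will rule out incoherence. Item~$\mathbf{2)}$ will then be obtained by the dual argument, exchanging filters and ideals and using $\mathtt{tr}$.

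Let $x{=}((\Gamma^{+},\Gamma^{-}),A^{+},A^{-})$, $y{=}((\Delta^{+},\Delta^{-}),B^{+},B^{-})$ and $z{=}((\Lambda^{+},\Lambda^{-}),C^{+},C^{-})$ be balanced clips with $y{\leq_{\mathbf{c}}}x$ and $y{R_{\mathbf{c}}}z$. Then $\Delta^{+}{\subseteq}\Gamma^{+}$, $\Gamma^{-}{\subseteq}\Delta^{-}$, ${\square}\Delta^{+}{\subseteq}\Lambda^{+}$ and ${\lozenge}\Delta^{-}{\subseteq}\Lambda^{-}$. We need a balanced clip $u{=}((\Theta^{+},\Theta^{-}),{\top},{\bot})$ with $x{R_{\mathbf{c}}}u$ and $z{\leq_{\mathbf{c}}}u$. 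By Proposition~\ref{Proposition:for:all:exhaustive:tips:is:balanced}, any exhaustive tip with ${\top},{\bot}$ is balanced. So it suffices, by Proposition~\ref{Proposition:for:all:coherent:tips:there:exists:an:exhaustive:tip:such:that:and}, to show that the tip consisting of
\begin{itemize}
\item the filter $F$ generated by ${\square}\Gamma^{+}{\cup}\Lambda^{+}{\cup}\{A^{-}{\leftharpoondown}A^{+}\}$ and
\item the ideal $I$ generated by ${\lozenge}\Gamma^{-}{\cup}\{A^{+}{\rightharpoonup}A^{-}\}$
\end{itemize}
is coherent. Any exhaustive extension then gives $\Theta^{+}{\supseteq}\Lambda^{+}$, hence $\Theta^{-}{\subseteq}\Lambda^{-}$ by Proposition~\ref{Proposition:for:all:exhaustive:tips:if:then:if:then}, and the remaining conditions of $R_{\mathbf{c}}$ hold by construction.

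Two preliminary observations feed the coherence argument. First, for $L{\in}\Lambda^{+}$ we have $L{\not\in}\Lambda^{-}$, hence ${\lozenge}L{\not\in}\Delta^{-}$; exhaustiveness gives ${\lozenge}L{\in}\Delta^{+}{\subseteq}\Gamma^{+}$. Second, $\Lambda^{+}$ and ${\square}\Gamma^{+}$ are closed under finite conjunctions (the latter by Proposition~\ref{Proposition:about:filters:and:square:operation}), and ${\lozenge}\Gamma^{-}$ is closed under finite disjunctions by $(\Formula24^{+})$ and primeness of $\Gamma^{+}$. Hence an incoherence reduces to single formulas ${\square}G{\in}\Gamma^{+}$, $L{\in}\Lambda^{+}$ and ${\lozenge}C{\in}\Gamma^{-}$ such that
\[
G{\wedge}L{\wedge}(A^{-}{\leftharpoondown}A^{+}){\rightharpoonup}(A^{+}{\rightharpoonup}A^{-}){\vee}C
\]
is in $\L^{+}$. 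Currying, and using $(\Rule4^{+})$ and $(\Formula8^{+})$, we obtain ${\square}(L{\rightharpoonup}K){\in}\Gamma^{+}$, where $K{=}(A^{-}{\leftharpoondown}A^{+}){\rightharpoonup}(A^{+}{\rightharpoonup}A^{-}){\vee}C$.

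The hypothesis axiom, instantiated as ${\square}(L{\rightharpoonup}K){\rightharpoonup}({\lozenge}L{\rightharpoonup}{\blacklozenge}K)$, together with ${\lozenge}L{\in}\Gamma^{+}$, yields ${\blacklozenge}K{\in}\Gamma^{+}$. Balancedness of $x$ then gives ${\lozenge}C{\in}\Gamma^{+}$, contradicting ${\lozenge}C{\in}\Gamma^{-}$ and the coherence of $(\Gamma^{+},\Gamma^{-})$.

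I expect the main obstacle to be the bookkeeping that reduces incoherence to that single implication. This means collapsing the finitely many generators of $F$ and $I$ via $(\Formula18^{+})$, $(\Formula24^{+})$ and the Heyting/Brouwer interplay $(\Rule2^{\pm})$ between ideal membership and implications in $\L^{+}$. One must also check that the ideal side really only needs ${\lozenge}\Gamma^{-}$ and $A^{+}{\rightharpoonup}A^{-}$ rather than all of $\Lambda^{-}$. The latter is fine because we only need $\Theta^{-}{\subseteq}\Lambda^{-}$, which follows from $\Theta^{+}{\supseteq}\Lambda^{+}$.
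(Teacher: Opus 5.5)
Your proposal is correct and follows essentially the same route as the paper. The witness is generated by ${\square}\Gamma^{+}$, $\Lambda^{+}$ and $A^{-}{\leftharpoondown}A^{+}$; the inclusions $\Delta^{+}{\subseteq}\Gamma^{+}$ and ${\lozenge}\Delta^{-}{\subseteq}\Lambda^{-}$ give ${\lozenge}L{\in}\Gamma^{+}$; and the hypothesis axiom followed by balancedness of the starting clip yields the contradiction ${\lozenge}C{\in}\Gamma^{+}$. The only difference is packaging: you phrase the key claim as coherence of a tip and invoke the Lindenbaum Lemma for tips, whereas the paper applies Zorn's lemma to a family $\mathcal{S}$ of filters, and it reads ${\square}(L{\rightharpoonup}K){\in}\Gamma^{+}$ directly off the definition of ${\square}\Gamma^{+}{+}\Lambda^{+}$ rather than via $(\mathbf{R}4^{+})$ and $(\mathbf{F}8^{+})$.
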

\begin{proof}
$\mathbf{1)}$ Suppose ${\square}(p{\rightharpoonup}q){\rightharpoonup}({\lozenge}p{\rightharpoonup}{\blacklozenge}q){\in}\L^{+}$.
Let $((\Gamma^{+},\Gamma^{-}),A^{+},A^{-})$, $((\Delta^{+},\Delta^{-}),
$\linebreak$
B^{+},B^{-})$ and $((\Lambda^{+},\Lambda^{-}),C^{+},C^{-})$ be balanced clips such that $((\Gamma^{+},\Gamma^{-}),A^{+},A^{-}){\geq_{c}}
$\linebreak$
((\Delta^{+},\Delta^{-}),B^{+},B^{-})$ and $((\Delta^{+},\Delta^{-}),B^{+},B^{-}){R_{c}}((\Lambda^{+},\Lambda^{-}),C^{+},C^{-})$.
Hence, $\Gamma^{+}
$\linebreak$
{\supseteq}\Delta^{+}$ and ${\lozenge}\Delta^{-}{\subseteq}\Lambda^{-}$.
Let ${\mathcal S}$ be the set of all filters $\Theta$ such that $({\square}\Gamma^{+}{+}\Lambda^{+}){+}(A^{-}{\leftharpoondown}
$\linebreak$
A^{+}){\subseteq}\Theta$ and for all $E{\in}\Fo$, if $(A^{+}{\rightharpoonup}A^{-}){\vee}E{\in}\Theta$ then ${\lozenge}E{\in}\Gamma^{+}$.
\begin{claim}
$({\square}\Gamma^{+}{+}\Lambda^{+}){+}(A^{-}{\leftharpoondown}A^{+})$ is in ${\mathcal S}$.
\end{claim}
\begin{proof}
For the sake of the contradiction, suppose $({\square}\Gamma^{+}{+}\Lambda^{+}){+}(A^{-}{\leftharpoondown}A^{+})$ is not in ${\mathcal S}$.
Hence, there exists $E{\in}\Fo$ such that $(A^{+}{\rightharpoonup}A^{-}){\vee}E{\in}({\square}\Gamma^{+}{+}\Lambda^{+}){+}(A^{-}{\leftharpoondown}A^{+})$ and ${\lozenge}E{\not\in}\Gamma^{+}$.
Thus, there exists $n{\in}\N$ and there exists $F_{1},\ldots,F_{n}{\in}\Lambda^{+}$ such that ${\square}(F_{1}{\wedge}\ldots
$\linebreak$
{\wedge}F_{n}{\rightharpoonup}((A^{-}{\leftharpoondown}A^{+}){\rightharpoonup}(A^{+}{\rightharpoonup}A^{-}){\vee}E)){\in}\Gamma^{+}$.
Since $(F_{1}{\rightharpoonup}\ldots(F_{n}{\rightharpoonup}F_{1}{\wedge}\ldots{\wedge}F_{n})\ldots)
$\linebreak$
{\in}\L^{+}$ using $(\Formula15^{+})$ and $F_{1},\ldots,F_{n}{\in}\Lambda^{+}$, therefore $F_{1}{\wedge}\ldots{\wedge}F_{n}{\in}\Lambda^{+}$.
Since $\Gamma^{+}{\supseteq}\Delta^{+}$ and ${\lozenge}\Delta^{-}{\subseteq}\Lambda^{-}$, therefore ${\lozenge}(F_{1}{\wedge}\ldots{\wedge}F_{n}){\in}\Gamma^{+}$.
Since ${\square}(p{\rightharpoonup}q){\rightharpoonup}({\lozenge}p{\rightharpoonup}{\blacklozenge}q){\in}\L^{+}$, therefore ${\square}(F_{1}{\wedge}\ldots{\wedge}F_{n}{\rightharpoonup}((A^{-}{\leftharpoondown}A^{+}){\rightharpoonup}(A^{+}{\rightharpoonup}A^{-}){\vee}E)){\rightharpoonup}({\lozenge}(F_{1}{\wedge}\ldots{\wedge}F_{n}){\rightharpoonup}
$\linebreak$
{\blacklozenge}((A^{-}{\leftharpoondown}A^{+}){\rightharpoonup}(A^{+}{\rightharpoonup}A^{-}){\vee}E)){\in}\L^{+}$.
Since ${\square}(F_{1}{\wedge}\ldots{\wedge}F_{n}{\rightharpoonup}((A^{-}{\leftharpoondown}A^{+}){\rightharpoonup}(A^{+}
$\linebreak$
{\rightharpoonup}A^{-}){\vee}E)){\in}\Gamma^{+}$ and ${\lozenge}(F_{1}{\wedge}\ldots{\wedge}F_{n}){\in}\Gamma^{+}$, therefore ${\blacklozenge}((A^{-}{\leftharpoondown}A^{+}){\rightharpoonup}(A^{+}{\rightharpoonup}A^{-}){\vee}
$\linebreak$
E){\in}\Gamma^{+}$.
Consequently, $((\Gamma^{+},\Gamma^{-}),A^{+},A^{-})$ being a balanced clip, ${\lozenge}E{\in}\Gamma^{+}$: a contradiction.
\medskip
\end{proof}
Moreover, for all nonempty chains $(\Theta_{i})_{i{\in}I}$ in ${\mathcal S}$, $\bigcup\{\Theta_{i}\ :\ i{\in}I\}$ is in ${\mathcal S}$.
Thus, by Kuratowski-Zorn Lemma, ${\mathcal S}$ possesses a maximal element $\Theta$.
It is a routine task to show that $\Theta$ is a prime filter such that $({\square}\Gamma^{+}{+}\Lambda^{+}){+}(A^{-}{\leftharpoondown}A^{+}){\subseteq}\Theta$ and for all $E{\in}\Fo$, if $(A^{+}{\rightharpoonup}A^{-}){\vee}E{\in}\Theta$ then ${\lozenge}E{\in}\Gamma^{+}$.
And in the end, choose $\Theta^{+}{=}\Theta$ and $\Theta^{-}{=}\Fo{\setminus}\Theta$.
Consequently, ${\square}\Gamma^{+}{\subseteq}\Theta^{+}$, ${\lozenge}\Gamma^{-}{\subseteq}\Theta^{-}$, $A^{-}{\leftharpoondown}A^{+}{\in}\Theta^{+}$, $A^{+}{\rightharpoonup}A^{-}{\in}\Theta^{-}$, $\Lambda^{+}{\subseteq}\Theta^{+}$ and $\Theta^{-}{\subseteq}\Lambda^{-}$.
\\
\\
$\mathbf{2)}$ Dual to the proof of Item~$\mathbf{1)}$.
\medskip
\end{proof}
%
%
%
%
%
%
%
%
%
%
%
%
%
%
Now, we are ready to establish the following completeness results.
\begin{proposition}\label{Proposition:let:be:a:class:of:frames:if:then:1}
$\Log({\mathcal C}_{\allfra}){=}\L_{\min}$.
\end{proposition}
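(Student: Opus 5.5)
The equality will be proved by two inclusions. For $\L_{\min}{\subseteq}\Log({\mathcal C}_{\allfra})$, it suffices to invoke Proposition~\ref{proposition:soundness}: $\Log({\mathcal C}_{\allfra})$ is an IML, and $\L_{\min}$ is the least IML. For the converse inclusion, I argue by contraposition. Let $(\alpha,A){\in}\SFo$ with $(\alpha,A){\not\in}\L_{\min}$, and apply the canonical model construction of Section~\ref{section:canonical:model:construction} with $\L{=}\L_{\min}$. This requires $\L_{\min}$ to be consistent, which holds because $\L_{\min}{\subseteq}\Log({\mathcal C}_{\allfra})$ and $(+,{\bot}){\not\in}\Log({\mathcal C}_{\allfra})$ (any one-point frame refutes ${\bot}$); Proposition~\ref{Proposition:for:all:constructive:model:logics:the:following:conditions:are:equivalent:is:consistent} then applies.

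Next I exhibit a balanced clip in $W_{\mathbf{c}}$ that refutes $(\alpha,A)$. Suppose first $\alpha{=}+$, so $A{\not\in}\L_{\min}^{+}$. By Proposition~\ref{proposition:consistent:BML:a:property:about:tips}(2), the tip $(\L_{\min}^{+},\L_{\min}^{-}{-}A)$ is coherent. Since $A{\in}\L_{\min}^{-}{-}A$ (Proposition~\ref{Proposition:for:all:ideals:and:for:all:formulas:is:an:ideal:for:all:ideals:if:and:then}), the Lindenbaum Lemma for tips (Proposition~\ref{Proposition:for:all:coherent:tips:there:exists:an:exhaustive:tip:such:that:and}) yields an exhaustive tip $(\Gamma^{+},\Gamma^{-})$ with $A{\in}\Gamma^{-}$. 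By Proposition~\ref{Proposition:for:all:exhaustive:tips:is:balanced}, $((\Gamma^{+},\Gamma^{-}),{\top},{\bot})$ is a balanced clip, hence a point of $W_{\mathbf{c}}$. By the Truth Lemma (Proposition~\ref{Proposition:let:be:a:formula:for:all}), this point does not satisfy $A$, so $(W_{\mathbf{c}},{\leq_{\mathbf{c}}},{R_{\mathbf{c}}},V_{\mathbf{c}}){\not\models}(+,A)$. The case $\alpha{=}-$ is dual: if $A{\not\in}\L_{\min}^{-}$, Proposition~\ref{proposition:consistent:BML:a:property:about:tips}(1) makes $(\L_{\min}^{+}{+}A,\L_{\min}^{-})$ coherent, and we extend it to an exhaustive tip with $A{\in}\Gamma^{+}$. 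The corresponding clip then satisfies $A$, so $(-,A)$ is not true in the canonical model.

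Since $(W_{\mathbf{c}},{\leq_{\mathbf{c}}},{R_{\mathbf{c}}})$ is a frame ($\leq_{\mathbf{c}}$ is clearly a preorder, and $W_{\mathbf{c}}$ is nonempty by the above construction) and $V_{\mathbf{c}}$ is a valuation on it, we get $(W_{\mathbf{c}},{\leq_{\mathbf{c}}},{R_{\mathbf{c}}}){\not\models}(\alpha,A)$, hence $(\alpha,A){\not\in}\Log({\mathcal C}_{\allfra})$.

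All the real work lies in the Truth Lemma and the Existence Properties, which may be assumed. The only delicate point remaining is to check that every ingredient is instantiated with $\L{=}\L_{\min}$, which requires its consistency as established above, and that the clip used is balanced, which is exactly the content of Proposition~\ref{Proposition:for:all:exhaustive:tips:is:balanced}.
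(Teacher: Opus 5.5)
Your proposal is correct and follows the paper's own proof, which consists of citing the same ingredients: soundness (Proposition~\ref{proposition:soundness}), coherence of $(\L^{+}{+}A,\L^{-})$ and $(\L^{+},\L^{-}{-}A)$, the Lindenbaum Lemma for tips, balancedness of $((\Gamma^{+},\Gamma^{-}),{\top},{\bot})$, and the Truth Lemma. Your explicit check that $\L_{\min}$ is consistent, so that the standing assumption of the canonical model construction is met, is a detail the paper leaves implicit.
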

\begin{proof}
By Proposition~\ref{proposition:soundness}, \ref{proposition:consistent:BML:a:property:about:tips}, \ref{Proposition:for:all:coherent:tips:there:exists:an:exhaustive:tip:such:that:and}, \ref{Proposition:for:all:exhaustive:tips:is:balanced} and~\ref{Proposition:let:be:a:formula:for:all}.
\medskip
\end{proof}
\begin{proposition}\label{Proposition:let:ba:a:class:of:frames:if:then:3}
$\Log({\mathcal C}_{\reflexive}){=}\L_{\min}{\odot}\{(+,{\blacksquare}p{\rightharpoonup}p),(-,{\blacklozenge}p{\leftharpoondown}p)\}$.
\end{proposition}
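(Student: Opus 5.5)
We prove the two inclusions separately. Throughout, write $\L_{\reflexive}$ for $\L_{\min}{\odot}\{(+,{\blacksquare}p{\rightharpoonup}p),(-,{\blacklozenge}p{\leftharpoondown}p)\}$.

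\textbf{Soundness} ($\L_{\reflexive}{\subseteq}\Log({\mathcal C}_{\reflexive})$). By Proposition~\ref{proposition:soundness}, $\Log({\mathcal C}_{\reflexive})$ is an IML. Since $\L_{\reflexive}$ is the least IML containing $\L_{\min}$ and the two signed formulas, it suffices to check that both signed formulas are valid on every reflexive frame.

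For $(+,{\blacksquare}p{\rightharpoonup}p)$, take $s$ and $s{\leq}t$ with $t{\models}{\blacksquare}p$. Then there is $u{\leq}t$ such that every $R$-successor of $u$ satisfies $p$. Since $u{R}u$, we get $u{\models}p$, hence $t{\models}p$ by the Heredity Property (Proposition~\ref{Proposition:let:be:a:model:for:all:formulas:and:for:all:if:and:then}).

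For $(-,{\blacklozenge}p{\leftharpoondown}p)$, suppose $s{\models}{\blacklozenge}p{\leftharpoondown}p$. Then there is $t{\leq}s$ with $t{\not\models}{\blacklozenge}p$ and $t{\models}p$. So there is $t{\leq}t'$ such that no $R$-successor of $t'$ satisfies $p$. But $t'{R}t'$, and $t'{\models}p$ by heredity, a contradiction. (Alternatively, this case follows from the first by duality, Proposition~\ref{Proposition:about:duality:complement:valuation:transpose:formulas:validity:class:of:frames:bis}, because ${\mathcal C}_{\reflexive}$ is dual.)

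\textbf{Completeness} ($\Log({\mathcal C}_{\reflexive}){\subseteq}\L_{\reflexive}$). We reuse the scheme of Proposition~\ref{Proposition:let:be:a:class:of:frames:if:then:1} with $\L{=}\L_{\reflexive}$. By Proposition~\ref{Proposition:let:be:a:class:of:frames:if:then:1} the logic $\L_{\min}$ is consistent, and the first inclusion gives $\L_{\reflexive}{\subseteq}\Log({\mathcal C}_{\reflexive})$, which is a proper subset of $\SFo$. Hence $\L_{\reflexive}$ is consistent and the canonical model construction applies.

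Let $(+,A){\notin}\L_{\reflexive}$, so $A{\notin}\L^{+}$. By Proposition~\ref{proposition:consistent:BML:a:property:about:tips}, the tip $(\L^{+},\L^{-}{-}A)$ is coherent. Extend it to an exhaustive tip $(\Gamma^{+},\Gamma^{-})$ by Proposition~\ref{Proposition:for:all:coherent:tips:there:exists:an:exhaustive:tip:such:that:and}; then $A{\in}\Gamma^{-}$. By Proposition~\ref{Proposition:for:all:exhaustive:tips:is:balanced}, $((\Gamma^{+},\Gamma^{-}),{\top},{\bot})$ is a balanced clip, and by Proposition~\ref{Proposition:let:be:a:formula:for:all} it falsifies $A$ in the canonical model.

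The case $(-,A){\notin}\L_{\reflexive}$ is symmetric: use $(\L^{+}{+}A,\L^{-})$ to obtain a clip that satisfies $A$.

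Finally, $\L_{\reflexive}$ contains ${\blacksquare}p{\rightharpoonup}p$ in $\L^{+}$ and ${\blacklozenge}p{\leftharpoondown}p$ in $\L^{-}$. So by Proposition~\ref{Proposition:the:following:conditions:are:equivalent:is:reflexive} the canonical frame is reflexive, and the signed formula is refuted on a frame in ${\mathcal C}_{\reflexive}$.

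The only potential obstacle is making sure the canonical frame really lies in ${\mathcal C}_{\reflexive}$. That is exactly what Proposition~\ref{Proposition:the:following:conditions:are:equivalent:is:reflexive} provides, so no step here requires new work.
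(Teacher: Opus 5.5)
Your proposal is correct and follows the paper's proof. Soundness comes from Proposition~\ref{proposition:soundness} plus validity of the two axioms on reflexive frames: the paper obtains this from the correspondence results for conditions $5$ and $6$ of Table~\ref{table:modal:definability:results:a} together with $(\Rule2^{\pm})$, while you check it directly. Completeness runs through coherent tips, the Lindenbaum lemma, the balanced clip with ${\top},{\bot}$, the truth lemma and reflexivity of the canonical frame, exactly as in the paper, and your explicit check that $\L_{\reflexive}$ is consistent fills in a step the paper leaves implicit.
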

\begin{proof}
By Proposition~\ref{Proposition:about:correpsondence:table:a}, \ref{proposition:soundness}, \ref{proposition:consistent:BML:a:property:about:tips}, \ref{Proposition:for:all:coherent:tips:there:exists:an:exhaustive:tip:such:that:and}, \ref{Proposition:for:all:exhaustive:tips:is:balanced}, \ref{Proposition:let:be:a:formula:for:all} and~\ref{Proposition:the:following:conditions:are:equivalent:is:reflexive}.
%
%
\medskip
\end{proof}
\begin{proposition}\label{Proposition:let:ba:a:class:of:frames:if:then:2}
$\Log({\mathcal C}_{\serial}){=}\L_{\min}{\odot}(+,{\blacklozenge}{\top}){=}\L_{\min}{\odot}(-,{\blacksquare}{\bot}){=}\L_{\min}{\odot}\{(+,{\blacklozenge}{\top}),(-,
$\linebreak$
{\blacksquare}{\bot})\}$.
\end{proposition}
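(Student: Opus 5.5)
The plan is to follow the pattern of Propositions~\ref{Proposition:let:be:a:class:of:frames:if:then:1} and~\ref{Proposition:let:ba:a:class:of:frames:if:then:3}. We write $\L_{1}{=}\L_{\min}{\odot}(+,{\blacklozenge}{\top})$, $\L_{2}{=}\L_{\min}{\odot}(-,{\blacksquare}{\bot})$ and $\L_{3}{=}\L_{\min}{\odot}\{(+,{\blacklozenge}{\top}),(-,{\blacksquare}{\bot})\}$. It suffices to show $\Log({\mathcal C}_{\serial}){\subseteq}\L_{i}$ for $i{\in}\{1,2\}$, together with $\L_{3}{\subseteq}\Log({\mathcal C}_{\serial})$. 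Since $\L_{1},\L_{2}{\subseteq}\L_{3}$ trivially, all four sets then coincide.

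\textbf{Soundness.} By Proposition~\ref{proposition:positive:definition:seriality}, every serial frame validates $(+,{\blacklozenge}{\top})$ and $(-,{\blacksquare}{\bot})$. By Proposition~\ref{proposition:soundness}, $\Log({\mathcal C}_{\serial})$ is an IML. Hence it contains $\L_{\min}$ and both signed formulas, and by leastness $\L_{3}{\subseteq}\Log({\mathcal C}_{\serial})$.

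\textbf{Completeness.} Take $\L{=}\L_{i}$ with $i{\in}\{1,2\}$. If $\L$ is inconsistent, then $\L{=}\SFo$ and there is nothing to prove, so assume $\L$ is consistent. Suppose $(+,A){\not\in}\L$, i.e.\ $A{\not\in}\L^{+}$. By Proposition~\ref{proposition:consistent:BML:a:property:about:tips}, $(\L^{+},\L^{-}{-}A)$ is coherent. Proposition~\ref{Proposition:for:all:coherent:tips:there:exists:an:exhaustive:tip:such:that:and} extends it to an exhaustive tip $(\Gamma^{+},\Gamma^{-})$ with $A{\in}\Gamma^{-}$, using $A{\in}\L^{-}{-}A$ from Proposition~\ref{Proposition:for:all:ideals:and:for:all:formulas:is:an:ideal:for:all:ideals:if:and:then}. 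By Proposition~\ref{Proposition:for:all:exhaustive:tips:is:balanced}, $((\Gamma^{+},\Gamma^{-}),{\top},{\bot})$ is a balanced clip, hence a point of $W_{\mathbf{c}}$. Proposition~\ref{Proposition:let:be:a:formula:for:all} then gives that $A$ fails there. The case $(-,A){\not\in}\L$ is dual: we use item~1 of Proposition~\ref{proposition:consistent:BML:a:property:about:tips} to obtain a point where $A$ holds. In both cases the canonical model refutes the signed formula.

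It remains to see that the canonical frame is serial, which is Proposition~\ref{Proposition:the:following:conditions:are:equivalent:is:serial}. For $\L_{1}$ we have ${\blacklozenge}{\top}{\in}\L^{+}$, and for $\L_{2}$ we have ${\blacksquare}{\bot}{\in}\L^{-}$. So the refuting canonical model lies in ${\mathcal C}_{\serial}$ and $\Log({\mathcal C}_{\serial}){\subseteq}\L_{i}$.

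\textbf{Main obstacle.} The delicate step is checking that a single one of the two axioms suffices for seriality, i.e.\ Proposition~\ref{Proposition:the:following:conditions:are:equivalent:is:serial}. Consider the case ${\blacklozenge}{\top}{\in}\L^{+}$ and an arbitrary balanced clip $((\Gamma^{+},\Gamma^{-}),A^{+},A^{-})$. We have ${\blacklozenge}{\top}{\in}\Gamma^{+}$, so we apply Proposition~\ref{Proposition:let:be:a:balanced:clip:let:be:a:formula:if:then:for:all:balanced:clips:if:and:then:there:exists:a:balanced:clip:such:that:and:1} with $D{=}{\top}$, taking the clip itself as $((\Delta^{+},\Delta^{-}),B^{+},B^{-})$. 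This yields an $R_{\mathbf{c}}$-successor. The ${\blacksquare}{\bot}{\in}\L^{-}$ case is identical, using Proposition~\ref{Proposition:let:be:a:balanced:clip:let:be:a:formula:if:then:for:all:balanced:clips:if:and:then:there:exists:a:balanced:clip:such:that:and:2} with $D{=}{\bot}$. The point to verify is that these existence properties require no extra hypothesis: the subset conditions are met by the clip itself, since $\leq_{\mathbf{c}}$ is reflexive.
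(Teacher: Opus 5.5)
Your proposal is correct and follows the same route as the paper. Soundness comes from the positive definability of ${\mathcal C}_{\serial}$ together with the fact that $\Log({\mathcal C})$ is always an IML. Completeness comes from coherent tips, the Lindenbaum Lemma for tips, the balanced clip with ${\top},{\bot}$, the truth lemma, and seriality of the canonical frame obtained by applying the Positive Existence Property for $\blacklozenge$, or the Negative Existence Property for $\blacksquare$, to the clip itself. That last step, which shows each single axiom already forces seriality, is exactly what makes all three axiomatizations coincide.
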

\begin{proof}
By Proposition~\ref{proposition:positive:definition:seriality}, \ref{proposition:soundness}, \ref{proposition:consistent:BML:a:property:about:tips}, \ref{Proposition:for:all:coherent:tips:there:exists:an:exhaustive:tip:such:that:and}, \ref{Proposition:for:all:exhaustive:tips:is:balanced}, \ref{Proposition:let:be:a:formula:for:all} and~\ref{Proposition:the:following:conditions:are:equivalent:is:serial}.
\medskip
\end{proof}
\begin{proposition}\label{completeness:forward:conf}
\begin{enumerate}
\item $\Log({\mathcal C}_{\fcfra}){=}\L_{\min}{\odot}(+,{\square}(p{\rightharpoonup}q){\rightharpoonup}({\lozenge}p{\rightharpoonup}{\blacklozenge}q))$,
\item $\Log({\mathcal C}_{\dcfra}){=}\L_{\min}{\odot}(-,{\lozenge}(p{\leftharpoondown}q){\leftharpoondown}({\square}p{\leftharpoondown}{\blacksquare}q))$.
\end{enumerate}
\end{proposition}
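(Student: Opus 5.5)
We prove the two inclusions of Item~$\mathbf{1)}$ separately; Item~$\mathbf{2)}$ then follows by duality. For brevity, write $\L_{\fcfra}$ for $\L_{\min}{\odot}(+,{\square}(p{\rightharpoonup}q){\rightharpoonup}({\lozenge}p{\rightharpoonup}{\blacklozenge}q))$.

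\emph{Soundness} ($\L_{\fcfra}{\subseteq}\Log({\mathcal C}_{\fcfra})$). By Proposition~\ref{proposition:soundness}, $\Log({\mathcal C}_{\fcfra})$ is an IML. By Proposition~\ref{about:sat:formulas:forward:downward:frames}, it contains $(+,{\square}(p{\rightharpoonup}q){\rightharpoonup}({\lozenge}p{\rightharpoonup}{\blacklozenge}q))$. Since $\L_{\fcfra}$ is the least IML containing this signed formula, the inclusion follows.

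\emph{Completeness} ($\Log({\mathcal C}_{\fcfra}){\subseteq}\L_{\fcfra}$). We argue by contraposition, running the canonical model argument of Proposition~\ref{Proposition:let:be:a:class:of:frames:if:then:1} with $\L{=}\L_{\fcfra}$.
\begin{itemize}
\item First, $\L_{\fcfra}$ is consistent: it is contained in $\Log({\mathcal C}_{\fcfra})$, which is consistent because ${\mathcal C}_{\fcfra}$ is nonempty (any one-point frame is forward confluent) and no frame validates $(+,\bot)$.
\item Let $(+,A){\not\in}\L_{\fcfra}$, i.e.\ $A{\not\in}\L^{+}$. By Proposition~\ref{proposition:consistent:BML:a:property:about:tips}, the tip $(\L^{+},\L^{-}{-}A)$ is coherent. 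By Proposition~\ref{Proposition:for:all:coherent:tips:there:exists:an:exhaustive:tip:such:that:and}, it extends to an exhaustive tip $(\Gamma^{+},\Gamma^{-})$ with $A{\in}\Gamma^{-}$.
\item By Proposition~\ref{Proposition:for:all:exhaustive:tips:is:balanced}, $((\Gamma^{+},\Gamma^{-}),{\top},{\bot})$ is a balanced clip. By Proposition~\ref{Proposition:let:be:a:formula:for:all}, it falsifies $A$ in the canonical model.
\item The case $(-,A){\not\in}\L_{\fcfra}$ is symmetric: use Item~$\mathbf{1)}$ of Proposition~\ref{proposition:consistent:BML:a:property:about:tips} and obtain a point satisfying $A$.
\item Since the axiom belongs to $\L^{+}$, Item~$\mathbf{1)}$ of Proposition~\ref{Proposition:the:following:conditions:are:equivalent:is:forward:confluent} makes the canonical frame forward confluent.
\end{itemize}
So the canonical frame lies in ${\mathcal C}_{\fcfra}$ and refutes the given signed formula, which is therefore not in $\Log({\mathcal C}_{\fcfra})$.

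\emph{Item~$\mathbf{2)}$.} Repeat the argument with Item~$\mathbf{2)}$ of Propositions~\ref{about:sat:formulas:forward:downward:frames} and~\ref{Proposition:the:following:conditions:are:equivalent:is:forward:confluent}. Alternatively, transfer Item~$\mathbf{1)}$ using Proposition~\ref{Proposition:about:fc:bc:dc:uc:preorder:first}, Proposition~\ref{Proposition:about:duality:complement:valuation:transpose:formulas:validity:class:of:frames:bis} and the duality of $\L_{\min}$. The relevant signed formula is the $\mathtt{tr}$-image of the axiom in Item~$\mathbf{1)}$; Proposition~\ref{Proposition:about:dual:IML} does not apply directly here, since the one-element set is not dual.

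\emph{Main obstacle.} The real work is the canonical forward confluence established in Proposition~\ref{Proposition:the:following:conditions:are:equivalent:is:forward:confluent}. We take that result as given. Apart from it, the only points needing care are the consistency of $\L_{\fcfra}$ and the observation that the proofs of the cited canonical results use only that $\L$ is a consistent IML, so they apply verbatim to $\L_{\fcfra}$.
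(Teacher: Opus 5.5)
Your proposal is correct and follows the paper's own proof. Soundness comes from Proposition~\ref{proposition:soundness} plus validity of the axiom on ${\mathcal C}_{\fcfra}$: you cite Proposition~\ref{about:sat:formulas:forward:downward:frames} where the paper cites Proposition~\ref{Proposition:forward:confluence:is:modally:definable}, which amounts to the same thing. Completeness is the canonical model argument for the extended logic, whose frame is forward (respectively downward) confluent by Proposition~\ref{Proposition:the:following:conditions:are:equivalent:is:forward:confluent}, and your explicit check that the extension is consistent, which the paper leaves implicit, is a welcome addition.
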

\begin{proof}
By Proposition~\ref{Proposition:forward:confluence:is:modally:definable}, \ref{proposition:soundness}, \ref{proposition:consistent:BML:a:property:about:tips}, \ref{Proposition:for:all:coherent:tips:there:exists:an:exhaustive:tip:such:that:and}, \ref{Proposition:for:all:exhaustive:tips:is:balanced}, \ref{Proposition:let:be:a:formula:for:all} and~\ref{Proposition:the:following:conditions:are:equivalent:is:forward:confluent}.
\medskip
\end{proof}
%
%
%
%
%
%
%
%
%
%
%
%
%
%
%
%
\section{Decidability}\label{section:decidability}
Let $\Mona$ be a countable set (with typical members called {\em monadic predicates} and denoted $P$, $Q$, etc).
\\
\\
Let $(P_{i})_{i{\in}\N}$ be an enumeration without repetition of $\Mona$.
\\
\\
Let $x,y$ be distinct individual variables.
\\
\\
Let the sets $\GFo_{x}$ and $\GFo_{y}$ (with typical members respectively called {\em guarded $x$-formu\-las}\/ and {\em guarded $y$-formulas}\/ and respectively denoted $A_{x}$, $B_{x}$, etc and $A_{y}$, $B_{y}$, etc) be defined by
\begin{itemize}
\item $A_{x}::=P_{i}(x){\mid}{\top}{\mid}{\bot}{\mid}{\neg}A_{x}{\mid}(A_{x}{\vee}A_{x}){\mid}(A_{x}{\wedge}A_{x}){\mid}{\forall y}(x{\leq}y{\rightarrow}A_{y}){\mid}{\exists y}(y{\leq}x{\wedge}A_{y}){\mid}
$\linebreak$
{\exists y}(R(x,y){\wedge}A_{y}){\mid}{\forall y}(R(x,y){\rightarrow}A_{y})$,
\item $A_{y}::=P_{i}(y){\mid}{\top}{\mid}{\bot}{\mid}{\neg}A_{y}{\mid}(A_{y}{\vee}A_{y}){\mid}(A_{y}{\wedge}A_{y}){\mid}{\forall x}(y{\leq}x{\rightarrow}A_{x}){\mid}{\exists x}(x{\leq}y{\wedge}A_{x}){\mid}
$\linebreak$
{\exists x}(R(y,x){\wedge}A_{x}){\mid}{\forall x}(R(y,x){\rightarrow}A_{x})$,
\end{itemize}
$i$ ranging over $\N$.
\\
\\
For all guarded first-order formulas $A$, the {\em length of $A$}\/ (denoted ${\parallel}A{\parallel}$) is the number of symbols in $A$.
\\
\\
We follow the standard rules for omission of the parentheses.
\\
\\
The guarded formulas of the form $P_{i}(x)$ and $P_{i}(y)$ are called {\em atomic guarded formulas.}
\begin{proposition}\label{Proposition:two:variable:fragment:only}
Guarded formulas belong to the monadic two-variable guarded fragment considered by Ganzinger {\em et al.}~\cite[Section~$3$]{Ganzinger:et:al:1999}.
\end{proposition}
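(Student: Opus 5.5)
We need to check that every guarded $x$-formula and $y$-formula lies in the monadic two-variable guarded fragment of Ganzinger et al. Recall that fragment: first-order formulas over unary predicates and binary relations, using only two variables, where quantification is guarded, i.e.\ of the form ${\forall y}(G(x,y){\rightarrow}B)$ or ${\exists y}(G(x,y){\wedge}B)$ with $G$ an atomic binary guard containing all free variables of the quantified formula. The monadic restriction means that non-unary predicates may appear only as guards, and the binary symbols here ($\leq$ and $R$) occur exclusively in guard positions. Since the fragment in~\cite[Section~$3$]{Ganzinger:et:al:1999} also admits transitive guards, the preorder $\leq$ is treated as a binary guard symbol (possibly constrained by reflexivity and transitivity axioms, which the decidability argument will supply separately).

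The proof goes by simultaneous induction on the construction of $\GFo_{x}$ and $\GFo_{y}$. The claim is: every $A_{x}$ belongs to the fragment with $fiv(A_{x}){\subseteq}\{x\}$, and symmetrically for $A_{y}$. Atomic cases $P_{i}(x)$, ${\top}$, ${\bot}$ are immediate, and closure under ${\neg}$, ${\vee}$, ${\wedge}$ is part of the fragment's definition. For the quantifier clauses, take for example ${\exists y}(y{\leq}x{\wedge}A_{y})$. By the induction hypothesis $A_{y}$ is in the fragment with free variables in $\{y\}$. The guard $y{\leq}x$ is an atom containing both $x$ and $y$, hence all free variables of $A_{y}$ together with the quantified variable, so the formula is guarded. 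Its free variables are contained in $\{x\}$. The clauses with guards $x{\leq}y$, $R(x,y)$ and $R(y,x)$ are identical in shape, and the $\GFo_{y}$ clauses are obtained by swapping $x$ and $y$. Only the two variables $x,y$ are ever used.

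The only delicate point is the precise syntactic form in~\cite{Ganzinger:et:al:1999}: whether guards must list the variables in a specific order, and whether $\leq$ (a transitive, reflexive relation) is admissible as a guard. I would handle the first by noting that any binary atom with both variables qualifies. For the second, I would observe that the Ganzinger et al.\ fragment explicitly allows transitive relations in guard positions only, which is exactly how $\leq$ occurs here. Reflexivity of $\leq$ can be expressed by an additional guarded sentence or simply dropped, since it is irrelevant to membership in the fragment. I expect this verification against the external definition to be the main obstacle; the induction itself is routine.
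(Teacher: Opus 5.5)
Your simultaneous induction on $\GFo_{x}$ and $\GFo_{y}$ is correct: only $x$ and $y$ occur, $\leq$ and $R$ appear solely as guards, and each guard contains both variables and hence all free variables of the body. The paper states the proposition without proof, leaving it to the reader, and your induction is exactly that intended check. Your aside that reflexivity of $\leq$ could be added as an extra guarded sentence is doubtful in a fragment where $\leq$ may occur only in guards, but as you note it concerns the later decidability argument, not membership.
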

The {\em satisfiability of a guarded formula $A$ in a frame $(W,{\leq},{R})$ with respect to a couple $(s,t)$ of elements in $W$ and a valuation $V$ on $(W,{\leq})$}\/ (in symbols $(W,{\leq},{R},V){\models_{x}}
$\linebreak$
A_{x}\ \lbrack s,t\rbrack$ for a guarded $x$-formula $A_{x}$ and $(W,{\leq},{R},V){\models_{y}}A_{y}\ \lbrack s,t\rbrack$ for a guarded $y$-formula $A_{y}$) is defined as follows:
%
%
\begin{itemize}
\item $(W,{\leq},{R},V){\models_{x}}P_{i}(x)\ \lbrack s,t\rbrack$ if and only if $s{\in}V(p_{i})$,
\item $(W,{\leq},{R},V){\models_{y}}P_{i}(y)\ \lbrack s,t\rbrack$ if and only if $t{\in}V(p_{i})$,
\item $(W,{\leq},{R},V){\models_{x}}\top\ \lbrack s,t\rbrack$,
\item $(W,{\leq},{R},V){\models_{y}}\top\ \lbrack s,t\rbrack$,
\item $(W,{\leq},{R},V){\not\models_{x}}\bot\ \lbrack s,t\rbrack$,
\item $(W,{\leq},{R},V){\not\models_{y}}\bot\ \lbrack s,t\rbrack$,
\item $(W,{\leq},{R},V){\models_{x}}{\neg}A_{x}\ \lbrack s,t\rbrack$ if and only if $(W,{\leq},{R},V){\not\models_{x}}A_{x}\ \lbrack s,t\rbrack$,
\item $(W,{\leq},{R},V){\models_{y}}{\neg}A_{y}\ \lbrack s,t\rbrack$ if and only if $(W,{\leq},{R},V){\not\models_{y}}A_{y}\ \lbrack s,t\rbrack$,
\item $(W,{\leq},{R},V){\models_{x}}A_{x}{\vee}B_{x}\ \lbrack s,t\rbrack$ if and only if either $(W,{\leq},{R},V){\models_{x}}A_{x}\ \lbrack s,t\rbrack$, or $(W,{\leq},{R},V){\models_{x}}B_{x}\ \lbrack s,t\rbrack$,
\item $(W,{\leq},{R},V){\models_{y}}A_{y}{\vee}B_{y}\ \lbrack s,t\rbrack$ if and only if either $(W,{\leq},{R},V){\models_{y}}A_{y}\ \lbrack s,t\rbrack$, or $(W,{\leq},{R},V){\models_{y}}B_{y}\ \lbrack s,t\rbrack$,
\item $(W,{\leq},{R},V){\models_{x}}A_{x}{\wedge}B_{x}\ \lbrack s,t\rbrack$ if and only if $(W,{\leq},{R},V){\models_{x}}A_{x}\ \lbrack s,t\rbrack$ and $(W,
$\linebreak$
{\leq},{R},V){\models_{x}}B_{x}\ \lbrack s,t\rbrack$,
\item $(W,{\leq},{R},V){\models_{y}}A_{y}{\wedge}B_{y}\ \lbrack s,t\rbrack$ if and only if $(W,{\leq},{R},V){\models_{y}}A_{y}\ \lbrack s,t\rbrack$ and $(W,{\leq},
$\linebreak$
{R},V){\models_{y}}B_{y}\ \lbrack s,t\rbrack$,
\item $(W,{\leq},{R},V){\models_{x}}{\forall y}(x{\leq}y{\rightarrow}A_{y})\ \lbrack s,t\rbrack$ if and only if for all $u{\in}W$, if $s{\leq}u$ then $(W,{\leq},{R},V){\models_{y}}A_{y}\ \lbrack s,u\rbrack$,
\item $(W,{\leq},{R},V){\models_{y}}{\forall x}(y{\leq}x{\rightarrow}A_{x})\ \lbrack s,t\rbrack$ if and only if for all $u{\in}W$, if $t{\leq}u$ then $(W,{\leq},{R},V){\models_{x}}A_{x}\ \lbrack u,t\rbrack$,
\item $(W,{\leq},{R},V){\models_{x}}{\exists y}(y{\leq}x{\wedge}A_{y})\ \lbrack s,t\rbrack$ if and only if there exists $u{\in}W$ such that $u{\leq}s$ and $(W,{\leq},{R},V){\models_{y}}A_{y}\ \lbrack s,u\rbrack$,
\item $(W,{\leq},{R},V){\models_{y}}{\exists x}(x{\leq}y{\wedge}A_{x})\ \lbrack s,t\rbrack$ if and only if there exists $u{\in}W$ such that $u{\leq}t$ and $(W,{\leq},{R},V){\models_{x}}A_{x}\ \lbrack u,t\rbrack$,
\item $(W,{\leq},{R},V){\models_{x}}{\exists y}(R(x,y){\wedge}A_{y})\ \lbrack s,t\rbrack$ if and only if there exists $u{\in}W$ such that $s{R}u$ and $(W,{\leq},{R},V){\models_{y}}A_{y}\ \lbrack s,u\rbrack$,
\item $(W,{\leq},{R},V){\models_{y}}{\exists x}(R(y,x){\wedge}A_{x})\ \lbrack s,t\rbrack$ if and only if there exists $u{\in}W$ such that $t{R}u$ and $(W,{\leq},{R},V){\models_{x}}A_{x}\ \lbrack u,t\rbrack$,
\item $(W,{\leq},{R},V){\models_{x}}{\forall y}(R(x,y){\rightarrow}A_{y})\ \lbrack s,t\rbrack$ if and only if for all $u{\in}W$, if $s{R}u$ then $(W,{\leq},{R},V){\models_{y}}A_{y}\ \lbrack s,u\rbrack$,
\item $(W,{\leq},{R},V){\models_{y}}{\forall x}(R(y,x){\rightarrow}A_{x})\ \lbrack s,t\rbrack$ if and only if for all $u{\in}W$, if $t{R}u$ then $(W,{\leq},{R},V){\models_{x}}A_{x}\ \lbrack u,t\rbrack$.
\end{itemize}
A guarded formula $A$ is {\em valid in a frame $(W,{\leq},{R})$}\/ (in symbols $(W,{\leq},{R}){\models_{x}}A_{x}$ for a guarded $x$-formula $A_{x}$ and $(W,{\leq},{R}){\models_{y}}A_{y}$ for a guarded $y$-formula $A_{y}$) if $A$ is satisfied in $(W,{\leq},{R})$ with respect to all couples $(s,t)$ of elements in $W$ and all valuations $V$ on $(W,{\leq})$.
\\
\\
A guarded formula $A$ is {\em valid on a class ${\mathcal C}$ of frames}\/ (in symbols ${\mathcal C}{\models_{x}}A_{x}$ for a guarded $x$-formula $A_{x}$ and ${\mathcal C}{\models_{y}}A_{y}$ for a guarded $y$-formula $A_{y}$) if $A$ is valid in all frames in ${\mathcal C}$.
\\
\\
Let $f_{x}\ :\ \Fo{\longrightarrow}\GFo_{x}$ and $f_{y}\ :\ \Fo{\longrightarrow}\GFo_{y}$ be the functions defined as follows:
\begin{itemize}
\item $f_{x}(p_{i}){=}P_{i}(x)$,
\item $f_{y}(p_{i}){=}P_{i}(y)$,
\item $f_{x}(A{\rightharpoonup}B){=}{\forall y}(x{\leq}y{\rightarrow}(\neg f_{y}(A){\vee}f_{y}(B)))$,
\item $f_{y}(A{\rightharpoonup}B){=}{\forall x}(y{\leq}x{\rightarrow}(\neg f_{x}(A){\vee}f_{x}(B)))$,
\item $f_{x}(A{\leftharpoondown}B){=}{\exists y}(y{\leq}x{\wedge}(\neg f_{y}(A){\wedge}f_{y}(B)))$,
\item $f_{y}(A{\leftharpoondown}B){=}{\exists x}(x{\leq}y{\wedge}(\neg f_{x}(A){\wedge}f_{x}(B)))$,
\item $f_{x}({\top}){=}{\top}$,
\item $f_{y}({\top}){=}{\top}$,
\item $f_{x}({\bot}){=}{\bot}$,
\item $f_{y}({\bot}){=}{\bot}$,
\item $f_{x}(A{\vee}B){=}f_{x}(A){\vee}f_{x}(B)$,
\item $f_{y}(A{\vee}B){=}f_{y}(A){\vee}f_{y}(B)$,
\item $f_{x}(A{\wedge}B){=}f_{x}(A){\wedge}f_{x}(B)$,
\item $f_{y}(A{\wedge}B){=}f_{y}(A){\wedge}f_{y}(B)$,
\item $f_{x}({\lozenge}A){=}{\exists y}(y{\leq}x{\wedge}{\exists x}(R(y,x){\wedge}f_{x}(A)))$,
\item $f_{y}({\lozenge}A){=}{\exists x}(x{\leq}y{\wedge}{\exists y}(R(x,y){\wedge}f_{y}(A)))$,
\item $f_{x}({\square}A){=}{\forall y}(x{\leq}y{\rightarrow}{\forall x}(R(y,x){\rightarrow}f_{x}(A)))$,
\item $f_{y}({\square}A){=}{\forall x}(y{\leq}x{\rightarrow}{\forall y}(R(x,y){\rightarrow}f_{y}(A)))$,
\item $f_{x}({\blacklozenge}A){=}{\forall y}(x{\leq}y{\rightarrow}{\exists x}(R(y,x){\wedge}f_{x}(A)))$,
\item $f_{y}({\blacklozenge}A){=}{\forall x}(y{\leq}x{\rightarrow}{\exists y}(R(x,y){\wedge}f_{y}(A)))$,
\item $f_{x}({\blacksquare}A){=}{\exists y}(y{\leq}x{\wedge}{\forall x}(R(y,x){\rightarrow}f_{x}(A)))$,
\item $f_{y}({\blacksquare}A){=}{\exists x}(x{\leq}y{\wedge}{\forall y}(R(x,y){\rightarrow}f_{x}(A)))$.
\end{itemize}
\begin{proposition}\label{Proposition:about:translations:tau:x:tau:y:linear:size}
For all $A{\in}\Fo$, ${\parallel}f_{x}(A){\parallel}{\leq}19{\times}{\parallel}A{\parallel}$ and ${\parallel}f_{y}(A){\parallel}{\leq}19{\times}{\parallel}A{\parallel}$.
\end{proposition}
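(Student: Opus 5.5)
The bound is proved by simultaneous induction on $A$, establishing both inequalities together, since $f_{x}$ and $f_{y}$ call each other. The stronger statement I would carry through is: for all $A{\in}\Fo$, ${\parallel}f_{x}(A){\parallel}{\leq}19{\times}{\parallel}A{\parallel}$ and ${\parallel}f_{y}(A){\parallel}{\leq}19{\times}{\parallel}A{\parallel}$. The two clauses are symmetric under swapping $x$ and $y$, so in each case only $f_{x}$ needs to be treated.

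For the base cases, $f_{x}(p_{i}){=}P_{i}(x)$ consists of a small fixed number of symbols (the predicate symbol, two parentheses and the variable, so $4$), while ${\parallel}p_{i}{\parallel}{=}1$. Likewise $f_{x}({\top}){=}{\top}$ and $f_{x}({\bot}){=}{\bot}$ have length $1$. In all three cases the bound $19$ holds with room to spare.

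For the binary cases $A{\triangledown}B$, the translation has the shape (constant overhead)${+}f_{y}(A){+}f_{y}(B)$, or $f_{x}(A){\vee}f_{x}(B)$ with parentheses. Here ${\parallel}A{\triangledown}B{\parallel}{=}{\parallel}A{\parallel}{+}{\parallel}B{\parallel}{+}3$, counting the connective and the two parentheses. For $\rightharpoonup$, the string ${\forall y}(x{\leq}y{\rightarrow}(\neg\ \cdot\ {\vee}\ \cdot))$ has a fixed overhead of roughly $15$ symbols. Precisely: $\forall$, $y$, $($, $x$, $\leq$, $y$, $\rightarrow$, $($, $\neg$, $\vee$, $)$, $)$, together with any parentheses around the disjuncts. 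This overhead is at most $3{\times}19{=}57$, so the induction hypothesis gives ${\parallel}f_{x}(A{\rightharpoonup}B){\parallel}{\leq}19{\parallel}A{\parallel}{+}19{\parallel}B{\parallel}{+}57$. The case $\leftharpoondown$ is identical, and $\vee$, $\wedge$ are trivial.

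For the unary modal cases ${\triangledown}A$, we have ${\parallel}{\triangledown}A{\parallel}{=}{\parallel}A{\parallel}{+}1$, so the available slack is exactly $19$. This is the tight step and the reason for the constant. For instance, $f_{x}({\square}A){=}{\forall y}(x{\leq}y{\rightarrow}{\forall x}(R(y,x){\rightarrow}f_{x}(A)))$ has overhead $\forall,y,(,x,\leq,y,\rightarrow,\forall,x,(,R,(,y,\,,x,),\rightarrow,),)$. That is $19$ symbols if the comma counts as a symbol, which is exactly the allowed slack. The other three modal clauses ($\lozenge$, $\blacklozenge$, $\blacksquare$) have the same skeleton with $\exists$ and $\wedge$ in place of $\forall$ and $\rightarrow$, hence the same count.

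The main obstacle is therefore purely bookkeeping: fixing a symbol-counting convention that makes the modal clauses come out at most $19$. This convention covers the comma in $R(y,x)$, parentheses in atomic formulas, and whether $\top$ and $\neg$ count as single symbols. I would state the convention explicitly at the start, check the four modal clauses carefully under it, and note that the binary clauses have ample slack.
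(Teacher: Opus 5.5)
Your proposal is correct and follows the paper's proof, which is just an induction on ${\parallel}A{\parallel}$; this amounts to your simultaneous induction for $f_{x}$ and $f_{y}$. Your count of exactly $19$ overhead symbols in each modal clause (comma in $R(y,x)$ included), against a slack of exactly $19$ from ${\parallel}{\triangledown}A{\parallel}{=}{\parallel}A{\parallel}{+}1$, is what makes the constant $19$ work, and the binary and atomic clauses have ample room.
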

\begin{proof}
By induction on ${\parallel}A{\parallel}$.
\end{proof}
\begin{proposition}\label{Proposition:about:translations:tau:x:tau:y:soundness:correctness}
Let $(W,{\leq},{R})$ be a frame and $V$ be a valuation on $(W,{\leq})$.
For all $A{\in}\Fo$, the following conditions are equivalent for all $s$ in $W$:
\begin{itemize}
\item $(W,{\leq},{R},V),s{\models}A$,
\item there exists $t{\in}W$ such that $(W,{\leq},{R},V){\models_{x}}f_{x}(A)\ \lbrack s,t\rbrack$,
\item for all $t{\in}W$, $(W,{\leq},{R},V){\models_{x}}f_{x}(A)\ \lbrack s,t\rbrack$,
\end{itemize}
and the following conditions are equivalent for all $t$ in $W$:
\begin{itemize}
\item $(W,{\leq},{R},V),t{\models}A$,
\item there exists $s{\in}W$ such that $(W,{\leq},{R},V){\models_{y}}f_{y}(A)\ \lbrack s,t\rbrack$,
\item for all $s{\in}W$, $(W,{\leq},{R},V){\models_{y}}f_{y}(A)\ \lbrack s,t\rbrack$.
\end{itemize}
\end{proposition}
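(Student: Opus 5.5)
The plan is to prove, by induction on $A$, a single strengthened statement from which both triples of equivalences follow at once. For all $A{\in}\Fo$ and all $s,t{\in}W$:
\begin{itemize}
\item $(W,{\leq},{R},V){\models_{x}}f_{x}(A)\ \lbrack s,t\rbrack$ if and only if $(W,{\leq},{R},V),s{\models}A$;
\item $(W,{\leq},{R},V){\models_{y}}f_{y}(A)\ \lbrack s,t\rbrack$ if and only if $(W,{\leq},{R},V),t{\models}A$.
\end{itemize}
In words, the truth value of a guarded $x$-formula obtained by translation depends only on the value of $x$, and dually for $y$. Given this, the three conditions in each item are equivalent, because $W$ is nonempty: the clause ``there exists $t$'' and the clause ``for all $t$'' both collapse to $s{\models}A$. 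So the whole work is the simultaneous induction for $f_{x}$ and $f_{y}$.

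The base cases $p_{i}$, $\top$ and $\bot$ are immediate from the satisfaction clauses for atomic guarded formulas. The cases for $\vee$ and $\wedge$ follow directly from the induction hypothesis for the same translation. For $A{\rightharpoonup}B$, the clause for ${\forall y}(x{\leq}y{\rightarrow}\cdot)$ evaluates $\neg f_{y}(A){\vee}f_{y}(B)$ at $\lbrack s,u\rbrack$ for $s{\leq}u$. By the induction hypothesis for $f_{y}$, this is equivalent to ``$u{\not\models}A$ or $u{\models}B$'', which is exactly the clause for $\rightharpoonup$ at $s$. The case $A{\leftharpoondown}B$ is symmetric, using ${\exists y}(y{\leq}x{\wedge}\cdot)$.

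The modal cases are where the two variables are genuinely alternated, and they are the only place requiring care. Take $f_{x}({\lozenge}A){=}{\exists y}(y{\leq}x{\wedge}{\exists x}(R(y,x){\wedge}f_{x}(A)))$ at $\lbrack s,t\rbrack$. This holds if and only if there is $u{\leq}s$ with ${\exists x}(R(y,x){\wedge}f_{x}(A))$ true at $\lbrack s,u\rbrack$, that is, there is $v$ with $u{R}v$ and $f_{x}(A)$ true at $\lbrack v,u\rbrack$. By the induction hypothesis for $f_{x}$, the last condition means $v{\models}A$, which gives exactly the clause for $s{\models}{\lozenge}A$. The cases ${\square}$, ${\blacklozenge}$ and ${\blacksquare}$ go the same way, pairing the outer $\leq$-guard with the inner $R$-guard and using the appropriate quantifiers. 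The $f_{y}$ versions are the mirror images.

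The one point I expect to need attention is the clause for $f_{y}({\blacksquare}A)$. As printed, its innermost formula is $f_{x}(A)$, where the grammar of $\GFo_{y}$ and the pattern of the other clauses require $f_{y}(A)$. I will read this as a typo for $f_{y}(A)$, after which the case is the mirror image of $f_{x}({\blacksquare}A)$. Beyond that, the main bookkeeping risk is tracking which coordinate of the pair $\lbrack s,t\rbrack$ gets overwritten at each quantifier step. The strengthened independence statement is exactly what makes this harmless, since the untouched coordinate never matters.
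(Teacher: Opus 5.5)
Your proof is correct and takes essentially the paper's route: the paper's entire proof is ``by induction on $A$'', and your $t$-independent formulation is equivalent to the three-way equivalence once $W\neq\emptyset$, with the simultaneous induction over $f_{x}$ and $f_{y}$ going through exactly as you describe. You are also right that the innermost $f_{x}(A)$ in the clause for $f_{y}({\blacksquare}A)$ must be read as $f_{y}(A)$, since as printed the formula is not even in $\GFo_{y}$.
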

\begin{proof}
By induction on $A$.
\medskip
\end{proof}
The {\em membership problem in $\L$}\/ is the following decision problem: determine whether a given signed formula is in $\L$.
\begin{proposition}\label{proposition:two:variable:guarded:fragment:is:decidable}
The membership problem in $\L_{\min}$ is decidable.
\end{proposition}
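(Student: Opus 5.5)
By Proposition~\ref{Proposition:let:be:a:class:of:frames:if:then:1}, $\L_{\min}{=}\Log({\mathcal C}_{\allfra})$. It therefore suffices to decide, given a signed formula $(\alpha,A)$, whether ${\mathcal C}_{\allfra}{\models}(\alpha,A)$. The plan is to reduce this validity question to satisfiability in the monadic two-variable guarded fragment of Ganzinger \emph{et al.}, which is known to be decidable, using the translation $f_{x}$.

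The first step is to remove the sign. If $\alpha{=}+$, the signed formula $(+,A)$ is valid on ${\mathcal C}_{\allfra}$ iff there is no frame, valuation and point $s$ with $s{\not\models}A$. If $\alpha{=}-$, the signed formula $(-,A)$ is valid iff there is no frame, valuation and point $s$ with $s{\models}A$. Equivalently, by Proposition~\ref{Proposition:about:duality:complement:valuation:transpose:formulas:validity:class:of:frames} one could always reduce to the sign $+$, but treating both signs directly is just as easy.

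By Proposition~\ref{Proposition:about:translations:tau:x:tau:y:soundness:correctness}, there is such a falsifying (respectively verifying) situation iff there are a frame $(W,{\leq},{R})$, a valuation $V$ on $(W,{\leq})$ and $s,t{\in}W$ with $(W,{\leq},{R},V){\not\models_{x}}f_{x}(A)\ \lbrack s,t\rbrack$ (respectively ${\models_{x}}$). So $(\alpha,A){\in}\L_{\min}$ iff the guarded formula ${\neg}f_{x}(A)$ (respectively $f_{x}(A)$) is unsatisfiable in the restricted sense just described.

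The main obstacle is that guarded satisfiability in the sense of Ganzinger \emph{et al.} ranges over arbitrary first-order structures. Ours are restricted in two ways: $\leq$ must be a preorder, and each $V(p_{i})$ must be $\leq$-closed. I would handle both by conjoining the following to the translated formula:
\begin{itemize}
\item reflexivity of $\leq$, written as the guarded sentence ${\forall x}(x{\leq}x)$, or equivalently as ${\forall x}{\forall y}(x{\leq}y{\rightarrow}\ldots)$ with equality;
\item for each of the finitely many atoms $p_{i}$ occurring in $A$, the heredity constraint ${\forall x}{\forall y}(x{\leq}y{\rightarrow}(P_{i}(x){\rightarrow}P_{i}(y)))$, which is guarded by $x{\leq}y$.
\end{itemize}
Transitivity is the delicate point, since it is not expressible in the two-variable guarded fragment. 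For transitivity I would use the fact that Ganzinger \emph{et al.} show decidability of the monadic two-variable guarded fragment even when some binary relations are required to be transitive, provided those relations occur only in guards. This is exactly the situation here, since $\leq$ appears only in guard positions in $f_{x}$ and $f_{y}$; Proposition~\ref{Proposition:two:variable:fragment:only} records that the translations land in this fragment. I would also check that transitive guards combined with reflexivity stay within their decidable setting, so that $\leq$ ranges over preorders.

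Finally, Proposition~\ref{Proposition:about:translations:tau:x:tau:y:linear:size} bounds the translation linearly, so the reduction is computable (indeed polynomial). This yields a decision procedure for membership in $\L_{\min}$.
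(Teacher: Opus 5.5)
Your route is the paper's: completeness ($\L_{\min}{=}\Log({\mathcal C}_{\allfra})$), the faithful translation $f_{x}$, and Ganzinger \emph{et al.}'s decidability of the monadic two-variable guarded fragment with transitive relations occurring only in guards. The paper's proof consists of just that citation. You are more careful than the paper in noticing that $f_{x}$ is only faithful over preorders with $\leq$-closed valuations. Your heredity constraints ${\forall x}{\forall y}(x{\leq}y{\rightarrow}(P_{i}(x){\rightarrow}P_{i}(y)))$ and your treatment of transitivity are fine.

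The step that does not go through as written is reflexivity, which you leave ``to check''. In the monadic fragment binary predicates may occur only as guards, so ${\forall x}(x{\leq}x)$ is not a formula of it. Either its quantifier is unguarded, or, written as ${\forall x}(x{=}x{\rightarrow}x{\leq}x)$, the atom $x{\leq}x$ sits outside a guard.

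Nor can reflexivity be dropped. If $\leq$ is merely transitive, a point with no $\leq$-successor satisfies every implication vacuously, and the translation of the theorem $p{\wedge}(p{\rightharpoonup}q){\rightharpoonup}q$ becomes falsifiable.

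The clean repair is to build reflexivity into the translation rather than axiomatize it. Let $g_{x}$ (and symmetrically $g_{y}$) be like $f_{x}$, except that every $\leq$-guarded quantifier also covers the case $y{=}x$. For example, $g_{x}(A{\rightharpoonup}B){=}({\neg}g_{x}(A){\vee}g_{x}(B)){\wedge}{\forall y}(x{\leq}y{\rightarrow}({\neg}g_{y}(A){\vee}g_{y}(B)))$ and $g_{x}({\lozenge}A){=}{\exists y}(R(x,y){\wedge}g_{y}(A)){\vee}{\exists y}(y{\leq}x{\wedge}{\exists x}(R(y,x){\wedge}g_{x}(A)))$, and similarly for ${\leftharpoondown}$, ${\square}$, ${\blacklozenge}$ and ${\blacksquare}$.

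This works for the following reasons.
If $\leq$ is only transitive, then ${\leq}{\cup}\Id_{W}$ is a preorder, and heredity for $\leq$ gives heredity for ${\leq}{\cup}\Id_{W}$.
By induction, $g_{x}$ evaluated with $\leq$ agrees with $f_{x}$ evaluated with ${\leq}{\cup}\Id_{W}$.
On actual preorders, $g_{x}$ and $f_{x}$ are equivalent.
Hence $(\alpha,A){\in}\L_{\min}$ iff the relevant $g_{x}$-formula together with the heredity constraints is unsatisfiable over structures in which $\leq$ is transitive. That problem lies squarely within Ganzinger \emph{et al.}'s setting, since $\leq$ and $R$ occur only as guards.

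The price is a possibly exponential translation. This is harmless for decidability but incompatible with your ``indeed polynomial'' remark.
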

\begin{proof}
By~\cite[Section~$3$]{Ganzinger:et:al:1999} and Proposition~\ref{Proposition:two:variable:fragment:only} and~\ref{Proposition:about:translations:tau:x:tau:y:soundness:correctness}.
\medskip
\end{proof}
\begin{proposition}\label{proposition:two:variable:guarded:fragment:is:decidable:ser}
The membership problem in $\L_{\min}{\odot}(+,{\blacklozenge}{\top})$, $\L_{\min}{\odot}(-,{\blacksquare}{\bot})$ and
\linebreak$
\L_{\min}{\odot}\{(+,{\blacklozenge}{\top}),(-,{\blacksquare}{\bot})\}$ is decidable.
\end{proposition}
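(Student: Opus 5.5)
The plan is to reduce membership in each of the three logics to validity on ${\mathcal C}_{\serial}$, and then to reduce validity on ${\mathcal C}_{\serial}$ to validity of a guarded formula on ${\mathcal C}_{\allfra}$. The last question is decidable by~\cite[Section~$3$]{Ganzinger:et:al:1999}, exactly as in the proof of Proposition~\ref{proposition:two:variable:guarded:fragment:is:decidable}.

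\emph{Step 1 (semantic characterization).} By Proposition~\ref{Proposition:let:ba:a:class:of:frames:if:then:2}, the three logics $\L_{\min}{\odot}(+,{\blacklozenge}{\top})$, $\L_{\min}{\odot}(-,{\blacksquare}{\bot})$ and $\L_{\min}{\odot}\{(+,{\blacklozenge}{\top}),(-,{\blacksquare}{\bot})\}$ all coincide with $\Log({\mathcal C}_{\serial})$. It therefore suffices to decide, for a given signed formula $(\alpha,A)$, whether ${\mathcal C}_{\serial}{\models}(\alpha,A)$. By Proposition~\ref{Proposition:about:duality:complement:valuation:transpose:formulas:validity:class:of:frames}, this is equivalent to ${\mathcal C}_{\serial}{\models}(+,A{\mid}^{\alpha})$, so we may restrict attention to positively signed formulas $(+,B)$.

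\emph{Step 2 (relativizing seriality).} Seriality is the first-order condition ${\forall x}{\exists y}R(x,y)$. This is not itself a guarded formula in the sense of $\GFo_{x}$, since it has no guard. The idea is to encode it inside the guarded fragment by a universal-modality trick: validity of a guarded formula means truth at all couples $(s,t)$ and under all valuations. I would claim that ${\mathcal C}_{\serial}{\models}(+,B)$ holds iff the guarded sentence ``${\forall x}{\exists y}R(x,y)\rightarrow f_{x}(B)$'' is valid in all frames. If the paper's fragment does not literally allow the unguarded antecedent, I would instead invoke the fact that the monadic two-variable guarded fragment of Ganzinger \emph{et al.}\ is (like the two-variable fragment $\mathrm{FO}^{2}$, whose satisfiability is decidable) closed under adding the two-variable sentence ${\forall x}{\exists y}R(x,y)$ as an extra conjunct in a satisfiability test. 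Concretely, $(+,B)\notin\Log({\mathcal C}_{\serial})$ iff the sentence ${\forall x}{\exists y}R(x,y)\wedge{\exists x}\neg f_{x}(B)$ is satisfiable, where the $P_{i}$ are interpreted as $\leq$-closed sets. The $\leq$-closedness and the preorder axioms are also two-variable (or handled as in the paper's setting for $\L_{\min}$). The correctness of this equivalence follows directly from Proposition~\ref{Proposition:about:translations:tau:x:tau:y:soundness:correctness}.

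\emph{Step 3 (decidability).} Proposition~\ref{Proposition:about:translations:tau:x:tau:y:linear:size} ensures the translation is computable and of linear size. Decidability of the resulting satisfiability problem then gives the result.

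The main obstacle is Step~2: making sure that the seriality constraint stays within a decidable fragment. The paper's $\GFo_{x}$ grammar is closed only under guarded quantifiers. My first attempt would therefore be to rely on the decidability of full $\mathrm{FO}^{2}$ (with the transitivity of $\leq$ treated as in the $\L_{\min}$ case). Failing that, I would eliminate seriality semantically. Any countermodel can be made serial by adding a fresh point $z$ with $z\leq z$ and $zRz$, and sending every $R$-dead-end to $z$. The difficulty is that this changes ${\blacklozenge}$- and $\square$-truth, so instead I would translate $B$ into $B'$ by replacing each subformula ${\blacklozenge}C$ and ${\square}C$ with its relativized version in which ``$R$-dead-ends'' are made to behave as if they saw a witness. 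This is doable but delicate, and checking the induction for ${\blacklozenge}$ and ${\blacksquare}$ would be the hardest part.
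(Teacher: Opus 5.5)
Your Step~1 is right: by Proposition~\ref{Proposition:let:ba:a:class:of:frames:if:then:2} all three logics coincide with $\Log({\mathcal C}_{\serial})$. The semantic equivalence you state in Step~2 is also correct, by Proposition~\ref{Proposition:about:translations:tau:x:tau:y:soundness:correctness}. The paper gives no proof of this statement, but this is clearly the route intended by analogy with Proposition~\ref{proposition:two:variable:guarded:fragment:is:decidable}.

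The gap is exactly where you locate the main obstacle, and it rests on a false premise: ${\forall x}{\exists y}R(x,y)$ does have a guard. It is the universal closure of ${\exists y}(R(x,y){\wedge}{\top})$, which is literally a member of $\GFo_{x}$.
\begin{itemize}
\item The outer quantifier binds a single variable and can be given a trivial guard ($x{=}x$, or $x{\leq}x$, since ${\leq}$ is reflexive).
\item The inner quantifier is guarded by $R(x,y)$, and $R$ occurs only in a guard.
\end{itemize}
So seriality is a sentence of the monadic two-variable guarded fragment of Ganzinger \emph{et al.}, and that fragment is closed under Boolean combinations. The paper's grammar $\GFo_{x}$ has no sentence-forming clause. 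However, the reduction for $\L_{\min}$ already needs sentences of that fragment, for instance to force each $P_{i}$ to be ${\leq}$-closed and to close ${\neg}f_{x}(B)$ existentially. One more conjunct therefore costs nothing: $(+,B){\notin}\Log({\mathcal C}_{\serial})$ iff these constraints, the seriality sentence and the existential closure of ${\neg}f_{x}(B)$ are jointly satisfiable, and \cite[Section~$3$]{Ganzinger:et:al:1999} decides this.

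None of your three substitutes for this observation closes the gap as written.
\begin{itemize}
\item \emph{Closure by analogy with $\mathrm{FO}^{2}$.} The analogy proves nothing. Ganzinger \emph{et al.}\ obtain decidability only under the restriction that binary predicates occur solely in guards. They show that the two-variable guarded fragment with transitive relations is undecidable without that restriction. The fragment is therefore not closed under conjunction with arbitrary two-variable sentences. Your claim holds for this particular sentence only because it is itself guarded, which is precisely what you denied.
\item \emph{Falling back on full $\mathrm{FO}^{2}$.} This fails because transitivity of ${\leq}$ is not expressible with two variables. ``Treating it as in the $\L_{\min}$ case'' means using transitive guards, which brings you back to the guarded fragment.
\item \emph{Semantic elimination of seriality.} The relativized translation is only sketched. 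Your own remark that adding a sink point changes the truth of ${\blacklozenge}$- and ${\square}$-formulas shows this is far from routine.
\end{itemize}
Once you see that seriality is guarded, all three detours are unnecessary.
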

\begin{proposition}\label{proposition:two:variable:guarded:fragment:is:decidable:ref}
The membership problem in $\L_{\min}{\odot}\{(+,{\blacksquare}p{\rightharpoonup}p),(-,{\blacklozenge}p{\leftharpoondown}p)\}$ is decidable.
\end{proposition}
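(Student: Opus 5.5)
The plan mirrors the proof of Proposition~\ref{proposition:two:variable:guarded:fragment:is:decidable}, with reflexivity added as a frame condition expressible in the guarded fragment.

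\textbf{Step 1: a semantic characterization.} By Proposition~\ref{Proposition:let:ba:a:class:of:frames:if:then:3}, $\L_{\min}{\odot}\{(+,{\blacksquare}p{\rightharpoonup}p),(-,{\blacklozenge}p{\leftharpoondown}p)\}{=}\Log({\mathcal C}_{\reflexive})$. By Proposition~\ref{Proposition:about:duality:complement:valuation:transpose:formulas:validity:class:of:frames}, a signed formula $(\alpha,A)$ is in this logic if and only if ${\mathcal C}_{\reflexive}{\models}(+,A{\mid}^{\alpha})$. We may therefore restrict attention to positive signed formulas $(+,B)$, with $B$ computable from $(\alpha,A)$.

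\textbf{Step 2: reduction to unsatisfiability.} By Proposition~\ref{Proposition:about:translations:tau:x:tau:y:soundness:correctness}, ${\mathcal C}_{\reflexive}{\models}(+,B)$ fails if and only if there are a frame $(W,{\leq},{R})$ with $R$ reflexive, a $\leq$-closed valuation $V$ and $s,t{\in}W$ such that ${\neg}f_{x}(B)$ holds at $[s,t]$. Two constraints must be encoded on the first-order side.
\begin{itemize}
\item Reflexivity of $\leq$, transitivity of $\leq$ and reflexivity of $R$ are needed to reflect frames.
\item Heredity of each monadic predicate $P_{i}$ occurring in $B$ is needed to reflect valuations.
\end{itemize}

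\textbf{Step 3: encoding the constraints.} I would take the conjunction of the universal closures of the following:
\begin{itemize}
\item ${\forall x}\,x{\leq}x$ and ${\forall x}\,R(x,x)$;
\item ${\forall x}{\forall y}(x{\leq}y{\rightarrow}(P_{i}(x){\rightarrow}P_{i}(y)))$ for each relevant $i$;
\item transitivity of $\leq$.
\end{itemize}
Transitivity is the main obstacle, since it needs three variables and a transitive guard. Reflexivity is unproblematic: $R(x,x)$ and $x{\leq}x$ are atoms in two variables, which the two-variable guarded fragment of Ganzinger \emph{et al.}~\cite{Ganzinger:et:al:1999} admits. That paper's Section~$3$ is precisely about guarded two-variable logic with transitive guards. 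In its monadic version, transitive relations may occur only in guard positions, and here $\leq$ occurs only as a guard in $f_{x}$ and $f_{y}$. I would invoke that result, declaring $\leq$ transitive.

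If the fragment is used exactly as in Proposition~\ref{proposition:two:variable:guarded:fragment:is:decidable}, where transitivity is presumably already handled, then only ${\forall x}R(x,x)$ needs to be added. That addition is the only new ingredient relative to the previous proposition.

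\textbf{Step 4: a fallback if reflexive atoms are disallowed.} If the fragment does not admit the atom $R(x,x)$ as a stand-alone conjunct, I would avoid it by relativization. Replace each $R$-guard in the translation by ``$R(x,y){\vee}x{=}y$'', which is equivalent to interpreting $R$ over its reflexive closure. Validity over reflexive frames equals validity over all frames for this modified translation, and equality between the two variables is still guarded two-variable.

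\textbf{Step 5: conclusion.} Satisfiability of the resulting sentence is decidable by~\cite[Section~$3$]{Ganzinger:et:al:1999}. Combined with the computable translation of linear size (Proposition~\ref{Proposition:about:translations:tau:x:tau:y:linear:size}), this yields decidability of the membership problem.
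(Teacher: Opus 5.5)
Your route is the one the paper relies on: it gives no separate proof here, and its proof of Proposition~\ref{proposition:two:variable:guarded:fragment:is:decidable} simply combines completeness, the translation $f_{x}$ and the decidability result of~\cite{Ganzinger:et:al:1999}. You are right to make explicit the constraints (preorder, heredity) that the paper leaves implicit when passing to ordinary first-order satisfiability. However, Step~3 contains a step that fails as written. You declare $\leq$ transitive and note yourself that in the fragment of Ganzinger \emph{et al.} a transitive predicate may occur only in guard position. The axiom ${\forall x}\,x{\leq}x$ puts $\leq$ in a non-guard position, so your sentence is not in the decidable fragment you invoke. Your Step~4 fallback covers $R(x,x)$ but not $x{\leq}x$. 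You also cannot simply drop reflexivity of $\leq$. Take $W{=}\{s,t\}$, ${\leq}{=}\{(s,t)\}$ (transitive, and heredity holds), $R{=}\Id_{W}$ and $P_{0}$ empty. Then ${\neg}f_{x}(({\top}{\rightharpoonup}p_{0}){\rightharpoonup}p_{0})$ holds at $s$, because $t$ has no $\leq$-successor. Yet $(+,({\top}{\rightharpoonup}p_{0}){\rightharpoonup}p_{0})$ is valid on all frames, hence belongs to the logic. So the procedure as stated would return wrong answers.

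The repair is your own Step~4 device, applied to $\leq$ as well. The reflexive closure of a transitive relation is a preorder, and a predicate is upward closed for $\leq$ if and only if it is upward closed for its reflexive closure. So keep $\leq$ merely transitive and build the reflexive closure into the translation. Replace ${\forall y}(x{\leq}y{\rightarrow}C_{y})$ by $C_{x}{\wedge}{\forall y}(x{\leq}y{\rightarrow}C_{y})$, and ${\exists y}(y{\leq}x{\wedge}C_{y})$ by $C_{x}{\vee}{\exists y}(y{\leq}x{\wedge}C_{y})$. Here $C_{x}$ is the $x$-copy of $C_{y}$, which is available since $f_{x}$ and $f_{y}$ are defined in tandem. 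Treat $R$ the same way, splitting the disjunction instead of using the non-atomic guard $R(x,y){\vee}x{=}y$; this also removes any reliance on equality. Then reflexivity of both relations is built in, transitivity of $\leq$ comes from the transitive-guard mechanism, and heredity is expressed by ${\forall x}{\forall y}(x{\leq}y{\rightarrow}({\neg}P_{i}(x){\vee}P_{i}(y)))$ with guard $x{\leq}y$. The resulting sentence is in the monadic two-variable guarded fragment with transitive guards, and Step~5 goes through. The modified translation is no longer of linear size, so Proposition~\ref{Proposition:about:translations:tau:x:tau:y:linear:size} no longer applies to it, but computability is all decidability needs.
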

\section{Conclusion}\label{section:conclusion}
Much remains to be done.
\\
\\
Inference rules $(\Rule8^{+})$ and $(\Rule8^{-})$ have ben used in the proofs of Proposition~\ref{Proposition:let:be:a:balanced:clip:let:be:a:formula:if:then:for:all:balanced:clips:if:and:then:there:exists:a:balanced:clip:such:that:and:1:lozenge:Gamma:plus} and~\ref{Proposition:let:be:a:balanced:clip:let:be:a:formula:if:then:for:all:balanced:clips:if:and:then:there:exists:a:balanced:clip:such:that:and:2:square:Gamma:moins}.
However, we do not know of any formula whose formal proof in the IMLs considered in this article requires the use of these inference rules.
Therefore, a natural question is to determine if one can replace inference rules $(\Rule8^{+})$ and $(\Rule8^{-})$ by finitely many signed formulas considered as additional axioms without affecting the meaning of the definition of IMLs.
\\
\\
%
%
We have seen in Proposition~\ref{Proposition:let:ba:a:class:of:frames:if:then:3} and~\ref{Proposition:let:ba:a:class:of:frames:if:then:2} that $\Log({\mathcal C}_{\reflexive})$ and $\Log({\mathcal C}_{\serial})$ are finitely axiomatizable.
Therefore, a natural question is to determine whether $\Log({\mathcal C}_{\sym})$ and $\Log({\mathcal C}_{\transitive})$ are finitely axiomatizable too.
In this respect, it might be helpful to understand how far is $\L_{\min}{\odot}\{(+,p{\rightharpoonup}{\square}{\lozenge}p),(-,p{\leftharpoondown}{\lozenge}{\square}p))\}$ from $\Log({\mathcal C}_{\sym})$ and how far is $\L_{\min}{\odot}\{(+,{\square}p{\rightharpoonup}{\square}{\square}p),(-,{\lozenge}p{\leftharpoondown}{\lozenge}{\lozenge}p))\}$ from $\Log({\mathcal C}_{\transitive})$.
\\
\\
We have seen in Proposition~\ref{completeness:forward:conf} that $\Log({\mathcal C}_{\fcfra})$ and $\Log({\mathcal C}_{\dcfra})$ are finitely axiomatizable.
Therefore, a natural question is to determine whether $\Log({\mathcal C}_{\bcfra})$ and $\Log({\mathcal C}_{\ucfra})$~---~as well as $\Log({\mathcal C}_{\fbcfra})$, $\Log({\mathcal C}_{\fdcfra})$, etc~---~are finitely axiomatizable too.
Notice that 
by Proposition~\ref{Proposition:forward:confluence:is:modally:definable}, \ref{proposition:soundness}, \ref{proposition:consistent:BML:a:property:about:tips}, \ref{Proposition:for:all:coherent:tips:there:exists:an:exhaustive:tip:such:that:and}, \ref{Proposition:for:all:exhaustive:tips:is:balanced}, \ref{Proposition:let:be:a:formula:for:all} and~\ref{Proposition:the:following:conditions:are:equivalent:is:forward:confluent}, we already know that $\Log({\mathcal C}_{\fdcfra}){=}\L_{\min}{\odot}\{(+,{\square}(p{\rightharpoonup}
$\linebreak$
q){\rightharpoonup}({\lozenge}p{\rightharpoonup}{\blacklozenge}q)),(-,{\lozenge}(p{\leftharpoondown}q){\leftharpoondown}({\square}p{\leftharpoondown}{\blacksquare}q))\}$.
%
%
\\
\\
%
%
%
%
%
%
We have seen in Proposition~\ref{proposition:two:variable:guarded:fragment:is:decidable} that the membership problem in $\L_{\min}$ is decidable.
Therefore, a natural question is to determine the complexity class to which this problem belongs.
In this respect, it might be helpful to use an embedding of $\L_{\min}$ into $\S4_{\mathbf{t}}{\otimes}\K$~---~the fusion of tense $\S4$ and $\K$~---~in the spirit of the embedding of IMLs into modal logics containing $\S4{\otimes}\K$~---~the fusion of $\S4$ and $\K$~---~proposed by Wolter and Zakharyaschev~\cite{Wolter:Zakharyaschev:1997}.
\\
\\
%
%
With a few exceptions such as~\cite{Hasimoto:2001,Takano:2003}, the technique of the filtration has not been so much adapted to IMLs, probably because it does not easily work with conditions such as forward confluence, backward confluence, downward confluence and upward confluence.
Therefore, a natural question is to understand how filtration-like arguments can be used in order to determine whether the membership problems in the IMLs considered in this article are decidable.
\section*{Acknowledgements}
We wish to thank our colleagues of the {\em Institut de recherche en informatique de Toulou\-se}\/ for many stimulating discussions about intuitionistic modal logics.
\bibliographystyle{named}

\begin{thebibliography}{}
%
%
%
%
\bibitem{Alechina:et:al:2001}
Alechina, N., Mendler, M., de Paiva, V., Ritter, E.:
{\it Categorical and Kripke semantics for constructive $\S4$ modal logic.}
In {\it CSL 2001.}
Springer (2001) 292--307.
%
%
%
%
%
%
\bibitem{Arisaka:et:al:2015}
Arisaka, R., Das, A., Stra\ss burger, L.:
{\it On nested sequents for constructive modal logics.}
Logical Methods in Computer Science {\bf 11} (2015) 1--33.
%
%
%
%
%
%
%
%
%
%
%
%
%
%
%
%
%
%
%
%
%
%
%
%
\bibitem{Balbiani:et:al:2024a}
Balbiani, P., Gao, H., Gencer, \c{C}., Olivetti, N.:
{\it A natural intuitionistic modal logic: axiomatization and bi-nested calculus.}
In {\it 32nd EACSL Annual Conference on Computer Science Logic.}
LIPICS (2024) 13:1--13:21.
%
%
\bibitem{Balbiani:et:al:2024b}
Balbiani, P., Gao, H., Gencer, \c{C}., Olivetti, N.:
{\it Local intuitionistic modal logics and their calculi.}
In {\it Automated Reasoning.}
Springer (2024) 78--96.
%
%
\bibitem{Balbiani:Gencer:preliminary:draft:SL}
Balbiani, P., Gencer, \c{C}.:
{\it Intuitionistic modal logics: a minimal setting.}
Studia Logica (2026) doi.org/10.1007/s11225-025-10224-7.
%
%
%
%
%
%
\bibitem{Balbiani:Tinchev:2017}
Balbiani, P., Tinchev, T.:
{\it Undecidable problems for modal definability.}
Journal of Logic and Computation {\bf 27} (2017) 901--920.
%
%
%
%
\bibitem{Ballarin:2023}
Ballarin, R.:
{\it Modern origins of modal logic.}
In {\it The Stanford Encyclopedia of Philosophy (Fall 2023 Edition).}
Metaphysics Research Lab (2023) plato.stanford.edu/archives/fall2023/entries/logic-modal-origins/.
%
%
%
%
\bibitem{Bierman:dePaiva:2000}
Bierman, G., de Paiva, V.:
{\it On an intuitionistic modal logic.}
Studia Logica {\bf 65} (2000) 383--416.
%
%
%
%
\bibitem{Bozic:Dosen:1984}
Bo\v{z}i\'{c}, M., Do\v{s}en, K.:
{\it Models for normal intuitionistic modal logics.}
Studia Logica {\bf 43} (1984) 217--245.
%
%
%
%
\bibitem{Chagrov:Zakharyaschev:1997}
Chagrov, A., Zakharyaschev, M.:
{\it Modal Logic.}
Oxford University Press (1997).
%
%
%
%
%
%
%
%
\bibitem{Dalmonte:et:al:2021}
Dalmonte, T., Grellois, C., Olivetti, N.:
{\it Terminating calculi and countermodels for constructive modal logics.}
In {\it Automated Reasoning with Analytic Tableaux and Related Methods.}
Springer (2021) 391--408.
%
%
%
%
\bibitem{Davey:Priestley:2002}
Davey, B., Priestley, H.:
{\it Introduction to Lattices and Order.}
Cambridge University Press (2002).
%
%
%
%
%
%
%
%
%
%
%
%
%
%
%
%
%
%
%
%
%
%
%
%
%
%
%
%
%
%
%
%
%
%
%
%
\bibitem{FischerServi:1984}
Fischer Servi, G.:
{\it Axiomatizations for some intuitionistic modal logics.}
Rendiconti del Seminario Matematico Universit\`a e Politecnico di Torino {\bf 42} (1984) 179--194.
%
%
%
%
%
%
\bibitem{Gabbay:et:al:2009}
Gabbay, D., Shehtman, V., Skvortsov, D.:
{\it Quantification in Nonclassical Logic. Volume~$1$.}
Elsevier (2009).
%
%
%
%
\bibitem{Ganzinger:et:al:1999}
Ganzinger, H., Meyer, C., Veanes, M.:
{\it The two-variable guarded fragment with transitive relations.}
In {\it Fourteenth Annual IEEE Symposium on Logic in Computer Science.}
IEEE (1999) 24--34.
%
%
%
%
%
%
%
%
%
%
\bibitem{Girlando:et:al:2023}
Girlando, M., Kuznets, R., Marin, S., Morales, M., Stra\ss burger, L.:
{\it Intuitionistic $\S4$ is decidable.}
In {\it 38th Annual ACM/IEEE Symposium on Logic in Computer Science.}
IEEE (2023) 10.1109/LICS56636.2023.10175684.
%
%
\bibitem{Girlando:et:al:2024}
Girlando, M., Kuznets, R., Marin, S., Morales, M., Stra\ss burger, L.:
{\it A simple loopcheck for intuitionistic $\K$.}
In {\it Logic, Language, Information, and Computation.}
Springer (2024) 47--63.
%
%
\bibitem{Gore:Shillito:2020}
Gor\'e, R., Shillito, I.:
{\it Bi-intuitionistic logics: a new instance of an old problem.}
In {\it Advances in Modal Logic.}
College Publications (2020) 269--288.
%
%
\bibitem{Grefe:1996}
Grefe, C.:
{\it Fischer Servi's intuitionistic modal logic has the finite model property.}
In {\it Advances in Modal Logic.}
CSLI Publications (1996) 85--98.
%
%
%
%
\bibitem{deGroot:et:al:2026}
de Groot, J., Shillito, I., Clouston, R.:
{\it Duality for constructive modal logics: from Sahlqvist to Goldblatt-Thomason.}
arXiv (2026) 2601.03762.
%
%
%
%
%
%
\bibitem{Hasimoto:2001}
Hasimoto, Y.:
{\it Finite model property for some intuitionistic modal logics.}
Bulletin of the Section of Logic {\bf 30} (2001) 87--97.
%
%
\bibitem{Hodges:1993}
Hodges, W.:
{\it Model Theory.}
Cambridge University Press (1993).
%
%
\bibitem{Hughes:Cresswell:1996}
Hughes, G., Cresswell, M.:
{\it A New Introduction to Modal Logic.}
Routledge (1996).
%
%
%
%
%
%
\bibitem{Kalmar:1937}
Kalm\'ar, L.:
{\it Zur\"uckf\"uhrung des Entscheidungsproblems auf den Fall von Formeln mit einer einzigen, bin\"aren, Funktionsvariablen.}
Compositio Mathematica {\bf 4} (1937) 137--144.
%
%
%
%
%
%
%
%
%
%
\bibitem{Kripke:1965}
Kripke, S.:
{\it Semantical analysis of intuitionistic logic~$I$.}
In: {\it Formal Systems and Recursive Functions.}
Elsevier (1965) 92--130.
%
%
%
%
%
%
\bibitem{Lin:Ma:2019}
Lin, Z., Ma, M.:
{\it Gentzen sequent calculi for some intuitionistic modal logics.}
Logic Journal of the IGPL {\bf 27} (2019) 596--623.
%
%
%
%
%
%
%
%
%
%
\bibitem{Marin:et:al:2021}
Marin, S., Morales, M., Stra\ss burger, L.:
{\it A fully labelled proof system for intuitionistic modal logics.}
Journal of Logic and Computation {\bf 31} (2021) 998--1022.
%
%
%
%
%
%
%
%
%
%
%
%
%
%
\bibitem{Olivetti:2022}
Olivetti, N.:
{\it A journey in intuitionistic modal logic: normal and non-normal modalities.}
In {\it LATD 2022 and MOSAIC Kick Off Conference.}
University of Salerno (2022) 12--13.
%
%
%
%
%
%
%
%
%
%
%
%
\bibitem{IMLA:2017}
de Paiva, V., Artemov, S. (editors):
{\it Intuitionistic Modal Logic 2017.}
Journal of Applied Logics {\bf 8} (2021) special issue.
%
%
%
%
%
%
%
%
%
%
%
%
\bibitem{Plotkin:Stirling:1986}
Plotkin, G., Stirling, C.:
{\it A framework for intuitionistic modal logics.}
In {\it Theoretical Aspects of Reasoning About Knowledge.}
Morgan Kaufmann Publishers (1986) 399--406.
%
%
%
%
\bibitem{Prenosil:2014}
P\v{r}enosil, A.:
{\it A duality for distributive unimodal logic.}
In {\it Advances in Modal Logic.
Volume 10.}
College Publications (2014) 423--438.
%
%
%
%
%
%
%
%
\bibitem{Rauszer:1980}
Rauszer, C.:
{\it An Algebraic and Kripke-Style Approach to a Certain Extension of Intuitionistic Logic.}
PWN~---~Polish Scientific Publishers (1980).
%
%
%
%
%
%
%
%
%
%
%
%
\bibitem{Simpson:1994}
Simpson, A.:
{\it The Proof Theory and Semantics of Intuitionistic Modal Logic.}
Doctoral thesis at the University of Edinburgh (1994).
%
%
%
%
\bibitem{Sotirov:1984}
Sotirov, V.:
{\it Modal theories with intuitionistic logic.}
In {\it Mathematical Logic.}
Publishing House of the Bulgarian Academy of Sciences (1984) 139--171.
%
%
%
%
%
%
%
%
\bibitem{Takano:2003}
Takano, M.:
{\it Finite model property for an intuitionistic modal logic.}
Nihonkai Mathematical Journal {\bf 14} (2003) 125--132.
%
%
%
%
%
%
%
%
%
%
%
%
%
%
%
%
%
%
%
%
%
%
%
%
%
%
%
%
\bibitem{Wechler:1992}
Wechler, W.:
{\it Universal Algebra for Computer Scientists.}
Springer (1992).
%
%
\bibitem{Wijesekera:1990}
Wijesekera, D.:
{\it Constructive modal logics~I.}
Annals of Pure and Applied Logic {\bf 50} (1990) 271--301.
%
%
\bibitem{Wolter:1998a}
Wolter, F.:
{\it On logics with coimplication.}
Journal of Philosophical Logic {\bf 27} (1998) 353--387.
%
%
%
%
\bibitem{Wolter:Zakharyaschev:1997}
Wolter, F., Zakharyaschev, M.:
{\it The relation between intuitionistic and classical modal logics.}
Algebra and Logic {\bf 36} (1997) 73--92.
%
%
\end{thebibliography}
\end{document}